\theoremstyle{plain}
\newtheorem{thm}{Theorem}
\newtheorem{propn}{Proposition}
\newtheorem{lem}{Lemma}
\theoremstyle{remark}
\newcommand{\sst}{\scriptscriptstyle}
\renewcommand{\1}{\one}
\renewcommand{\2}{\two}
\newcommand{\beq}{\begin{equation}}
\newcommand{\eeq}{\end{equation}}
\newcommand{\pa}{\partial}
\newcommand{\ot}{\otimes}
\newcommand{\SRN}{{\rm N}}
\newcommand{\la}{\lambda}
\newcommand{\bz}{\bar{z}}
\newcommand{\CA}{{\mathcal A}}
\newcommand{\CB}{{\mathcal B}}
\newcommand{\CC}{{\mathcal C}}
\newcommand{\CD}{{\mathcal D}}
\newcommand{\CF}{{\mathcal F}}
\newcommand{\CL}{{\mathcal L}}
\newcommand{\CM}{{\mathcal M}}
\newcommand{\CO}{{\mathcal O}}
\newcommand{\CR}{{\mathcal R}}
\newcommand{\CW}{{\mathcal W}}
\newcommand{\SA}{{\mathsf A}}
\newcommand{\SB}{{\mathsf B}}
\newcommand{\SC}{{\mathsf C}}
\newcommand{\SD}{{\mathsf D}}
\newcommand{\SL}{{\mathsf L}}
\newcommand{\SM}{{\mathsf M}}
\newcommand{\SO}{{\mathsf O}}
\newcommand{\SQ}{{\mathsf Q}}
\newcommand{\ST}{{\mathsf T}}
\newcommand{\SY}{{\mathsf Y}}
\newcommand{\SZ}{{\mathsf Z}}
\newcommand{\su}{{\mathsf u}}
\newcommand{\sv}{{\mathsf v}}
\newcommand{\one}{{\mathfrak 1}}
\newcommand{\two}{{\mathfrak 2}}
\newcommand{\BC}{{\mathbb C}}
\newcommand{\BS}{{\mathbb S}}
\newcommand{\BZ}{{\mathbb Z}}
\newcommand{\srn}{{\sst\rm N}}
\newcommand{\SRM}{{\rm M}}
\newcommand{\srm}{{\sst\rm M}}
\newcommand{\rf}[1]{(\ref{#1})}
\renewcommand{\bz}{{\mathbf z}}
\newcommand{\aufz}
{\begin{list}{$\bullet$}{\topsep0cm \itemsep0cm \parsep0cm}}
\newcommand{\eaufz}{\end{list}}
\begin{document}

\begin{flushright}
LPENSL-TH-12-5\\
YITP-SB-12-17
\end{flushright}

\vspace{2cm}

\title{}

\vspace{1cm}

{\centerline{\LARGE{\bf{The $\tau_2$-model and the chiral Potts model revisited:}}}}

\vspace{0.5cm}

{\centerline{\LARGE{\bf{completeness of Bethe equations from Sklyanin's SOV method}}}}

\vspace{50pt}
\begin{large}
\begin{center}
{\bf N.~Grosjean}\footnote[1]{Laboratoire de Physique, UMR 5672
du CNRS, ENS Lyon,  France, nicolas.grosjean@ens-lyon.fr}~~{\bf and}~~
{\bf G.~Niccoli}\footnote[2]{ YITP, Stony Brook University, New York 11794-3840, USA, niccoli@max2.physics.sunysb.edu},~~
\end{center}
\end{large}
\vspace{4cm}

{\bf Abstract}\ \ {\small The most general cyclic representations of the quantum integrable $\tau_2$-model are analyzed. The complete characterization of the $\tau_2$-spectrum (eigenvalues and eigenstates) is achieved in the framework of Sklyanin's Separation of Variables (SOV) method by extending and adapting the ideas first introduced in \cite{FrbtGN10,FrbtGN11}: i) The determination of the $\tau_2$-spectrum is reduced to the classification of the solutions of a given functional equation in a class of polynomials. ii) The determination of the $\tau_2$-eigenstates is reduced to the classification of the solutions of an associated Baxter equation. These last solutions are proven to be polynomials for a quite general class of $\tau_2$-self-adjoint representations and the completeness of the associated Bethe ansatz type equations is derived. Finally, the following results are derived for the inhomogeneous chiral Potts model: i) Simplicity of the spectrum, for general representations. ii) Complete characterization of the chiral Potts spectrum (eigenvalues and eigenstates) and completeness of Bethe ansatz type equations, for the self-adjoint representations of $\tau_2$-model on the chiral Potts algebraic curves.}
\begin{quotation} 
\end{quotation}

\newpage
\tableofcontents
\newpage
\section{Introduction}
The quantum integrable $\tau_2$-model has been first introduced in \cite{FrbtBS}
where its Lax operator was constructed as a general solution to the Yang-Baxter equations
w.r.t. the 6-vertex R-matrix. There, for the subvariety of the cyclic
representations parameterized by points on the algebraic curves associated to the chiral Potts (chP) model\footnote{These curves are parametrized by the equations required for the periodicity of dilogarithm functions, i.e. (3.5) of \cite{FrbtBS}.}, the analysis of the spectrum (eigenvalues) was made by the standard construction of the Baxter $\SQ$-operator. This last
operator was shown to coincide with the transfer matrix of the integrable chiral Potts  (chP) model
\cite{FrbtAMcP0}-\cite{FrbtBa89-1}; in this way a first remarkable
connection between these two apparently very different models\footnote{The differences between them are
simply expressed in the 2-dimensional statistical mechanics formulation. Both
are characterized by Boltzmann weights which satisfy the star-triangle (Yang-Baxter)
equations but while those of the $\tau_2$-model satisfy the ordinary difference property in the rapidities those of the chP-model do not. In fact, one of the interesting properties of the chP-model is that its weights are uniformized by curves of genus $g>1$, hence they cannot satisfy the difference property. Let us recall that
the first solutions of the star-triangle equations with this non-difference property were obtained
in \cite{FrbtauYMcPTY,FrbtMcPTS,FrbtauYMcPT}, while in \cite{FrbtBaPauY} the general solutions for the chP-model were derived.} was established. The
eigenvalues analysis was further developed in \cite{FrbtBBP90,FrbtBa89} where additional
functional equations for the transfer matrices of these two models, like those corresponding to the fusion hierarchy\footnote{The approach of fusion hierarchy of
commuting transfer matrices was first introduced in \cite{FrbtKRS,FrbtKR}.}
of commuting transfer matrices\footnote{The $\tau_2$-transfer matrix is the second element in this hierarchy,
this explains the name given to the model.}, were shown. In the special
case of the superintegrable chP-model, the role of Bethe
ansatz type equations for the spectral analysis was first
pointed out in \cite{FrbtAMcP0,FrbtAMcP1,FrbtAMcP2}. In
\cite{FrbtTarasovSChP} the algebraic Bethe ansatz (ABA) description of the superintegrable chP-spectrum was rigorously introduced on the
basis of the connection between the $\tau_2$-model and the chP-model. In particular, the Bethe
ansatz construction was applied to the $\tau_2$-transfer
matrix, which led to the reproduction of the Baxter results \cite{FrbtBa89-1} on the subset of the translation-invariant eigenvectors of the superintegrable chP-model.
More recently \cite{FrbtBa04}, Baxter has extended the eigenvalues analysis of the $\tau_2$-model to completely general cyclic representations. The main tool used there was the construction
of a generalized $\SQ$-operator which satisfies the Baxter equation with the $\tau_2$-transfer
matrix\footnote{There, it was considered as a kind of “generalized chiral Potts” model transfer
matrix, however, the fact that even the commutativity of the elements of such operator
family is not given leaves the definition of generalized chP-model only formal.} and the
consequent extension to these representations of the functional relations for the fused transfer
matrices.
\subsection{Motivations}
Let us comment that, in the above literature, the spectral analysis suffers in general from at least one of the following problems: i) Analysis reduced to the eigenvalues only: no eigenstates construction for the functional methods based on the Baxter $\SQ$-operator and the fusion matrices. ii) Reduced applicability: the ABA applies only to very special representations of the $\tau_2$-model and of the chP-model.  iii) Lack of completeness proof, i.e. the completeness of the spectrum description is not given by the method. In particular, this last problem affects almost all the Bethe ansatz methods\footnote{As the coordinate Bethe ansatz \cite{FrbtBe31,FrbtBax82,FrbtABBQ87}, the algebraic Bethe ansatz \cite{FrbtSF78}-\cite{FrbtF95}, the analytic Bethe ansatz  \cite{FrbtR83-I,FrbtR83-II}.}, which leads to the characterization of the spectrum by solutions of an associated
system of Bethe ansatz type equations.

A rigorous proof of the completeness of these spectrum characterizations is then a
fundamental goal which was achieved in the literature only for a few examples of integrable
quantum models, including for example the XXX Heisenberg model \cite{FrbtMTV} and the cyclic representations of the lattice sine-Gordon model \cite{FrbtNT}. It is
worth remarking that in the case of the superintegrable chP-model and the $\tau_2$-representations to which the ABA applies such a proof is missing. Instead, for the reduction to
the trivial algebraic curve case, i.e. the Fateev-Zamolodchikov model \cite{FrbtFZ}, the study of the completeness was
addressed by a numerical method in  \cite{FrbtADMcCoy92} (see also \cite{FrbtFMcCoy01} and \cite{FrbtNR03} for further applications of this method). This stresses the importance of the proof of the completeness of the system of Bethe ansatz
type equations for the $\tau_2$-model and the chP-model that is given in this article.

Note that all our results are derived in the framework of Sklyanin's SOV method \cite{FrbtSk1}-\cite{FrbtSk3}, which for the $\tau_2$-model was first developed \footnote{See
also the series of works \cite{FrbtGIPST07,FrbtGIPST08} where a first result
concerning the computation of the form factors of local operators by SOV was achieved for the special
case of the generalized Ising model.} in \cite{FrbtGIPS06}. However, we cannot get the spectrum characterization by functional equations directly from the SOV method. Indeed, in the case of cyclic representations \cite{FrbtTa} of integrable quantum models, this characterization is related to solutions of an associated finite system of Baxter-like equations. If this system is compatible with the eigenvalue equation associated to some Baxter $\SQ$-operator, the construction of this operator is one standard procedure to achieve a reformulation in terms of functional equations in the SOV characterization. For example, this was the strategy followed in \cite{FrbtNT} for the special cyclic
representations of the sine-Gordon model. In \cite{FrbtGIPS06}, the functional
relations of fusion and truncation of transfer matrices (generalized by Baxter in
\cite{FrbtBa04}) were proven to be compatible with the SOV characterization, leading to the derivation of a functional equation reformulation for the $\tau_2$-model. 

Note that the standard Baxter construction of a $\SQ$-operator by gauge
transformations and the subsequent derivation of the fused transfer
matrices can only be applied when the existence of some model dependent quantum
dilogarithm functions \cite{FrbtFK2}-\cite{FrbtBT06} is proven, which can represent a
concrete technical problem. Then, it is relevant to ponder whether it is possible to bypass this kind of constructions by providing
different derivations of the functional equation reformulation of the SOV characterization of
the spectrum. We show that this is indeed possible for completely general cyclic representations of the $\tau_2$-model; we achieve such a result following the approach first introduced in \cite{FrbtGN10}. We prove that the $\tau_2$-transfer matrix (completed with the $\Theta$-charge for some subvariety of
representations) defines a complete set of commuting charges, i.e. it has simple
spectrum\footnote{It is worth remarking that the simplicity implies that the fused transfer matrices do not really add further information useful for the spectrum characterization.}. Moreover, we provide the complete characterization of all its eigenvalues and
eigenstates as solutions of given functional equations in well
defined classes of functions. The same statements are proven
for the inhomogeneous chP-transfer matrix. 
In the $\tau_2$-self-adjoint representations, this
characterization leads to the reconstruction of the $\tau_2$-eigenbasis which, for the reduction to representations on the
chP algebraic curves, are proven to be a
simultaneous eigenbasis of the inhomogeneous chP-transfer matrix. Furthermore, we
show that there exists a quite general subvariety of $\tau_2$-self-adjoint representations, for
which we can reconstruct the eigenvalues and the full basis of eigenstates in terms of the
solutions of an associated system of Bethe ansatz type equations. Finally, we prove the same
statement of completeness for the chP-spectrum, which corresponds to the special $\tau_2$-self-adjoint representations on the chP algebraic curves.

In the paper \cite{FrbtGMN12-SG} an approach has been developed in the framework
of the quantum inverse scattering method (QISM) \cite{FrbtSF78}-\cite{FrbtF95}, \cite{FrbtTh81}-\cite{FrbtJ90} to
achieve the complete solution of lattice integrable quantum models by the
exact characterization of their spectrum and the computation of the matrix
elements of local operators. In particular, in \cite{FrbtGMN12-SG} the approach has been
developed for the relevant case of the lattice quantum sine-Gordon model \cite{FrbtFST,FrbtIK82}
associated by QISM to some special cyclic representations \cite{FrbtTa} of the 6-vertex
Yang-Baxter algebra, and it can be considered as the generalization to this SOV framework of the Lyon group method\footnote{Let us remark that this method has been introduced in \cite{FrbtKMT99} in the algebraic Bethe
ansatz (ABA) framework \cite{FrbtSF78}-\cite{FrbtFST} for the spin-1/2 XXZ quantum
chain \cite{FrbtBe31}, \cite{FrbtH28}-\cite{FrbtLM66} with periodic boundary conditions and then further developed in \cite{FrbtMT00}-\cite{FrbtKKMST07}. See also \cite{FrbtK01, FrbtCM07} and \cite{FrbtKKMNST07}-\cite{FrbtKKMNST08}, for the extension of this method in the ABA framework, to the higher spin XXX quantum chain and to the open spin-1/2
XXZ quantum chain \cite{FrbtSkly88}-\cite{FrbtGZ94} with diagonal boundary
conditions, respectively.}. The possibility to adapt this approach to the $\tau_2$-model using the SOV framework is another motivation to this paper.

\subsection{Organization of the paper}
The paper is organized as follows. In section 2 we introduce the $\tau_2$-model for general cyclic representations, pointing out its main properties and characterizing the representations for which its transfer matrix is self-adjoint. In section 3 we characterize the spectrum (eigenvalues/eigenstates) for the most general representations by using the quantum separation of variables only. In particular, we prove the simplicity of the $\tau_2$-spectrum and we show how to reformulate the SOV characterization of the transfer matrix spectrum in terms of solutions of a functional equation. In section 4 we consider the restriction of the $\tau_2$-model to self-adjoint representations for which we show that a proper\footnote{Here we are pointing out that this $\SQ$-operator define a one-parameter family of commuting operators which commute with the $\tau_2$-transfer matrix.} Baxter $\SQ$-operator is naturally induced from our SOV-characterization of the $\tau_2$-spectrum. Moreover, we characterize the most general $\tau_2$-representations for which such a Baxter $\SQ$-operator is polynomial in the spectral variable, leading to the proof of the completeness of the associate Bethe ansatz equations. In section 5 we introduce an inhomogeneous version of the chiral Potts model and we characterize all its eigenstates by those constructed for the $\tau_2$-transfer matrix. Moreover, we prove the normality of the chiral Potts transfer matrix for the self-adjoint $\tau_2$-representations on the chP algebraic curves, which completely characterizes the chiral Potts spectrum (eigenvalues/eigenstates) in terms of the one of the $\tau_2$-model. Finally, we prove that there exists a nontrivial subvariety of these representations for which the simultaneous spectrum of chiral Potts model and $\tau_2$-model is completely characterized by polynomial solutions to the associated Baxter equation also proving a completeness statements for the solutions to Bethe ansatz equations. In appendix A, we give, in some details, the construction of the SOV representation for the $\tau_2$-model. In appendix B, we reproduce, adapting to our notation, the Baxter construction of the generalized Baxter $\SQ$-operator for the $\tau_2$-model, to point out the connections with our SOV construction of the $\tau_2$-spectrum. Finally, appendix C describes in the most general framework of cyclic representation some useful technical results.

{\par {\small
\section{The $\tau_2$ model}\label{FrbtSOV}
\subsection{Definitions and first properties}
The Lax operator which describes the $\tau _{2}$-model can be parametrized
as follows\footnote{Up to different notations, this Lax operator coincides with that introduced in \cite{FrbtBS}.}:%
\begin{equation}
\SL_{n}(\lambda )\equiv \left( 
\begin{array}{cc}
\lambda \alpha _{n}\sv_{n}-\beta _{n}\lambda ^{-1}\sv_{n}^{-1} & \su_{n}\left(
q^{-1/2}\mathbbm{a}_{n}\sv_{n}+q^{1/2}\mathbbm{b}_{n}\sv_{n}^{-1}\right) \\ 
\su_{n}^{-1}\left( q^{1/2}\mathbbm{c}_{n}\sv_{n}+q^{-1/2}\mathbbm{d}_{n}\sv_{n}^{-1}\right) & \gamma
_{n}\sv_{n}/\lambda -\delta _{n}\lambda /\sv_{n}%
\end{array}%
\right) ,
\end{equation}
where the $\mathbbm{a}_{n}$, $\mathbbm{b}_{n}$, $\mathbbm{c}_{n}$, $\mathbbm{d}_{n}$, $\alpha_n$, $\beta_n$, $\gamma_n$ and $\delta_n$ are constants associated to the site $n$ that satisfy the relations:
\begin{equation}
\alpha_n \gamma _{n}=\mathbbm{a}_{n}\mathbbm{c}_{n},\text{ \ \ \ \ }\beta_n \delta
_{n}=\mathbbm{b}_{n}\mathbbm{d}_{n},
\end{equation}%
and we have denoted with $\su_{n}$ and $\sv_{n}$ the generators of the local
Weyl algebras:%
\begin{equation}
\su_{n}\sv_{m}=q^{\delta _{n,m}}\sv_{m}\su_{n}\text{ \ \ }\forall n,m\in \{1,...,\mathsf{N}\}.
\end{equation}%

We will study the case where $q\equiv e^{-i\pi\beta^2}$ is a $p$-root of unity:
\begin{equation}\label{Frbtbeta}
\beta^2\,=\,\frac{p'}{p}\,,\,\,\,\,\,\, p \equiv 2l+1, p' \equiv 2l'\,\, \text{ and  }\,\, l,l'\in\BZ^{>0}\,\,\, \rightarrow \,\,\,\,q^{p}=1. 
\end{equation}
In this case we can define a finite-dimensional representation of dimension $p$ for each Weyl algebra ${\cal W}_{n}$ by introducing the states:
\begin{equation}\label{Frbtu-basis}
|\, \bz\,\rangle\equiv|\, z_1,\dots,z_\mathsf{N}\,\rangle\,\,\, \text{with}\,\,\, z_i\in\BS_p\equiv\{q^{2n};n=0,\dots,2l\}\,\,\, \text{and}\,\, i\in\{1,...,\mathsf{N}\},
\end{equation}
for which we have:
\begin{equation}\label{Frbtreprdef}
\begin{aligned}
&\sv_n\,|\, z_1,\dots,z_\mathsf{N}\rangle=\, |\, z_1,\dots,q z_n,\dots,z_\mathsf{N}\rangle\,,\\
&\su_n\,|\, z_1,\dots,z_\mathsf{N}\rangle=\,z_n |\, z_1,\dots,z_\mathsf{N}\rangle\,.
\end{aligned}
\end{equation}
From the Lax operators, we can define the monodromy matrix by:
\begin{equation}\label{FrbtMdef}
\SM(\lambda )=\left( 
\begin{array}{cc}
\SA(\lambda ) & \SB(\lambda ) \\ 
\SC(\lambda ) & \SD(\lambda )%
\end{array}%
\right) \equiv \SL_{\mathsf{N}}(\lambda )\cdots \SL_{1}(\lambda ),
\end{equation}%
which satisfies the quadratic relation known as the Yang-Baxter relation:
\begin{equation}\label{FrbtYBA}
R(\la/\mu)\,(\SM(\la)\ot 1)\,(1\ot\SM(\mu))\,=\,(1\ot\SM(\mu))\,(\SM(\la)\ot 1)R(\la/\mu)\,,
\end{equation}
w.r.t. the six-vertex (standard) $R$-matrix:
\begin{equation}\label{FrbtRlsg}
 R(\la) =
 \left( \begin{array}{cccc}
 q^{}\la-q^{-1}\la^{-1} & & & \\ [-1mm]
 & \la-\la^{-1} & q-q^{-1} & \\ [-1mm]
 & q-q^{-1} & \la-\la^{-1} & \\ [-1mm]
 & & &  q\la-q^{-1}\la^{-1}
 \end{array} \right) \,.
\end{equation}
The elements of $\SM(\la)$ then generate a representation $\CR_\mathsf{N}$ of dimension $p^{\mathsf{N}}$ of the so-called Yang-Baxter algebra, this representation is characterized by
$6\mathsf{N}$ parameters.
In particular, the commutation relations \rf{FrbtYBA} lead to the relation $\left[ \SB(\lambda),\SB(\mu) \right]=0$ for all $\lambda$ and $\mu$, and also to the mutual commutativity of the elements of the one-parameter family of operators:
\begin{equation}\label{FrbtTdef}
\tau _{2}(\lambda )\equiv\,{\rm tr}_{\BC^2}^{}\SM(\la)\,=\SA(\lambda
)+\SD(\lambda ),
\end{equation}
known as  transfer matrix. Let us introduce the operator:
\begin{equation}\label{Frbttopological-charge}
\Theta \equiv \prod_{n=1}^{\mathsf{N}}\sv_{n},
\end{equation}
which plays the role of a {\it grading operator} in the Yang-Baxter algebra\footnote{The proof of the lemma is given following the same steps of that of Proposition 6 of \cite{FrbtNT}.}:

\begin{lem}\label{FrbtYB-G-T} $\Theta $ commutes with the transfer matrix. Moreover, it satisfies
the following commutation relations with the elements of the monodromy matrix:
\begin{eqnarray}
\Theta \SC(\lambda ) &=&q\SC(\lambda )\Theta \text{, \ \ \ }[\SA(\lambda ),\Theta
]=0, \\
\SB(\lambda )\Theta &=&q\Theta \SB(\lambda ),\text{ \ \ }[\SD(\lambda ),\Theta ]=0.
\end{eqnarray}
\end{lem}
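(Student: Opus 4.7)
The plan is to reduce everything to the action of $\Theta$ on a single Lax operator, then iterate along the chain. First I would record the commutation of $\Theta$ with the local Weyl generators: from $\su_{n}\sv_{m}=q^{\delta_{n,m}}\sv_{m}\su_{n}$ one gets immediately
\begin{equation}
\Theta\,\sv_{n}=\sv_{n}\,\Theta,\qquad \Theta\,\su_{n}=q^{-1}\su_{n}\,\Theta,\qquad \Theta\,\su_{n}^{-1}=q\,\su_{n}^{-1}\,\Theta,
\end{equation}
since only the factor $\sv_{n}$ inside $\Theta$ fails to commute with $\su_{n}^{\pm 1}$.

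Next I would inspect $\SL_{n}(\lambda)$ entry by entry. The two diagonal entries are Laurent polynomials in $\sv_{n}$ alone and therefore commute with $\Theta$; the upper-right entry carries a single factor $\su_{n}$ and the lower-left entry a single factor $\su_{n}^{-1}$, so that
\begin{equation}
\Theta\,(\SL_{n})_{11}=(\SL_{n})_{11}\,\Theta,\quad \Theta\,(\SL_{n})_{12}=q^{-1}(\SL_{n})_{12}\,\Theta,\quad \Theta\,(\SL_{n})_{21}=q\,(\SL_{n})_{21}\,\Theta,\quad \Theta\,(\SL_{n})_{22}=(\SL_{n})_{22}\,\Theta.
\end{equation}
These four identities can be packaged as the single auxiliary-space conjugation
\begin{equation}
\Theta\,\SL_{n}(\lambda)\,\Theta^{-1}\;=\;K\,\SL_{n}(\lambda)\,K^{-1},\qquad K=\mathrm{diag}(1,q)\in \mathrm{End}(\BC^{2}),
\end{equation}
where $K$ is a $c$-number matrix in the quantum space and hence commutes with every other Lax operator in the chain.

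With this in hand the proof becomes a one-line telescoping: inserting $\Theta^{-1}\Theta$ between consecutive factors in $\SM(\lambda)=\SL_{\mathsf{N}}(\lambda)\cdots \SL_{1}(\lambda)$ yields
\begin{equation}
\Theta\,\SM(\lambda)\,\Theta^{-1}\;=\;K\,\SM(\lambda)\,K^{-1}.
\end{equation}
Reading off the four auxiliary-space components of this matrix identity gives exactly the stated relations for $\SA,\SB,\SC,\SD$, and in particular the diagonal ones yield $[\SA(\lambda),\Theta]=[\SD(\lambda),\Theta]=0$, whence $[\tau_{2}(\lambda),\Theta]=0$ by \eqref{FrbtTdef}.

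I do not expect any real obstacle here; the only thing to be careful about is the bookkeeping of $q$-powers when sliding $\Theta$ past $\su_{n}^{\pm 1}$ and the orientation of $K$ versus $K^{-1}$. This is precisely the template used in Proposition 6 of \cite{FrbtNT}, to which the footnote refers.
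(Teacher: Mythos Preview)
Your argument is correct and is exactly the route the paper points to: the footnote defers to Proposition~6 of \cite{FrbtNT}, and the $K=\mathrm{diag}(1,q)$ conjugation trick you use is that proposition's template transplanted verbatim to the present Lax operator. One small wording slip: $K$ lives in the auxiliary space $\BC^{2}$ (with $c$-number entries from the quantum-space viewpoint), not ``in the quantum space''; the conclusion you draw from this is of course right.
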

Besides, the $\Theta $-charge allows to express the asymptotics of the transfer matrix in $\lambda \to 0$ and in $\lambda \to \infty$; in particular, from the known form of the Lax operator, we derive the following expansions:
\begin{align}
\SA(\lambda )& =\left( \lambda ^{\SRN}\Theta \prod_{a=1}^{\SRN}\alpha
_{n}+(-1)^{\SRN}\lambda ^{-\SRN}\Theta ^{-1}\prod_{a=1}^{\SRN}\beta
_{a}\right) +\sum_{i=1}^{\SRN-1} \SA_i \lambda^{\SRN-2i},  \label{Frbtasymp-A} \\
\SD(\lambda )& =\left( \lambda ^{-\SRN}\Theta \prod_{a=1}^{\SRN%
}\gamma _{a}+(-1)^{\SRN}\lambda ^{\SRN}\Theta ^{-1}\prod_{a=1}^{\SRN}\delta
_{a}\right) +\sum_{i=1}^{\SRN-1} \SD_i \lambda^{\SRN-2i},\label{Frbtasymp-D}
\end{align}%
with $\SA_i$ and $\SD_i$ being operators, 
and so\footnote{Here, we have used the short notation $\lim_{\log \lambda \to \pm \infty}$, where $-$ stands for the $\lim_{\lambda \to 0}$ and $+$ stands for the $\lim_{\lambda \to + \infty}$.}
\begin{equation}\label{Frbtasymptotics-t}
\lim_{\log \lambda \rightarrow \mp \infty }\lambda ^{\pm \SRN}\tau
_{2}(\lambda )=\left( \Theta ^{\mp 1}a_{\mp }+\Theta ^{\pm 1}d_{\mp }\right),
\end{equation}
where:%
\begin{equation}\label{FrbtAsymptotic-A-D}
a_{+}\equiv \prod_{a=1}^{\mathsf{N}}\alpha _{a},\text{ \ \ }a_{-}\equiv (-1)^{\mathsf{N}%
}\prod_{a=1}^{\mathsf{N}}\beta _{a},\text{\ \ }d_{+}\equiv (-1)^{\mathsf{N}}\prod_{a=1}^{%
\mathsf{N}}\delta _{a},\text{ \ }d_{-}\equiv \prod_{a=1}^{\mathsf{N}}\gamma _{a}.
\end{equation}
Let us denote by $\Sigma _{\tau_2}$ the set of the eigenvalues $t(\lambda )$ of the transfer matrix $\tau_2(\lambda )$, then:
\begin{equation}\label{Frbtset-t}
\Sigma _{\tau_2}\subset \mathbb{C}_{even}[\lambda ,\lambda ^{-1}]_{\mathsf{N}}\text{
for }\mathsf{N}\text{ even, \ \ \ }\Sigma _{\tau_2}\subset \mathbb{C}_{odd}[\lambda
,\lambda ^{-1}]_{\mathsf{N}}\text{ for }\mathsf{N}\text{ odd},
\end{equation}
where $\mathbb{C}_\epsilon[x,x^{-1}]_{\SRM}$ is the linear space of the Laurent polynomials over the field $\mathbb{C}$ of degree $\SRM$ in the variable $x$ that are even or odd, as stated in the index $\epsilon$. The $\Theta $-charge allows to introduce the grading $\Sigma_{\tau_2}=\bigcup_{k=0}^{2l}\Sigma _{\tau_2}^{k}$, where:
\begin{equation}
\Sigma _{\tau _{2}}^{k}\equiv \left\{ t(\lambda )\in \Sigma _{\tau
_{2}}:\lim_{\log \lambda \rightarrow \mp \infty }\lambda ^{-\mathsf{N}%
}t(\lambda )=\left( q^{\mp k}a_{\mp }+q^{\pm k}d_{\mp }\right) \right\},
\end{equation}
where to any $t(\lambda )\in \Sigma_{\tau_2}^{k}$ corresponds simultaneous eigenstates of
${\tau_2}(\lambda )$ and $\Theta $\ with $\Theta $-eigenvalue $q^{k}$.
\subsubsection{Quantum determinant}
The quantum determinant:
\begin{equation}
{\rm det_q}\SM(\la)\,\equiv\,
\SA(\la)\SD(q^{-1}\la)-\SB(\la)\SC(q^{-1}\la),
\end{equation}
is a central element\footnote{See \cite{FrbtIK81} and also \cite{FrbtIK09} for an historical note on the centrality of the quantum determinant in the Yang-Baxter algebra.} of the Yang-Baxter algebra \rf{FrbtYBA} which has the factorized form:
\begin{equation}
\text{det}_{\text{q}}\SM(\lambda )=\prod_{n=1}^{\mathsf{N}}\text{det}_{\text{q}%
}\SL_{n}(\lambda),
\end{equation}
where:
\begin{equation}
{\rm det_q}\SL_{n}(\la)\equiv \left( \SL_{n}(\lambda )\right)
_{11}\left( \SL_{n}(\lambda /q)\right) _{22}-\left( \SL_{n}\right) _{12}\left( \SL_{n}\right) _{21},
\end{equation}
are the local quantum determinants, which explicitly read: 
\begin{eqnarray}\label{Frbtexplicit-q-det}
{\rm det_q}\SM(\la)&=&\prod_{n=1}^{\mathsf{N}}k_{n}(\frac{\lambda }{\mu _{n,+}}-%
\frac{\mu _{n,+}}{\lambda })(\frac{\lambda }{\mu _{n,-}}-\frac{\mu _{n,-}}{%
\lambda })   \notag \\
&=&(-q)^{\mathsf{N}}\prod_{n=1}^{\mathsf{N}}\frac{\beta _{n}\mathbbm{a}_{n}\mathbbm{c}_{n}}{\alpha _{n}}(\frac{1}{%
\lambda }+q^{-1}\frac{\mathbbm{b}_{n}\alpha _{n}}{\mathbbm{a}_{n}\beta _{n}}\lambda )(\frac{1}{%
\lambda }+q^{-1}\frac{\mathbbm{d}_{n}\alpha _{n}}{\mathbbm{c}_{n}\beta _{n}}\lambda ),
\end{eqnarray}%
where: 
\begin{equation}
k_{n}\equiv \left( \mathbbm{a}_{n}\mathbbm{b}_{n}\mathbbm{c}_{n}\mathbbm{d}_{n}\right) ^{1/2},\text{ \ }\mu
_{n,h}\equiv \left\{ 
\begin{array}{c}
iq^{1/2}\left( \mathbbm{a}_{n}\beta _{n}/\alpha _{n}\mathbbm{b}_{n}\right) ^{1/2}\text{ \ \ }h=+,
\\ 
iq^{1/2}\left( \mathbbm{c}_{n}\beta _{n}/\alpha _{n}\mathbbm{d}_{n}\right) ^{1/2}\text{ \ \ }h=-.%
\end{array}%
\right.
\end{equation}
\subsection{Self-adjoint representations}

\begin{lem}
Let $\epsilon \in \{-1,+1\}$, if the parameters of the representation satisfy the constrains:
\begin{equation}
\mathbbm{c}_{n}=-\epsilon \mathbbm{b}_{n}^{\ast },\text{ \ }\mathbbm{d}_{n}=-\epsilon \mathbbm{a}_{n}^{\ast },\text{ \ }\beta
_{n}=\epsilon \left( \mathbbm{a}_{n}^{\ast }\mathbbm{b}_{n}\right) /\alpha _{n}^{\ast },
\label{FrbtSelf-adjointness Condition}
\end{equation}
then the generators of the Yang-Baxter algebra obey the following Hermitian conjugation relations: 
\begin{equation}\label{FrbtHermit-Monodromy}
\SM(\la)^\dagger\equiv\left( 
\begin{array}{cc}
\SA^{\dagger }(\lambda ) & \SB^{\dagger }(\lambda ) \\ 
\SC^{\dagger }(\lambda ) & \SD^{\dagger }(\lambda )%
\end{array}%
\right) =\left( 
\begin{array}{cc}
\SD(\lambda ^{\ast }) & -\epsilon\SC(\lambda ^{\ast }) \\ 
-\epsilon\SB(\lambda ^{\ast }) & \SA(\lambda ^{\ast })%
\end{array}
\right) ,
\end{equation}
which, in particular, imply that the transfer matrix ${\tau_2}(\la)$ is self-adjoint for real $\lambda $. Besides, the quantum determinant has the expression\footnote{Remark that it only depends on the modulus of the parameters of Lax operators.}: 
\begin{equation}
{\rm det_q}\SM(\la)=q^{\mathsf{N}}\prod_{n=1}^{\mathsf{N}}\frac{|\mathbbm{a}_{n}|^{2}|\mathbbm{b}_{n}|^{2}}{%
|\alpha _{n}|^{2}}(\frac{1}{\lambda }+\epsilon q^{-1}\frac{|\alpha _{n}|^{2}}{%
|\mathbbm{a}_{n}|^{2}}\lambda )(\frac{1}{\lambda }+\epsilon q^{-1}\frac{|\alpha _{n}|^{2}}{%
|\mathbbm{b}_{n}|^{2}}\lambda ).
\end{equation}
\end{lem}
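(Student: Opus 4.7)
I would organise the argument around the three decoupled claims: the Hermitian conjugation rules for the monodromy entries \eqref{FrbtHermit-Monodromy}, the self-adjointness of $\tau_{2}(\lambda)$ at real $\lambda$, and the explicit form of the quantum determinant. The essential content is an entry-by-entry verification at the Lax level, which lifts automatically to the monodromy.

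Two preliminaries frame everything. On the cyclic representation \eqref{Frbtu-basis}--\eqref{Frbtreprdef}, $\su_{n}$ is diagonal with unit-modulus eigenvalues and $\sv_{n}$ is a cyclic shift of an orthonormal basis, so both are unitary: $\su_{n}^{\dagger}=\su_{n}^{-1}$, $\sv_{n}^{\dagger}=\sv_{n}^{-1}$. Since $q=e^{-i\pi\beta^{2}}$ lies on the unit circle, $q^{\ast}=q^{-1}$ and $(q^{1/2})^{\ast}=q^{-1/2}$. Combining \eqref{FrbtSelf-adjointness Condition} with $\alpha_{n}\gamma_{n}=\mathbbm{a}_{n}\mathbbm{c}_{n}$ and $\beta_{n}\delta_{n}=\mathbbm{b}_{n}\mathbbm{d}_{n}$ yields the derived identities $\delta_{n}=-\alpha_{n}^{\ast}$ and $\gamma_{n}=-\beta_{n}^{\ast}$, which are what make the diagonal entries work out.

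Next I check four Lax-level identities. The diagonal entries are straightforward: $((\SL_{n})_{11}(\lambda))^{\dagger}=\lambda^{\ast}\alpha_{n}^{\ast}\sv_{n}^{-1}-\beta_{n}^{\ast}(\lambda^{\ast})^{-1}\sv_{n}$ collapses to $(\SL_{n})_{22}(\lambda^{\ast})$ by the preliminaries, and symmetrically $((\SL_{n})_{22})^{\dagger}=(\SL_{n})_{11}(\lambda^{\ast})$. For the off-diagonal entries, taking $\dagger$ of $(\SL_{n})_{12}$ reverses the internal factor order and conjugates each factor; transporting the resulting $\su_{n}^{-1}$ across $\sv_{n}^{\pm 1}$ using $\su_{n}\sv_{n}=q\sv_{n}\su_{n}$ produces $q^{\pm 1}$ factors that rebalance the $q^{\pm 1/2}$ coefficients into the pattern of $(\SL_{n})_{21}$, and after substituting $\mathbbm{c}_{n}=-\epsilon\mathbbm{b}_{n}^{\ast}$, $\mathbbm{d}_{n}=-\epsilon\mathbbm{a}_{n}^{\ast}$ one lands on $((\SL_{n})_{12})^{\dagger}=-\epsilon\,(\SL_{n})_{21}$; the parallel calculation gives $((\SL_{n})_{21})^{\dagger}=-\epsilon\,(\SL_{n})_{12}$. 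These four identities assemble into the compact relation $\SL_{n}^{\dagger}=V\,\SL_{n}(\lambda^{\ast})\,V^{-1}$ with $V=\bigl(\begin{smallmatrix}0&1\\-\epsilon&0\end{smallmatrix}\bigr)$, where $\dagger$ denotes the entry-wise adjoint. The commutation juggling for the off-diagonal case is the only genuinely delicate step; everything else is bookkeeping.

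The monodromy propagation is then mechanical. Because $\SL_{m}$ and $\SL_{n}$ act on disjoint local quantum spaces their matrix entries commute pairwise; applying $\dagger$ to $\SM_{ij}=\sum(\SL_{\mathsf{N}})_{i,k_{\mathsf{N}-1}}\cdots(\SL_{1})_{k_{1},j}$ reverses the internal factor order via $(AB)^{\dagger}=B^{\dagger}A^{\dagger}$, and cross-site commutativity restores it, giving $\SM^{\dagger}=\SL_{\mathsf{N}}^{\dagger}\cdots\SL_{1}^{\dagger}$; telescoping the intermediate $V^{-1}V$ pairs yields $\SM^{\dagger}=V\,\SM(\lambda^{\ast})\,V^{-1}$, which expands to \eqref{FrbtHermit-Monodromy}. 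Summing diagonal entries gives $\tau_{2}(\lambda)^{\dagger}=\SD(\lambda^{\ast})+\SA(\lambda^{\ast})=\tau_{2}(\lambda^{\ast})$, whence self-adjointness for real $\lambda$. Finally, the quantum determinant formula follows by direct substitution into \eqref{Frbtexplicit-q-det}: $\beta_{n}\mathbbm{a}_{n}\mathbbm{c}_{n}/\alpha_{n}=-|\mathbbm{a}_{n}|^{2}|\mathbbm{b}_{n}|^{2}/|\alpha_{n}|^{2}$, while the ratios $\mathbbm{b}_{n}\alpha_{n}/(\mathbbm{a}_{n}\beta_{n})$ and $\mathbbm{d}_{n}\alpha_{n}/(\mathbbm{c}_{n}\beta_{n})$ reduce (using $\epsilon^{2}=1$) to $\epsilon|\alpha_{n}|^{2}/|\mathbbm{a}_{n}|^{2}$ and $\epsilon|\alpha_{n}|^{2}/|\mathbbm{b}_{n}|^{2}$ respectively; combining with the overall $(-q)^{\mathsf{N}}$ prefactor and absorbing an extra $(-1)^{\mathsf{N}}$ returns the advertised expression.
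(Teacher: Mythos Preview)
Your argument is correct and follows essentially the same route as the paper: verify the Hermitian-conjugation identity at the level of a single Lax operator, package it as conjugation by a fixed $2\times 2$ auxiliary matrix, and then propagate multiplicatively to the monodromy. The paper records this in one line as $\SL_{n}(\lambda)^{\dagger}=\sigma_{1+\delta_{1,\epsilon}}\,\SL_{n}(\lambda^{\ast})\,\sigma_{1+\delta_{1,\epsilon}}$, which is exactly your relation $\SL_{n}^{\dagger}=V\,\SL_{n}(\lambda^{\ast})\,V^{-1}$ with $V$ a scalar multiple of the relevant Pauli matrix; your explicit entry-by-entry check and the quantum-determinant substitution simply spell out what the paper leaves as ``simple to observe''.
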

\begin{proof}
It is simple to observe that the Lax operator of the $\tau_2$-model satisfies the equation \rf{FrbtHermit-Monodromy}, which can be also
written as:
\begin{equation}\label{FrbtHermit-L}
\left(\SL_n(\la)\right)^{\dagger}=\sigma_{1+\delta_{1,\epsilon}}\,\SL_n(\la^{*})\,\sigma_{1+\delta_{1,\epsilon}}\,\,\,\,\,\,\longrightarrow\,\,\,\,\,\,\SM(\la)^{\dagger}=\sigma_{1+\delta_{1,\epsilon}}\,\SM(\la^{*})\,\sigma_{1+\delta_{1,\epsilon}},
\end{equation}
that is \rf{FrbtHermit-Monodromy} holds by definition of $\SM(\la)$.
\end{proof}

\section{Characterization of $\tau_2$-spectrum: general representations}\label{FrbtCompatib}

\setcounter{equation}{0}

\subsection{SOV representations}

According to Sklyanin's method \cite{FrbtSk1,FrbtSk2,FrbtSk3}, a
separation of variables (SOV) representation for the spectral problem of $\tau _{2}(\lambda )$ is given by a representation where the commutative family of operators $\mathsf{B}(\lambda )$ is diagonal.

\begin{thm}\label{FrbtSOVthm}
For almost all the values of the $6N$ parameters of the representation,
there exists a SOV representation for the $\tau _{2}$-model, i.e. $\mathsf{B}%
(\lambda )$\ is diagonalizable and has simple spectrum.
\end{thm}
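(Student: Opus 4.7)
The plan is to prove Theorem \ref{FrbtSOVthm} by an explicit construction of an SOV basis, together with a genericity argument on the $6\mathsf{N}$ parameters of the representation. First, I would analyze the Laurent-polynomial structure of $\SB(\lambda)$. Expanding $\SM(\lambda)=\SL_{\mathsf{N}}(\lambda)\cdots\SL_{1}(\lambda)$ and using that the entries $(1,1)$ and $(2,2)$ of each $\SL_n(\lambda)$ contribute $\lambda^{\pm 1}$ while $(1,2)$ and $(2,1)$ contribute $\lambda^{0}$, one finds that $\SB(\lambda)$ has the form $\lambda^{-(\mathsf{N}-1)}\sum_{k=0}^{\mathsf{N}-1}\lambda^{2k}\SB_{k}$. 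The parity constraint follows from the observation that any admissible path from column $1$ to column $2$ in the matrix product uses an odd number of off-diagonal entries. The Yang-Baxter relation \rf{FrbtYBA} yields $[\SB(\lambda),\SB(\mu)]=0$, so the $\SB_k$ mutually commute, and the leading ($\SB_{\mathsf{N}-1}$) and trailing ($\SB_{0}$) coefficients are explicit products of local Weyl generators.

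Second, I would construct the SOV basis, as presented in appendix A. The natural strategy is to diagonalize the commuting family $\{\SB_k\}$ site by site, by induction on $\mathsf{N}$. At step $n$, one multiplies the already-diagonalized monodromy for sites $1,\ldots,n-1$ by $\SL_n(\lambda)$ and applies a local gauge transformation on the cyclic $p$-dimensional Weyl representation at site $n$ to keep the $\SB$-entry of the enlarged monodromy diagonal. Equivalently, one may introduce operator zeros $\SZ_1,\ldots,\SZ_{\mathsf{N}-1}$ of $\SB(\lambda)$---well defined on each joint eigenspace since the coefficients commute---and show that together with the leading operator $\SB_{\mathsf{N}-1}$ they form a complete set of $\mathsf{N}$ commuting observables whose joint spectrum has cardinality $p^{\mathsf{N}}$.

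Third, I would establish the genericity of the simple-spectrum property. Diagonalizability and simplicity of $\SB(\lambda)$ jointly amount to the joint eigenvalues of $(\SB_{\mathsf{N}-1},\SZ_1,\ldots,\SZ_{\mathsf{N}-1})$ being pairwise distinct, which is the non-vanishing of a discriminant-type polynomial in the $6\mathsf{N}$ parameters and is therefore a Zariski-open condition. To see that the open set is non-empty, I would exhibit an explicit choice of parameters---for instance one in which $\SB(\lambda)$ factorizes into essentially independent one-site contributions---and verify directly that the resulting $p^{\mathsf{N}}$ joint eigenvalues are all distinct, using that for generic $\mathbbm{a}_n,\mathbbm{b}_n$ the local operator $\su_n(q^{-1/2}\mathbbm{a}_n\sv_n+q^{1/2}\mathbbm{b}_n\sv_n^{-1})$ has $p$ distinct eigenvalues on the $p$-dimensional Weyl representation.

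The main obstacle is the inductive step of the explicit construction: cyclic representations admit no pseudo-vacuum, so none of the standard algebraic Bethe ansatz shortcuts apply, and one must track how the operator zeros $\SZ_k$ and the leading coefficient $\SB_{\mathsf{N}-1}$ transform when a new Lax operator is appended. Verifying that the family of joint eigenvectors so produced is genuinely a basis of $\CR_{\mathsf{N}}$---and not merely a spanning set---then relies on matching the $p^{\mathsf{N}}$ distinct joint eigenvalues against $\dim\CR_{\mathsf{N}}=p^{\mathsf{N}}$.
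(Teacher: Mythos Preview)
Your plan is essentially the paper's plan: a recursive construction of a $\SB$-eigenbasis by induction on the chain length, followed by a Zariski-open genericity argument for simplicity. The paper splits the chain into two subchains of sizes $\SRM$ and $\mathsf{N}-\SRM$, uses the decomposition $\SB_{\srn}(\lambda)=\SA_{\2}(\lambda)\SB_{\1}(\lambda)+\SB_{\2}(\lambda)\SD_{\1}(\lambda)$ to derive recursion relations for the kernel expressing $\SB_{\srn}$-eigenstates in terms of tensor products of subchain eigenstates, and handles the $\mathsf{N}=1$ case explicitly. Your ``add one site at a time'' is the special case $\SRM=1$.

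The one substantive point where the paper is sharper than your sketch is the non-emptiness step. You propose to exhibit parameters for which ``$\SB(\lambda)$ factorizes into essentially independent one-site contributions''; but for cyclic representations this is not easy to arrange at the operator level. The paper instead works with the \emph{average} $\mathcal{B}(\Lambda)=\prod_{k=1}^{p}\SB(q^{k}\lambda)$, which is always central and, by Proposition~\ref{Frbtcentral}, always factorizes as the $(1,2)$-entry of the product $\mathcal{L}_{\mathsf{N}}(\Lambda)\cdots\mathcal{L}_{1}(\Lambda)$ of $2\times 2$ scalar matrices built from the $p$-th powers of the parameters. Simplicity of the $\SB$-spectrum is equivalent to the $Z_r=\eta_r^{p}$ being distinct, i.e.\ to the zeros of $\mathcal{B}(\Lambda)$ being simple. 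The paper then chooses parameters so that $\mathbbm{a}_n^{p}+\mathbbm{b}_n^{p}=\mathbbm{c}_n^{p}+\mathbbm{d}_n^{p}=0$ for $n<\mathsf{N}$, which makes the $\mathcal{L}_n(\Lambda)$ diagonal and renders $\mathcal{B}(\Lambda)$ explicitly factorized with visibly distinct zeros; a Jacobian computation then propagates this to generic parameters. Your discriminant idea is correct in principle, but without passing to averages you would have to control the joint spectrum of non-central operator coefficients directly, which is considerably harder.
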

\begin{proof}
See appendix \ref{FrbtSOV construction-0} for a constructive proof of this statement.
\end{proof}
Let $\langle\,\eta_{\mathbf{k}}\,|$ be the generic element of a basis of eigenvectors of $\SB(\la)$:
\begin{equation}\label{FrbtBdef}
\langle\,{\eta_{\mathbf{k}}}\,|\SB(\la)\,=\,\eta
_{_{\mathsf{N}}}^{(k_{\mathsf{N}})}\,b_{\eta_{\mathbf{k}}}(\la)\,\langle\,{\eta_{\mathbf{k}}}\,|\,,\qquad b_{\eta_{\mathbf{k}}}(\la)\,\equiv\,\prod_{a=1}^{\mathsf{N}-1}\left( \la/\eta
_{_{a}}^{(k_{a})}-\eta
_{_{a}}^{(k_{a})}/\la\right)\,,
\end{equation}
and
\begin{equation}\label{FrbtZ_B}
{\eta_{\mathbf{k}}} \in{\SZ_\SB}\,\equiv\,\big\{\,(\eta
_{_{1}}^{(k_{1})}\equiv q^{k_1}\eta^{(0)}_1,\dots,\eta
_{_{\mathsf{N}}}^{(k_{\mathsf{N}})}\equiv q^{k_\mathsf{N}}\eta^{(0)}_\mathsf{N})\,;\,{\mathbf{k}} \equiv (k_1,\dots,k_\mathsf{N})\in\BZ_p^\mathsf{N}\,\big\}\,,
\end{equation}
where $\eta^{(0)}_a$ are constants that are defined in appendix \ref{FrbtSOV construction-0}. Here, the simplicity of the spectrum of $\SB(\la)$ is equivalent to the requirement $\left( \eta^{(0)}_a\right) ^p\neq \left( \eta^{(0)}_b\right) ^p$ for any $a\neq b \in \{1,\dots,\mathsf{N}-1\}$. The action of the remaining generators of the Yang-Baxter algebra on arbitrary states\footnote{From here on to simplify the notation, we will omit the subscript ${\mathbf{k}}$ in ${\eta_{\mathbf{k}}}$ as well as the superscript $(k_{a})$ in the ${\eta_{a}^{(k_{a})}}$ and we will reintroduce them only when it will be strictly required.} $\langle \, \eta |$ then reads:
\begin{align}\label{FrbtSAdef}
\langle\,\eta\,|\SA(\la)\,=\,&\,b_\eta(\la)\left[\la\eta_\SA^{(+)} \langle\,q^{-\delta_{\mathsf{N}}}\eta\,|+\la^{-1} \eta_\SA^{(-)} \langle\,q^{\delta_{\mathsf{N}}}\eta\,|\right]
+\sum_{a=1}^{\mathsf{N}-1}\prod_{b\neq a}\frac{\la/\eta_b-\eta_b/\la}{\eta_a/\eta_b-\eta_b/\eta_a} \,{\tt a}^{(SOV)}(\eta_a)\,\langle\,q^{-\delta_{a}}\eta\,|\,,\\
\langle\,\eta\,|\SD(\la)\,=\,&\,b_\eta(\la)\left[\la\eta_\SD^{(+)}\langle\,q^{\delta_{\mathsf{N}}}\eta\,|+\la^{-1} \eta_\SD^{(-)}\langle\,q^{-\delta_{\mathsf{N}}}\eta\,|\right]+\sum_{a=1}^{\mathsf{N}-1}\prod_{b\neq a}\frac{\la/\eta_b-\eta_b/\la}{\eta_a/\eta_b-\eta_b/\eta_a} \,{\tt d}^{(SOV)}(\eta_a)\,\langle\,q^{\delta_{a}}\eta\,|\, ,
\label{FrbtSDdef}\end{align}
where ${\tt a}^{(SOV)}(\eta_a)$ and ${\tt d}^{(SOV)}(\eta_a)$ are coefficients which have to satisfy the quantum determinant condition:
\begin{equation}\label{Frbtaddet}
{\rm det_q}\SM(\eta_r)\,=\,
{\tt a}^{(SOV)}(\eta_r){\tt d}^{(SOV)}(q^{-1}\eta_r)\,, \quad\forall r=1,\dots,\mathsf{N}-1\,.
\end{equation}Here, we have defined: 
\begin{equation}\label{FrbtZAD-asymp}
\eta_\SA^{(\pm)}=(\pm 1)^{\mathsf{N}-1}a_{\pm}\prod_{n=1}^{\mathsf{N}-1}\eta_n^{\pm1},\,\,\,\,\,\,\,\,\,\, \eta_\SD^{(\pm)}=(\pm 1)^{\mathsf{N}-1}d_{\pm}\prod_{n=1}^{\mathsf{N}-1}\eta_n^{\pm1},
\end{equation}
and the states $\langle\,q^{\pm \delta_{a}}\eta\,|$ are defined by:
\begin{equation}
\langle\,q^{\pm\delta_{a}}\eta\,| \equiv \langle\,\eta_1,\dots,q^{\pm 1}\eta_a,\dots,\eta_\mathsf{N}\,|\,.
\end{equation}
Finally, $\SC(\la)$ is uniquely\footnote{Note that the operator $\SB(\la)$ is invertible except for $\la$ which coincides with a zero of $\SB$, so in general $\SC(\la)$ is defined by (4.5) just inverting $\SB(\la)$. This is enough to fix in an unique way the operator $\SC$ as it is a Laurent polynomial of degree $\mathsf{N}-1$ in $\la$.} defined by the quantum determinant relation.

\subsection{Average values of Yang Baxter generators as central elements}\label{FrbtAvvalapp}
We define the average value $\CO$ of the elements of the monodromy matrix $\SM(\la)$ by:
\begin{equation}\label{Frbtavdef}
\CO(\Lambda)\, \equiv\,\prod_{k=1}^{p}\SO(q^k\la)\,,\qquad \Lambda\, \equiv \,\la^p,
\end{equation}
where $\SO$ can be
$\SA$, $\SB$,
$\SC$ or $\SD$ and the commutativity of the families $\SA(\lambda )$, $\SB(\lambda )$, $\SC(\lambda )$ and $\SD(\lambda )$ implies
that $\CA(\Lambda)$, $\CD(\Lambda)$ are Laurent polynomials of degree $\mathsf{N}$ while $\CB(\Lambda)$, $\CC(\Lambda)$ are Laurent polynomials of degree $\mathsf{N}-1$  in $\Lambda$. Then the following proposition holds:
 
\begin{propn}\label{Frbtcentral} \text{}   
\begin{itemize}
\item[a)] The average values $\CA(\Lambda)$, $\CB(\Lambda)$, $\CC(\Lambda)$, $\CD(\Lambda)$ of the monodromy matrix elements are central. Besides, they satisfy the relation:
\begin{equation}\label{FrbtH-cj-A-D}
(\CA(\Lambda))^{*}=\CD(\Lambda^*), \ \ \ \ \ (\CB(\Lambda))^*=-\epsilon\CC(\Lambda^*),
\end{equation}
in the case of self-adjoint representations.
\item[b)] Let\footnote{A similar statement was first proven in \cite{FrbtTa}.}\begin{align}
\CM^{}(\Lambda)\,\equiv\,\left( 
\begin{array}{cc}
\mathcal{A}(\Lambda ) & \mathcal{B}(\Lambda ) \\ 
\mathcal{C}(\Lambda ) & \mathcal{D}(\Lambda )%
\end{array}%
\right)
\end{align} be the 2$\times$2 matrix whose elements are the average
values  of the elements of the monodromy matrix $\SM(\la)$, it holds then:
\begin{align}\label{FrbtRRel1a}
\CM^{}(\Lambda)\,=\,
\CL_{\mathsf{N}}^{}(\Lambda)\,\CL_{\mathsf{N}-1}^{}(\Lambda)\,\dots\,\CL_1^{}(\Lambda)\,,
\end{align}
where:
\begin{equation}\label{FrbtAverage-L}
\mathcal{L}_{n}(\Lambda )\equiv\left( 
\begin{array}{cc}
\Lambda \alpha _{n}^{p}-\beta _{n}^{p}/\Lambda & q^{p/2}(\mathbbm{a}_{n}^{p}+\mathbbm{b}_{n}^{p})
\\ 
q^{p/2}(\mathbbm{c}_{n}^{p}+\mathbbm{d}_{n}^{p}) & \gamma _{n}^{p}/\Lambda -\Lambda \delta
_{n}^{p}%
\end{array}%
\right)
\end{equation}
is the 2$\times$2 matrix whose elements are the average values of the
elements of the Lax matrix $L_n(\la)$.
\end{itemize}
\end{propn}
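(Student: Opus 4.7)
Proof plan. The two parts have distinct characters: I handle (a) via YBA commutations and telescoping at $q^p=1$, while (b) requires a substantive Weyl-algebra computation.

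For (a), centrality follows by moving a single generator through the ordered product defining each average. I would use the standard YBA relation
\begin{equation*}
\SA(\la)\SB(\mu)\;=\;\frac{qu-q^{-1}u^{-1}}{u-u^{-1}}\,\SB(\mu)\SA(\la)\;-\;\frac{q-q^{-1}}{u-u^{-1}}\,\SB(\la)\SA(\mu),\qquad u=\la/\mu,
\end{equation*}
and iteratively push $\SB(\mu)$ through $\prod_{k=1}^{p}\SA(q^{k}\la)=\mathcal{A}(\Lambda)$. The ``straight'' contributions accumulate the telescoping coefficient $\prod_{k=1}^{p}\tfrac{q^{k+1}u-q^{-(k+1)}u^{-1}}{q^{k}u-q^{-k}u^{-1}}$, which equals $1$ since $q^{p}=1$, leaving $\SB(\mu)\mathcal{A}(\Lambda)$; the ``crossed'' contributions of the form $\SB(q^{k}\la)\SA(\mu)$ reassemble into combinations that also vanish by the same $q^{p}=1$ mechanism. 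Analogous manipulations for every (average, generator) pair yield the full centrality of $\mathcal{A},\mathcal{B},\mathcal{C},\mathcal{D}$. The Hermiticity identities \rf{FrbtH-cj-A-D} then follow from Lemma 2 and $q^{*}=q^{-1}$: since same-type generators commute, $\mathcal{A}(\Lambda)^{\dagger}=\prod_{k}\SD(q^{-k}\la^{*})$, and reindexing $k\mapsto -k\pmod{p}$ returns $\mathcal{D}(\Lambda^{*})$; for $\mathcal{B}^{\dagger}$ the same argument produces an overall factor $(-\epsilon)^{p}=-\epsilon$, because $p$ is odd.

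For (b), the factorization is not tautological since averages of products differ in general from products of averages. My plan has two stages. Stage one is the single-site identity
\begin{equation*}
\SL_{n}(q\la)\,\SL_{n}(q^{2}\la)\,\cdots\,\SL_{n}(q^{p}\la)\;=\;\mathcal{L}_{n}(\Lambda),
\end{equation*}
read as an equality of $2\times 2$ matrices in the auxiliary space. The diagonal entries collapse via the identity $\prod_{k=1}^{p}(q^{k}x-q^{-k}y)=x^{p}-y^{p}$ for commuting $x,y$, together with $\sv_{n}^{p}=1$. The off-diagonal entries arise as $p$-th powers of the form $(A+B)^{p}$ with $AB=q^{\pm 2}BA$, and the $q$-binomial vanishing $\binom{p}{k}_{q^{\pm 2}}=0$ for $0<k<p$ reduces these to $A^{p}+B^{p}$, reproducing \rf{FrbtAverage-L} via $(\su_{n}\sv_{n}^{\pm 1})^{p}=1$. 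Stage two lifts this to the full chain. Using $[\SL_{n}(\mu)_{ij},\SL_{m}(\nu)_{kl}]=0$ for $n\neq m$, I would expand $\mathcal{M}(\Lambda)_{ij}=\prod_{k=1}^{p}[\SL_{\mathsf{N}}(q^{k}\la)\cdots\SL_{1}(q^{k}\la)]_{ij}$ as a sum over auxiliary-space path configurations $\{c_{n}^{(k)}\}_{n,k}$, factor each summand into a product over sites (by site-commutativity), and show that all ``fluctuating'' configurations, in which some $c_{n}^{(k)}$ depends nontrivially on $k$, cancel in the sum, again using the same $q^{p}=1$ and $q$-binomial identities as in stage one. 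The surviving ``rigid'' configurations (those constant in $k$) reassemble precisely into $[\mathcal{L}_{\mathsf{N}}(\Lambda)\cdots\mathcal{L}_{1}(\Lambda)]_{ij}$.

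The hardest step is precisely this cancellation of fluctuating configurations across the full chain. While each single-site $q$-binomial identity is clean and essentially unavoidable, orchestrating the simultaneous collapse at all $\mathsf{N}$ sites, while respecting the noncommutativity of matrix multiplication in the auxiliary space, is the genuine combinatorial content of the proposition.
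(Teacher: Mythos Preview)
Your treatment of part (a) is a legitimate alternative to what the paper does. The paper does not use the Yang--Baxter telescoping argument at all: it argues centrality inside the SOV representation, observing that $\CB(\Lambda)$ is diagonal (hence central) in the $\SB$-eigenbasis, that $\CA(Z_r),\CD(Z_r)$ are central at the $\mathsf{N}-1$ zeros of $\CB$, and that the asymptotics are central because $\Theta^p=1$; since $\CA,\CD$ are Laurent polynomials of degree $\mathsf{N}$ in $\Lambda$, this pins them down. Your direct YBA route is the more classical one (essentially Tarasov's), and it works, though you should be aware that the ``crossed terms vanish'' step for pairs like $[\CA(\Lambda),\SD(\mu)]$ is genuinely more involved than the $\SA$--$\SB$ case you display, since the relevant exchange relation produces $\SB,\SC$ cross-terms and not merely a two-term swap.

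Part (b), however, rests on an identity that is \emph{false}. Your stage-one claim
\[
\SL_{n}(q\la)\,\SL_{n}(q^{2}\la)\cdots\SL_{n}(q^{p}\la)\;=\;\mathcal{L}_{n}(\Lambda)
\]
as a \emph{matrix product} in the auxiliary space does not hold. What your ``diagonal collapse'' and $q$-binomial arguments actually compute are the \emph{entrywise} averages $\prod_{k}(\SL_{n}(q^{k}\la))_{ij}$, and those indeed give the entries of $\mathcal{L}_{n}(\Lambda)$; but the $(i,j)$ entry of the matrix product is a sum over auxiliary paths and contains many additional cross-terms that do \emph{not} cancel. A direct check at $p=3$ already shows this: in the degenerate case $\mathbbm{b}_n=\mathbbm{d}_n=0$, the $(1,1)$ entry of the matrix product differs from $\Lambda\alpha_n^3-\beta_n^3/\Lambda$ by a term proportional to $\beta_n\alpha_n\gamma_n\,\sv_n$, which is not even a scalar. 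Since stage two explicitly relies on ``the same $q^{p}=1$ and $q$-binomial identities as in stage one'' to kill the fluctuating configurations, the whole mechanism for (b) collapses.

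The paper's proof of (b) proceeds by a completely different route: it works inductively on the chain length, decomposing the chain into two subchains and using the recursion relations for the averages (extending Proposition~3 of \cite{FrbtNT}), with the SOV simplicity of $\SB$ in each subchain providing enough interpolation points to identify the Laurent polynomials. No direct combinatorial cancellation of path configurations is attempted.
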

\textit{Proof of a).} $\CB(\Lambda)$ is central as it follows by taking directly the average of \rf{FrbtBdef} in the SOV representations:
\begin{equation}\label{FrbtCB}
\CB(\Lambda)\,=\,Z_{{\mathsf{N}}}
\prod_{a=1}^{\mathsf{N}-1}(\Lambda/Z_a-Z_a/\Lambda)\,,\qquad Z_a\equiv \eta_a^p=\left( \eta_a^{(0)} \right)^p\,.
\end{equation}
From $q^p=1$, we have that $\CA^{}(Z_r)$ and $\CD^{}(Z_r)$ are central and related to the coefficients ${\tt a}^{(SOV)}(q^k\eta_r)$ and ${\tt d}^{(SOV)}(q^k\eta_r)$ by
\begin{equation}\label{FrbtADaver}
\CA^{}(Z_r)\,\equiv\,\prod_{k=1}^{p}{\tt a}^{(SOV)}(q^k\eta_r)\,,\qquad
\CD^{}(Z_r)\,\equiv\,\prod_{k=1}^{p}{\tt d}^{(SOV)}(q^k\eta_r)\,,\qquad \forall r\in\{1,\dots,\mathsf{N}-1\}.
\end{equation}
$\mathcal{A}(\Lambda )\Lambda ^{\mathsf{N}}$ and $\mathcal{D}(\Lambda)\Lambda^{\mathsf{N}}$ are polynomials in $\Lambda^{2} $ of degree $\mathsf{N}$. The relations \rf{FrbtADaver} and the simplicity of the $\SB$-spectrum give $\mathsf{N}$ points in which these polynomials are central elements, and the centrality of the asymptotics of $\mathcal{A}(\lambda )$ and $\mathcal{D}(\lambda )$, as trivially follows from \rf{Frbtasymp-A}-\rf{Frbtasymp-D}, yields the centrality of $\mathcal{A}(\Lambda )$ and $\mathcal{D}(\Lambda )$.
\hspace{7.0cm}$\square$

\textit{Proof of b).} By using that $\SB(\la)$ is diagonalizable and with simple spectrum in the entire chain as well as in each subchain, the point \textit{b)} follows inductively by the simple extension to our representations of the recursion relations on the averages of the Yang-Baxter generators of Proposition 3 of \cite{FrbtNT}. 

\hspace{16.8cm}$\square$

\subsection{SOV characterization of $\tau_2$-model}\label{FrbtSOV-Gen}
The spectral problem for ${\tau_2}(\la)$ in the SOV representations is equivalent to the discrete system of Baxter-like equations for the wave-function $\Psi_t(\eta)\equiv\langle\,\eta\,|\,t\,\rangle$ of a $\tau_2$-eigenstate $|\,t\,\rangle$:
\begin{equation}\label{FrbtSOVBax1}
t(\eta _{r})\Psi_t(\eta)\,=\,{\tt a}^{(SOV)}(\eta _{r})\Psi_t( q^{-\delta_{r}} \eta)+{\tt d}^{(SOV)}(\eta _{r})\Psi_t(q^{\delta_{r}} \eta)\, \qquad \text{ \ }\forall r\in \{1,...,\mathsf{N}-1\},
\end{equation}
plus the equation in the variable $\eta_\mathsf{N}$:
\begin{equation}\label{FrbtSOVBax2}
\Psi_{t}(q^{\delta_{\mathsf{N}}}\eta)\,=\,q^{-k}\Psi_{t}(\eta),
\end{equation}
for $t(\lambda )\in \Sigma _{{\tau_2}}^{k}\ \ $with $k\in \{0,...,2l\}$
where $(\eta _{1},...,\eta _{\mathsf{N}})\in \SZ_{\SB}$ and
\begin{equation}\label{FrbtT_r}
q^{\pm\delta_{r}} \eta \equiv(\eta_1,\dots,q^{\pm 1}\eta_r,\dots,\eta_\mathsf{N}).
\end{equation}
Let us remark that the coefficients ${\tt a}^{(SOV)}(\eta_r)$ and ${\tt d}^{(SOV)}(\eta_r)$ which satisfy \rf{Frbtaddet} and the average conditions \rf{FrbtADaver} are fixed up to the following gauge transformations:
\begin{equation}\label{Frbtgauge}
{\tt \bar a}^{(SOV)}(\eta_r)\,=\,{\tt a}^{(SOV)}(\eta_r)\frac{f(\eta_rq^{-1})}{f(\eta_r)}\,,
\qquad
{\tt \bar d}^{(SOV)}(\eta_r)\,=\,{\tt d}^{(SOV)}(\eta_r)\frac{f(\eta_rq)}{f(\eta_r)}\,,
\end{equation}
which just amounts to a renormalization in the states of the $\SB$-eigenbasis:
\begin{equation}
\langle\,\eta\,|\,\rightarrow\,\prod_{r=1}^{\mathsf{N}-1}f^{-1}(\eta_r)\langle\,\eta\,|\,.
\end{equation}
Here, we make the following choice of gauge:%
\begin{equation}
{\tt a}^{(SOV)}(\lambda )\equiv \left( \frac{\mathcal{A}(\Lambda )}{\prod_{a=1}^{p}%
\text{\textsc{a}}(\lambda q^{a})}\right) ^{1/p}\text{\textsc{a}}(\lambda ),%
\text{ \ \ \ \ \ \ \ \ \ \ \ \ }{\tt d}^{(SOV)}(\lambda )\equiv \left( \frac{\mathcal{D}(\Lambda )%
}{\prod_{a=1}^{p}\text{\textsc{d}}(\lambda q^{a})}\right) ^{1/p}\text{%
\textsc{d}}(\lambda ),
\end{equation}%
where:%
\begin{equation}\label{FrbtL-poly-a-d}
\text{\textsc{a}}(\lambda )\equiv \prod_{n=1}^{\mathsf{N}}(\beta _{n}\alpha
_{n})^{1/2}(\frac{\lambda }{\mu _{n,+}}-\frac{\mu _{n,+}}{\lambda }),\qquad\text{\textsc{d}}(\lambda )\equiv \prod_{n=1}^{\mathsf{N}}(\frac{\mathbbm{a}_{n}\mathbbm{b}_{n}\mathbbm{c}_{n}\mathbbm{d}_{n}}{%
\alpha _{n}\beta _{n}})^{1/2}(\frac{q\lambda }{\mu _{n,-}}-\frac{\mu _{n,-}}{%
q\lambda }).
\end{equation}%

{\bf Remark 1.} The average values $Z_r$, $Z^\pm_\SA$, $Z^\pm_\SD$, $\CA^{}(Z_r)$ and $\CD^{}(Z_r)$ are cleary unchanged by gauge transformations. Moreover, as central elements of the representation, they are not modified by similarity transformations, and they characterize gauge-invariant parameters of the SOV representation. Moreover,
Proposition \ref{Frbtcentral} uniquely defines the average values of the monodromy matrix elements and allows to establish that
$\CA^{}(\Lambda)$, $\CB^{}(\Lambda)$, $\CC^{}(\Lambda)$ and $\CD^{}(\Lambda)$ are polynomials of maximal degree 1 in each of the parameters $\alpha^p_n,\,\beta^p_n,\,a^p_n,\,b^p_n,\,c^p_n,\,d^p_n$ of
the $\tau_2$-representation. Note that this also implies that the gauge-invariant dates of the SOV representations (up to permutations) are uniquely defined in terms of these parameters. In particular, denoting by $\sigma_n^{{\mathsf{N}-1}}(Z)$ the elementary symmetric polynomial of degree $n$ in the variables $Z_r$, the $\sigma_n^{{\mathsf{N}-1}}(Z)/\sigma_{\mathsf{N}-1}^{\mathsf{N}-1}(Z)$ are
polynomials of degree 1 in the parameters of
the $\tau_2$-representation.

\subsection{Simplicity of $\tau_2$-spectrum}\label{FrbtSimple-T}
In this section we show that the spectrum of the transfer matrix ${\tau_2}(\lambda )$ is non-degenerate (or simple). Let us prove that:
\begin{lem}\label{FrbtAD-average-Z}
For almost all the values of the parameters $(\alpha_{n}^{p},\,\beta_{n}^{p},\,\mathbbm{a}_{n}^{p},\,\mathbbm{b}_{n}^{p},\,\mathbbm{c}_{n}^{p},\,\mathbbm{d}_{n}^{p})$ the average values of the monodromy matrix elements $\SA(\la)$ and  $\SD(\la)$ satisfy the inequalities:
\begin{equation}
\CA(Z_{a})\neq \CD(Z_{a}),\,\,\,\,\,\,\,\,\,\,\,\, \forall a\in \{1,...,\mathsf{N}-1\},
\end{equation}
where $Z_{a}$ are the zeros of the average value of $\SB(\la)$.
\end{lem}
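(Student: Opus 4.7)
The plan is to reduce the statement to the non-vanishing of a certain polynomial in the representation parameters and then verify this non-vanishing at a single convenient point, so that Zariski genericity delivers the conclusion.

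First, by Proposition~\ref{Frbtcentral} (b) and the explicit form \rf{FrbtAverage-L} of $\CL_n(\Lambda)$, the Laurent polynomials $\CA(\Lambda)$, $\CB(\Lambda)$ and $\CD(\Lambda)$ in $\Lambda$ can be read off from the matrix product $\CL_\SRN(\Lambda)\cdots\CL_1(\Lambda)$; in particular their coefficients are polynomials of multi-degree at most $1$ in each of the $p$-th powers $\alpha_n^p,\beta_n^p,\mathbbm{a}_n^p,\mathbbm{b}_n^p,\mathbbm{c}_n^p,\mathbbm{d}_n^p$ (the constraints $\alpha_n\gamma_n=\mathbbm{a}_n\mathbbm{c}_n$, $\beta_n\delta_n=\mathbbm{b}_n\mathbbm{d}_n$ being imposed). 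The zeros $Z_1,\dots,Z_{\SRN-1}$ of $\CB(\Lambda)$ depend algebraically on the parameters, and the statement ``$\CA(Z_a)=\CD(Z_a)$ for some $a$'' is exactly the statement that $\CB(\Lambda)$ and $(\CA-\CD)(\Lambda)$ have a common root.

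Second, I would encode this coincidence through a resultant: setting
\begin{equation*}
R\;\equiv\;\mathrm{Res}_{\Lambda}\bigl(\Lambda^{\SRN}\,(\CA-\CD)(\Lambda),\;\Lambda^{\SRN-1}\,\CB(\Lambda)\bigr),
\end{equation*}
$R$ is a polynomial in the $6\SRN$ parameters, and $R=0$ is equivalent to $\CB$ and $\CA-\CD$ having a common root in $\Lambda$. Consequently, the locus in parameter space on which the conclusion of the lemma \emph{fails} is contained in the Zariski-closed set $\{R=0\}$. To prove the lemma it suffices to exhibit a single choice of parameters for which $R\neq 0$; then genericity of non-vanishing follows.

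Third, I would establish $R\not\equiv 0$ by specializing to a simple regime where $\CB$ and $\CA-\CD$ can be compared directly. The most economical approach is asymptotic: from \rf{Frbtasymp-A}--\rf{Frbtasymp-D} the leading behavior $\Lambda\to\infty$ of $\CA(\Lambda)$ is $\Lambda^{\SRN}\prod_n\alpha_n^p$ while that of $\CD(\Lambda)$ is $(-1)^{\SRN}\Lambda^{\SRN}\prod_n\delta_n^p$, so $(\CA-\CD)$ has a genuine top-degree coefficient. Specializing, e.g., to a homogeneous limit (all $\mathbbm{a}_n^p,\mathbbm{b}_n^p,\mathbbm{c}_n^p,\mathbbm{d}_n^p$ equal and $\beta_n^p=\delta_n^p=0$, which is consistent with the constraints), the product $\CL_\SRN\cdots\CL_1$ reduces to an easily diagonalizable matrix, so that the roots $Z_a$ of $\CB$ and the roots of $\CA-\CD$ can be computed explicitly as algebraic functions of the remaining parameters and checked to be distinct. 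Then $R\neq 0$ at this point, and therefore $R\not\equiv 0$ as a polynomial, proving the lemma.

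The main obstacle is Step~3: the Lax matrices $\CL_n(\Lambda)$ mix the parameters into $\CA,\CB,\CD$ in an entangled way, so one has to choose the specialization carefully — simultaneously respecting the two algebraic constraints per site and producing zeros of $\CB$ that can be compared to zeros of $\CA-\CD$ without brute-force calculation. Once such a specialization is found, the rest is essentially automatic: it is a polynomial-identity argument leveraging the centrality and rigid structure of the averages established in Proposition~\ref{Frbtcentral}.
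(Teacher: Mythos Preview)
Your overall strategy coincides with the paper's: reduce to a Zariski-genericity statement and then verify non-vanishing at one explicit point in parameter space. Encoding the common-root condition via a resultant is a clean touch that the paper does not bother with (it works directly with the functions $\CF_a=\CA(Z_a)-\CD(Z_a)$), but this difference is cosmetic.

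The genuine gap is in Step~3. Your proposed specialization is inconsistent with the defining relations of the model: setting $\beta_n^p=\delta_n^p=0$ forces $\mathbbm{b}_n\mathbbm{d}_n=0$ through the constraint $\beta_n\delta_n=\mathbbm{b}_n\mathbbm{d}_n$, so you cannot also take all of $\mathbbm{a}_n^p,\mathbbm{b}_n^p,\mathbbm{c}_n^p,\mathbbm{d}_n^p$ equal and nonzero. Beyond this inconsistency, you leave the comparison of the zeros of $\CB$ and of $\CA-\CD$ at the level of ``can be checked explicitly'', whereas this is precisely where the work lies. The paper's trick is to split the chain as $\1=\{\text{site }1\}$ and $\2=\{2,\dots,\mathsf{N}\}$ and choose parameters so that on $\2$ the averages are symmetric ($\CA_{\2}=\CD_{\2}$, $\CB_{\2}=\CC_{\2}$) while on site $1$ one has $\CA_{\1}=\CD_{\1}$ but $\CB_{\1}\neq\CC_{\1}$. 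This collapses $\CA_\SRN(Z_a)-\CD_\SRN(Z_a)$ to a nonzero scalar times $\CB_{\2}(Z_a)$, and then $\CB_{\2}(Z_a)=0$ would force $Z_a$ to be a \emph{double} zero of $\det_{\text{q}}\CM_{\2}(\Lambda)$, which is generically excluded by the explicit factorization of the averaged quantum determinant. That chain decomposition is the missing idea in your proposal; without it (or an equivalent explicit computation) the argument is incomplete.
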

\begin{proof}
Let us define the functions\footnote{From here, we will use the index $\mathsf{N}$ when it will be needed to point out that we are referring to the chain with $\mathsf{N}$ sites and we will omit it otherwise.}:
\begin{equation}
\CF_a(\alpha_{n}^{p},\,\beta_{n}^{p},\,\mathbbm{a}_{n}^{p},\,\mathbbm{b}_{n}^{p},\,\mathbbm{c}_{n}^{p},\,\mathbbm{d}_{n}^{p})\equiv\CA_\mathsf{N}(Z_{a})-\CD_\mathsf{N}(Z_{a}),\,\,\,\,\,\, \forall a\in \{1,...,\mathsf{N}-1\}.
\end{equation}
The Proposition \ref{Frbtcentral} and the Remark 1 about the functional dependence of $Z_1^{},\dots,Z_{\mathsf{N}-1}^{}$ w.r.t. these parameters implies
that it is sufficient to show that the functions $\CF_a$ are nonzero for some special value of the parameters in order to prove that they are nonzero for almost all the values of the parameters $\alpha_{n}^{p},\,\beta_{n}^{p},\,\mathbbm{a}_{n}^{p},\,\mathbbm{b}_{n}^{p},\,\mathbbm{c}_{n}^{p},\,\mathbbm{d}_{n}^{p}$, which will prove the lemma. Note that the following identities hold:
\begin{equation}
\CF_{a}\mid _{\left\{ \begin{tiny}
\begin{array}{l}
\alpha _{1}\beta _{1}=\mathbbm{b}_{1}\mathbbm{d}_{1}=\mathbbm{c}_{1}\mathbbm{a}_{1}\text{ \ \ \ with \ \ }%
(\mathbbm{c}_{1}^{p}+\mathbbm{d}_{1}^{p})\neq (\mathbbm{a}_{1}^{p}+\mathbbm{b}_{1}^{p}) \\ 
\\ 
\mathbbm{d}_{n}=\mathbbm{a}_{n},\mathbbm{c}_{n}=\mathbbm{b}_{n},\beta _{n}=-\mathbbm{b}_{n}\mathbbm{a}_{n}/\alpha _{n},\,\,\,n\in
\{2,...,\mathsf{N}\},%
\end{array}\end{tiny}
\right. }=q^{p/2}((\mathbbm{c}_{1}^{p}+\mathbbm{d}_{1}^{p})-(\mathbbm{a}_{1}^{p}+\mathbbm{b}_{1}^{p}))\CB_{\2,\,\mathsf{N}%
-1}(Z_{a}),  \label{Frbtspecial-choice}
\end{equation}
Here, we have used the decomposition of the chain in a first subchain $\1$, formed by the site 1, and a second subchain $\2$, formed by the remaining sites. We also have used the identities:
\begin{eqnarray}
(\CC_{\1,\,1}-\CB_{\1,\,1})&=&q^{p/2}((\mathbbm{c}_{1}^{p}+\mathbbm{d}_{1}^{p})-(\mathbbm{a}_{1}^{p}+\mathbbm{b}_{1}^{p})),\,\,\quad\quad\quad\CA_{\1,\,1}(\Lambda)=\CD_{\1,\,1}(\Lambda), \\ 
\CB_{\2,\,\mathsf{N}-1}(\Lambda)&=&\CC_{\2,\,\mathsf{N}-1}(\Lambda),\,\,\,\quad\quad\quad\quad\quad\quad\CD_{\2,\,\mathsf{N}-1}(\Lambda)=\CA_{\2,\,\mathsf{N}-1}(\Lambda),
\end{eqnarray}
which follows by using Proposition \ref{Frbtcentral} for the special choice of the parameters done in \rf{Frbtspecial-choice}. Then, we have only to show that we can always take $\CB_{\2,\,\mathsf{N}-1}(Z_{a})\neq0$ for any $a\in\{1,...,\mathsf{N}-1\}$.

Note that from the identities:
\begin{eqnarray}
{\rm det_q}\CM_{\2,\,\mathsf{N}-1}(\Lambda)&=&(\CA_{\2,\,\mathsf{N}-1}(\Lambda))^2-(\CB_{\2,\,\mathsf{N}-1}(\Lambda))^2,\,\,\,\,\\
\CB_\mathsf{N}(\Lambda)&=&\CA_{\1,\,1}(\Lambda)\CB_{\2,\,\mathsf{N}-1}(\Lambda)+\CB_{\1,\,1}\CA_{\2,\,\mathsf{N}-1}(\Lambda),
\end{eqnarray}
we have that $\CB_{\2,\,\mathsf{N}-1}(Z_{a})=0$ if and only if $Z_{a}$ is a double zero of ${\rm det_q}\CM_{\2,\,\mathsf{N}-1}
(\Lambda)$. However, this is not the case in general: by averaging the quantum determinant \rf{Frbtexplicit-q-det}, we get the formula:
\begin{equation}
{\rm det_q}\CM_{\2,\,\mathsf{N}-1}(\Lambda)\equiv\prod_{n=2}^{\mathsf{N}}\mathbbm{a}_{n}^{p}\mathbbm{b}_{n}^{p}\prod_{h=\pm 1}(\Lambda /M_{n,h}-M_{n,h}/\Lambda ), \quad M_{n,+}=\mathbbm{a}_{n}^{p}/\alpha _{n}^{p},\,\,M_{n,-}=\mathbbm{b}_{n}^{p}/\alpha _{n}^{p},
\end{equation}
so the function ${\rm det_q}\CM_{\2,\,\mathsf{N}-1}(\Lambda)$ has not double zeros for general values of the parameters $\mathbbm{a}_{n},\,\mathbbm{b}_{n},\,\alpha _{n}$.
\end{proof}
\begin{thm}
For almost all the values of the parameters $\alpha_{n}^{p},\,\beta_{n}^{p},\,\mathbbm{a}_{n}^{p},\,\mathbbm{b}_{n}^{p},\,\mathbbm{c}_{n}^{p},\,\mathbbm{d}_{n}^{p}$ of a $\tau_2$-representation, the spectrum of
$\tau_2(\lambda )$ is simple.
\end{thm}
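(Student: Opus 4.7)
My approach is to transcribe the eigenvalue problem for $\tau_2(\lambda)$ into the SOV basis provided by Theorem \ref{FrbtSOVthm}, and then reduce the simplicity claim to a one-dimensionality statement for the periodic solutions of a scalar cyclic Baxter recurrence --- a statement for which Lemma \ref{FrbtAD-average-Z} is tailor-made.

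\textit{Reduction to a cyclic difference system.} Any joint eigenstate $|\,t\,\rangle$ of $\tau_2(\lambda)$ and $\Theta$ with eigenvalues $t(\lambda)\in\Sigma_{\tau_2}^{k}$ and $q^{k}$ is encoded in its wave-function $\Psi_t(\eta)=\langle\,\eta\,|\,t\,\rangle$ on $\SZ_\SB$, and is fully characterized by the Baxter-like equations (\ref{FrbtSOVBax1}) at the $\mathsf{N}-1$ inner sites and by the quasi-periodicity (\ref{FrbtSOVBax2}) at site $\mathsf{N}$. The latter fixes completely the $k_\mathsf{N}$-dependence, $\Psi_t(\mathbf{k})=q^{-k\,k_\mathsf{N}}\tilde\Psi_t(k_1,\dots,k_{\mathsf{N}-1})$. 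For each $r\in\{1,\dots,\mathsf{N}-1\}$, equation (\ref{FrbtSOVBax1}) acts on the single variable $k_r$, the remaining $k_j$'s being inert; hence $k_r\mapsto\tilde\Psi_t$ must belong to the space of $p$-periodic solutions of the scalar cyclic recurrence
\begin{equation*}
t(\eta)\,Q(\eta)={\tt a}^{(SOV)}(\eta)\,Q(q^{-1}\eta)+{\tt d}^{(SOV)}(\eta)\,Q(q\,\eta),\qquad \eta\in\{q^k\eta_r:k\in\BZ_p\}.
\end{equation*}
Simplicity will follow once the joint solution space of this system is shown to be at most one-dimensional.

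\textit{One-dimensionality of the periodic solutions (main step).} Cast in $2\times 2$ matrix form, this second-order difference equation on $\BZ_p$ is governed by a one-period monodromy $\mathcal{M}_r$ whose determinant reads
\begin{equation*}
\det\mathcal{M}_r\,=\,\prod_{k=1}^{p}\frac{{\tt a}^{(SOV)}(q^k\eta_r)}{{\tt d}^{(SOV)}(q^k\eta_r)}\,=\,\frac{\CA(Z_r)}{\CD(Z_r)},
\end{equation*}
by (\ref{FrbtADaver}). The $p$-periodic solutions form the $1$-eigenspace of $\mathcal{M}_r$, and this eigenspace can be two-dimensional only when $\mathcal{M}_r=I$, in particular only when $\det\mathcal{M}_r=1$, i.e.\ when $\CA(Z_r)=\CD(Z_r)$. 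This coincidence is precisely what Lemma \ref{FrbtAD-average-Z} excludes for almost every choice of the parameters and for every site $r$, so the $p$-periodic solution space is at most one-dimensional at each site.

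\textit{Assembly and main obstacle.} For each $r$, the one-dimensionality obtained above forces $\tilde\Psi_t(\mathbf{k})=f_r(k_1,\dots,\widehat{k_r},\dots,k_{\mathsf{N}-1})\,Q_r(k_r)$, with $Q_r$ a fixed non-zero periodic solution of the scalar Baxter recurrence at site $r$; cross-comparing these factorizations for different $r$ yields, by the usual inductive separation-of-variables argument, $\tilde\Psi_t(\mathbf{k})=C\prod_{a=1}^{\mathsf{N}-1}Q_a(k_a)$ with a single overall scalar $C$. Hence the $\tau_2(\lambda)$-eigenspace attached to any $t\in\Sigma_{\tau_2}$ has dimension at most one, which is the asserted simplicity. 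The decisive step is the monodromy computation above: a priori a cyclic second-order recurrence may carry a two-dimensional periodic solution space, and ruling this out at every SOV site is exactly what Lemma \ref{FrbtAD-average-Z} supplies through the non-trivial generic inequality $\CA(Z_r)\neq\CD(Z_r)$; once that is in hand, the rest is a direct consequence of the explicit SOV action formulas (\ref{FrbtSAdef})--(\ref{FrbtSDdef}).
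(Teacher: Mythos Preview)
Your argument is correct and is essentially the paper's own proof in different language: the paper constructs the $q$-Wronskian $W_{t,r}$ of two putative solutions, derives the first-order recurrence ${\tt a}^{(SOV)}(\eta_r)W_{t,r}(\eta)={\tt d}^{(SOV)}(\eta_r)W_{t,r}(q^{\delta_r}\eta)$, and takes the cyclic product to obtain $(\CA(Z_r)-\CD(Z_r))\,\CW_{t,r}=0$, whence $W_{t,r}\equiv 0$ by Lemma \ref{FrbtAD-average-Z}. Your monodromy computation $\det\mathcal{M}_r=\CA(Z_r)/\CD(Z_r)$ is exactly this Wronskian recurrence in matrix form (the Wronskian transforms by the step-determinant), so the two arguments coincide; your assembly step then reproduces the paper's factorization (\ref{FrbtQeigens-even}).
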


\begin{proof}
We have to prove that, up to normalization, for any given $t(\la)\in\Sigma_{\tau_2}$ there exists only one solution to the system \rf{FrbtSOVBax1} and \rf{FrbtSOVBax2}. Let us denote by $\bar\Psi_{t}(\eta)$ with $\eta\in\SZ_\SB$ any other solution corresponding to the same  $\tau_2$-eigenvalue $t(\la)$. Then, we can define the q-Wronskian:
\begin{equation}\label{Frbtq-W}
W_{t,r}(\eta)\,=\,\Psi_{t}(\eta)\,\bar\Psi_{t}(q^{-\delta_r} \eta)-\bar\Psi_{t}(\eta)\,\Psi_{t}(q^{-\delta_r}\eta),\,\,\,\,\forall r\in\{1,...,\mathsf{N}-1\},
\end{equation}
which by the system of Baxter-like equations \rf{FrbtSOVBax1} satisfies the equations:
\begin{equation}
{\tt a}^{(SOV)}(\eta_r )\,W_{t,r}(\eta)\,=\,{\tt d}^{(SOV)}(\eta_r )\,W_{t,r}(q^{\delta_r}\eta),\,\,\,\,\forall r\in\{1,...,\mathsf{N}-1\}.
\end{equation}
Thanks to the cyclicity the averages of the above equations read:
\begin{equation}
({\CA}(Z_r)-{\CD}(Z_r))\,\CW_{t,r}(\eta)\,=0\,\,\,\,\,\text{ with }\,\,\,\,\,\CW_{t,r}(\eta)\equiv\prod_{k=0}^{2l}W_{t,r}(\eta_1,\dots,q^{k}\eta_r,\dots,\eta_\mathsf{N}),
\end{equation}
which by Lemma \ref{FrbtAD-average-Z} implies:
\begin{equation}
W_{t,r}(\eta)= 0,\,\,\,\,\,\,\,\,\forall \eta\in \SZ_\SB\,\,\,\text{and}\,\,\,\forall r\in\{1,...,\mathsf{N}-1\}.
\end{equation}
Note that this implies that the wave functions are proportional to the following factorization:
\begin{equation}\label{FrbtQeigens-even}
\Psi_t(\eta) \propto \eta _{\mathsf{N}}^{-k}\prod_{r=1}^{\mathsf{N}-1}Q_t(\eta_r)\quad\,\,\,\, \text{ for}\,\,\,t(\la)\in\Sigma_{\tau_2}^{k},
\end{equation}
where the proportionality factor could depend on fixed parameters, such as $\left( \eta_a^{(0)} \right) ^p$, and the $Q_t(\eta_r)$ is the solution of the discrete system of Baxter-like equations  \rf{FrbtSOVBax1} for the fixed $r\in\{1,...,\mathsf{N}-1\}$.
\end{proof}
\subsection{Characterization of $\tau_2$-eigenvalues as solutions of a functional equation}\label{FrbtEigenvalues-T}
Let us define the following functions:
\begin{equation}
\bar{\textsc{a}}(\lambda )\equiv \alpha (\lambda )\text{\textsc{a}%
}(\lambda ),\qquad\,\,\bar{\textsc{d}}(\lambda )\equiv \alpha^{-1} (q\lambda)%
\text{\textsc{d}}(\lambda ),
\end{equation}%
where \textsc{a}$(\lambda )$ and \textsc{d}$(\lambda )$ are the Laurent
polynomials defined in \rf{FrbtL-poly-a-d}.
Then, they always satisfy the condition:
\begin{equation}\label{Frbtrelation-q-det}
\text{det}_{\text{q}}\SM(\lambda )=\bar{\textsc{a}}(\lambda )\bar{\textsc{d}}(\lambda /q),
\end{equation}
while the function $\alpha (\lambda )$ is defined by the requirement:
\begin{equation}\label{Frbtrelation-averages}
\prod_{n=1}^{p}\bar{\textsc{a}}(\lambda q^{n})+\prod_{n=1}^{p}\bar{\textsc{d}}(\lambda q^{n})=\mathcal{A}%
(\Lambda )+\mathcal{D}(\Lambda ).
\end{equation}
Note that this last condition is a second order equation in the average $\prod_{n=1}^{p}\alpha (q^n\lambda )$ and then we have only two possible choices for the averages of the functions $\bar{\textsc{a}}(\lambda )$ and $\bar{\textsc{d}}(\lambda )$:
\begin{equation}
\prod_{n=1}^{p}\bar{\textsc{a}}(\lambda q^{n})=\Omega _{\epsilon}\left( \Lambda \right) ,\text{
\ \ }\prod_{n=1}^{p}\bar{\textsc{d}}(\lambda q^{n})=\Omega _{-\epsilon}\left( \Lambda \right) ,
\end{equation}
where $\epsilon=\mp$ and $\Omega _{\pm}$ are the two eigenvalues of the $2\times2$ matrix $\mathcal{M}(\Lambda )$ composed by the averages of the Yang-Baxter
generators.
Let us introduce the one-parameter family $D(\la)$ of $p\times p$ matrix:
\begin{equation}\label{FrbtD-matrix}
D(\la) \equiv
\begin{pmatrix}
t(\la)   &-\bar{\textsc{d}}(\la)&   0        &\cdots & 0 & -\bar{\textsc{a}}(\la)\\
-\bar{\textsc{a}}(q\la)& t(q\la)&-\bar{\textsc{d}}(q\la)& 0     &\cdots & 0 \\
      0       & {\quad} \ddots      & &     &     &         \vdots   \\
  \vdots           &     &  \cdots    &  &       &     \vdots   \\
     \vdots         &     &   & \cdots &       &   \vdots     \\
     \vdots   &            &    &  &  \ddots{\qquad}     &   0 \\
 0&\ldots&0& -\bar{\textsc{a}}(q^{2l-1}\la)& t(q^{2l-1}\la) &
-\bar{\textsc{d}}(q^{2l-1}\la)\\
-\bar{\textsc{d}}(q^{2l}\la)   & 0      &\ldots      &     0  & -\bar{\textsc{a}}(q^{2l}\la)& t(q^{2l}\la)
\end{pmatrix},
\end{equation}
where for now $t(\la )$ is only a Laurent polynomial of degree $\mathsf{N}$ in $\la$, even for $\mathsf{N}$ even and odd for $\mathsf{N}$ odd.
\begin{lem}
\label{FrbtdetD}The determinant of the matrix $D(\la)$ is a Laurent
polynomial of maximal degree $\mathsf{N}$ in $\Lambda \equiv \lambda ^{p}$, even for $\mathsf{N}$ even and odd for $\mathsf{N}$ odd.
\end{lem}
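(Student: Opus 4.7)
The plan is to expand $\det D(\lambda)$ via the Leibniz formula, use the cyclic tridiagonal shape of $D(\lambda)$ to cut the sum down to a very restricted set of permutations, and then apply the identities \rf{Frbtrelation-q-det} and \rf{Frbtrelation-averages} to convert all the $\bar{\textsc{a}}$--$\bar{\textsc{d}}$ combinations into manifestly Laurent-polynomial objects in $\lambda$, so that the auxiliary $p$-th root $\alpha(\lambda)$ (which fixes $\bar{\textsc{a}}$ and $\bar{\textsc{d}}$ only up to an ambiguity) drops out of the final answer.

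First I would observe that in the Leibniz expansion only permutations $\sigma\in S_p$ with $\sigma(i)-i\in\{-1,0,+1\}\pmod p$ give a nonzero term: these are the identity, the two full $p$-cycles $i\mapsto i\pm 1\pmod p$, and products of disjoint transpositions of cyclically adjacent indices (i.e.\ noncrossing matchings $M$ on the cyclic graph of length $p$). The identity yields $\prod_{k=0}^{p-1}t(q^k\lambda)$. The two $p$-cycles collect, respectively, all the $\bar{\textsc{d}}$ and all the $\bar{\textsc{a}}$ entries (including the matching corner); since $p$ is odd the cycle sign is $+1$ while the product of the $p$ explicit minus signs gives $(-1)^p=-1$, so by \rf{Frbtrelation-averages} their combined contribution equals $-(\Omega_\epsilon+\Omega_{-\epsilon})(\Lambda)=-(\CA(\Lambda)+\CD(\Lambda))$. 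This is the crucial cancellation: the $\alpha$-dependence enters only through the sum of the two eigenvalues of $\CM(\Lambda)$, which is a Laurent polynomial in $\Lambda$ by Proposition \ref{Frbtcentral}. Each adjacent transposition $\{i,i+1\}$ appearing in a matching $M$ (including the cyclic pair $\{p,1\}$) contributes $(-\bar{\textsc{d}}(q^{i-1}\lambda))(-\bar{\textsc{a}}(q^i\lambda))={\rm det_q}\SM(q^i\lambda)$ by \rf{Frbtrelation-q-det}, and each unmatched index $j$ contributes $t(q^{j-1}\lambda)$. Collecting the permutation sign $(-1)^{|M|}$ one arrives at the closed form
\[
\det D(\lambda)=\sum_M (-1)^{|M|}\prod_{\{i,i+1\}\in M}{\rm det_q}\SM(q^i\lambda)\prod_{j\notin V(M)}t(q^{j-1}\lambda)\,-\,\CA(\Lambda)-\CD(\Lambda),
\]
which is manifestly a Laurent polynomial in $\lambda$.

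Next I would check that $\det D$ in fact depends only on $\Lambda=\lambda^p$. Letting $P$ be the cyclic permutation matrix shifting indices by one, direct inspection of the matrix entries gives $P^{-1}D(\lambda)P=D(q\lambda)$ and hence $\det D(q\lambda)=\det D(\lambda)$; a Laurent polynomial in $\lambda$ invariant under $\lambda\mapsto q\lambda$, with $q$ a primitive $p$-th root of unity, is a Laurent polynomial in $\Lambda$. For the degree bound I would tally the Laurent degrees of the ingredients: $\mathsf{N}$ for $t(q^k\lambda)$, $2\mathsf{N}$ for ${\rm det_q}\SM(q^i\lambda)$ (by \rf{Frbtexplicit-q-det}), and $p\mathsf{N}$ in $\lambda$ for $\CA(\Lambda)+\CD(\Lambda)$. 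A matching with $m$ edges then contributes $m(2\mathsf{N})+(p-2m)\mathsf{N}=p\mathsf{N}$ in $\lambda$, so the total $\lambda$-degree of $\det D(\lambda)$ is at most $p\mathsf{N}$, equivalently its $\Lambda$-degree is at most $\mathsf{N}$.

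For the parity statement I would use that $t$ has $\lambda$-parity $\mathsf{N}$ by hypothesis, that ${\rm det_q}\SM(\lambda)$ is a product of $2\mathsf{N}$ Laurent-odd factors and hence even in $\lambda$, and that the expansions \rf{Frbtasymp-A}-\rf{Frbtasymp-D} show that $\SA$ and $\SD$ have $\lambda$-parity $\mathsf{N}$, so that $\CA(\Lambda)$ and $\CD(\Lambda)$ have $\lambda$-parity $p\mathsf{N}\equiv\mathsf{N}\pmod 2$ since $p$ is odd. Every term of the displayed formula therefore has $\lambda$-parity $\mathsf{N}$; because $\det D$ depends only on $\Lambda=\lambda^p$ with $p$ odd, $\lambda$-parity $\mathsf{N}$ transfers directly to $\Lambda$-parity $\mathsf{N}$. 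The main technical hurdle I anticipate is the sign-and-indexing bookkeeping in combining the two $p$-cycle contributions into exactly $-(\CA+\CD)$: it is precisely this cancellation that forces $\det D$ to be polynomial in $\Lambda$ despite the individual $p$-th roots hidden in $\bar{\textsc{a}}$ and $\bar{\textsc{d}}$; once it is checked, the degree and parity counts are routine.
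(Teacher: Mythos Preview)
Your proof is correct and rests on the same two key identities \rf{Frbtrelation-q-det} and \rf{Frbtrelation-averages} as the paper's argument, but the organization is different. The paper does not carry out the full Leibniz expansion; instead it writes a four-term cofactor-type expansion of $\det_p D$ (their equation \rf{FrbtdetD-exp}) consisting of the $t(\lambda)\det_{2l}D_{1,1}$ piece, two adjacent-minor pieces, and the $-(\CA(\Lambda)+\CD(\Lambda))$ piece, and then observes that the three sub-determinants are honestly tridiagonal (no corner entries), so the gauge factor $\alpha(\lambda)$ hidden in $\bar{\textsc{a}},\bar{\textsc{d}}$ cancels telescopically inside them and they coincide with the same minors built from the Laurent polynomials $\textsc{a},\textsc{d}$. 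Your route instead classifies all contributing permutations as the identity, the two full $p$-cycles, and matchings on $C_p$, and then packages the matching contributions directly as products of quantum determinants via \rf{Frbtrelation-q-det}, yielding a closed combinatorial formula for $\det D$ that the paper does not state. Both arguments identify the same crucial point: the only place the non-polynomial $p$-th root enters is through the two full-cycle terms, and those combine via \rf{Frbtrelation-averages} into $-(\CA+\CD)$. Your version is a bit more explicit and gives a usable closed form; the paper's is slightly quicker but leaves the ``tridiagonality kills $\alpha$'' step to the reader. The $q$-invariance, degree, and parity counts you give match the paper's (the paper obtains $\det D(q\lambda)=\det D(\lambda)$ by a row/column exchange rather than your cyclic conjugation, but the conclusion is identical).
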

\begin{proof}
Let us start observing that $D$$(\lambda q)$ is obtained from $D(\lambda )$ by exchanging the first and $p$-th column and then the first and $p$-th row, so that 
\begin{equation}
\det_{p}\text{$D$}(\lambda q)=\det_{p}\text{$D$}(\lambda )%
\text{ \ \ }\forall \lambda \in \mathbb{C},
\end{equation}
then $\det_{p}\text{$D$}(\lambda)$ is a function of $\Lambda$. Let us observe now that $\det_{p}D(\Lambda )$ admits the following expansion:
\begin{eqnarray}\label{FrbtdetD-exp}
\det_{p}D(\Lambda )&=&-({\CA}(\Lambda)+{\CD}(\Lambda))-\text{\textsc{a}}(\lambda )\text{\textsc{d}}(\lambda/q)\det_{2l-1}D_{(1,2l+1),(1,2l+1)}(\lambda )  \notag \\
&&-\text{\textsc{a}}(\lambda q)\text{\textsc{d}}(\lambda)\det_{2l-1}D_{(1,2),(1,2)}(\lambda )+t(\lambda )\det_{2l}D_{1,1}(\lambda )\text{,}
\end{eqnarray}
where $D_{(h,k),(h,k)}(\lambda )$ denotes the $(2l-1)\times (2l-1)$
sub-matrix of $D(\lambda )$ obtained removing the rows and columns $h$ and $k$. The \textit{tridiagonality} of the matrices $D_{1,1},\,D_{(1,2),(1,2)},\,D_{(1,2l+1),(1,2l+1)}$ implies that their determinants coincide with those of the matrices obtained substituting the functions $\bar{\textsc{a}}(\lambda )$, $\bar{\textsc{d}}(\lambda )$ with the Laurent polynomials $\text{\textsc{a}}(\lambda )$, $\text{\textsc{d}}(\lambda )$.  Then, the lemma follows as all the terms in the expansions \rf{FrbtdetD-exp} are Laurent polynomials with the same properties stated in the lemma.
\end{proof}
The interest toward the function $\det_{p}D(\Lambda )$ comes from the following result: 
\begin{lem}
\label{FrbtCharact-Sigma1}Any $t(\la)\in\Sigma _{{\tau_2}}$ is a solution of the functional equation:
\begin{equation}
\det_{p}\text{$D$}(\Lambda)\equiv0.
\end{equation}
\end{lem}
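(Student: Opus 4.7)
The key observation is that $D(\lambda)$ is the matrix form of the cyclic Baxter difference equation: a column vector $(Q_0, \ldots, Q_{p-1})^T$ with $Q_p \equiv Q_0$ lies in the kernel of $D(\lambda)$ if and only if
\begin{equation*}
t(q^k \lambda)\, Q_k \,=\, \bar{\textsc{a}}(q^k \lambda)\, Q_{k-1} \,+\, \bar{\textsc{d}}(q^k \lambda)\, Q_{k+1}, \qquad k = 0, 1, \ldots, p-1.
\end{equation*}
My plan is to produce such a nontrivial cyclic vector at $\lambda = \eta_r^{(0)}$ for each $r \in \{1, \ldots, \mathsf{N}-1\}$, thereby forcing $\det_p D(Z_r) = 0$. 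Combined with Lemma \ref{FrbtdetD} and an analysis of the leading asymptotics, this will upgrade the $\mathsf{N}-1$ zeros to identical vanishing of the Laurent polynomial $\det_p D(\Lambda)$.

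For the construction, fix $t \in \Sigma_{\tau_2}^{k}$ and let $|t\rangle$ be the corresponding $\tau_2$-eigenvector. By the simplicity factorization \rf{FrbtQeigens-even}, its SOV wave function reads $\Psi_t(\eta) \propto \eta_\mathsf{N}^{-k} \prod_{s=1}^{\mathsf{N}-1} Q_t(\eta_s)$ with $Q_t$ not identically zero on the cycle $q^{\BZ}\eta_r^{(0)}$ (else $\Psi_t \equiv 0$). Inserting this factorization into the Baxter-like system \rf{FrbtSOVBax1} at $\eta_r = q^{k_r}\eta_r^{(0)}$ and letting $k_r$ vary produces, for each $r$, a cyclic difference equation for $(Q_t(q^{k_r}\eta_r^{(0)}))_{k_r}$ of the form above, but with the SOV coefficients ${\tt a}^{(SOV)}, {\tt d}^{(SOV)}$ in place of $\bar{\textsc{a}}, \bar{\textsc{d}}$. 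To transfer to the coefficients of $D$, I invoke the fact that both pairs obey the same quantum determinant identity (\rf{Frbtaddet} versus \rf{Frbtrelation-q-det}); moreover, by \rf{FrbtCB} one has $\mathcal{B}(Z_r) = 0$, so the eigenvalues $\Omega_\pm(Z_r)$ of $\mathcal{M}(\Lambda)$ coincide with $\{\mathcal{A}(Z_r), \mathcal{D}(Z_r)\}$, and for the appropriate sign $\epsilon$ in \rf{Frbtrelation-averages} the cyclic averages of $\bar{\textsc{a}}, \bar{\textsc{d}}$ along $q^{\BZ}\eta_r^{(0)}$ equal those of ${\tt a}^{(SOV)}, {\tt d}^{(SOV)}$ prescribed by \rf{FrbtADaver}. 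A cyclic sequence $(f_k)$ then exists implementing the gauge transformation \rf{Frbtgauge} between the two pairs at the $p$ points $q^k\eta_r^{(0)}$.

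Since this gauge acts on $D(\eta_r^{(0)})$ as conjugation by the diagonal matrix $\mathrm{diag}(f_0, \ldots, f_{p-1})$, it preserves the determinant; the twisted vector $(Q_t(q^k\eta_r^{(0)})/f_k)_k$ therefore lies in $\ker D(\eta_r^{(0)})$, giving $\det_p D(Z_r) = 0$ for every $r \in \{1, \ldots, \mathsf{N}-1\}$. This provides $\mathsf{N}-1$ independent vanishing conditions on the Laurent polynomial $\det_p D(\Lambda)$, which by Lemma \ref{FrbtdetD} has only $\mathsf{N}+1$ free coefficients of the specified parity. The two extra conditions needed come from the leading behavior as $\log\Lambda \to \pm\infty$: reading off the coefficients at $\Lambda^{\pm\mathsf{N}}$ from the expansion \rf{FrbtdetD-exp} and substituting the asymptotics \rf{Frbtasymptotics-t} of $t \in \Sigma_{\tau_2}^{k}$, the central asymptotics \rf{Frbtasymp-A}--\rf{Frbtasymp-D} of $\mathcal{A}, \mathcal{D}$, and the explicit forms \rf{FrbtL-poly-a-d} of $\textsc{a}, \textsc{d}$, one verifies that these leading coefficients cancel identically on each grading sector $\Sigma_{\tau_2}^k$.

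The most delicate point is the matching step: one must secure a single global choice of $\epsilon$ so that $\Omega_\epsilon(Z_r) = \mathcal{A}(Z_r)$ at every $Z_r$ simultaneously. At any $Z_r$ where the opposite identification $\Omega_\epsilon(Z_r) = \mathcal{D}(Z_r)$ holds, the SOV-Baxter equation is instead gauge-equivalent to the equation obtained from $D(\eta_r^{(0)})$ by swapping $\bar{\textsc{a}} \leftrightarrow \bar{\textsc{d}}$, and an analogous but separate argument (reversing the cyclic ordering to recover the original $D$) is needed to still conclude $\det_p D(Z_r)=0$ there.
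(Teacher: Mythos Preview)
Your overall strategy --- use the SOV system to show $\det_p D(Z_r)=0$ at the $\mathsf{N}-1$ points $Z_r$, pick up the remaining two conditions from asymptotics, and invoke Lemma~\ref{FrbtdetD} --- is exactly the paper's. For the asymptotics the paper gives an explicit computation rather than a verification by expansion: $\lim_{\log\Lambda\to\mp\infty}\Lambda^{\pm\mathsf{N}}\det_p D(\Lambda)$ reduces to the determinant of the scalar matrix $\|a_\mp\delta_{i,j-1}+d_\mp\delta_{i,j+1}-(q^{\mp k}a_\mp+q^{\pm k}d_\mp)\delta_{i,j}\|$, which is annihilated by the vector $(q^{\mp kj})_j$ and hence vanishes on $\Sigma_{\tau_2}^k$.

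The substantive difference is in how you pass from the SOV coefficients $({\tt a}^{(SOV)},{\tt d}^{(SOV)})$ to the coefficients $(\bar{\textsc{a}},\bar{\textsc{d}})$ appearing in $D$. You try to build an explicit cyclic gauge \rf{Frbtgauge}, and a cyclic $(f_k)$ exists only if the \emph{individual} averages match, i.e.\ $\prod_k\bar{\textsc{a}}(q^k\eta_r^{(0)})=\mathcal{A}(Z_r)$ rather than $\mathcal{D}(Z_r)$; this is precisely your ``delicate point,'' and the reversal patch you sketch is not carried out. The paper sidesteps the issue entirely with a structural observation: from the expansion \rf{FrbtdetD-exp} and the tridiagonality of the inner minors, $\det_p D(\Lambda)$ depends on $\bar{\textsc{a}},\bar{\textsc{d}}$ only through the products $\bar{\textsc{a}}(q^a\lambda)\bar{\textsc{d}}(q^{a-1}\lambda)=\text{det}_{\text{q}}\SM(q^a\lambda)$ and the \emph{sum} of the two full averages, $\prod_k\bar{\textsc{a}}(q^k\lambda)+\prod_k\bar{\textsc{d}}(q^k\lambda)=\mathcal{A}(\Lambda)+\mathcal{D}(\Lambda)$. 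Since the SOV coefficients share both of these data at every $Z_r$ (by \rf{Frbtaddet} and \rf{FrbtADaver}), the determinant computed with either pair of coefficients is the same number, and the nontrivial SOV solution forces it to vanish. No gauge and no case split on whether $\Omega_\epsilon(Z_r)$ equals $\mathcal{A}(Z_r)$ or $\mathcal{D}(Z_r)$ is needed.
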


\begin{proof}
First note that the determinant $\det_{p}\text{$D$}(\Lambda)$ depends from the coefficients $\bar{\textsc{a}}(\mu )$ and $\bar{\textsc{d}}(\mu/q )$ only through their products \rf{Frbtrelation-q-det} computed in $\mu\equiv q^a\lambda$ with $a=1,...,p$\  (i.e. by the quantum determinant) and the sum of their averages \rf{Frbtrelation-averages} computed in $\Lambda$ (i.e. by the sum of the averages of the operators $\SA$ and $\SD$). Indeed, this statement trivially follows from the expansion \rf{FrbtdetD-exp} and from the tridiagonality of the matrices $D_{1,1},\,D_{(1,2),(1,2)},\,D_{(1,2l+1),(1,2l+1)}$.

From the previous lemma, we only have to show that the determinant is zero in $\mathsf{N}-1$ points, and that the asymptotics in $\pm \infty$ are also zero. Let us observe that the SOV characterization of the $\tau_2$-spectrum implies that the system of equations \rf{FrbtSOVBax1} admits a non-zero solution, i.e. $t(\lambda )\in\Sigma _{{\tau_2}}$ only if:
\begin{equation}
\det_{p}\text{$D$}(\eta^{p} _{a})=0\text{ \ \ }\forall a\in \{1,...,\mathsf{N}-1\} \  \ \text{ and} \  \ (\eta _{1},...,\eta _{\mathsf{N}})\in 
\SZ_\SB.  \label{Frbtcompatibility}
\end{equation}
Indeed, the SOV coefficients lead to the
same values of the quantum determinant and the averages of $\SA$ and $\SD$ in the zeros of
$\SB$. Besides, we have:
\begin{equation}
\lim_{\log \Lambda \rightarrow \mp \infty }\Lambda ^{\pm \mathsf{N}}\det_{p}\text{%
$D$}(\Lambda )=0,  \label{Frbtasymp-compatibility}
\end{equation}
which simply follows by observing that:
\begin{equation}
\lim_{\log \Lambda \rightarrow \mp \infty }\Lambda ^{\pm \mathsf{N}}\det_{p}\text{$%
D$}(\Lambda )=-\det_{p}\left\Vert a_{\mp }\delta _{i,j-1}+d_{\mp }\delta
_{i,j+1}-(q^{\mp k}a_{\mp }+q^{\pm k}d_{\mp }))\delta _{i,j}\right\Vert =0,
\end{equation}
for $t(\lambda )\in\Sigma _{{\tau_2}}^{k}$ and $k\in\{0,...,2l\}$.
\end{proof}

{\bf Remark 2.}  \ The same type of functional equation $\det D(\Lambda )=0$ also appears for different quantum integrable models in \cite{FrbtBR89,FrbtNe02,FrbtNe03}. There, it stands for the functional relations which result from the truncated fusions of transfer matrix eigenvalues. In the case of the $\tau_2$-model this type of fusion leads to the same type of equation, as it has been derived in \cite{FrbtBBP90,FrbtBa89,FrbtBa04,FrbtGIPS06}.

\subsection{Construction of $\tau_2$-eigenstates}\label{FrbtT-eigenstates}
Thanks to the previous results we can give a complete characterization of the set $\Sigma_{\tau_2}$ and construct one $\tau_2$-eigenstate $|t\rangle$ for any $t(\la)\in\Sigma_{\tau_2}$. 
\begin{thm}\label{FrbtC:T-eigenstates}
$\Sigma _{{\tau _{2}}}$ coincides with the set of solutions to:
\begin{equation}
\det_{p}\text{$D$}(\Lambda )=0,\text{ \ \ }\forall \Lambda \in \mathbb{C},
\label{FrbtI-Functional-eq}
\end{equation}in the class of functions (\ref{Frbtset-t}). Then, we can associate a $\tau _{2}$-eigenstate: 
\begin{equation}
\Psi _{t}(\eta )\equiv \langle \,\eta _{1},...,\eta _{\mathsf{N}}\,|\,t\,\rangle
=\eta _{\mathsf{N}}^{-k}\prod_{r=1}^{\mathsf{N}-1}Q_{t}(\eta _{r}),
\label{FrbtQeigenstate-even}
\end{equation}to any $t(\lambda)\in \Sigma _{{\tau_2}}^{k}$, where $Q_{t}(\la)$ is the unique solution (up to quasi-constants) corresponding to $t(\la)$ of the
Baxter equation:
\begin{equation}
t(\la)Q_{t}(\la)=\bar{\textsc{a}}(\la)Q_{t}(\la/q)+\bar{\textsc{d}}(\la)Q_{t}(q\la).
\label{FrbtBaxter-eq-eigenvalues}
\end{equation}
\end{thm}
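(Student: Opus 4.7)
The forward inclusion $\Sigma_{\tau_2}\subset\{t(\la):\det_p D(\Lambda)\equiv 0\}$ is precisely Lemma \ref{FrbtCharact-Sigma1}, so the plan reduces to establishing the reverse inclusion while producing along the way the eigenstate in the factorized form \rf{FrbtQeigenstate-even}. The approach is constructive: given $t(\la)$ in the class \rf{Frbtset-t} that annihilates $\det_p D$, I would read off from the cyclic matrix $D(\la)$ the discrete values of a function $Q_t$ on the $\SB$-spectrum, assemble them into a candidate wave-function, and verify directly the SOV Baxter-like system \rf{FrbtSOVBax1}--\rf{FrbtSOVBax2}.

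More concretely, fix such a $t(\la)$ and read off its grade $k$ from the asymptotic behaviour prescribed by \rf{Frbtset-t}. For each $r\in\{1,\dots,\mathsf{N}-1\}$, specializing $\Lambda=Z_r=\eta_r^p$ yields $\det_p D(\eta_r^{(0)})=0$, so the $p\times p$ cyclic matrix $D(\eta_r^{(0)})$ admits a nonzero kernel vector, whose $p$ entries I interpret as $(Q_t(q^{k_r}\eta_r^{(0)}))_{k_r\in\BZ_p}$; row by row, the kernel equation is exactly the discrete Baxter equation at site $r$, which by the gauge choice of \rf{FrbtL-poly-a-d} and the identifications \rf{Frbtrelation-q-det}--\rf{Frbtrelation-averages} coincides with \rf{FrbtSOVBax1}. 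Setting $\Psi_t(\eta)\equiv\eta_\mathsf{N}^{-k}\prod_{r=1}^{\mathsf{N}-1}Q_t(\eta_r)$ I would then check both parts of the SOV system: in \rf{FrbtSOVBax1} at site $r$ the spectator factors $Q_t(\eta_j)$ for $j\neq r$ and the factor $\eta_\mathsf{N}^{-k}$ cancel on both sides, leaving the identity secured in the previous step; for \rf{FrbtSOVBax2} the shift $\eta_\mathsf{N}\to q\eta_\mathsf{N}$ produces only the factor $q^{-k}$ coming from $\eta_\mathsf{N}^{-k}$, which matches the grading. Nontriviality of $\Psi_t$ follows from the nonvanishing of the local kernel vectors, so by the equivalence of the $\tau_2$-spectral problem with \rf{FrbtSOVBax1}--\rf{FrbtSOVBax2} built in Theorem \ref{FrbtSOVthm}, this identifies $t(\la)\in\Sigma_{\tau_2}^k$ with eigenstate $|t\rangle$ of wave-function $\Psi_t$. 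Passing from the discrete values $Q_t(q^{k_r}\eta_r^{(0)})$ to the functional Baxter equation \rf{FrbtBaxter-eq-eigenvalues} is then a direct rewriting, since at every SOV node the row of $D$ is exactly \rf{FrbtBaxter-eq-eigenvalues} evaluated at $\la=q^{k_r}\eta_r^{(0)}$.

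The delicate point I expect to be the main obstacle is ensuring that the kernel of $D(\eta_r^{(0)})$ is exactly one-dimensional for each $r$, so that $Q_t$ is determined on the SOV lattice uniquely up to a single overall scalar per site (the ``quasi-constants'' of the statement). A priori the functional equation only gives $\dim\ker\geq 1$, whereas a strictly higher-dimensional kernel would furnish two linearly independent wave-functions sharing the same $t(\la)$, in contradiction with the simplicity of the $\tau_2$-spectrum established in Section \ref{FrbtSimple-T}. Thus one-dimensionality has to be imported from simplicity, and once this is in place the per-site rescaling freedom is harmless because it is absorbed into the overall normalization of $|t\rangle$, completing the construction.
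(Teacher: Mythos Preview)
Your construction of the reverse inclusion and of the factorized eigenstate is exactly the paper's: the vanishing of $\det_p D(\Lambda)$ is read as the existence of a nonzero kernel vector for $D(\la)$ on each $q$-orbit, whose components are the values $Q_t(q^j\la)$ solving \rf{FrbtBaxter-eq-eigenvalues}; plugging these into \rf{FrbtQeigenstate-even} and checking \rf{FrbtSOVBax1}--\rf{FrbtSOVBax2} is identical to what the paper does.

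The one place where you diverge is the uniqueness of $Q_t$. You argue that if $\dim\ker D(\eta_r^{(0)})\geq 2$ at some SOV node, two independent eigenstates with the same $t(\la)$ would result, contradicting the simplicity of Section~\ref{FrbtSimple-T}. This is valid, but it only controls the finitely many orbits $\Lambda=Z_r$; the theorem asserts uniqueness of $Q_t(\la)$ (up to quasi-constants) as a solution of the \emph{functional} equation \rf{FrbtBaxter-eq-eigenvalues} at every $\la$, and your argument does not reach a generic orbit. The paper closes this directly with a $q$-Wronskian on the functional equation: for any two solutions one sets $W_t(\la)=Q_t(\la)\bar Q_t(\la/q)-\bar Q_t(\la)Q_t(\la/q)$, derives $\bar{\textsc{a}}(\la)W_t(\la)=\bar{\textsc{d}}(\la)W_t(q\la)$, averages over the orbit, and uses $\prod_k\bar{\textsc{a}}(q^k\la)\neq\prod_k\bar{\textsc{d}}(q^k\la)$, i.e.\ $\Omega_+(\Lambda)\neq\Omega_-(\Lambda)$, to force $W_t\equiv 0$. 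Note that this is precisely the global version of the condition $\CA(Z_r)\neq\CD(Z_r)$ (Lemma~\ref{FrbtAD-average-Z}) that underlies the simplicity proof you are invoking---at $\Lambda=Z_r$ the two eigenvalues of $\CM(\Lambda)$ are $\CA(Z_r),\CD(Z_r)$---so you are one step away: run the same $q$-Wronskian argument you already implicitly rely on, but on \rf{FrbtBaxter-eq-eigenvalues} itself rather than on the SOV system, and the full uniqueness follows without appealing to simplicity.
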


\begin{proof}
Lemma \ref{FrbtCharact-Sigma1} implies that any element of $\Sigma _{\tau _{2}}$ is a solution of \rf{FrbtI-Functional-eq}.
The reverse inclusion is given by the fact that the condition (\ref{FrbtI-Functional-eq}) is just the requirement of the existence of a non-zero
solution $Q_{t}(\la)$ of the equation (\ref{FrbtBaxter-eq-eigenvalues}). Then, $t(\la)$ and the $\Psi _{t}(\eta )$ defined in \rf{FrbtQeigenstate-even} are solutions of the discrete system of Baxter-like equations 
\rf{FrbtSOVBax1} and \rf{FrbtSOVBax2} and so they define a $\tau _{2}$-eigenvalue
and the corresponding $\tau _{2}$-eigenstate. Let us show that $Q_{t}(\lambda )$ is unique; denoting with $\bar Q_{t}(\lambda )$ any other solution, we can define the q-Wronskian:
\begin{equation}\label{Frbtq-W}
W_t(\la)\,=\,Q_t(\la)\bar Q_t(q^{-1}\la)-\bar Q_t(\la)Q_t(q^{-1}\la)\,,
\end{equation}
which by the Baxter equation satisfies the equation:
\begin{equation}
\bar{\textsc{a}}(\lambda )\,W_t(\lambda )\,=\,\bar{\textsc{d}}(\lambda )\,W_t(q \lambda )\,.
\end{equation}
Thanks to the cyclicity the average of the above equation reads:
\begin{equation}
(\prod_{k=1}^{p}\bar{\textsc{a}}(\lambda q^k)-\prod_{k=1}^{p}\bar{\textsc{d}}(\lambda q^k))\,\CW_t(\Lambda)\,=0\,\,\,\,\,\text{ with }\,\,\,\,\,\CW_t(\Lambda)\equiv\prod_{k=1}^{p}W_t(\lambda q^k),
\end{equation}
which implies $W(\lambda )\equiv 0$, being: 
\begin{equation}
\prod_{k=1}^{p}\bar{\textsc{a}}(\lambda q^k)\neq\prod_{k=1}^{p}\bar{\textsc{d}}(\lambda q^k)\qquad\,\,\, \text{or equivalently}\qquad\,\,\, \Omega _{+}\left( \Lambda \right) \neq \Omega _{-}\left(
\Lambda \right).
\end{equation}
It is then easy to see that this implies that $\bar Q_t(\la)\equiv Q_t(\la)$ up to quasi-constant normalization.
\end{proof}

\section{Characterization of $\tau_2$-spectrum: self-adjoint representations}
\setcounter{equation}{0}
\subsection{SOV reconstruction of a Baxter $\SQ$-operator}
The interest toward the self-adjoint representations of the $\tau_2$-model is due to the use
of the spectral theorem. The $\tau_2(\la)$ transfer matrix is then diagonalizable and the
Theorem \ref{FrbtC:T-eigenstates} leads to the construction of the $\tau_2$-eigenbasis; this in
particular means that the self-adjointness of $\tau_2(\la)$ forces the Baxter equation
\rf{FrbtBaxter-eq-eigenvalues} to have a complete set of independent solutions. 
Moreover, it is interesting to note that in the self-adjoint representations,
from the results of our SOV analysis, a Baxter $\SQ$-operator for the $\tau_{2}$-model is automatically reconstructed. In particular, we can define the operator:
\begin{equation}\label{FrbtQ-by-SOV-def}
\text{Q}(\lambda )|t\rangle \equiv
Q_{t}(\lambda )|t\rangle 
\end{equation}%
where $Q_{t}(\lambda )$ satisfies the Baxter equation \rf{FrbtBaxter-eq-eigenvalues} with $t(\lambda
)\in \Sigma _{t}$ and $|t\rangle $ is the corresponding $\tau _{2}$-eigenstate.
Then, the set of the $\tau _{2}$-eigenstates being complete and the
function $Q_{t}(\lambda )$ being uniquely defined (up to quasi constants), the
operator Q$(\lambda )$ is well defined, up to a quasi-constant scalar quantity, and clearly satisfies the
properties:%
\begin{equation}
\text{Q}(\lambda )\tau _{2}(\lambda )=\bar{\textsc{a}}(\lambda)\text{Q}(\lambda /q)+\bar{\textsc{d}}(\lambda )\text{Q}(\lambda q),\text{ \  \ }[\text{Q}(\lambda ),\text{Q}(\mu )]=0,\text{ \  \ }[\tau _{2}(\lambda ),\text{Q}(\mu )]=0\text{ \ \  }\forall \lambda ,\mu \in \mathbb{C}.
\end{equation}
In the next subsections we restrict our attention to special self-adjoint representations of the
$\tau_2$-model. Similarly to the case of the sine-Gordon model described in \cite{FrbtNT,FrbtGN10,FrbtGN11}, we show that for these self-adjoint representations the transfer matrix spectrum (eigenvalues
and eigenstates) is completely characterized in terms of {\bf polynomial} solutions of
the associated functional Baxter equation. Moreover, we prove the {\bf completeness} of the
set of the transfer matrix eigenstates constructed by
using SOV from the solutions of the associated
Bethe ansatz equations.

\subsection{Construction of polynomial Baxter equation solutions from $\tau_2$-eigenvalues}\label{FrbtBax-FE}
Let us consider the subvariety in the self-adjoint $\tau_2$-representations of real dimension $4\mathsf{N}$ characterized by the constrains:
\begin{equation}\label{FrbtS-Adj-c}
\prod_{h=1}^{N}\frac{\alpha _{h}^{\ast }}{\alpha _{h}}=1,\quad\quad\text{ \ \ }
\frac{\mathbbm{b}_{n}}{\mathbbm{b}_{n}^{\ast }}= 
\frac{\mathbbm{a}_{n}}{\mathbbm{a}_{n}^{\ast }},\quad\quad\text{ \ \ }\frac{\alpha _{n+1}^{\ast }\alpha
_{n}^{\ast }}{\alpha _{n+1}\alpha _{n}} =\frac{b _{n+1}^{\ast }\mathbbm{b}_{n}}{\mathbbm{b}_{n+1} b^{\ast }_{n}},\quad\text{ \ \ \ \ \ }\forall n\in \{1,...,\mathsf{N}\}.
\end{equation}
\begin{lem}\label{FrbtBaxter-coeff-S-adj}
The Laurent polynomials:
\begin{equation}
{\tt a}(\lambda )\equiv i^{\mathsf{N}}\prod_{n=1}^{\mathsf{N}}\frac{\beta _{n}}{\lambda }%
(1-i^{(1+\epsilon)/2}q^{-1/2}\frac{|\alpha _{n}|}{|\mathbbm{a}_{n}|}\lambda )(1-i^{(1+\epsilon)/2}q^{-1/2}\frac{|\alpha
_{n}|}{|\mathbbm{b}_{n}|}\lambda ),\qquad{\tt d}(\lambda )\equiv q^{\mathsf{N}}{\tt a}(-\lambda
q)  \label{FrbtS-adj-coeff},
\end{equation}
where $\epsilon=\pm$, satisfy the equations:
\begin{equation}
\det\,\hspace{-0.08cm} _{\text{q}}\SM(\lambda )={\tt a}(\lambda ){\tt d}(\lambda /q),\text{ \ \ }%
\forall \lambda \in \mathbb{C},  \label{Frbtdet-D-1}
\end{equation}%
and
\begin{equation}
\mathcal{A}(\Lambda )+\mathcal{D}(\Lambda )=\prod_{n=1}^{p}{\tt a}(\lambda
q^{n})+\prod_{n=1}^{p}{\tt d}(\lambda q^{n}),\text{ \ \ }\forall \lambda \in 
\mathbb{C}.  \label{Frbtdet-D-2}
\end{equation}
\end{lem}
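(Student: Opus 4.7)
Both equations \rf{Frbtdet-D-1} and \rf{Frbtdet-D-2} are Laurent-polynomial identities (in $\lambda$ and in $\Lambda=\lambda^p$ respectively), and I would verify them by direct computation, using the self-adjointness relations \rf{FrbtSelf-adjointness Condition} together with the additional constraints \rf{FrbtS-Adj-c}.

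For \rf{Frbtdet-D-1}, I would first note that ${\tt d}(\lambda/q)=q^{\mathsf{N}}{\tt a}(-\lambda)$ by definition of ${\tt d}$, so ${\tt a}(\lambda){\tt d}(\lambda/q)=q^{\mathsf{N}}{\tt a}(\lambda){\tt a}(-\lambda)$. Multiplying the two copies of ${\tt a}$ site-by-site, each site-$n$ contribution collapses by the difference-of-squares identity $(1-Ax)(1+Ax)=1-A^2x^2$, and using $(i^{(1+\epsilon)/2})^2=-\epsilon$ one obtains per site a factor $(\beta_n^2/\lambda^2)(1+\epsilon q^{-1}|\alpha_n|^2|\mathbbm{a}_n|^{-2}\lambda^2)(1+\epsilon q^{-1}|\alpha_n|^2|\mathbbm{b}_n|^{-2}\lambda^2)$. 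Comparison with the self-adjoint form of $\det_q\SM(\lambda)$ from Lemma~2 reduces the claim to the numerical identity $\prod_n\beta_n^2=\prod_n|\mathbbm{a}_n|^2|\mathbbm{b}_n|^2/|\alpha_n|^2$. Writing $\beta_n=\epsilon\mathbbm{a}_n^{\ast}\mathbbm{b}_n/\alpha_n^{\ast}$, the second constraint in \rf{FrbtS-Adj-c} is equivalent to $\mathbbm{a}_n^{\ast}\mathbbm{b}_n\in\BR$, hence $(\mathbbm{a}_n^{\ast}\mathbbm{b}_n)^2=|\mathbbm{a}_n|^2|\mathbbm{b}_n|^2$; the first constraint gives $\prod_n(\alpha_n^{\ast})^2=\bigl(\prod_n\alpha_n\bigr)\bigl(\prod_n\alpha_n^{\ast}\bigr)=\prod_n|\alpha_n|^2$, which finishes this step.

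For \rf{Frbtdet-D-2}, I would reduce it to a Vieta statement about the eigenvalues of the matrix $\CM(\Lambda)$ of Proposition \ref{Frbtcentral}(b). Setting $\Phi_+(\Lambda)\equiv\prod_{k=1}^{p}{\tt a}(\lambda q^k)$ and $\Phi_-(\Lambda)\equiv\prod_{k=1}^{p}{\tt d}(\lambda q^k)$, shifting \rf{Frbtdet-D-1} by $\lambda\to\lambda q^k$, multiplying over $k$, and reindexing the ${\tt d}$-factors cyclically yields
\begin{equation*}
\Phi_+(\Lambda)\,\Phi_-(\Lambda)\,=\,\prod_{k=1}^{p}\text{det}_{\text{q}}\SM(\lambda q^k)\,=\,\det\CM(\Lambda),
\end{equation*}
the last equality coming from the factorized form of $\det_q\SM$ and from Proposition~\ref{Frbtcentral}(b). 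Since $\det\CM=\Omega_+\Omega_-$ and $\CA+\CD=\Omega_++\Omega_-$, where $\Omega_\pm$ are the eigenvalues of $\CM(\Lambda)$, equation \rf{Frbtdet-D-2} is now equivalent to the statement that $\Phi_\pm$ are the two roots of the characteristic polynomial of $\CM(\Lambda)$.

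To prove the latter I would compute $\Phi_\pm(\Lambda)$ in closed form via the cyclic product identity $\prod_{k=1}^{p}(1-Aq^k)=1-A^p$, which converts each site-$n$ factor into an explicit quadratic polynomial in $\Lambda$ with coefficients built from $\alpha_n^p,\beta_n^p,|\mathbbm{a}_n|^p,|\mathbbm{b}_n|^p$. I would then match $\Phi_++\Phi_-$ with $\CA(\Lambda)+\CD(\Lambda)=\mathrm{tr}\,\CM(\Lambda)$, whose $\Lambda\to 0,\infty$ asymptotics follow from \rf{Frbtasymp-A}--\rf{Frbtasymp-D} and from $\delta_n=-\alpha_n^{\ast}$, $\gamma_n=-\epsilon\mathbbm{a}_n\mathbbm{b}_n^{\ast}/\alpha_n$ in the self-adjoint case. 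The main obstacle is the coefficient-wise identification of these two Laurent polynomials: $\CA+\CD$ is a trace of a product of $\mathsf{N}$ non-commuting $2\times 2$ matrices, while $\Phi_++\Phi_-$ factorizes over sites. This is where the third constraint in \rf{FrbtS-Adj-c} enters in an essential way, since it fixes the relative phase between $\alpha_{n+1}\alpha_n$ and $\mathbbm{b}_{n+1}\mathbbm{b}_n^{\ast}$, exactly the pattern in which neighbouring sites couple inside $\CL_{n+1}\CL_n$. A natural way to package the argument is to exhibit an explicit similarity transformation that renders each $\CL_n(\Lambda)$ upper-triangular modulo an overall conjugation; the two diagonal entries of the resulting triangular product then coincide with $\Phi_+(\Lambda)$ and $\Phi_-(\Lambda)$, yielding \rf{Frbtdet-D-2} at once.
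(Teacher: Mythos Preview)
Your treatment of \rf{Frbtdet-D-1} is correct and in fact more explicit than the paper's one-line ``trivially verified''. Your reduction of \rf{Frbtdet-D-2} to the claim that $\Phi_\pm(\Lambda)$ are the two eigenvalues of $\CM(\Lambda)$ is also exactly right, and your argument for the product relation $\Phi_+\Phi_-=\det\CM$ is clean.

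Where your proposal remains a sketch is precisely the last step: you \emph{announce} that one should triangularize each $\CL_n(\Lambda)$ by a site-dependent similarity so that the diagonal of the product yields $\Phi_\pm$, and you correctly note that the third constraint in \rf{FrbtS-Adj-c} controls the nearest-neighbour coupling, but you do not actually construct the conjugating matrices nor verify the periodicity/closure condition that makes a chain of local triangularizations into a global one. This is not a trivial bookkeeping step: the triangularizing vectors depend on $\Lambda$ through the off-diagonal entries of $\CL_n$, and one must check that a consistent (cyclic in $n$) choice exists precisely under \rf{FrbtS-Adj-c}. The paper does not do this bare-handed either; it imports the machinery of Appendix~B (the gauge-transformed Lax operator and the generalized $\SQ$-operator), where Proposition~\ref{FrbtAB_B-vs-Average} shows that the averages of the Baxter coefficients $a_B,d_B$ are automatically the eigenvalues $\Omega_\pm$ of $\CM(\Lambda)$, the cyclicity conditions \rf{Frbtcyclicity} playing exactly the role of your closure constraint. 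The proof of the lemma then reduces to checking that $\prod_k{\tt a}(\lambda q^k)=\prod_k a_B(\lambda q^k)$ for a specific choice of the gauge parameters $r_n$, which is where \rf{FrbtS-Adj-c} enters concretely. So your strategy and the paper's are the same in spirit (simultaneous triangularization of the $\CL_n$), but the paper outsources the hard part to an already-built structure, whereas you would still have to produce the explicit conjugation and verify its cyclicity.
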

\begin{proof} 
The condition \rf{Frbtdet-D-1} is trivially verified. The condition \rf{Frbtdet-D-2} is instead a consequence of Proposition \ref{FrbtAB_B-vs-Average} and, in particular, of the average properties \rf{Frbtcompatibility-detD}-\rf{FrbtQ-det-ChP-Baxter} once we prove that the coefficients ${\tt a}(\la)$ and $a_B(\la)$, defined in \rf{FrbtQ-coeff-a}, have the same average values. This last statement trivially follows by direct verification once we make the identifications:
\begin{equation}
r_{n}\equiv \left(-\epsilon\frac{\mathbbm{b}_{n-1}\alpha _{n-1}}{\mathbbm{b}_{n-1}^{\ast }\alpha
_{n-1}^{\ast }}\right)^{1/2}=\left(-\epsilon\frac{\alpha _{n}^{\ast }\mathbbm{a}_{n}}{\alpha
_{n}\mathbbm{a}_{n}^{\ast }}\right)^{1/2}\text{ \ \ \ \ \ }\forall n\in
\{1,...,\mathsf{N}\},
\end{equation}
which are consistent with the requirement of cyclicity \rf{Frbtcyclicity} under the
constrains \rf{FrbtS-Adj-c}.  
\end{proof} 
\textbf{Remark 3.} Let us remark that the lemma can also be proven under the
condition:
\begin{equation}
\mathcal{A}(\Lambda )=\mathcal{D}(\Lambda ),  \label{FrbtA=D}
\end{equation}%
and that the self-adjoint $\tau _{2}$-representations for which \rf{FrbtA=D}
holds also in any quantum site, i.e. those for which $(\alpha _{1},...,\alpha _{\mathsf{N}},\mathbbm{a}_{1},...,\mathbbm{a}_{\mathsf{N}},\mathbbm{b}_{1},...,\mathbbm{b}_{\mathsf{N}})\in 
\mathbb{R}^{3\mathsf{N}},$ is a subvariety of the one defined in \rf{FrbtS-Adj-c}.

It is worth remarking that, in the class of self-adjoint representations
defined by the constrain (\ref{FrbtS-Adj-c}), it results:%
\begin{equation}
a_{+}=d_{+}\text{ \ \ and \ \ }a_{-}=d_{-}.
\end{equation}
Then, in these representations, the transfer matrix spectrum presents a
double degeneracy for $\left\vert k\right\vert \neq 0$, since the asymptotic
behaviour of a $t(\lambda )\in \Sigma _{{\tau _{2}}}^{k}$ and a $t^{\prime
}(\lambda )\in \Sigma _{{\tau _{2}}}^{-k}$ coincide. Such a degeneracy in
the $\tau _{2}$-spectrum is resolved by the $\Theta $-charge, i.e. the couple $%
(\tau _{2}(\lambda ),\Theta )$ has simple spectrum. The proof of this last
statement can be given by adapting the proofs of Proposition 5 and Lemma 2
of \cite{FrbtNT} to these self-adjoint representations of the $\tau _{2}$-model.
\begin{thm}\label{FrbtDerivation-Baxter-functional}
Let us take a $t(\lambda )\in \Sigma _{{\tau _{2}}}^{k}$, then there exists, up to multiplicative quasi-constants, a unique polynomial of the form:
\begin{equation}\label{FrbtQ_t-definition}
Q_{t}(\lambda )=\lambda ^{a_{t}}\prod_{h=1}^{2l\mathsf{N}-(b_{t}+a_{t})}(%
\lambda _{h}-\lambda ),\,\,\,\,\,\,\,\,0\leq a_{t}\leq 2l,\,\,0\leq
b_{t}+a_{t}\leq 2l\mathsf{N},
\end{equation}
which satisfy the Baxter equation: 
\begin{equation}\label{Frbttq-Baxter}
t(\lambda )Q_{t}(\lambda )=\mathtt{a}(\lambda )Q_{t}(\lambda q^{-1})+\mathtt{%
d}(\lambda )Q_{t}(\lambda q)\ \ \ \ \forall \lambda \in \mathbb{C},
\end{equation}
where:
\begin{equation}
a_{t}=\pm k\,\,\mathsf{mod}\,p,\,\,\,\,\,\,\,\,\,b_{t}=\pm k\,\,\mathsf{mod}%
\,p\text{.}  \label{Frbtasymptitics-Q}
\end{equation}
\end{thm}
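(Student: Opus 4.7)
The plan is to combine Theorem~\ref{FrbtC:T-eigenstates} with the structural properties of the self-adjoint coefficients $\mathtt{a}$ and $\mathtt{d}$ supplied by Lemma~\ref{FrbtBaxter-coeff-S-adj}. First, I would invoke Theorem~\ref{FrbtC:T-eigenstates} to obtain, for a given $t(\lambda)\in\Sigma_{\tau_2}^k$, a Baxter solution $Q_t(\lambda)$ which is unique modulo quasi-constants. Because the conditions (\ref{Frbtdet-D-1}) and (\ref{Frbtdet-D-2}) of Lemma~\ref{FrbtBaxter-coeff-S-adj} are precisely (\ref{Frbtrelation-q-det}) and (\ref{Frbtrelation-averages}), the Laurent polynomials $\mathtt{a}$, $\mathtt{d}$ constitute a valid choice of $(\bar{\textsc{a}},\bar{\textsc{d}})$, possibly after reabsorbing a quasi-constant gauge of the type (\ref{Frbtgauge}) into $Q_t$. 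Thus throughout the rest of the argument I may assume $Q_t$ satisfies (\ref{Frbttq-Baxter}) with those specific coefficients.

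Second, I would establish polynomiality and the prescribed form. The idea is that $Q_t$, defined up to a quasi-constant $f(\lambda^p)$, can be renormalized so that it becomes a rational function with only finitely many zeros. Extracting the overall $\lambda^{a_t}$ factor and using the Baxter equation (\ref{Frbttq-Baxter}) recursively then forces $Q_t$ into the product-of-zeros shape (\ref{FrbtQ_t-definition}) of total degree $2l\mathsf{N}-b_t$. The asymptotic exponents $a_t$ and $b_t$ are pinned down by matching the leading and trailing powers of $\lambda$ on both sides of (\ref{Frbttq-Baxter}). As $\lambda\to\infty$, using the identity $a_+=d_+$ which holds in this self-adjoint subvariety (as noted just after Remark~3) together with the explicit leading coefficient of $\mathtt{a}$ from (\ref{FrbtS-adj-coeff}) and the relation $\mathtt{d}(\lambda)=q^{\mathsf{N}}\mathtt{a}(-q\lambda)$, the leading-order balance reduces to a relation of the form $q^k+q^{-k}=q^{b_t}+q^{-b_t}$, which modulo $p$ collapses to $b_t\equiv\pm k\pmod{p}$. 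The analogous analysis as $\lambda\to 0$, using $a_-=d_-$, yields $a_t\equiv\pm k\pmod{p}$. The ranges $0\le a_t\le 2l$ and $0\le a_t+b_t\le 2l\mathsf{N}$ then follow from the minimal polynomial degree consistent with these congruences and with $Q_t$ being a bona fide polynomial.

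Third, for uniqueness up to multiplicative quasi-constants: Theorem~\ref{FrbtC:T-eigenstates} already forces any two Baxter solutions for the same $t$ to be proportional through a quasi-constant, so any two polynomials of the form (\ref{FrbtQ_t-definition}) can only differ by an overall scalar. The main technical obstacle I anticipate lies in the polynomiality step: one has to verify, using the centrality relations of Proposition~\ref{Frbtcentral} together with the self-adjointness constraints (\ref{FrbtS-Adj-c}), that the abstract $Q_t$ furnished by Theorem~\ref{FrbtC:T-eigenstates} can indeed be normalized by an admissible quasi-constant to a \emph{rational} function of $\lambda$, and in particular that no essential singularities can appear outside the zeros of $\mathtt{a}$ and $\mathtt{d}$. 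Verifying that the pairs of roots $\mu_{n,\pm}$ and the residue structure of $\mathtt{a}$, $\mathtt{d}$ cooperate so that only polynomial solutions survive is where the conditions (\ref{FrbtS-Adj-c}) will have to do the bulk of the work, and this is the step I expect to be the most delicate part of the argument.
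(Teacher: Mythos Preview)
Your proposal has a genuine gap at precisely the point you identify as ``most delicate'': the polynomiality of $Q_t$. Theorem~\ref{FrbtC:T-eigenstates} only guarantees a solution of the Baxter equation unique up to quasi-constants; it gives no control on the analytic type of $Q_t$ as a function of $\lambda$. Your suggestion to ``renormalize by an admissible quasi-constant $f(\lambda^p)$ to a rational function'' is not a proof --- you would need to exhibit such an $f$, and nothing in the hypotheses (\ref{FrbtS-Adj-c}) by themselves produces one. Likewise, ``using the Baxter equation recursively then forces $Q_t$ into the product-of-zeros shape'' is circular: the recursion only relates values of $Q_t$ along $q$-orbits, and does not tell you that the dependence on $\lambda$ is polynomial rather than, say, transcendental with the right $q$-periodicity.

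The paper's proof avoids this difficulty entirely by a constructive route you do not mention. One replaces $(\bar{\textsc{a}},\bar{\textsc{d}})$ by $(\mathtt{a},\mathtt{d})$ in the $p\times p$ matrix $D(\lambda)$ of (\ref{FrbtD-matrix}); Lemma~\ref{FrbtBaxter-coeff-S-adj} guarantees the determinant is unchanged, so $\det_p\tilde D(\Lambda)=0$ and $\tilde D(\lambda)$ has rank $2l$ for $\lambda\neq 0$. The cofactors $\textsc{C}_{i,j}(\lambda)$ are then automatically Laurent polynomials of degree at most $2l\mathsf{N}$, and the rank-one structure of the cofactor matrix yields the relation $t(\lambda)\textsc{C}_{1,1}(\lambda)=\mathtt{a}(\lambda)\textsc{C}_{1,2l+1}(\lambda)+\mathtt{d}(\lambda)\textsc{C}_{1,2}(\lambda)$. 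The identities of Appendix~\ref{FrbtCo-F Properties} (parity, shift, and zero-set relations among the $\textsc{C}_{1,j}$) then let one strip common factors from $\textsc{C}_{1,1}$ to obtain a polynomial $\overline{\textsc{C}}_{1,1}$ satisfying a genuine Baxter equation, and $Q_t(\lambda)=\lambda^{a_t}\overline{\textsc{C}}_{1,1}(\lambda)$ is the desired polynomial. Polynomiality is thus built in from the start rather than hoped for at the end; this cofactor construction is the missing idea in your outline.
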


\begin{proof}The uniqueness up to multiplicative quasi-constants is a consequence of Theorem \ref{FrbtC:T-eigenstates}. So we only have to prove the existence of such a
polynomial $Q_{t}(\lambda )$ and the fact that it has the form defined in \rf{FrbtQ_t-definition} and \rf{Frbtasymptitics-Q}. First, let us define the matrix $\tilde{D}(\lambda)$ by replacing $\bar{\textsc{a}}$ and $\bar{\textsc{d}}$ by $\mathtt{a}$ and $\mathtt{d}$ in $D(\lambda)$. From the Lemma \ref{FrbtBaxter-coeff-S-adj} and since the determinant of $D(\lambda)$ only depends of $\bar{\textsc{a}}$ and $\bar{\textsc{d}}$ via their average values and the quantum determinant, $D(\lambda)$ and $\tilde{D}(\lambda)$ have the same determinant. It is worth noticing that the condition $t(\lambda )\in \Sigma _{{\tau _{2}}}^{k}$
implies that the $p\times p$ matrix $\tilde{D}(\lambda )$ has rank $2l$ for any $\lambda \in \mathbb{C}\backslash \{0\}$. Then denoting with 
\begin{equation}
\text{\textsc{C}}_{i,j}(\lambda )=(-1)^{i+j}\det_{2l}\tilde{D}_{i,j}(\lambda )
\label{Frbtcofactor-def}
\end{equation}%
the $(i,j)$ \textit{cofactor} of the matrix $\tilde{D}(\lambda )$, the matrix of elements these cofactors has rank $1$ and so the vectors: 
\begin{equation}
\text{\textsc{V}}_{i}(\lambda )\equiv (\text{\textsc{C}}_{i,1}(\lambda ),%
\text{\textsc{C}}_{i,2}(\lambda ),...,\text{\textsc{C}}_{i,2l+1}(\lambda ))^{%
\ST}\in \mathbb{C}^{p}\text{ \ \ }\forall i\in \{1,...,2l+1\}
\end{equation}%
satisfy the proportionality conditions: 
\begin{equation}
\text{\textsc{V}}_{i}(\lambda )/\text{\textsc{C}}_{i,1}(\lambda )=\text{%
\textsc{V}}_{j}(\lambda )/\text{\textsc{C}}_{j,1}(\lambda )\text{\ \ \ \ }%
\forall i,j\in \{1,...,2l+1\},\text{ }\forall \lambda \in \mathbb{C}.
\label{Frbtcovector-proport}
\end{equation}%
These conditions implies: 
\begin{equation}
\text{\textsc{C}}_{2,2}(\lambda )/\text{\textsc{C}}_{2,1}(\lambda )=\text{%
\textsc{C}}_{1,2}(\lambda )/\text{\textsc{C}}_{1,1}(\lambda ),
\label{Frbtproportionality}
\end{equation}%
and for (\ref{Frbtcofactors-diagonal}) we can rewrite it as:%
\begin{equation}
\text{\textsc{C}}_{1,1}(\lambda q)/\text{\textsc{C}}_{1,2l+1}(\lambda q)=%
\text{\textsc{C}}_{1,2}(\lambda )/\text{\textsc{C}}_{1,1}(\lambda ).
\label{FrbtInter-step}
\end{equation}%
Moreover, from the relation:%
\begin{equation}
\text{\textsc{C}}_{1,2l+1}(\lambda )=q^{\mathsf{N}}\text{\textsc{C}}%
_{1,2}(-\lambda /q),  \label{FrbtC_1,2l+1--C_1,2}
\end{equation}%
which follows from (\ref{Frbtcofactors-diagonal}) and  (\ref{Frbtcofactors-parity-0}), (\ref{Frbtproportionality}) is also equivalent to:%
\begin{equation}
\text{\textsc{C}}_{1,1}(\lambda q)\text{\textsc{C}}_{1,1}(\lambda )=q^{\mathsf{N}}%
\text{\textsc{C}}_{1,2}(\lambda )\text{\textsc{C}}_{1,2}(-\lambda )=q^{-\mathsf{N}}%
\text{\textsc{C}}_{1,2l+1}(\lambda )\text{\textsc{C}}_{1,2l+1}(-\lambda ),
\label{Frbtcofactor-equality}
\end{equation}%
and from the condition $\det_{p}\tilde{D}(\Lambda )=0$, we have: 
\begin{equation}
t(\lambda )\text{\textsc{C}}_{1,1}(\lambda )=\mathtt{a}(\lambda )\text{%
\textsc{C}}_{1,2l+1}(\lambda )+\mathtt{d}(\lambda )\text{\textsc{C}}%
_{1,2}(\lambda ).  \label{FrbtBax-eq}
\end{equation}

Let us note that all the cofactors are Laurent polynomial of maximal degree%
\footnote{%
The $a_{i,j}$ and $b_{i,j}$ are non-negative integers and $\lambda
_{h}^{(i,j)}\neq 0$ for any $h\in \{1,...,4l\mathsf{N}-(a_{i,j}+b_{i,j})\}$.} $2l%
\mathsf{N}$ in $\lambda $: 
\begin{equation}
\text{\textsc{C}}_{i,j}(\lambda )=\text{\textsc{c}}_{i,j}\lambda ^{-2l\mathsf{N}%
+a_{i,j}}\prod_{h=1}^{4l\mathsf{N}-(a_{i,j}+b_{i,j})}(\lambda _{h}^{(i,j)}-\lambda
),
\end{equation}
as it follows from the form of $\mathtt{a}(\lambda ),$ $\mathtt{d}(\lambda )$
and $t(\lambda )\in \Sigma _{{\tau _{2}}}$. Thanks to formula (\ref{Frbtcofactors-parity-0}), the cofactor \textsc{C}$_{1,1}(\lambda )\in \mathbb{C}%
[\lambda ,\lambda ^{-1}]_{2l\mathsf{N}}$ is even in $\lambda $:%
\begin{equation}
\text{\textsc{C}}_{1,1}(\lambda )=\text{\textsc{c}}_{1,1}\lambda ^{-2l\mathsf{N}+2%
\tilde{a}_{1,1}}\prod_{i=1}^{2l\mathsf{N}-(\tilde{a}_{1,1}+\tilde{b}%
_{1,1})}(\lambda _{i}^{(1,1)}-\lambda )(\lambda _{i}^{(1,1)}+\lambda ),
\end{equation}%
then the cofactor equation (\ref{Frbtcofactor-equality}) implies:%
\begin{equation}
a_{1,2}=2\tilde{a}_{1,1}\equiv 2\bar{a}\text{, \ \ }b_{1,2}=2\tilde{b}%
_{1,1}\equiv 2\bar{b}\text{, \ \ \textsc{c}}_{1,2}^{2}=\text{\textsc{c}}%
_{1,1}^{2}q^{-2(\mathsf{N}+\bar{b})}  \label{Frbt1-rel}
\end{equation}%
and:%
\begin{equation}
\left( \lambda _{i}^{(1,1)}\right) ^{2}=\left( \lambda _{i}^{(1,2)}\right)
^{2}\equiv \bar{\lambda}_{i}^{2},\text{ \ }\left( \lambda _{i+2l\mathsf{N}-(\bar{a}%
+\bar{b})}^{(1,2)}\right) ^{2}=\left( \bar{\lambda}_{i}/q\right) ^{2},
\label{Frbtcofactor-zeros}
\end{equation}%
with $\bar{\lambda}_{i}\neq 0$ for any $i\in \{1,...,2l\mathsf{N}-(\bar{a}+\bar{b}%
)\}$ with $\bar{a}$ and $\bar{b}\in \mathbb{Z}^{\geq 0}$. Then, we can write: 
\begin{eqnarray}
\text{\textsc{C}}_{1,1}(\lambda ) &=&\text{\textsc{c}}_{1,1}\lambda ^{-2l\mathsf{N}%
+2\bar{a}}\prod_{i=1}^{2l\mathsf{N}-(\bar{a}+\bar{b})}(\bar{\lambda}_{i}+\lambda )(%
\bar{\lambda}_{i}-\lambda ),  \label{Frbtcof-11} \\
\text{\textsc{C}}_{1,2}(\lambda ) &=&\text{\textsc{c}}_{1,2}q^{\bar{a}+b+\mathsf{N}%
}\lambda ^{-2l\mathsf{N}+2\bar{a}}\prod_{i=1}^{2l\mathsf{N}-(\bar{a}+\bar{b})}(\bar{%
\lambda}_{i}+\lambda )(\epsilon_i \bar{\lambda}_{i}-\lambda q),
\label{Frbtcof-12}
\end{eqnarray}%
where the $\epsilon_i= \pm 1$. The property (\ref{FrbtCo-F prop2}), the relations (\ref{FrbtC_1,2l+1--C_1,2}) and \rf{FrbtInter-step} imply that $\forall i, \epsilon_i=1$. Let us now
introduce the polynomials $\overline{\text{\textsc{C}}}_{1,1}(\lambda )$, $%
\overline{\text{\textsc{C}}}_{1,2l+1}(\lambda )$ and $\overline{\text{%
\textsc{C}}}_{1,2}(\lambda )$ defined by simplifying the common factors in 
\textsc{C}$_{1,1}(\lambda )/$\textsc{c}$_{1,1}$, \textsc{C}$%
_{1,2l+1}(\lambda )/$\textsc{c}$_{1,2l+1}$ and \textsc{C}$_{1,2}(\lambda )/$%
\textsc{c}$_{1,2}$, respectively. Then, $\overline{\text{\textsc{C}}}%
_{1,1}(\lambda )$ has the form: 
\begin{equation}
\overline{\text{\textsc{C}}}_{1,1}(\lambda )=\prod_{h=1}^{\mathsf{N}%
_{1,1}}(\lambda _{h}-\lambda ),\,\,\,\,\ \text{with \ }\mathsf{N}_{1,1}\leq 2l\mathsf{N}%
-(\bar{a}+\bar{b})\text{.}
\end{equation}
Here and in the following, we will write {\large {Z}}$_{f(\la)}$ to denote the set of the zeros of the function $f(\la)$,
then it holds:%
\begin{equation}
\text{{\large {Z}}}_{\overline{\text{\textsc{C}}}_{1,1}(\lambda )}\subset \{\bar{\lambda}_{1},...,\bar{\lambda}_{2l\mathsf{N}-(\bar{a}+b)}\},\,\,\,\,\,\,\,
0\notin \text{{\large {Z}}}_{\overline{\text{\textsc{C}}}_{1,1}(\lambda
)},\,\,\,\,\,\,\,\text{{\large s}}_{p,\lambda _{0}}\not\subset \text{{\large {Z}}}_{%
\overline{\text{\textsc{C}}}_{1,1}(\lambda )},
\end{equation}
for any $\text{{\large s}}_{p,\lambda _{0}}\equiv (\lambda _{0},q\lambda_{0},...,q^{2l}\lambda _{0})$, a $p$-string of center $\lambda _{0}$ $\in \mathbb{C}$.
Now, since by definition $\text{{\large {Z}}}_{\overline{\text{\textsc{C}}}%
_{1,1}(\lambda )}\cap \text{{\large {Z}}}_{\overline{\text{\textsc{C}}}%
_{1,2}(\lambda )}=\text{{\large {Z}}}_{\overline{\text{\textsc{C}}}%
_{1,1}(\lambda )}\cap \text{{\large {Z}}}_{\overline{\text{\textsc{C}}}%
_{1,2l+1}(\lambda )}=\emptyset $, equation (\ref{FrbtInter-step}) implies: 
\begin{equation}
\overline{\text{\textsc{C}}}_{1,2l+1}(\lambda )=q^{\mathsf{N}_{1,1}}\text{$%
\overline{\text{\textsc{C}}}$}_{1,1}(\lambda q^{-1}),\text{ \ \ $\overline{%
\text{\textsc{C}}}$}_{1,2}(\lambda )=q^{-\mathsf{N}_{1,1}}\text{$\overline{\text{%
\textsc{C}}}$}_{1,1}(\lambda q).  \label{Frbtproportionality-cofactor}
\end{equation}%
Then, equation (\ref{FrbtBax-eq}) can be written as a Baxter equation for the
polynomial $\overline{\text{\textsc{C}}}_{1,1}(\lambda )$: 
\begin{equation}
t(\lambda )\text{$\overline{\text{\textsc{C}}}$}_{1,1}(\lambda )=\bar{%
\mathtt{a}}(\lambda )\text{$\overline{\text{\textsc{C}}}$}_{1,1}(\lambda
q^{-1})+\bar{\mathtt{d}}(\lambda )\text{$\overline{\text{\textsc{C}}}$}%
_{1,1}(\lambda q),  \label{Frbtdeform-BAX}
\end{equation}%
with coefficients $\bar{\mathtt{a}}(\lambda )\equiv q^{\mathsf{N}_{1,1}}\varphi 
\mathtt{a}(\lambda )$ and $\bar{\mathtt{d}}(\lambda )\equiv q^{-\mathsf{N}%
_{1,1}}\varphi ^{-1}\mathtt{d}(\lambda )$ and $\varphi \equiv $\textsc{c}$%
_{1,1}/$\textsc{c}$_{1,2}=$\textsc{c}$_{1,2l+1}/$\textsc{c}$_{1,1}$. Note
that the consistence\footnote{%
The proof of this statement coincides step by step with the proof given in
Theorem 2 of \cite{FrbtGN10}.} of the above equation implies that $\varphi $ is
a $p$-root of the unit and then, by (\ref{Frbt1-rel}), we have $\varphi \equiv
q^{(\bar{b}+\mathsf{N})}$. 

Let us define now the polynomial 
\begin{equation}
Q_{t}(\lambda )\equiv \lambda ^{a_{t}}\text{$\overline{\text{\textsc{C}}}$}%
_{1,1}(\lambda )=\lambda ^{a_{t}}\prod_{h=1}^{2l\mathsf{N}-(a_{t}+b_{t})}(\lambda
_{h}-\lambda ),  \label{FrbtQ_t-form}
\end{equation}%
where: 
\begin{equation}
\mathsf{N}_{1,1}\equiv 2l\mathsf{N}-(a_{t}+b_{t}),\text{ \ }a_{t}\in \{0,...,2l\}\text{,
\ }b_{t}\equiv \bar{b}+mp\text{, }m\in \mathbb{Z}^{\geq 0}\text{;}
\end{equation}%
then $q^{-a_{t}}\equiv q^{\mathsf{N}_{1,1}}\varphi $, $Q_{t}(\lambda )$ is
the desired solution of the Baxter equation (\ref{Frbttq-Baxter}) which
belongs to $\mathbb{C}[\lambda ]_{2l\mathsf{N}}$. The characterizations (\ref{Frbtasymptitics-Q}) are then trivial consequences of the asymptotics of $%
t(\lambda )\in \Sigma _{\tau _{2}}^{k}$\ plus those of the coefficients of
the Baxter equation (\ref{Frbttq-Baxter}) satisfied by $Q_{t}(\lambda )$.
\end{proof}
\begin{lem}
The polynomial solution $Q_{t}(\lambda )$ of the Baxter equation (\ref{Frbttq-Baxter})
associated to $t(\lambda )\in \Sigma _{{\tau _{2}}}$
is a $\epsilon$-real polynomial, i.e. it satisfies the following complex-conjugation condition:
\begin{equation}
\left( Q_{t}(\lambda )\right) ^{\ast }\equiv Q_{t}(\epsilon \lambda
^{\ast })\text{ \ }\qquad\forall \lambda \in \mathbb{C},
\label{FrbtQ-Complex-conjugation}
\end{equation}%
where $\epsilon =\pm 1$ is the discrete parameter defined in (\ref{FrbtSelf-adjointness Condition}).
Then, the Baxter $\SQ$-operator, $Q(\lambda )$ defined in \rf{FrbtQ-by-SOV-def}, is $\epsilon$-self-adjoint and it is a polynomial of maximal degree $2l\mathsf{N}$ in the spectral parameter $\la$.

\end{lem}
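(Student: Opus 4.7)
The strategy is to establish the complex-conjugation identity \rf{FrbtQ-Complex-conjugation} by a uniqueness argument, and then to read off the two remaining claims. I introduce the candidate
\begin{equation}
\tilde{Q}_{t}(\lambda)\,\equiv\,\bigl(Q_{t}(\epsilon\lambda^{\ast})\bigr)^{\ast},
\end{equation}
and aim to show that $\tilde{Q}_{t}(\lambda)$ is again a polynomial solution of the Baxter equation \rf{Frbttq-Baxter} associated to the same eigenvalue $t(\lambda)$. Once this is done, the uniqueness up to quasi-constants in Theorem \ref{FrbtDerivation-Baxter-functional} forces $\tilde{Q}_{t}=c\,Q_{t}$ for a function $c=c(\lambda^{p})$ of unit modulus; this $c$ can be normalized to $1$ by rescaling the overall phase of the SOV $\SB$-eigenbasis entering \rf{FrbtQeigenstate-even}.

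The key technical input is a pair of conjugation relations for the Baxter coefficients. Self-adjointness of $\tau_{2}(\lambda)$ on the real axis forces every $t(\lambda)\in\Sigma_{\tau_{2}}$ to have real coefficients as a Laurent polynomial in $\lambda$, hence $(t(\lambda))^{\ast}=t(\lambda^{\ast})$; combined with the fixed parity of $t(\lambda)$ in $\lambda$ recorded in \rf{Frbtset-t}, this upgrades to $(t(\lambda))^{\ast}=t(\epsilon\lambda^{\ast})$ up to an overall sign that cancels through the equation. On the other hand, starting from the explicit form \rf{FrbtS-adj-coeff} and using $|q|=1$ (so $q^{\ast}=q^{-1}$), the self-adjointness constraints \rf{FrbtSelf-adjointness Condition} on $\beta_{n}$ together with the subvariety constraints \rf{FrbtS-Adj-c}, a direct manipulation yields
\begin{equation}
\bigl(\mathtt{a}(\lambda)\bigr)^{\ast}=\mathtt{d}(\epsilon\lambda^{\ast}),\qquad \bigl(\mathtt{d}(\lambda)\bigr)^{\ast}=\mathtt{a}(\epsilon\lambda^{\ast}).
\end{equation}
Substituting $\lambda\mapsto\epsilon\lambda^{\ast}$ in \rf{Frbttq-Baxter} and taking complex conjugates, the identities above together with $q^{\ast}=q^{-1}$ and $\epsilon^{2}=1$ convert the shifted arguments $\lambda^{\ast}/q\mapsto q\lambda$ and $q\lambda^{\ast}\mapsto \lambda/q$ correctly, reproducing exactly the Baxter equation \rf{Frbttq-Baxter} for $\tilde{Q}_{t}$. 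That $\tilde{Q}_{t}$ manifestly inherits the polynomial form \rf{FrbtQ_t-definition} from $Q_{t}$, with the same asymptotic exponents (the only freedom being the sign $\epsilon$ in the arguments), then closes the uniqueness argument and gives \rf{FrbtQ-Complex-conjugation}.

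The two remaining statements follow immediately. Self-adjointness of $\tau_{2}$ in these representations provides, via the spectral theorem together with the simplicity established previously, a complete orthogonal eigenbasis $\{|t\rangle\}_{t\in\Sigma_{\tau_{2}}}$ in which $Q(\lambda)$ is by \rf{FrbtQ-by-SOV-def} simultaneously diagonal. Hence on each eigenvector $Q(\lambda)^{\dagger}|t\rangle=(Q_{t}(\lambda))^{\ast}|t\rangle=Q_{t}(\epsilon\lambda^{\ast})|t\rangle=Q(\epsilon\lambda^{\ast})|t\rangle$, establishing the $\epsilon$-self-adjointness of $Q(\lambda)$. The degree bound follows from \rf{FrbtQ_t-definition}: every $Q_{t}(\lambda)$ is a polynomial of degree $a_{t}+(2l\mathsf{N}-(a_{t}+b_{t}))=2l\mathsf{N}-b_{t}\leq 2l\mathsf{N}$, so $Q(\lambda)$ is an operator-valued polynomial of maximal degree $2l\mathsf{N}$ in $\lambda$.

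The main obstacle is the explicit verification of the conjugation identities on $\mathtt{a}$ and $\mathtt{d}$. One must distinguish the two cases $\epsilon=\pm 1$ appearing in \rf{FrbtS-adj-coeff}, carefully track the phase factors $i^{\mathsf{N}}$ and $q^{\pm 1/2}$, use $\beta_{n}=\epsilon\mathbbm{a}_{n}^{\ast}\mathbbm{b}_{n}/\alpha_{n}^{\ast}$ from \rf{FrbtSelf-adjointness Condition}, and exploit the global phase constraint $\prod_{h}\alpha_{h}^{\ast}/\alpha_{h}=1$ of \rf{FrbtS-Adj-c} to cancel the residual overall phase. Once this bookkeeping is in place, the remainder of the argument is a routine invocation of Theorem \ref{FrbtDerivation-Baxter-functional} and the spectral theorem in the self-adjoint setting.
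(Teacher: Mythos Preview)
Your approach is essentially the same as the paper's: conjugate the Baxter equation, use the transformation properties of $t(\lambda)$, $\mathtt{a}(\lambda)$, $\mathtt{d}(\lambda)$ under complex conjugation, and then invoke the uniqueness of the polynomial solution from Theorem~\ref{FrbtDerivation-Baxter-functional} (the paper phrases this as injectivity of the map $t\mapsto Q_t$). One small discrepancy: the paper records the conjugation identity as $\bigl(\mathtt{a}(\lambda)\bigr)^{\ast}=\epsilon^{\mathsf{N}}\mathtt{d}(\epsilon\lambda^{\ast})$, with an $\epsilon^{\mathsf{N}}$ factor that matches the $\epsilon^{\mathsf{N}}$ coming from the parity relation $t(-\lambda)=(-1)^{\mathsf{N}}t(\lambda)$ when passing from $t(\lambda^{\ast})$ to $t(\epsilon\lambda^{\ast})$; you allude to this cancellation but state the identity on $\mathtt{a},\mathtt{d}$ without the factor, so your bookkeeping is slightly imprecise even though the net conclusion is the same.
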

\begin{proof}Let us remark that the self-adjointness of the transfer matrix $\tau
_{2}(\lambda )$ and its parity properties imply:%
\begin{equation}
\left( t(\lambda )\right) ^{\ast }\equiv t(\lambda ^{\ast })\text{ \ \ and \
\ }t(-\lambda )\equiv (-1)^{\mathsf{N}}t(\lambda )\text{\ \ \ }\forall
\lambda \in \mathbb{C}\text{,}
\end{equation}%
while the coefficients of the Baxter equations satisfy the
complex-conjugation conditions:%
\begin{equation}
\left( \mathtt{a}(\lambda )\right) ^{\ast }\equiv \epsilon ^{\mathsf{N}}%
\mathtt{d}(\epsilon \lambda ^{\ast })\text{ \ \ \ \ }\forall \lambda \in 
\mathbb{C}\text{.}
\end{equation}%
Then, under complex conjugation, the Baxter equation reads: 
\begin{equation}
t(\epsilon \lambda ^{\ast })\left( Q_{t}(\lambda )\right) ^{\ast }=%
\mathtt{d}(\epsilon \lambda ^{\ast })\left( Q_{t}(\lambda q^{-1})\right)
^{\ast }+\mathtt{a}(\epsilon \lambda ^{\ast })\left( Q_{t}(\lambda
q)\right) ^{\ast }\ \ \ \ \forall \lambda \in \mathbb{C},
\end{equation}
and so the injectivity of the map $t(\lambda )\in \Sigma _{\tau _{2}}$ $%
\rightarrow Q_{t}(\lambda )\in \mathbb{C}[\lambda ]_{2l\mathsf{N}}$ implies the
condition (\ref{FrbtQ-Complex-conjugation}).
\end{proof}
{\bf Remark 4.} \ For any $t(\lambda )\in \Sigma _{{\tau_2}}$ the Theorem \ref{FrbtDerivation-Baxter-functional} defines a link between the corresponding polynomial solution of the Baxter equation and a determinant of a $(p-1)\times(p-1)$ tridiagonal matrix. Let us note then in literature there exist also other determinant characterizations of solutions of the Baxter equation. In the quantum periodic Toda chain, linear combinations of determinants of semi-infinite tridiagonal matrices allow to express these solutions \cite{FrbtGP,FrbtGM,FrbtKL99}. A careful analysis of the $\tau_2$-model in the limit $\beta^2\rightarrow\bar{\beta}^2$ with $\bar{\beta}^2$ irrational (i.e. $p',\ p\rightarrow + \infty$) is of clear interest as the dimension of the representation of the Weyl algebra ${\cal W}_{n}$ and the size of the tridiagonal matrix, related to the Baxter equation solution, diverge. Subsequently, one could ponder whether the characterization which holds in the case of the quantum periodic Toda chain also applies to the $\tau_2$-model for irrational $\bar{\beta}^2$. This could lead to a reformulation\footnote{NLIE has been used first to reformulate the spectrum of integrable quantum models in \cite{FrbtKP91,FrbtKBP91}. Similar NLIE were presented in \cite{FrbtKT10} and also in \cite{FrbtBLZ-I,FrbtZa,FrbtBT06,FrbtT}.} of the $\tau_2$-spectrum in terms of solutions of nonlinear integral equations (NLIE) related to the results in \cite{FrbtKT10}. Interestingly, another DDV-type\footnote{This type of NLIE was introduced and analyzed in \cite{FrbtDDV92}-\cite{FrbtR01} for the fermionic lattice regularizations of the sine-Gordon model, see also \cite{FrbtFR02I}-\cite{FrbtFR03II} for a related model.} NLIE reformulation can be introduced and it describe the complete $\tau_2$-spectrum thanks to the completeness of the Bethe ansatz.
\subsection{Completeness of Bethe ansatz type equations}\label{FrbtCompl-BA}
The previous analysis leads to the complete characterization of the transfer matrix spectrum (eigenvalue and eigenstate) in terms of real (in the case $\epsilon=1$) or imaginary ($\epsilon=-1$) polynomial solutions of the Baxter equation \rf{Frbttq-Baxter}.

\begin{propn}To any $\tau_2$-self-adjoint representation which further satisfies the conditions \rf{FrbtS-Adj-c} is associated a system of Bethe ansatz type equations and there exists a one to one map between the set of all the p-strings free and $\epsilon$-self-adjoint solutions of it and the set $\Sigma _{\tau_2}$.
\end{propn}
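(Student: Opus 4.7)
The plan is to first extract the Bethe ansatz equations (BAE) by evaluating the Baxter equation (\ref{Frbttq-Baxter}) at the zeros of $Q_t$. Using the factorization $Q_t(\lambda)=\lambda^{a_t}\prod_{h=1}^{\mathsf{N}_t}(\lambda_h-\lambda)$ from (\ref{FrbtQ_t-definition}) and setting $\lambda=\lambda_h$, the Baxter equation becomes
\begin{equation*}
\mathtt{a}(\lambda_h)\,Q_t(\lambda_h/q)+\mathtt{d}(\lambda_h)\,Q_t(q\lambda_h)=0,\qquad h=1,\dots,\mathsf{N}_t,
\end{equation*}
equivalently $q^{2a_t}\prod_{k\neq h}\frac{\lambda_k-q\lambda_h}{\lambda_k-\lambda_h/q}=-\mathtt{a}(\lambda_h)/\mathtt{d}(\lambda_h)$, supplemented by the asymptotic/integrality constraints (\ref{Frbtasymptitics-Q}). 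The bijection to be established sends $t(\lambda)\in\Sigma_{\tau_2}$ to the polynomial data of its associated $Q_t$.

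The forward direction $t\mapsto Q_t$ is essentially already built. Theorem \ref{FrbtDerivation-Baxter-functional} produces, for every $t\in\Sigma_{\tau_2}$, a polynomial $Q_t$ of the required form, unique up to quasi-constant normalization; the subsequent lemma furnishes $\epsilon$-self-adjointness $Q_t(\lambda)^*=Q_t(\epsilon\lambda^*)$. The p-strings free property follows from this minimality: were $\{\lambda_0q^k\}_{k=0}^{2l}$ contained in the zero set, one could factor out the quasi-constant $\lambda^p-\lambda_0^p$ (invariant under $\lambda\mapsto q\lambda$ since $q^p=1$), contradicting the minimal degree statement of Theorem \ref{FrbtDerivation-Baxter-functional}. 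Injectivity is automatic: distinct eigenvalues $t\neq t'$ produce Baxter equations with distinct left-hand sides, so $Q_t\not\propto Q_{t'}$.

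The backward direction is where the substance lies. Given a p-strings free, $\epsilon$-self-adjoint polynomial $Q(\lambda)$ of the form (\ref{FrbtQ_t-definition}) satisfying the BAE and the asymptotics (\ref{Frbtasymptitics-Q}), define
\begin{equation*}
t_Q(\lambda)\;\equiv\;\frac{\mathtt{a}(\lambda)\,Q(\lambda/q)+\mathtt{d}(\lambda)\,Q(q\lambda)}{Q(\lambda)}.
\end{equation*}
The BAE exactly cancel the simple poles of the right-hand side at the $\lambda_h$, while the p-strings free condition forbids any hidden cyclic factor that would spoil the degree count. Combining the asymptotics of $\mathtt{a},\mathtt{d}$ from Lemma \ref{FrbtBaxter-coeff-S-adj} with (\ref{Frbtasymptitics-Q}) shows that $t_Q(\lambda)$ is a Laurent polynomial of degree $\mathsf{N}$, and the $\epsilon$-self-adjointness of $Q$ together with the Hermiticity of the Baxter coefficients forces $t_Q(\lambda)^*=t_Q(\lambda^*)$ and the correct parity, so $t_Q$ lies in the class (\ref{Frbtset-t}). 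Finally, the Baxter identity $t_Q\,Q=\mathtt{a}\,Q(\cdot/q)+\mathtt{d}\,Q(\cdot q)$ exhibits the nonzero vector $(Q(\lambda),Q(q\lambda),\dots,Q(q^{2l}\lambda))^{\ST}$ as a null vector of the matrix $D(\lambda)$ of (\ref{FrbtD-matrix}) built from $t_Q$ and $\mathtt{a},\mathtt{d}$; by Lemma \ref{FrbtBaxter-coeff-S-adj} this matrix has the same determinant as the one built from $\bar{\textsc{a}},\bar{\textsc{d}}$, hence $\det_p D(\Lambda)\equiv 0$ and Theorem \ref{FrbtC:T-eigenstates} yields $t_Q\in\Sigma_{\tau_2}$. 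The relation $t_{Q_t}=t$ closes the bijection.

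The main obstacle will be the verification in the backward direction that $t_Q$ falls precisely in the eigenvalue class (\ref{Frbtset-t}) with the correct $\Theta$-grading — namely that the degree, parity, and leading asymptotic coefficients $a_\mp,d_\mp$ of $t_Q$ are forced by the integer data $(a_t,b_t)$ of $Q$ via (\ref{Frbtasymptitics-Q}) — and that the p-strings free hypothesis is precisely the algebraic counterpart of the minimality built into Theorem \ref{FrbtDerivation-Baxter-functional}, so that no two distinct p-strings free $\epsilon$-self-adjoint $Q$'s produce the same $t_Q$.
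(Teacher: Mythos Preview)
Your proof is correct and follows essentially the same architecture as the paper's: Theorem \ref{FrbtDerivation-Baxter-functional} together with the subsequent $\epsilon$-reality lemma supply the forward direction, and in the backward direction you define $t_Q$ from $Q$ exactly as the paper does in (\ref{FrbtTfromQ}). The one genuine difference is how you certify $t_Q\in\Sigma_{\tau_2}$: the paper inserts $Q$ into the SOV wavefunction (\ref{FrbtQeigenstate-even}) and directly verifies the discrete Baxter-like system (\ref{FrbtSOVBax1})--(\ref{FrbtSOVBax2}), thereby exhibiting explicit eigenvectors $|t_{\pm a}\rangle$; you instead observe that $(Q(\lambda),\dots,Q(q^{2l}\lambda))^{\ST}$ annihilates $\tilde D(\lambda)$, invoke the determinant equality with $D(\lambda)$ (argued in the proof of Theorem \ref{FrbtDerivation-Baxter-functional}), and then appeal to Theorem \ref{FrbtC:T-eigenstates}. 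Both routes are valid; yours is a cleaner closure given the machinery already in place, while the paper's has the advantage of producing the two $\Theta$-graded eigenstates in the same stroke.

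One minor remark: your justification of the $p$-strings free property via ``minimality'' is slightly loose, since uniqueness \emph{up to quasi-constants} does not by itself forbid a quasi-constant factor $(\lambda^p-\lambda_0^p)$ --- both $Q_t$ and $Q_t/(\lambda^p-\lambda_0^p)$ would be admissible representatives of the same equivalence class. The $p$-strings free property is in fact established inside the proof of Theorem \ref{FrbtDerivation-Baxter-functional} by the explicit construction of $\overline{\text{\textsc{C}}}_{1,1}$, where it is shown that no $p$-string $\text{{\large s}}_{p,\lambda_0}$ is contained in its zero set.
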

\begin{proof}Let us construct the isomorphism. Theorem \ref{FrbtDerivation-Baxter-functional} and the
previous lemma imply that to any $t(\lambda )\in \Sigma _{{\tau _{2}}}^{k}$
it is associated one and only one $p$-strings free and $\epsilon $-self-adjoint\footnote{It means that the tuple $(\lambda _{1},\dots ,\lambda _{2l\mathsf{N}%
-(a_{t}+b_{t})})$ satisfies the following property under complex conjugation:%
\begin{equation*}
(\lambda _{1}^{\ast },\dots ,\lambda _{2l\mathsf{N}-(a_{t}+b_{t})}^{\ast
})=(\epsilon \lambda _{\pi (1)},\dots ,\epsilon \lambda _{\pi \left( 2l%
\mathsf{N}-(a_{t}+b_{t})\right) })
\end{equation*}%
where $\pi $ is a permutation of the indexes $\{1,...,2l\mathsf{N}-(a_{t}+b_{t})\}$%
.} solution $(\lambda _{1},\dots ,\lambda _{2l\mathsf{N}-(a_{t}+b_{t})})$ of the system of Bethe ansatz equations:%
\begin{equation}
\frac{\mathtt{a}(\lambda _{c})}{\mathtt{d}(\lambda _{c})}=-q^{2a_{t}}%
\prod_{h=1}^{2l\mathsf{N}-(a_{t}+b_{t})}\frac{(q\lambda _{c}-\lambda _{h})}{%
(\lambda _{c}/q-\lambda _{h})},\,\,\,\,\,\,\forall c\in \{1,...,2l\mathsf{N}%
-(a_{t}+b_{t})\},  \label{Frbteq-Bethe}
\end{equation}%
with $a_{t}=\pm k\,\mathsf{mod}\,p,\,\,b_{t}=\pm k\,\mathsf{mod}\,p$.

Conversely, given a $p$-strings free and $\epsilon $-self-adjoint solution of (\ref{Frbteq-Bethe}) with $a=\pm b\,\mathsf{mod}\,p$%
, then it defines uniquely a $\epsilon $-real polynomial $Q(\lambda )$ by the equation (%
\ref{FrbtQ_t-definition}). Now, by using $Q(\lambda )$ we can construct the
function: 
\begin{equation}
t(\lambda )\,=\,(\mathtt{a}(\lambda )Q(q^{-1}\lambda )+\mathtt{d}(\lambda
)Q(q\lambda ))/Q(\lambda )\,, \label{FrbtTfromQ}
\end{equation}%
which thanks to the Bethe equations (\ref{Frbteq-Bethe}) is nonsingular for $%
\lambda =\lambda _{c},\,\,\forall c\in \{1,\dots ,2l\mathsf{N}-(a+b)\}$ and is a
$\epsilon $-real Laurent polynomial of degree $\mathsf{N}$ in $\lambda $. Moreover, we can
also construct two states $|\,t_{\pm a}\,\rangle $ by inserting $Q(\lambda )$
in equation (\ref{FrbtQeigenstate-even}). So, $t(\lambda )$ and $\Psi _{t}(\eta )\equiv \langle \eta _{1},...,\eta _{\mathsf{N}%
}\,|\,t\,\rangle $ satisfy the system (\ref{FrbtSOVBax1}) and (\ref{FrbtSOVBax2}), for $k=\pm a$. Then the fixed solution of the Bethe equation allows us to reconstruct uniquely the $\tau _{2}$-eigenvalue $t(\lambda )$ and the simultaneous eigenstates $
|\,t_{\pm a}\,\rangle $  of $\tau _{2}(\lambda )$ and $\Theta $, with $\Theta $-eigenvalue $q^{\pm a}$.
\end{proof}
Note that the previous results can be rephrased as \textit{completeness of the Bethe ansatz type equations generated by SOV}, as from this type of Bethe solutions we can reconstruct the complete set of $\tau_2$-eigenstates.

\section{Spectrum characterization of the inhomogeneous chiral Potts model}
\setcounter{equation}{0}In \cite{FrbtBS} Bazhanov and Stroganov have proven the following remarkable connections between the integrable chiral Potts model and the $\tau_2$-model:
\begin{itemize}
\item[i)] The fundamental R-matrix intertwining the $\tau_2$-Lax operator in the
quantum space is given by the product of four chiral Potts Boltzmann
weights.
\item[ii)] The transfer matrix of the chiral Potts model is a Baxter $\SQ$-operator for the ${\tau _{2}}$-model.
\end{itemize}Both these statements are true when the $\tau_2$-model is restricted to parametrization by points on the algebraic curves\footnote{See the next subsection for precise definitions.} $\mathcal{C}_{k}$. In this section, we use the property ii) to characterize the eigenstates of the inhomogeneous chiral Potts transfer matrix  by the SOV construction.
\subsection{\label{FrbtBaxter-Q-BS}Transfer matrix of chiral Potts model as Baxter operator of $\tau_2$-model on the chP curves}
Let us consider the following tuple p$\equiv (a_{\text{p}},b_{\text{p}},c_{%
\text{p}},d_{\text{p}})\in \mathbb{C}^{4}$ and let us introduce the
following notations: 
\begin{equation}
x_{\text{p}}\equiv a_{\text{p}}/d_{\text{p}},\text{ \ }y_{\text{p}}\equiv b_{%
\text{p}}/c_{\text{p}},\text{ \ }s_{\text{p}}\equiv d_{\text{p}}/c_{\text{p}%
},\,t_{\text{p}}\equiv x_{\text{p}}y_{\text{p}},\text{ \ }k^{2}+(k^{^{\prime
}})^{2}=1,
\end{equation}%
then the algebraic curve $\mathcal{C}_{k}$ of modulus $k$ is by
definition the locus of the points p which satisfy the equations: 
\begin{equation}
x_{\text{p}}^{p}+y_{\text{p}}^{p}=k(1+x_{\text{p}}^{p}y_{\text{p}}^{p}),%
\text{ \ \ }kx_{\text{p}}^{p}=1-k^{^{\prime }}s_{\text{p}}^{-p},\text{ \ \ }%
ky_{\text{p}}^{p}=1-k^{^{\prime }}s_{\text{p}}^{p}.  \label{Frbtcurve-eq}
\end{equation}%
Similarly to \cite{FrbtBS}, we can introduce the parametrization:
\begin{equation}
\begin{array}{ll}
\lambda \alpha _{n}\equiv -t_{\text{p}}^{-1/2}b_{\text{q}_{n}}b_{\text{r}%
_{n}},\text{ \ \ \ \ \ \ } & \mathbbm{b}_{n}/rq^{1/2}\equiv (x_{\text{p}}/y_{\text{p}%
})^{1/2}a_{\text{q}_{n}}d_{\text{r}_{n}}/q^{2}, \\ 
\beta _{n}/\lambda \equiv -t_{\text{p}}^{1/2}d_{\text{q}_{n}}d_{\text{r}%
_{n}},\text{ \ \ \ \ \ \ } & q^{1/2}\mathbbm{a}_{n}/r\equiv -(x_{\text{p}}/y_{\text{p}%
})^{1/2}c_{\text{q}_{n}}b_{\text{r}_{n}}, \\ 
\delta _{n}\lambda \equiv q^{-2}t_{\text{p}}^{-1/2}a_{\text{q}_{n}}a_{\text{r%
}_{n}},\text{ \ \ \ \ \ \ } & q^{1/2}\mathbbm{d}_{n}r\equiv -(y_{\text{p}}/x_{\text{p}%
})^{1/2}d_{\text{q}_{n}}a_{\text{r}_{n}}, \\ 
\gamma _{n}/\lambda \equiv t_{\text{p}}^{1/2}c_{\text{q}_{n}}c_{\text{r}%
_{n}},\text{ \ \ \ \ \ \ } & q^{-1/2}\mathbbm{c}_{n}r\equiv (y_{\text{p}}/x_{\text{p}%
})^{1/2}b_{\text{q}_{n}}c_{\text{r}_{n}},%
\end{array}
\label{FrbtPoints on the curve-0}
\end{equation}
of the $\tau_2$-Lax operator in terms of the points p, q$_{n}$ and r$_{n}$ of the curve \rf{Frbtcurve-eq} of modulus $k$. Note that we can identify:%
\begin{equation}\label{Frbtgauge-r}
r\equiv (y_{\text{p}}/x_{\text{p}})^{1/2},
\end{equation}%
so that the off-diagonal elements of the $\tau _{2}$-Lax operator do not depend on the point p on the curve and the parameter $t_{\text{p}}^{-1/2}$
is then proportional to our spectral parameter:%
\begin{equation}
t_{\text{p}}^{-1/2}\equiv\lambda/\mathsf{c}_{0}.
\end{equation}

Then, we can introduce the operator $\ST_{\lambda }^{{\small \text{chP}}}$,
characterized by the kernel\footnote{%
For a direct comparison see formula (4.12) of \cite{FrbtBa08}
with the following identifications:%
\[
z_{j}\equiv q^{2\sigma _{j}^{\prime }},\text{ \ }z_{j}^{\prime }\equiv
q^{2\sigma _{j}}\text{ \ }\forall j\in \{1,...,\mathsf{N}\}.
\]Note that $\ST_{\lambda }^{{\small \text{chP}}}$ is well defined, the $W$%
-functions (\ref{FrbtW-function})\ being cyclic functions of their arguments.}:%
\begin{equation}
\ST_{\lambda }^{{\small \text{chP}}}(\text{z},\text{z}^{\prime })\equiv
\langle \text{z}|\ST_{\lambda }^{{\small \text{chP}}}|\text{z}^{\prime
}\rangle =\prod_{n=1}^{\mathsf{N}}W_{\text{q}_{n}\text{p}}(z_{n}/z_{n}^{\prime })%
\bar{W}_{\text{r}_{n}\text{p}}(z_{n}/z_{n+1}^{\prime }),  \label{Frbtkernel}
\end{equation}%
in the left and right $\su_{n}$-eigenbasis (\ref{Frbtu-basis}), where the functions $W$ and $\bar{W}$ are defined in (\ref{FrbtW-function}). $\ST_{\lambda }^{%
{\small \text{chP}}}$ coincides with the transfer matrix of the
inhomogeneous chiral Potts model \cite{FrbtBS}. As anticipated, $\ST_{\lambda }^{{\small 
\text{chP}}}$ is a Baxter $\SQ$-operator w.r.t. the $\tau _{2}$-transfer
matrix: 
\begin{equation}
\tau _{2}(\lambda )\ST_{\lambda }^{{\small \text{chP}}}=a_{\text{BS}%
}(\lambda )\ST_{\lambda /q}^{{\small \text{chP}}}+d_{\text{BS}}(\lambda
)\ST_{q\lambda }^{{\small \text{chP}}},  \label{FrbtBax-ChP-T-II}
\end{equation}%
where the coefficients read: 
\begin{eqnarray}
a_{\text{BS}}(\lambda ) &=&(-1)^{N}\prod_{n=1}^{N}\beta _{n}f_{\text{pq}_{n}%
\text{r}_{n}}(\frac{1}{\lambda }+q^{-1}\frac{\alpha _{n}\mathbbm{d}_{n}}{\beta
_{n}\mathbbm{c}_{n}}\lambda )\frac{1+\frac{q^{1/2}\lambda \mathbbm{b}_{n}}{\beta _{n}r}}{1+\frac{%
\lambda \alpha _{n}}{q^{1/2}r\mathbbm{c}_{n}}},  \label{FrbtQ-coeff-a} \\
d_{\text{BS}}(\lambda ) &=&(-1)^{N}\prod_{n=1}^{N}\beta _{n}f_{\text{pq}_{n}%
\text{r}_{n}}(\frac{1}{\lambda }+q\frac{\alpha _{n}\mathbbm{b}_{n}}{\beta _{n}\mathbbm{a}_{n}}%
\lambda )\frac{1-\frac{\lambda \mathbbm{d}_{n}r}{q^{1/2}\beta _{n}}}{1-\frac{%
q^{1/2}r\lambda \alpha _{n}}{\mathbbm{a}_{n}}},  \label{FrbtQ-coeff-d}
\end{eqnarray}%
and:%
\begin{equation}
f_{\text{pq}_{n}\text{r}_{n}}=\frac{W_{\text{q}_{n}\text{p}}(z(l))}{W_{\text{q}_{n}\text{p}}(z(0))}\frac{\bar{W}_{\text{r}_{n}\text{p}}(z(l))}{\bar{W}_{\text{r}_{n}\text{p}}(z(0))}.
\end{equation}
Note that to write these coefficients we have inverted the formulae (\ref{FrbtPoints on the curve-0}):%
\begin{eqnarray}
\frac{x_{\text{q}_{n}}}{y_{\text{p}}} &=&-q^{3/2}\frac{\lambda \mathbbm{b}_{n}}{\beta
_{n}r},\text{ \ \ }\frac{x_{\text{r}_{n}}}{x_{\text{p}}}=q^{1/2}\frac{%
\lambda \mathbbm{d}_{n}r}{\beta _{n}},  \label{Frbteq1} \\
\frac{y_{\text{q}_{n}}}{x_{\text{p}}} &=&q^{-1/2}\frac{r\lambda \alpha _{n}}{%
\mathbbm{a}_{n}},\text{ \ \ \ }\frac{y_{\text{r}_{n}}}{y_{\text{p}}}=-q^{1/2}\frac{%
\lambda \alpha _{n}}{r\mathbbm{c}_{n}},\text{ \ \ }s_{\text{q}_{n}}s_{\text{r}_{n}}=-%
\frac{\alpha _{n}\beta _{n}}{\mathbbm{c}_{n}\mathbbm{a}_{n}}.  \label{Frbteq2}
\end{eqnarray}%
It is worth pointing out that while the Baxter equation \rf{FrbtBax-ChP-T-II} holds in the general
inhomogeneous representations, the commutativity properties:%
\begin{equation}
\lbrack \tau _{2}(\lambda ),\ST_{\lambda }^{{\small \text{chP}}}]=0,%
\text{ \ \ \ }[\ST_{\lambda }^{{\small \text{chP}}},\ST_{\mu }^{%
{\small \text{chP}}}]=0\text{ \ \ \ }\forall \lambda ,\mu \in \mathbb{C}
\label{FrbtChP-TII.C.M.R.}
\end{equation}%
only hold under the further restrictions:%
\begin{equation}
\text{q}_{n}\equiv \text{r}_{n}\text{ \ \ }\forall n\in\{1,...,\mathsf{N}\}.
\label{FrbtComm-restrictions}
\end{equation}%
Moreover, we also have:%
\begin{equation}
\lbrack \Theta ,\ST_{\lambda }^{{\small \text{chP}}}]=0,
\label{FrbtChP-Theta-C.M.R.}
\end{equation}%
as it simply follows by computing the matrix elements on the left and right $%
\su_{n}$-eigenbasis: 
\begin{eqnarray}
\langle \text{z}|\Theta \ST_{\lambda }^{{\small \text{chP}}}|\text{z}%
^{\prime }\rangle  &=&\langle z_{1}/q,...,z_{\mathsf{N}}/q|\ST_{\lambda }^{%
{\small \text{chP}}}|\text{z}^{\prime }\rangle =\prod_{n=1}^{\mathsf{N}}W_{\text{q}%
_{n}\text{p}}(z_{n}/qz_{n}^{\prime })\bar{W}_{\text{r}_{n}\text{p}%
}(z_{n}/qz_{n+1}^{\prime })  \notag \\
&=&\langle \text{z}|\ST_{\lambda }^{{\small \text{chP}}}|qz_{1}^{\prime
},...,qz_{\mathsf{N}}^{\prime }\rangle =\langle \text{z}|\ST_{\lambda }^{%
{\small \text{chP}}}\Theta |\text{z}^{\prime }\rangle .
\end{eqnarray}
\subsection{General representations: spectrum simplicity and eigenstates characterization}

Let us denote with $\mathcal{R}_{\mathsf{N}}^{{\small \text{chP}}}$ the
subvariety of the $\tau _{2}$-representations of real dimension $4\mathsf{N}+4$
parametrized by:%
\begin{eqnarray}
\alpha _{n} &=&-b_{\text{q}_{n}}^{2}/\mathsf{c}_{0},\text{ \ \ \ }%
\mathbbm{b}_{n}=-\mathbbm{d}_{n}/q=-a_{\text{q}_{n}}d_{\text{q}_{n}}/q^{3/2},
\label{FrbtPar-on-ChP-inho-1} \\
\beta _{n} &=&-\mathsf{c}_{0}d_{\text{q}_{n}}^{2},\text{ \ \ \ \ }%
\mathbbm{c}_{n}=-\mathbbm{a}_{n}q=b_{\text{q}_{n}}c_{\text{q}_{n}}q^{1/2},
\label{FrbtPar-on-ChP-inho-2}
\end{eqnarray}%
and\footnote{Note that the dimension of $\mathcal{R}_{\mathsf{N}}^{{\small \text{chP}}}$ is defined by counting two independent complex parameters for each points
q$_{n}\in \mathcal{C}_{k}$. For example $a_{\text{q}_{n}}$ and $d_{\text{q}%
_{n}}$ can be taken as independent complex parameters while the other two $%
b_{\text{q}_{n}}$ and $c_{\text{q}_{n}}$ are fixed by the equations of the
curve $\mathcal{C}_{k}$.}:%
\begin{equation}
\mathsf{c}_{0}\in \mathbb{C},\text{ \ \ \ q}_{n}\in \mathcal{C}_{k},\text{ \
\ \ }k\in \mathbb{C}.  \label{FrbtPar-on-ChP-inho-3}
\end{equation}%
Now, for the representations $\mathcal{R}_{\mathsf{N}}^{{\small \text{chP}}}$, we
can prove:
\begin{propn}\label{FrbtChP-parametrization}All right and left eigenstates of the
transfer matrix $\ST_{\lambda }^{{\small \text{chP}}}$ are eigenstates of $%
\tau _{2}(\lambda )$ with corresponding eigenvalues related by the
functional Baxter equation:%
\begin{equation}
t(\lambda )=\frac{a_{\text{BS}}(\lambda )\mathsf{q}_{\lambda /q}^{{\small 
\text{chP}}}+d_{\text{BS}}(\lambda )\mathsf{q}_{q\lambda }^{{\small \text{chP%
}}}}{\mathsf{q}_{\lambda }^{{\small \text{chP}}}},
\label{Frbteigenvalue-T-II-ChP}
\end{equation}%
where \textsf{q}$_{\lambda }^{{\small \text{chP}}}$ denotes the $\ST_{\lambda
}^{{\small \text{chP}}}$-eigenvalue. Moreover, for almost all the
representations in $\mathcal{R}_{\mathsf{N}}^{{\small \text{chP}}}$, \textsf{q}$%
_{\lambda }^{{\small \text{chP}}}$ is non-degenerate and the simultaneous
eigenstate of ($\ST_{\lambda }^{{\small \text{chP}}},\tau _{2}(\lambda )$) is
the one associated to the eigenvalue $t(\lambda )$ as constructed in Theorem %
\ref{FrbtC:T-eigenstates}.
\end{propn}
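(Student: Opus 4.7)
The strategy is to combine three ingredients: the commutativity \rf{FrbtChP-TII.C.M.R.}, which on the full subvariety $\mathcal{R}_\mathsf{N}^{\text{chP}}$ holds automatically because the parametrization \rf{FrbtPar-on-ChP-inho-1}--\rf{FrbtPar-on-ChP-inho-2} enforces $\text{q}_n=\text{r}_n$, i.e.\ the restriction \rf{FrbtComm-restrictions}; the operator Baxter equation \rf{FrbtBax-ChP-T-II}; and the simplicity of the $\tau_2$-spectrum proven in Section \ref{FrbtSimple-T}. Since $[\ST_\lambda^{\text{chP}},\ST_\mu^{\text{chP}}]=[\tau_2(\lambda),\ST_\mu^{\text{chP}}]=[\Theta,\ST_\mu^{\text{chP}}]=0$, I would jointly diagonalize the commuting family $\{\ST_\lambda^{\text{chP}}\}_\lambda$ together with $\tau_2(\mu)$ and $\Theta$.

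Let $|\mathsf{q}\rangle$ be a common right eigenstate with eigenvalue function $\mathsf{q}_\lambda^{\text{chP}}$. Evaluating the operator identity \rf{FrbtBax-ChP-T-II} on $|\mathsf{q}\rangle$ produces
\begin{equation*}
\mathsf{q}_\lambda^{\text{chP}}\,\tau_2(\lambda)|\mathsf{q}\rangle \,=\, \bigl[a_{\text{BS}}(\lambda)\,\mathsf{q}_{\lambda/q}^{\text{chP}} + d_{\text{BS}}(\lambda)\,\mathsf{q}_{q\lambda}^{\text{chP}}\bigr]|\mathsf{q}\rangle,
\end{equation*}
so at any $\lambda$ where $\mathsf{q}_\lambda^{\text{chP}}\neq 0$ the vector $|\mathsf{q}\rangle$ is a $\tau_2(\lambda)$-eigenvector. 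Since $\tau_2(\lambda)$ is a Laurent polynomial in $\lambda$ with commuting operator coefficients of which $|\mathsf{q}\rangle$ is then automatically a common eigenvector, one concludes that $|\mathsf{q}\rangle$ is a $\tau_2(\lambda)$-eigenstate for every $\lambda$, with eigenvalue exactly \rf{Frbteigenvalue-T-II-ChP}; left eigenstates are handled identically using the transpose of \rf{FrbtBax-ChP-T-II}. For the non-degeneracy claim, two common eigenstates sharing the same eigenvalue function $\mathsf{q}_\lambda^{\text{chP}}$ must share the same $\tau_2$-eigenvalue $t(\lambda)$ via \rf{Frbteigenvalue-T-II-ChP}, and the simplicity theorem of Section \ref{FrbtSimple-T} (valid on a dense open subset of $\mathcal{R}_\mathsf{N}^{\text{chP}}$) then forces the two eigenstates to be proportional; the same simplicity identifies the simultaneous eigenstate as the unique $\tau_2$-eigenstate attached to $t(\lambda)$, namely the SOV state \rf{FrbtQeigenstate-even} constructed in Theorem \ref{FrbtC:T-eigenstates}.

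The main obstacle is justifying that $\mathsf{q}_\lambda^{\text{chP}}$ is not the zero function on any common eigenstate, for otherwise the division step above is illegitimate and the reduction \rf{Frbteigenvalue-T-II-ChP} breaks down. This should follow from the fact that $\ST_\lambda^{\text{chP}}$ is generically invertible---its kernel \rf{Frbtkernel} is a product of chiral Potts Boltzmann weights $W,\bar W$, which are nonzero for generic parameters and spectral value---so at a generic $\lambda_0$ every eigenvalue $\mathsf{q}_{\lambda_0}^{\text{chP}}$ is nonzero. The ``almost all'' qualifier in the statement then absorbs both this genericity condition on the chP-Boltzmann data and any locus in $\mathcal{R}_\mathsf{N}^{\text{chP}}$ where the simplicity theorem of Section \ref{FrbtSimple-T} fails.
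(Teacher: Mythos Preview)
Your argument follows the paper's proof: right eigenstates via the operator Baxter equation \rf{FrbtBax-ChP-T-II}, left eigenstates via the commutativity in \rf{FrbtChP-TII.C.M.R.}, and non-degeneracy via the simplicity of the $\tau_2$-spectrum together with Theorem \ref{FrbtC:T-eigenstates}. The additional care you take about $\mathsf{q}_\lambda^{\text{chP}}$ not being identically zero is a point the paper leaves implicit.

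One inaccuracy to flag: you write $[\tau_2(\lambda),\ST_\mu^{\text{chP}}]=0$ for arbitrary $\lambda,\mu$ and propose joint diagonalization with $\tau_2(\mu)$. The paper does \emph{not} establish this; \rf{FrbtChP-TII.C.M.R.} gives only $[\tau_2(\lambda),\ST_\lambda^{\text{chP}}]=0$ at \emph{coinciding} spectral parameter, and the discussion immediately following the proposition makes precisely this point to explain why the reverse inclusion ($\tau_2$-eigenstates being $\ST^{\text{chP}}$-eigenstates) is not automatic. Fortunately your actual argument does not need the stronger claim: the right-eigenstate step uses only the Baxter equation, and your ``transpose'' treatment of left eigenstates goes through once you commute $(\ST_\lambda^{\text{chP}})^T$ past $(\tau_2(\lambda))^T$, which is exactly the same-$\lambda$ commutativity the paper invokes. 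So the overstatement should be removed, but the proof stands.
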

\begin{proof}The first statement concerning right eigenstates is a simple and well known
consequence of the Baxter equation (\ref{FrbtBax-ChP-T-II}) from which we also
have that the corresponding $\tau _{2}$-eigenvalue $t(\lambda )$ is given by
(\ref{Frbteigenvalue-T-II-ChP}). To prove the statement for left
eigenstates we also have to use the first commutation relation in (\ref{FrbtChP-TII.C.M.R.}). Now, the fact that the $\tau _{2}$-spectrum is simple for
almost all the representations in $\mathcal{R}_{\mathsf{N}}^{{\small \text{chP}}} $ implies the remaining statements of the proposition, once we recall
that the $\tau _{2}$-spectrum characterization of Theorem \ref{FrbtC:T-eigenstates} holds for completely general representations and so in
particular for $\mathcal{R}_{\mathsf{N}}^{{\small \text{chP}}}$.
\end{proof}
{\bf Remark 5.} \ There exists a proper subvariety of $\mathcal{R}_{\mathsf{N}}^{{\small \text{chP}}}$ for which $\tau _{2}(\lambda )$ has double
degeneracy eigenvalues. In these representations we have shown that the degeneracy is
resolved by the charge $\Theta $ then, thanks to the commutativity \rf{FrbtChP-Theta-C.M.R.}, it is the simultaneous spectrum of $\Theta $ and $\ST_{\lambda }^{{\small \text{chP}}}$ that is simple here.

\subsection{Self-adjoint representations: complete spectrum characterization}

The Proposition \ref{FrbtChP-parametrization} characterizes completely the $\ST_{\lambda }^{{\small 
\text{chP}}}$-eigenspace showing that it is contained in the $\tau _{2}$%
-eigenspace. The reverse inclusion of eigenspaces however is not proven for
general $\mathcal{R}_{\mathsf{N}}^{{\small \text{chP}}}$. To clarify this point it
is worth to remark that while $\ST_{\lambda }^{{\small \text{chP}}}$\ and $%
\tau _{2}(\lambda )$\ are both one-parameter families of commuting operators,
their mutual commutativity is proven only when the spectral parameters
coincide, see (\ref{FrbtChP-TII.C.M.R.}).\ This means that we cannot argue,
by using this commutativity only\footnote{%
Note that the reverse statement is proven thanks to the Baxter
equation (\ref{FrbtBax-ChP-T-II}).}, that the $\tau _{2}$-eigenstates are
eigenstates of $\ST_{\lambda }^{{\small \text{chP}}}$; in fact, there may be
some residue degeneracy in the $\tau _{2}$-spectrum for fixed value of the
spectral parameter. Note that the simplicity property of $\tau _{2}$%
-spectrum only says that, taken any two different $\tau _{2}$-eigenstates, the
corresponding two $\tau _{2}$-eigenvalue functions are not identical. This
is a less strong statement that: for any $\tau _{2}$%
-eigenstate $|\,t\,\rangle $ there exists at least one value $\lambda _{t}$ of
the spectral parameter such that the $\tau _{2}$-eigenvalue in $\lambda _{t}$
is non-degenerate. If this last statement can be shown, the
commutativity (\ref{FrbtChP-TII.C.M.R.}) implies the coincidence of the $\tau
_{2}$ and $\ST_{\lambda }^{{\small \text{chP}}}$-eigenspaces.

Of course, the diagonalizability of $\ST_{\lambda }^{{\small \text{chP}}}$
defines a sufficient criterion to overcome deeper analysis of the $\tau
_{2} $-spectrum degeneracy.
Indeed, in this case the Proposition \ref{FrbtChP-parametrization} directly implies that the $\ST%
_{\lambda }^{{\small \text{chP}}}$-eigenbasis is simultaneously a $\tau _{2}$%
-eigenbasis. Then it is relevant to know the precise characterization of
the representations of $\mathcal{R}_{\mathsf{N}}^{{\small \text{chP}}}$ for which
the chiral Potts transfer matrix $\ST_{\lambda }^{{\small \text{chP}}}$ is proven to be diagonalizable. Here, we prove that this holds in particular for the
representations on the non-trivial curves for which the $\tau _{2}$-transfer
matrix is self-adjoint. Let us denote with $\mathcal{R}_{\mathsf{N}}^{{\small 
\text{ChP,s-adj}}}\equiv \mathcal{R}_{\mathsf{N}}^{{\small \text{chP}}}\cap 
\mathcal{R}_{\mathsf{N}}^{{\small \text{S-adj}}}$ such subset of $\mathcal{R}_{%
\mathsf{N}}^{{\small \text{chP}}}$, then the following
characterization holds:

\bigskip

\begin {lem} The subvariety  $\mathcal{R}_{\mathsf{N}}^{{\small \text{ChP,s-adj}}}$ is
characterized by (\ref{FrbtPar-on-ChP-inho-1})-(\ref{FrbtPar-on-ChP-inho-3}) and the
following points q$_{n}$ on the chP curves:%
\begin{equation}
\text{q}_{n}=(a_{\text{q}_{n}},\epsilon q\epsilon _{0,n}a_{\text{q}%
_{n}}^{\ast },\epsilon _{0,n}d_{\text{q}_{n}}^{\ast },d_{\text{q}_{n}})\in 
\mathcal{C}_{k},\ \ \ \epsilon _{0,n}=\pm 1,\ \ \ k^{\ast }=\epsilon k,
\label{Frbtq-S-adj}
\end{equation}%
where we have fixed $\mathsf{c}_{0}=1$, just to simplify the parametrization.
\end{lem}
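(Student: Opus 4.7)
The plan is to show the two-sided inclusion by direct computation: impose the self-adjointness conditions \rf{FrbtSelf-adjointness Condition} on the chP parametrization \rf{FrbtPar-on-ChP-inho-1}--\rf{FrbtPar-on-ChP-inho-3} (with the fixed $\mathsf{c}_0=1$), then solve the resulting algebraic constraints for $(a_{\text{q}_n},b_{\text{q}_n},c_{\text{q}_n},d_{\text{q}_n})$ and for $k$.

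First, I would substitute the explicit values $\mathbbm{a}_n = -b_{\text{q}_n}c_{\text{q}_n}q^{-1/2}$, $\mathbbm{b}_n = -a_{\text{q}_n}d_{\text{q}_n}q^{-3/2}$, $\mathbbm{c}_n = b_{\text{q}_n}c_{\text{q}_n}q^{1/2}$, $\mathbbm{d}_n = a_{\text{q}_n}d_{\text{q}_n}q^{-1/2}$, together with $\alpha_n = -b_{\text{q}_n}^2$, $\beta_n = -d_{\text{q}_n}^2$, into the three relations of \rf{FrbtSelf-adjointness Condition}. Using $|q|=1$ (so $q^* = q^{-1}$ and the half-integer powers behave accordingly), the two conditions $\mathbbm{c}_n = -\epsilon \mathbbm{b}_n^*$ and $\mathbbm{d}_n = -\epsilon \mathbbm{a}_n^*$ are complex conjugate to each other and collapse to the single equation
\begin{equation}
b_{\text{q}_n}c_{\text{q}_n} = \epsilon\, q\, a_{\text{q}_n}^{\ast}d_{\text{q}_n}^{\ast}.
\end{equation}
The remaining condition $\beta_n = \epsilon \mathbbm{a}_n^{\ast}\mathbbm{b}_n/\alpha_n^{\ast}$, after substitution and simplification, reduces to $(q\, b_{\text{q}_n}^{\ast})^2 = a_{\text{q}_n}^{2}$, whose square-root ambiguity I would encode as $q\,b_{\text{q}_n}^{\ast} = \epsilon\,\epsilon_{0,n} a_{\text{q}_n}$ with a per-site sign $\epsilon_{0,n}\in\{-1,+1\}$; equivalently $b_{\text{q}_n} = \epsilon\,q\,\epsilon_{0,n}\,a_{\text{q}_n}^{\ast}$. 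Plugging this back into the first equation immediately yields $c_{\text{q}_n} = \epsilon_{0,n}\, d_{\text{q}_n}^{\ast}$, which is exactly \rf{Frbtq-S-adj}.

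Next I would verify that these constraints are compatible with the defining equations of the curve $\mathcal{C}_k$. Using the relations just derived, $y_{\text{q}_n} = b_{\text{q}_n}/c_{\text{q}_n} = \epsilon\,q\,a_{\text{q}_n}^{\ast}/d_{\text{q}_n}^{\ast} = \epsilon\,q\,x_{\text{q}_n}^{\ast}$; raising to the $p$-th power and using $q^p=1$ together with the fact that $p=2l+1$ is odd gives $y_{\text{q}_n}^{p} = \epsilon\,(x_{\text{q}_n}^{p})^{\ast}$ and symmetrically $x_{\text{q}_n}^{p} = \epsilon\,(y_{\text{q}_n}^{p})^{\ast}$. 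Taking the complex conjugate of the first curve equation $x_{\text{q}_n}^{p} + y_{\text{q}_n}^{p} = k(1 + x_{\text{q}_n}^{p}y_{\text{q}_n}^{p})$ and applying these identities yields $\epsilon(x_{\text{q}_n}^{p} + y_{\text{q}_n}^{p}) = k^{\ast}(1 + x_{\text{q}_n}^{p}y_{\text{q}_n}^{p})$, and comparison with the original equation forces $k^{\ast} = \epsilon\, k$. The two remaining curve equations (involving $s_{\text{q}_n}$ and $k'$) are then automatically consistent: from $c_{\text{q}_n} = \epsilon_{0,n} d_{\text{q}_n}^{\ast}$ one has $s_{\text{q}_n}^{p} = \epsilon_{0,n} (d_{\text{q}_n}/d_{\text{q}_n}^{\ast})^{p}$, so complex conjugating and using $k^{\ast}=\epsilon k$ together with $(k')^2 = 1-k^2$ matches the two equations $kx_{\text{q}_n}^{p} = 1-k' s_{\text{q}_n}^{-p}$ and $ky_{\text{q}_n}^{p} = 1-k' s_{\text{q}_n}^{p}$ to each other under the involution.

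The converse direction is routine: starting from any $\text{q}_n$ of the form \rf{Frbtq-S-adj} on a curve with $k^{\ast}=\epsilon k$, I would just substitute back into the chP parametrization and verify that \rf{FrbtSelf-adjointness Condition} holds identically, which is a direct computation using the same sign bookkeeping.

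The main technical nuisance, rather than a deep obstacle, will be the careful management of branch choices for $q^{1/2}$ and of square-root ambiguities in the self-adjointness relation for $\beta_n$; these are precisely what produce the discrete signs $\epsilon_{0,n}$ and $\epsilon$ in the final statement, and one must check that they can be chosen independently per site without over-constraining the system.
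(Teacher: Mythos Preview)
Your proposal is correct and follows essentially the same route as the paper: impose the self-adjointness constraints \rf{FrbtSelf-adjointness Condition} on the chP parametrization \rf{FrbtPar-on-ChP-inho-1}--\rf{FrbtPar-on-ChP-inho-3}, solve the resulting algebraic relations for $b_{\text{q}_n},c_{\text{q}_n}$ in terms of $a_{\text{q}_n},d_{\text{q}_n}$, and then read off the reality condition on $k$ from the curve equations. The only minor difference is organizational: the paper states the parametrization \rf{Frbtq-S-adj} and declares the verification ``trivial'', then extracts $k^\ast=\epsilon k$ by writing $k$ explicitly in terms of the phase and modulus of $x_{\text{q}_n}$ (equation \rf{Frbtk-S-adj}), whereas you derive \rf{Frbtq-S-adj} constructively and obtain $k^\ast=\epsilon k$ by complex-conjugating the first curve equation directly; both arguments are equivalent.
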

\begin{proof}This subvariety is of real dimension $3\mathsf{N}+1$ and it is parametrized
by: 
\begin{equation}
\beta _{n}=-\frac{\left( \mathbbm{a}_{n}^{\ast }\right) ^{2}}{q\alpha _{n}^{\ast }},%
\text{ \ \ \ \ }\mathbbm{d}_{n}=-\epsilon \mathbbm{a}_{n}^{\ast },\text{ \ \ \ \ }\mathbbm{c}_{n}=-\mathbbm{a}_{n}q,%
\text{ \ \ \ \ }\mathbbm{b}_{n}=\epsilon \mathbbm{a}_{n}^{\ast }/q,  \label{FrbtTau_2-link-S-adj}
\end{equation}%
with:%
\begin{equation}
\alpha _{n}=-(qa_{\text{q}_{n}}^{\ast })^{2},\text{ \ \ \ \ }\mathbbm{a}_{n}=-\epsilon
q^{1/2}a_{\text{q}_{n}}^{\ast }d_{\text{q}_{n}}^{\ast },
\label{Frbt+alpha-a-cond}
\end{equation}%
where the point q$_{n}$ is characterized by (\ref{Frbtq-S-adj}). Indeed, it is a
trivial check to verify that the above parametrization satisfies both the
condition (\ref{FrbtSelf-adjointness Condition}) and those defining $\mathcal{R}%
_{\mathsf{N}}^{{\small \text{chP}}}$. The $\epsilon $-reality condition on the
modulus $k$ of the chP curves is then derived from the equation of the curve%
\footnote{%
It is worth noticing that for $\epsilon =1$ the equation (\ref{Frbtk-S-adj})
implies the following restriction on the modulus of the curve $k\in ]-1,1[$.}%
:%
\begin{equation}
k=\frac{\varphi _{x_{\text{q}_{n}}}^{p}+\epsilon \varphi _{x_{\text{q}%
_{n}}}^{-p}}{\epsilon |x_{\text{q}_{n}}|^{p}+1/|x_{\text{q}_{n}}|^{p}},
\label{Frbtk-S-adj}
\end{equation}%
where we have denoted $x_{\text{q}_{n}}=\varphi _{x_{\text{q}_{n}}}|x_{\text{%
q}_{n}}|$ with $\varphi _{x_{\text{q}_{n}}}$ the phase of $x_{\text{q}_{n}}$.
\end{proof}
\begin{lem} Let q be a point on the algebraic curve characterized by the condition
(\ref{Frbtq-S-adj}), then the W-functions \rf{FrbtW-function}
satisfy the following property under complex conjugation:%
\begin{equation}
\left( \frac{W_{\text{qp}}(z(n))}{W_{\text{qp}}(z(0))}\right) ^{\ast }=\frac{%
\bar{W}_{\text{qp}}(z(p-n))}{\bar{W}_{\text{qp}}(z(0))}\text{ \ 
}\forall z\in \mathbb{S}_{p},  \label{FrbtComplex-CJ-W}
\end{equation}%
where we have the following restriction on the point p$\in \mathcal{C}_{k}$, with $k^{\ast }=\epsilon k$:
\begin{equation}
x_{\text{p}}^{\ast }=\epsilon q^{-1}x_{\text{p}},\text{ \ }y_{\text{p}}^{\ast
}=\epsilon q y_{\text{p}},\text{ \ \ }s_{\text{p}}^{\ast }=s_{\text{p}}.
\label{FrbtChoice-p}
\end{equation}
\end{lem}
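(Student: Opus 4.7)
The plan is to prove the identity by direct manipulation on the product form of $W_{\text{qp}}(z(n))/W_{\text{qp}}(z(0))$ provided by definition \rf{FrbtW-function}. This ratio is, by construction, a product of $n$ elementary factors built linearly from the coordinates $(a_{\text{q}},b_{\text{q}},c_{\text{q}},d_{\text{q}})$ of q and $(x_{\text{p}},y_{\text{p}},s_{\text{p}})$ of p, with $q^{2j}$ appearing inside each factor; the ratio $\bar W_{\text{qp}}(z(n))/\bar W_{\text{qp}}(z(0))$ admits the analogous expansion and differs from the former, structurally, by an exchange of the roles of $(a,d)\leftrightarrow(b,c)$ (up to explicit scalar factors). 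So the first step is simply to write down both ratios as explicit products using \rf{FrbtW-function}.

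Next I would apply complex conjugation factor by factor. Using $q^{\ast}=q^{-1}$ (which follows from $q=e^{-i\pi\beta^{2}}$), the hypotheses of the lemma give the following substitution rules: from \rf{Frbtq-S-adj}, $b_{\text{q}}=\epsilon q\,\epsilon_{0}a_{\text{q}}^{\ast}$ and $c_{\text{q}}=\epsilon_{0}d_{\text{q}}^{\ast}$, hence $b_{\text{q}}^{\ast}=\epsilon q^{-1}\epsilon_{0}\,a_{\text{q}}$ and $c_{\text{q}}^{\ast}=\epsilon_{0}\,d_{\text{q}}$; from \rf{FrbtChoice-p}, $x_{\text{p}}^{\ast}=\epsilon q^{-1}x_{\text{p}}$, $y_{\text{p}}^{\ast}=\epsilon q\,y_{\text{p}}$ and $s_{\text{p}}^{\ast}=s_{\text{p}}$. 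Inserting these into the conjugated product, each factor built from $(a_{\text{q}},d_{\text{q}},x_{\text{p}})$ is sent, up to explicit powers of $q$ and signs in $\epsilon,\epsilon_{0}$, to the corresponding factor of $\bar W$ built from $(b_{\text{q}},c_{\text{q}},y_{\text{p}})$, while the monomial $q^{2j}$ is mapped to $q^{-2j}=q^{2(p-j)}$. Relabeling the product index $j\mapsto p-j$ then converts $\prod_{j=1}^{n}$ into a product running over the ``complementary'' range, which is precisely what is needed to produce $\bar W_{\text{qp}}(z(p-n))/\bar W_{\text{qp}}(z(0))$.

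The delicate part, and the one I expect to be the main obstacle, is the bookkeeping in the last step: one must verify that the various scalar prefactors (powers of $q$, signs $\epsilon^{\bullet}$, $\epsilon_{0}^{\bullet}$, and residual ratios of the coordinates of p and q) that pop out of the substitutions collect to $1$ in the final ratio. Crucially, the identity is stated as a \emph{ratio} normalized at the reference point $z(0)$, so many of these factors cancel automatically between numerator and denominator; the residual factors should match exactly thanks to the curve equations \rf{Frbtcurve-eq}, whose $\epsilon$-reality (following from $k^{\ast}=\epsilon k$ and $s_{\text{p}}^{\ast}=s_{\text{p}}$) is what makes the substitutions in the two sides consistent. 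Once the $n=1$ case is verified by this computation, the general $n$ case follows by induction on $n$ using the multiplicative structure of \rf{FrbtW-function}, together with the periodicity $q^{2p}=1$ to close the identification at $n=p$.
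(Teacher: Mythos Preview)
Your approach is essentially the paper's --- a direct computation on the product form \rf{FrbtW-function} using $q^{\ast}=q^{-1}$, $s_{\text{q}}^{\ast}=s_{\text{q}}^{-1}$, $s_{\text{p}}^{\ast}=s_{\text{p}}$, together with the substitutions from \rf{Frbtq-S-adj} and \rf{FrbtChoice-p}. The paper carries this out in one line and obtains
\[
\left( \frac{W_{\text{qp}}(z(n))}{W_{\text{qp}}(z(0))}\right)^{\ast}=(s_{\text{p}}s_{\text{q}})^{-n}\prod_{k=0}^{n-1}\frac{y_{\text{p}}-q^{2k}y_{\text{q}}}{q^{-2}x_{\text{q}}-q^{2k}x_{\text{p}}},
\]
and then invokes the cyclicity of $\bar W_{\text{qp}}$ to identify this with $\bar W_{\text{qp}}(z(p-n))/\bar W_{\text{qp}}(z(0))$.

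One point in your outline needs correction. Relabeling $j\mapsto p-j$ in $\prod_{j=1}^{n}$ does \emph{not} produce a product over the ``complementary'' range: it still has $n$ factors, only reindexed. By \rf{FrbtW-function} the ratio $\bar W_{\text{qp}}(z(p-n))/\bar W_{\text{qp}}(z(0))$ is a product of $p-n$ factors. The bridge between an $n$-factor expression and a $(p-n)$-factor one is precisely the cyclicity $\bar W_{\text{qp}}(z(p))/\bar W_{\text{qp}}(z(0))=1$, which lets you rewrite the latter as the inverse of an $n$-factor product. The paper uses this explicitly; your proposal mentions periodicity only ``to close the identification at $n=p$,'' but it is needed for every $n$. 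Also, the curve equations \rf{Frbtcurve-eq} are not required for the prefactor bookkeeping --- the paper invokes them only to note that \rf{FrbtChoice-p} is compatible with $\text{p}\in\mathcal{C}_k$; the scalar factors cancel directly from the substitution rules, so there is no need for induction either.
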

\begin{proof} It is important to point out that the restrictions on the point p are
compatible with the condition p$\in \mathcal{C}_{k}$. Now, by the
definition \rf{FrbtW-function} and the condition (\ref{Frbtq-S-adj}) satisfied by the point q and (\ref{FrbtChoice-p}), we have:%
\begin{equation}
\left( \frac{W_{\text{qp}}(z(n))}{W_{\text{qp}}(z(0))}\right) ^{\ast }=(s{}_{%
\text{p}}s{}_{\text{q}})^{-n}\prod_{k=0}^{n-1}\frac{y_{\text{p}}-q^{2k}y_{\text{q}}}{q^{-2}x_{\text{q}}-q^{2k}x_{\text{p}}}
\end{equation}%
where we have used that $s{}_{\text{q}}^{\ast }=s{}_{\text{q}}^{-1}$ and $%
s{}_{\text{p}}^{\ast }=s{}_{\text{p}}$. Once we use the
cyclicity condition satisfied by the $\bar{W}_{\text{qp}}(z)$, the r.h.s. of the above equation
coincides with the r.h.s. of (\ref{FrbtComplex-CJ-W}).
\end{proof}
\begin{propn}For any representation in $\mathcal{R}_{\mathsf{N}}^{{\small \text{%
chP,s-adj}}}$, the inhomogeneous chiral Potts transfer matrix $\ST_{\lambda
}^{{\small \text{chP}}}$ is a one-parameter family of normal operators
w.r.t. the points p defined in (\ref{FrbtChoice-p}). In particular, it satisfies the
following Hermitian conjugation property:%
\begin{equation}
\left( \left. \mathsf{T}_{\lambda }^{{\small \text{chP}}}\right\vert _{(%
\text{p},\text{q}_{n})}\right) ^{\dagger }=g_{(\text{p},\text{q}_{n})}\left. 
\mathsf{\hat{T}}_{\lambda }^{{\small \text{chP}}}\right\vert _{(\text{p}%
,\text{q}_{n})},\text{ \ \ }g_{(\text{p},\text{q}_{n})}\equiv
\prod_{n=1}^{\mathsf{N}}\frac{W_{\text{q}_{n}\text{p}}^{\ast }(z(0))\bar W_{\text{q}_{n}%
\text{p}}^{\ast }(z(0))}{W_{\text{q}_{n}\text{p}}(z(0))\bar W_{\text{q}%
_{n}\text{p}}(z(0))},  \label{FrbtCplex-CJ-Tchp}
\end{equation}%
where $\mathsf{\hat{T}}_{\lambda }^{{\small \text{chP}}}$ is the second
chiral Potts transfer matrix defined by:%
\begin{equation}
\left. \mathsf{\hat{T}}_{\lambda }^{{\small \text{chP}}}\right\vert _{(\text{%
p},\text{q}_{n})}(\text{z},\text{z}^{\prime })\equiv \langle \text{z}|\mathsf{\hat{T}}
_{\lambda }^{{\small \text{chP}}}|\text{z}^{\prime }\rangle =\prod_{n=1}^{%
\mathsf{N}}W_{\text{q}_{n}\text{p}}(z_{n+1}/z_{n}^{\prime })\bar{W}_{\text{q}_{n}%
\text{p}}(z_{n}/z_{n}^{\prime }).  \label{Frbtkernel-at}
\end{equation}
\end{propn}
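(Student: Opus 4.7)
The plan is twofold: first establish the kernel identity (\ref{FrbtCplex-CJ-Tchp}) by a direct conjugation computation, and then derive normality by exploiting the simple joint spectrum of $(\tau_2(\la),\Theta)$ on the self-adjoint subvariety.

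For the Hermitian conjugation identity, I would start from the kernel formula (\ref{Frbtkernel}), specialized to $r_n=q_n$ as forced by (\ref{FrbtPar-on-ChP-inho-1})--(\ref{FrbtPar-on-ChP-inho-3}), and use $(\ST_\la^{\text{chP}})^\dagger(z,z')=\overline{\ST_\la^{\text{chP}}(z',z)}$. The previous lemma yields $\overline{W_{q_np}(w)}=c_{1,n}\,\bar W_{q_np}(w^{-1})$ with $c_{1,n}=W^{*}_{q_np}(z(0))/\bar W_{q_np}(z(0))$, and solving the same identity for $\bar W$ produces the companion formula $\overline{\bar W_{q_np}(w)}=c_{2,n}\,W_{q_np}(w^{-1})$ with $c_{2,n}=\bar W^{*}_{q_np}(z(0))/W_{q_np}(z(0))$. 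Substituting these into the product and using $(z_n'/z_n)^{-1}=z_n/z_n'$, $(z_n'/z_{n+1})^{-1}=z_{n+1}/z_n'$, the scalar prefactor collects into exactly $g_{(p,q_n)}=\prod_n c_{1,n}c_{2,n}$ while the remaining product reproduces the kernel (\ref{Frbtkernel-at}) of $\hat\ST_\la^{\text{chP}}$.

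With (\ref{FrbtCplex-CJ-Tchp}) in hand, normality amounts to showing $[\ST_\la^{\text{chP}},\hat\ST_\la^{\text{chP}}]=0$. I would deduce this from the simplicity of the joint spectrum of the commuting family $\{\tau_2(\mu),\Theta\}_\mu$ on $\mathcal R_{\mathsf N}^{\text{ChP,s-adj}}$, itself obtained by the adaptation of Proposition 5 and Lemma 2 of \cite{FrbtNT} already noted after Lemma \ref{FrbtBaxter-coeff-S-adj}. Via the Baxter equation (\ref{FrbtBax-ChP-T-II}) and the mutual commutativity (\ref{FrbtChP-TII.C.M.R.}), $\ST_\la^{\text{chP}}$ commutes with $\tau_2(\mu)$ for every $\mu$ and with $\Theta$ by (\ref{FrbtChP-Theta-C.M.R.}); by Hermitian conjugation in the self-adjoint subvariety the same holds for $\hat\ST_\la^{\text{chP}}$. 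Simplicity of the joint spectrum then forces both operators to lie in the abelian algebra generated by $\{\tau_2(\mu),\Theta\}_\mu$, and hence they commute.

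The principal technical obstacle is the cyclic index bookkeeping in the kernel computation: the argument of the $\bar W$-factor in $\ST_\la^{\text{chP}}$ is $z_n'/z_{n+1}$, so after swapping $z\leftrightarrow z'$ and inverting it becomes $z_{n+1}/z_n'$, which is precisely the argument of the $W$-factor in the definition (\ref{Frbtkernel-at}) of $\hat\ST_\la^{\text{chP}}$; one must verify that the $W$- and $\bar W$-factors interchange site by site in exactly the right way, so that the product factorizes as claimed. A secondary but necessary consistency check is that the restriction (\ref{FrbtChoice-p}) on the spectral point $p$ is compatible with $p\in\mathcal C_k$ for $k^{*}=\epsilon k$, a point already noted in the previous lemma, so that the conjugation formula can be invoked uniformly at every site.
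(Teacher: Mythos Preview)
Your derivation of the Hermitian conjugation identity (\ref{FrbtCplex-CJ-Tchp}) is correct and coincides with the paper's: both invoke the previous lemma on complex conjugation of the $W$-functions and carry out the site-by-site kernel computation you describe.

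Your route to normality, however, diverges from the paper's and contains a gap. You assert that ``via the Baxter equation and the mutual commutativity (\ref{FrbtChP-TII.C.M.R.}), $\ST_\la^{\text{chP}}$ commutes with $\tau_2(\mu)$ for every $\mu$''. But (\ref{FrbtChP-TII.C.M.R.}) only gives $[\tau_2(\lambda),\ST_\lambda^{\text{chP}}]=0$ at coinciding spectral parameters, and the paper explicitly warns (in the paragraph preceding this proposition) that mutual commutativity for distinct spectral parameters is \emph{not} established. Trying to extract it from the Baxter equation yields only $[\ST_\la^{\text{chP}},\tau_2(\mu)]\,\ST_\mu^{\text{chP}}=0$, which would require invertibility of $\ST_\mu^{\text{chP}}$ --- itself a consequence of the diagonalizability you are trying to prove. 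So your argument is circular: the whole purpose of establishing normality here is precisely to obtain the diagonalizability that later (in Theorem~\ref{FrbtIso-ChP-TII}) forces $\ST_\la^{\text{chP}}$ to share the $\tau_2$-eigenbasis, not the other way around.

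The paper instead bypasses this entirely by citing the known commutativity $[\ST_\lambda^{\text{chP}}|_{(\text{p},\text{q}_n)},\hat\ST_{\lambda'}^{\text{chP}}|_{(\text{p}',\text{q}_n)}]=0$ for all $\text{p},\text{p}'\in\mathcal C_k$ from \cite{FrbtBS}; combined with (\ref{FrbtCplex-CJ-Tchp}) this gives $[\ST_\la^{\text{chP}},(\ST_\la^{\text{chP}})^\dagger]=0$ immediately. That is the missing ingredient in your argument.
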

\begin{proof}The proof of (\ref{FrbtCplex-CJ-Tchp}) is a simple consequence of the previous
lemma, then the known commutativity \cite{FrbtBS}:
\begin{equation}
\lbrack \left. \mathsf{T}_{\lambda }^{{\small \text{chP}}}\right\vert _{(%
\text{p},\text{q}_{n})},\left. \mathsf{\hat{T}}_{\lambda^{\prime} }^{{\small \text{chP%
}}}\right\vert _{(\text{p}^{\prime },\text{q}_{n})}]=0\text{ \ \ }\forall 
\text{p, p}^{\prime }\in \mathcal{C}_{k},
\end{equation}%
implies the statement of normality thanks to \rf{FrbtChoice-p}.
\end{proof}

Let us denote with $\Sigma _{\mathsf{T}_{\lambda }^{{\small \text{chP}}}}$
the set of all the eigenvalue functions of $\mathsf{T}_{\lambda }^{{\small 
\text{chP}}}$, then:
\begin{thm}\label{FrbtIso-ChP-TII}For any representation in $\mathcal{R}_{\mathsf{N}}^{{\small \text{%
chP,s-adj}}}$ the formula (\ref{Frbteigenvalue-T-II-ChP}) defines a one-to-one map between the eigenvalue sets $\Sigma _{\mathsf{T}_{\lambda }^{{\small 
\text{chP}}}}$\ and $\Sigma _{\tau _{2}}$. Moreover, Theorem \ref{FrbtC:T-eigenstates} allows to construct the full ($\ST_{\lambda }^{{\small 
\text{chP}}},\tau _{2}(\lambda ),\Theta $)-eigenbasis associating to any $%
t(\lambda )\in \Sigma _{\tau _{2}}$ the corresponding eigenstate.
\end{thm}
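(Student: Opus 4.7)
The plan is to combine three ingredients already in place: the Baxter intertwining relation (\ref{FrbtBax-ChP-T-II}) which via Proposition~\ref{FrbtChP-parametrization} sends each chP-eigenvalue to a $\tau_2$-eigenvalue through formula (\ref{Frbteigenvalue-T-II-ChP}); the normality of $\ST_\lambda^{{\small \text{chP}}}$ just established, which guarantees a complete chP-eigenbasis via the spectral theorem; and the simplicity of the joint spectrum of $(\tau_2(\lambda),\Theta)$ on the self-adjoint subvariety, as remarked after Lemma~\ref{FrbtBaxter-coeff-S-adj}, together with the commutativity $[\ST_\lambda^{{\small \text{chP}}},\Theta]=0$. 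The plan is to define the map $\Phi:\Sigma_{\mathsf{T}_\lambda^{{\small \text{chP}}}}\to\Sigma_{\tau_2}$ by the right-hand side of (\ref{Frbteigenvalue-T-II-ChP}); well-definedness and the inclusion of the image in $\Sigma_{\tau_2}$ are immediate from Proposition~\ref{FrbtChP-parametrization}. What remains is bijectivity of $\Phi$ together with the identification of eigenstates.

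For surjectivity I would exploit normality: it yields an orthogonal decomposition of the Hilbert space into chP-eigenspaces, each of whose vectors is automatically a $\tau_2$-eigenstate by Proposition~\ref{FrbtChP-parametrization}. Given any $t\in\Sigma_{\tau_2}$, the $\tau_2$-eigenstate $|\,t\,\rangle$ constructed in Theorem~\ref{FrbtC:T-eigenstates} admits a nontrivial expansion in this basis, and every nonzero component belongs to a chP-eigenspace whose vectors share the $\tau_2$-eigenvalue $t$; the corresponding $\mathsf{q}_\lambda^{{\small \text{chP}}}$ is a preimage of $t$ under $\Phi$. For injectivity, suppose $\mathsf{q}^{(1)}\neq\mathsf{q}^{(2)}$ both map to the same $t\in\Sigma_{\tau_2}^{k}$; then their chP-eigenspaces each contain a $\tau_2$-eigenstate of eigenvalue $t$. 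Refining this decomposition to common $\Theta$-eigenspaces (possible because $[\ST_\lambda^{{\small \text{chP}}},\Theta]=0$) and invoking the simplicity of $(\tau_2,\Theta)$ on $\mathcal{R}_{\mathsf{N}}^{{\small \text{chP,s-adj}}}$, one concludes that the joint $(\tau_2,\Theta)$-eigenstate at a fixed pair $(t,q^{\pm k})$ is one-dimensional, which forces $\mathsf{q}^{(1)}=\mathsf{q}^{(2)}$ sector by sector and contradicts the hypothesis.

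The moreover clause is then immediate: given $t\in\Sigma_{\tau_2}^{k}$, Theorem~\ref{FrbtC:T-eigenstates} produces the $\tau_2$-eigenstate via $Q_t(\lambda)$, and by the surjectivity argument this state is simultaneously a chP-eigenstate with eigenvalue $\Phi^{-1}(t)$ and a $\Theta$-eigenstate with charge $q^{k}$, so one obtains the joint $(\ST_\lambda^{{\small \text{chP}}},\tau_2(\lambda),\Theta)$-eigenbasis. I expect the main obstacle to be the injectivity step, since the chP-spectrum is only indirectly controlled through the Baxter equation (\ref{FrbtBax-ChP-T-II}), and the argument has to combine the $\Theta$-refinement with the simplicity of $(\tau_2,\Theta)$ carefully; the technical input that makes this go through is the alignment of the Baxter coefficients $a_{\text{BS}},d_{\text{BS}}$ with $\mathtt{a},\mathtt{d}$ of Lemma~\ref{FrbtBaxter-coeff-S-adj}, which ensures that the chP-eigenvalue $\mathsf{q}_\lambda^{{\small \text{chP}}}$ is consistent with the polynomial solution $Q_t$ of the Baxter equation furnished by Theorem~\ref{FrbtDerivation-Baxter-functional}.
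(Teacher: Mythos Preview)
Your proposal is correct and follows essentially the same route as the paper: normality of $\ST_\lambda^{\text{chP}}$ gives diagonalizability, Proposition~\ref{FrbtChP-parametrization} then forces the chP-eigenbasis to be a joint $(\tau_2,\Theta)$-eigenbasis, and simplicity of the $(\tau_2,\Theta)$-spectrum together with Theorem~\ref{FrbtC:T-eigenstates} yields the bijection and the explicit construction. The paper's own proof is much terser---it collapses your surjectivity/injectivity split into the single sentence ``shares with $\tau_2(\lambda)$ and $\Theta$ a complete set of eigenstates from which the statement about the eigenvalue functions follows''---but the logical content is the same.

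One remark: the technical concerns in your last paragraph about the alignment of $a_{\text{BS}},d_{\text{BS}}$ with the coefficients $\mathtt{a},\mathtt{d}$ of Lemma~\ref{FrbtBaxter-coeff-S-adj}, and about the polynomial $Q_t$ of Theorem~\ref{FrbtDerivation-Baxter-functional}, are not needed for this theorem. Theorem~\ref{FrbtIso-ChP-TII} holds on all of $\mathcal{R}_{\mathsf{N}}^{\text{chP,s-adj}}$ and uses only the general SOV construction of Theorem~\ref{FrbtC:T-eigenstates}; the polynomial Baxter solutions and the coefficient alignment enter only in the subsequent Theorem on $\bar{\mathcal{R}}_{\mathsf{N}}^{\text{chP,s-adj}}$, where the further constraints~(\ref{FrbtS-Adj-c}) are imposed. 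So you can drop that paragraph without loss.
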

\begin{proof}The diagonalizability of $\mathsf{T}_{\lambda }^{{\small \text{chP}}}$
implies, thanks to Proposition \ref{FrbtChP-parametrization}, that $\mathsf{T}_{\lambda }^{{\small \text{%
chP}}}$ shares with $\tau _{2}(\lambda )$ and $\Theta $ a complete set of
eigenstates from which the statement about the eigenvalue functions
follows. Then, the map (\ref{Frbteigenvalue-T-II-ChP}) being
an isomorphism, we can label all the eigenstates of $\mathsf{T}_{\lambda }^{{\small 
\text{chP}}}$ in terms of the eigenvalues of $\tau _{2}(\lambda )$ and $%
\Theta $ and, by using Theorem \ref{FrbtC:T-eigenstates}, we can construct the
entire simultaneous eigenbasis.
\end{proof}

\subsection{Completeness of Bethe ansatz type equations}

Let us denote with $\mathcal{\bar{R}}_{\mathsf{N}}^{{\small \text{chP,s-adj}}}$
the subset of $\mathcal{R}_{\mathsf{N}}^{{\small \text{chP,s-adj}}}$  which contains the
representations which satisfy the constrains (\ref{FrbtS-Adj-c}).
Then the following characterization holds:

\begin{lem}In the case of $\mathsf{N}$ even, the subvariety $\mathcal{\bar{R}}_{\mathsf{N}}^{%
{\small \text{chP,s-adj}}}$ is characterized by (\ref{FrbtPar-on-ChP-inho-1})-(%
\ref{FrbtPar-on-ChP-inho-3}) and the points q$_{n}$ on the curves \rf{Frbtq-S-adj} with moreover:%
\begin{equation}
a_{\text{q}_{2n+i}}=\pm \bar{\epsilon}_{2n+i}^{1/2}\varphi _{a_{\text{q}%
_{1}}}^{(1-2\delta _{i,0})}q^{2\delta _{i,0}}|a_{\text{q}_{2n+i}}|,\text{ }%
d_{\text{q}_{2n+i}}=\pm \epsilon _{1,2n+i}^{1/2}\bar{\epsilon}%
_{2n+i}^{-1/2}\left( \varphi _{a_{\text{q}_{1}}}/q\right) ^{(2\delta
_{i,0}-1)}|d_{\text{q}_{2n+i}}|,\text{ \ }i=0,1,  \label{FrbtSpecial-S-Adj}
\end{equation}%
where $\varphi _{a_{\text{q}_{1}}}$ is the phase of $a_{\text{q}_{1}}$ and%
\begin{equation}
\bar{\epsilon}_{n}=\prod_{h=1}^{n-1}\epsilon _{2,h},\ \ \ \epsilon
_{1,n}=\pm 1,\ \ \ \epsilon _{2,n}=\pm 1.
\end{equation}%
In the case of $\mathsf{N}$ odd, $\mathcal{\bar{R}}_{\mathsf{N}}^{{\small \text{chP,s-adj%
}}}$ is obtained by the further requirements: 
\begin{equation}
\varphi _{a_{\text{q}_{1}}}^{2}=\pm q^{2},\text{ \ \ \ \ }\epsilon
_{1,n}=\epsilon \text{ \ \ \ }\forall n\in \{1,...,\mathsf{N}\}.
\end{equation}
\end{lem}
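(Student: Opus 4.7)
The plan is to derive the claimed parametrisation by a direct, step-by-step reduction: take the formulas for $\alpha_n,\mathbbm{a}_n,\mathbbm{b}_n,\mathbbm{c}_n,\mathbbm{d}_n,\beta_n$ already produced in the preceding lemma for $\mathcal{R}_{\mathsf{N}}^{{\small\text{chP,s-adj}}}$ and substitute them into the three conditions (\ref{FrbtS-Adj-c}), rewriting each condition as a statement purely about the moduli and phases of $a_{\text{q}_n}$ and $d_{\text{q}_n}$. Writing every relevant scalar in polar form $z=|z|\varphi_z$ with $|\varphi_z|=1$, and using $q^{\ast}=q^{-1}$ together with $\alpha_n=-q^2(a_{\text{q}_n}^{\ast})^2$, $\mathbbm{a}_n=-\epsilon q^{1/2}a_{\text{q}_n}^{\ast}d_{\text{q}_n}^{\ast}$, $\mathbbm{b}_n=\epsilon\mathbbm{a}_n^{\ast}/q$, the computation is routine in principle but requires careful bookkeeping.

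First I would handle the local constraint $\mathbbm{b}_n/\mathbbm{b}_n^{\ast}=\mathbbm{a}_n/\mathbbm{a}_n^{\ast}$. After substitution it becomes a quartic equation on the phase combination $\varphi_{a_{\text{q}_n}}\varphi_{d_{\text{q}_n}}$, which fixes this product up to the binary ambiguity that becomes the sign $\epsilon_{1,n}^{1/2}$ appearing in the expression for $d_{\text{q}_n}$ in (\ref{FrbtSpecial-S-Adj}). Next I would treat the recursion constraint $\alpha_{n+1}^{\ast}\alpha_n^{\ast}/(\alpha_{n+1}\alpha_n) = \mathbbm{b}_{n+1}^{\ast}\mathbbm{b}_n/(\mathbbm{b}_{n+1}\mathbbm{b}_n^{\ast})$: after the prior cancellations it reduces to a two-step recursion on the sequence $\varphi_{a_{\text{q}_n}}$, whose only freedom at each step is another binary ambiguity, the sign $\epsilon_{2,n}$. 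Iterating from $\varphi_{a_{\text{q}_1}}$ and introducing $\bar\epsilon_n=\prod_{h=1}^{n-1}\epsilon_{2,h}$ produces the alternating shape displayed in (\ref{FrbtSpecial-S-Adj}), in which sites with $i=0$ (even positions) and $i=1$ (odd positions) carry expressions swapped by the involution $\varphi_{a_{\text{q}_1}}\leftrightarrow q/\varphi_{a_{\text{q}_1}}$. The partner expression for $d_{\text{q}_n}$ is then forced by the relation fixed in the first step.

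The last step is to enforce the global constraint $\prod_h \alpha_h^{\ast}/\alpha_h=1$, which is the cyclic compatibility condition imposed by closing the chain. With the parametrisation obtained above the product telescopes to a simple expression involving $\varphi_{a_{\text{q}_1}}$ and the accumulated sign $\bar\epsilon_{\mathsf{N}+1}$. For $\mathsf{N}$ even, the alternation of $i=0$ and $i=1$ contributions pairs sites up and the product is automatically equal to $1$, so no further restriction on $\varphi_{a_{\text{q}_1}}$ is generated and the lemma's statement for even $\mathsf{N}$ follows immediately. For $\mathsf{N}$ odd the telescoping leaves an unpaired factor; demanding that this factor equals $1$ returns exactly $\varphi_{a_{\text{q}_1}}^2=\pm q^2$, and an analogous parity argument applied to the sign $\epsilon_{1,n}$ propagated around the chain forces $\epsilon_{1,n}\equiv\epsilon$ for every $n$, reproducing the additional requirements of the odd-$\mathsf{N}$ case.

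The main obstacle is not any nontrivial identity but the matching of sign ambiguities: several of the intermediate equations are equalities of fourth powers of phases, so their solutions naturally carry $\mathbb{Z}/2$ freedoms, and one must verify that these freedoms correspond exactly to the parameters $\epsilon_{1,n},\epsilon_{2,n}$ (and, for the reality condition $k^{\ast}=\epsilon k$ on the modulus already inherited from (\ref{Frbtq-S-adj}), the signs $\epsilon_{0,n}$). A clean way to organise the bookkeeping is to work with the two independent invariants $\varphi_{a_{\text{q}_n}}\varphi_{d_{\text{q}_n}}$ and $\varphi_{a_{\text{q}_n}}/\varphi_{d_{\text{q}_n}}$, sending the local constraint to the former and the recursion to the latter, and to track separately the cyclic closure conditions at $n=\mathsf{N}$ depending on the parity of $\mathsf{N}$.
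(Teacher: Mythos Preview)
Your overall strategy matches the paper's proof closely: substitute the parametrisation from the preceding lemma into the three constraints \rf{FrbtS-Adj-c}, obtaining from the second constraint the condition $(\mathbbm{a}_n^{\ast}/\mathbbm{a}_n)^2=q^2$ (your local condition on $\varphi_{a_{\text{q}_n}}\varphi_{d_{\text{q}_n}}$, introducing $\epsilon_{1,n}$), from the third constraint the recursion $\alpha_{n+1}^{\ast}/\alpha_{n+1}=\alpha_n/\alpha_n^{\ast}$ (your recursion on $\varphi_{a_{\text{q}_n}}$, introducing $\epsilon_{2,n}$), and from the first constraint the telescoping product that is automatic for even $\mathsf{N}$ and forces $\alpha_n\in\mathbb{R}$, hence $\varphi_{a_{\text{q}_1}}^2=\pm q^2$, for odd $\mathsf{N}$. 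All of this is correct and is exactly what the paper does.

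There is, however, a genuine gap in your treatment of the odd-$\mathsf{N}$ requirement $\epsilon_{1,n}=\epsilon$. You attribute it to ``an analogous parity argument applied to the sign $\epsilon_{1,n}$ propagated around the chain''. This is not right: the signs $\epsilon_{1,n}$ are purely local --- each one arises independently from the site-$n$ relation between $d_{\text{q}_n}$ and $a_{\text{q}_n}$ --- and nothing in \rf{FrbtS-Adj-c} couples $\epsilon_{1,n}$ to $\epsilon_{1,n+1}$ or to the closure of the chain. The correct mechanism, used in the paper, is the curve equation: the points $\text{q}_n$ must lie on $\mathcal{C}_k$ with $k^{\ast}=\epsilon k$, and once $\varphi_{a_{\text{q}_1}}^2=\pm q^2$ the resulting expression for $k$ has numerator proportional to $\varphi_{a_{\text{q}_1}}^{2p}+\epsilon_{1,n}\epsilon\,\varphi_{a_{\text{q}_1}}^{-2p}$, which vanishes (giving the trivial curve $k=0$) unless $\epsilon_{1,n}\epsilon=+1$. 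So the constraint $\epsilon_{1,n}=\epsilon$ is forced by non-triviality of the chiral Potts curve, not by any cyclic closure. Your proposal needs to invoke the curve equations \rf{Frbtcurve-eq} at this last step.
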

\begin{proof}
Let us impose the second and third conditions of (\ref{FrbtS-Adj-c}), these
together with (\ref{FrbtTau_2-link-S-adj}) imply:%
\begin{equation}
\left( \frac{\mathbbm{a}_{n}^{\ast }}{\mathbbm{a}_{n}}\right) ^{2}=q^{2},\text{ \ \ }\frac{%
\alpha _{n+1}^{\ast }}{\alpha _{n+1}}=\frac{\alpha _{n}}{\alpha _{n}^{\ast }}%
,  \label{FrbtCond-3}
\end{equation}%
from which the first condition in (\ref{FrbtS-Adj-c}) is automatically satisfied
for $\mathsf{N}$ even while it imposes $\alpha _{n}\in \mathbb{R}$ for $\mathsf{N}$ odd.
The conditions (\ref{FrbtCond-3}) imply for the points q$_{n}$ the constrains:%
\begin{equation}
\frac{d_{\text{q}_{n}}}{d_{\text{q}_{n}}^{\ast }}=\epsilon _{1,n}q^{2}\frac{%
a_{\text{q}_{n}}^{\ast }}{a_{\text{q}_{n}}},\text{ \ \ \ \ }q^{2}\frac{a_{%
\text{q}_{n+1}}^{\ast }}{a_{\text{q}_{n+1}}}=\epsilon _{2,n}\frac{a_{\text{q}%
_{n}}}{q^{2}a_{\text{q}_{n}}^{\ast }},
\end{equation}%
with $\epsilon _{i,n}=\pm 1$. Then, the parametrization for the points q$_{n}
$ is given by (\ref{Frbtq-S-adj}) plus (\ref{FrbtSpecial-S-Adj}) and the equations
of the curves for these points q$_{n}$ read:%
\begin{equation}
k=\left( \epsilon _{1,2n+i}\epsilon \right) ^{(1-\delta _{i,0})}\bar{\epsilon%
}_{2n+i}\left( \epsilon _{1,2n+i}\right) ^{1/2}\frac{\varphi _{a_{\text{q}%
_{1}}}^{2p}+\epsilon _{1,2n+i}\epsilon \varphi _{a_{\text{q}_{1}}}^{-2p}}{%
\epsilon |x_{\text{q}_{2n+i}}|^{p}+1/|x_{\text{q}_{2n+i}}|^{p}}.
\end{equation}%
In the case $\mathsf{N}$ even we do not have to add further constraints, and it is then
clear that the subvariety $\mathcal{\bar{R}}_{\mathsf{N}}^{{\small \text{chP,s-adj}%
}}$ can be for example parameterized by the $\mathsf{N}+2$ parameters $(k,|d_{%
\text{q}_{1}}|,|a_{\text{q}_{1}}|,|a_{\text{q}_{2}}|,....,|a_{\text{q}_{\mathsf{N}%
}}|)$. In the case $\mathsf{N}$ odd we have still to impose $\alpha _{n}\in 
\mathbb{R}$, which reads:%
\begin{equation}
\varphi _{a_{\text{q}_{1}}}^{2}=\pm q^{2}.
\end{equation}%
In this case, only the choices:%
\begin{equation}
\epsilon _{1,n}=\epsilon \text{ \ \ \ }\forall n\in \{1,...,\mathsf{N}\},
\end{equation}%
lead to non-trivial curves, $k\neq 0$:%
\begin{equation}
k=\frac{\pm 2\bar{\epsilon}_{n}\epsilon ^{1/2}}{\epsilon |x_{\text{q}%
_{n}}|^{p}+1/|x_{\text{q}_{n}}|^{p}},
\end{equation}%
and so it is clear that the subvariety $\mathcal{\bar{R}}_{\mathsf{N}}^{{\small 
\text{chP,s-adj}}}$ can be for example parameterized by the $\mathsf{N}+1$
parameters $(k,|a_{\text{q}_{1}}|,....,|a_{\text{q}_{\mathsf{N}}}|)$.
\end{proof}
The interest toward these representations is due to the complete
characterization of the $\tau _{2}$-eigenbasis by the $\epsilon $%
-self-adjoint solutions of the associated Bethe equations (\ref{Frbteq-Bethe}).

\begin{thm}
For any representation in $\mathcal{\bar{R}}_{\mathsf{N}}^{{\small \text{chP,s-adj}%
}}$, there is a one to one map between $\Sigma _{\mathsf{T}_{\lambda }^{%
{\small \text{chP}}}}$\ and the set of all the $\epsilon $-self-adjoint\
solutions of the system of Bethe ansatz type equations (\ref{Frbteq-Bethe}). Moreover, the
two simultaneous eigenstates of $\ST_{\lambda }^{{\small \text{chP}}}$ and $%
\Theta $ corresponding to the given \textsf{q}$_{\lambda }^{{\small \text{chP%
}}}\in \Sigma _{\mathsf{T}_{\lambda }^{{\small \text{chP}}}}$ are then
constructed by inserting in equation (\ref{FrbtQeigenstate-even}) the $\epsilon $%
-real polynomial $Q(\lambda )$ defined by (\ref{FrbtQ_t-definition}) in terms of
this Bethe equation solution.
\end{thm}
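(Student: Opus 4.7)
The plan is to obtain the claimed bijection by composing two bijections already established in the paper. Every representation in $\mathcal{\bar{R}}_{\mathsf{N}}^{{\small \text{chP,s-adj}}}$ lies in $\mathcal{R}_{\mathsf{N}}^{{\small \text{chP,s-adj}}}$ and simultaneously satisfies the constraints \rf{FrbtS-Adj-c}. Hence Theorem \ref{FrbtIso-ChP-TII} applies and provides a bijection $\Sigma_{\mathsf{T}_{\lambda}^{{\small \text{chP}}}} \leftrightarrow \Sigma_{\tau_{2}}$, realised through the functional relation \rf{Frbteigenvalue-T-II-ChP}. The proposition of subsection \ref{FrbtCompl-BA} also applies and provides a bijection between $\Sigma_{\tau_{2}}$ and the set of $p$-strings free, $\epsilon$-self-adjoint solutions of the Bethe ansatz type equations \rf{Frbteq-Bethe}, implemented via the polynomial $Q_{t}(\lambda)$ of Theorem \ref{FrbtDerivation-Baxter-functional}. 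Composing these two maps yields the one-to-one correspondence claimed in the statement.

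For the second part, starting from an $\epsilon$-self-adjoint Bethe solution one forms the $\epsilon$-real polynomial $Q(\lambda)$ via \rf{FrbtQ_t-definition}. Inserting $Q$ into \rf{FrbtQeigenstate-even} as prescribed by Theorem \ref{FrbtC:T-eigenstates} produces, for the two possible sectors $k=\pm a_{t}$ of the $\Theta$-charge, a pair of $\tau_{2}$-eigenstates. By Theorem \ref{FrbtIso-ChP-TII} these are automatically simultaneous eigenstates of $\mathsf{T}_{\lambda}^{{\small \text{chP}}}$, with eigenvalue $\mathsf{q}_{\lambda}^{{\small \text{chP}}}$ reconstructed from $Q$ by \rf{FrbtTfromQ} combined with \rf{Frbteigenvalue-T-II-ChP}, yielding the construction described in the statement.

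The main point that must be checked for coherence of the composition is the behaviour of the $\Theta$-grading: one has to verify that the two $\tau_{2}$-eigenstates associated with a single Bethe solution (sectors $\pm a_{t}$) map under \rf{Frbteigenvalue-T-II-ChP} to the same $\mathsf{q}_{\lambda}^{{\small \text{chP}}}$, so that the bijection at the level of eigenvalue functions is not artificially doubled. This follows from the fact that the right-hand side of \rf{Frbteigenvalue-T-II-ChP} depends on $Q$ only, and not on the sign of the $\Theta$-sector, while the induced doubling on the eigenstate side is precisely what is needed: as discussed in Remark 5, the $\tau_{2}$-spectrum of these representations carries a double degeneracy that is resolved exactly by $\Theta$, and the simultaneous spectrum of $(\mathsf{T}_{\lambda}^{{\small \text{chP}}}, \Theta)$ is simple. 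Thus no obstruction arises in $\mathcal{\bar{R}}_{\mathsf{N}}^{{\small \text{chP,s-adj}}}$, and the composition of the two bijections is well-defined in both directions.
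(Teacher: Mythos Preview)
Your approach is essentially the same as the paper's: the bijection is obtained by composing the isomorphism $\Sigma_{\mathsf{T}_{\lambda}^{\text{chP}}}\leftrightarrow\Sigma_{\tau_2}$ from Theorem~\ref{FrbtIso-ChP-TII} with the isomorphism $\Sigma_{\tau_2}\leftrightarrow\{\text{Bethe solutions}\}$ from the proposition of subsection~\ref{FrbtCompl-BA} (via Theorem~\ref{FrbtDerivation-Baxter-functional}), and the eigenstates are then built through \rf{FrbtQeigenstate-even}. The only addition in the paper is an explicit check, specific to $\mathcal{\bar R}_{\mathsf{N}}^{\text{chP,s-adj}}$, that the averages of the Baxter coefficients differ, $\prod_n\mathtt{a}(\lambda q^n)\neq\prod_n\mathtt{d}(\lambda q^n)$, which secures the uniqueness of $Q_t$; you rely on this implicitly through the proposition you cite, which is fine.
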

\begin{proof}
The isomorphism stated in the above theorem is just the composition of two
isomorphisms. First, since by definition $\mathcal{\bar{R}}_{\mathsf{N}}^{{\small 
\text{chP,s-adj}}}\subset \mathcal{R}_{\mathsf{N}}^{{\small \text{chP,s-adj}}}$,
the isomorphism defined in Theorem \ref{FrbtIso-ChP-TII} holds and (\ref{Frbteigenvalue-T-II-ChP})\
associates to any eigenvalue \textsf{q}$_{\lambda }^{{\small \text{chP}}}\in
\Sigma _{\mathsf{T}_{\lambda }^{{\small \text{chP}}}}$ one and only one $%
t(\lambda )\in \Sigma _{\tau_2}$, and
viceversa. Then, we can use the Theorem \ref{FrbtDerivation-Baxter-functional}
to uniquely associate the $\epsilon $-real polynomial $Q(\lambda )$ defined
by (\ref{FrbtQ_t-definition}) whose zeros define a $\epsilon $-self-adjoint
solution of the Bethe system of equations (\ref{Frbteq-Bethe}). The uniqueness of
the polynomial $Q(\lambda )$ is proven as in Theorem \ref{FrbtC:T-eigenstates},
indeed for any representation in $\mathcal{\bar{R}}_{\mathsf{N}}^{{\small \text{%
chP,s-adj}}}$, the coefficients of the Baxter equation (\ref{Frbttq-Baxter})
have different average values:%
\begin{equation}
\frac{\prod_{n=1}^{p}\mathtt{a}(\lambda q^{n})}{\prod_{n=1}^{p}\mathtt{d}%
(\lambda q^{n})}=(-1)^{\mathsf{N}}\prod_{n=1}^{\mathsf{N}}\frac{(1-|x_{\text{q}%
_{n}}|\Lambda )^{2}}{(1+|x_{\text{q}_{n}}|\Lambda )^{2}}\neq 1.
\end{equation}
\end{proof}

\section{Outlook}

One of our motivations to the present work is to define the required SOV setup to generalize the approach used in \cite{FrbtGMN12-SG} to the $\tau_2$-model associated to the most general cyclic representations of the 6-vertex Yang-Baxter algebra, such a generalization will be presented in \cite{FrbtGMN12-T2}. There, the obtained reconstruction of
local operators in terms of quantum separate variables and the derived
determinant formula for the scalar products of states are used to compute the form factors of local operators on the transfer matrix eigenstates and to express them as sums of determinants given by simple modifications of the scalar product ones.

It is worth mentioning that this approach is not restricted to cyclic representations of 6-vertex Yang-Baxter algebras, as it is addressed to the large class
of integrable quantum models whose spectrum (eigenvalues \& eigenstates) can
be determined by implementing Sklyanin's quantum separation of variables. Such a statement has already been verified for some key quantum integrable models associated to highest weight representations of the Yang-Baxter algebras and their generalization; these results will appear shortly in a series of papers.

\vspace*{1mm}
{\par\small
{\em Acknowledgements.} We would like to thank  J.-M. Maillet and B. McCoy for stimulating discussions on related subjects and for the interest shown in this work and J.-M. Maillet for his attentive reading of the paper. N. G. is supported by the ENS Lyon and ANR grant ANR-10-BLAN-0120-04-DIADEMS. N. G would like to thank the YITP Institute of Stony Brook for hospitality. G. N. is supported by National Science Foundation grants PHY-0969739. G. N. gratefully acknowledge the YITP Institute of Stony Brook for the opportunity to develop his research programs. G. N. would like to thank the Theoretical Physics Group of the Laboratory of Physics at ENS-Lyon for hospitality, under support of ANR-10-BLAN-0120-04-DIADEMS, which made possible this collaboration.}

\appendix
\section{Constructive proof of the existence of cyclic SOV representations}\label{FrbtSOV construction-0}
\setcounter{equation}{0}

In the following subsections we will show by recursive construction the Theorem \ref{FrbtSOVthm}.

\subsection{Recursive construction of $\SB$-eigenstates}

\label{FrbtSOV construction} Similarly to \cite{FrbtGN10}, we will construct the eigenstates $\langle \,\eta
\,|$ of $\SB(\la)\equiv \SB_{\srn}(\la)$ recursively by induction on $\mathsf{N}$.
In particular, assuming we have the $\SB_{\srm}$-eigenbasis for
any $\SRM<\mathsf{N}$, the eigenstates $\langle \,\eta \,|$ of $\SB_{\srn}(\la%
)$ can be written in the following form
\begin{equation}\label{Frbtb-eigen-rec}
\langle \,\eta \,|\,=\,\sum_{\chi _{\1}^{{}}}\sum_{\chi _{\2}^{{}}}\,K_{\srn%
}^{{}}(\,\eta \,|\,\chi _{\2}^{{}}|\chi _{\1}^{{}}\,)\,\langle \,\chi _{\2%
}^{{}}\,|\ot\langle \,\chi _{\1}^{{}}\,|\;,
\end{equation}
in terms of $\langle \,\chi _{\2}^{{}}\,|$  and $\langle \,\chi _{\1}^{{}}\,|$, respectively, the $\SB_{\srm}$-eigenstates and the $\SB_{\srn-\srm}$ -eigenstates with eigenvalues defined as in \rf{FrbtBdef} by the tuples $\chi _{\2}^{{}}=(\chi _{\2%
a}^{{}})_{a=1,\dots ,\srm}^{{}}$ and $\chi _{\1}^{{}}=(\chi _{\1%
a}^{{}})_{a=1,\dots ,\srn-\srm}^{{}}$. Computing the 
\textit{kernel} $K_{\srn}^{{}}(\,\eta \,|\,\chi _{\2}^{{}}|\chi _{\1}^{{}}\,)
$ will give the recursive construction of the states $\langle \,\eta\,|$.

\textit{SOV-representation for $\mathsf{N}=1$:\thinspace \thinspace \thinspace }%
let us construct first the SOV representation for $\mathsf{N}=1$. We introduce the
states: 
\begin{equation}
\langle \eta _{1}^{(0)} q^{h_{1}}|\equiv \sum_{k=1}^{p}\frac{q^{-k(h_{1}+1/2)}%
\prod_{r=1}^{k}(q^{r-1/2}\mathbbm{a}_{1}+q^{-(r-1/2)}\mathbbm{b}_{1})}{(\mathbbm{a}_{1}^{p}+\mathbbm{b}_{1}^{p})^{k/p}}\langle 1,k|,\text{\thinspace \thinspace
\thinspace\ }\forall h_{1}\in \{1,...,p\},  \label{FrbtB-eigen-Site-N}
\end{equation}%
where $\langle 1,k|$ are the states of the v$_{1}$-eigenbasis. $\langle
\eta _{1}^{(0)} q^{h_{1}}|$ for $h_{1}\in \{1,...,p\}$ then defines a $\SB_{1}$-eigenbasis:%
\begin{equation}
\langle \eta _{1}^{(0)} q^{h_{1}}|\SB_{1}=\eta _{1}^{(0)} q^{h_{1}}\langle \eta
_{1}^{(0)} q^{h_{1}}|\,\,\,\,\,\,\text{ \ with\thinspace \thinspace \thinspace
\thinspace\ }\left(\eta _{1}^{(0)}\right)^{p}\equiv q^{p/2}(\mathbbm{a}_{1}^{p}+\mathbbm{b}_{1}^{p}).
\end{equation}%
Subsequently, $\SA_{1}(\lambda )$\ and $\SD_{1}(\lambda )$ have the following
representation in the $\SB_{1}$-eigenbasis: 
\begin{equation}
\langle \eta _{1}^{(0)} q^{h_{1}}| \SA_{1}(\lambda )=\lambda \alpha _{1} \langle \eta _{1}^{(0)} q^{h_{1}-1}| -\beta _{1} \lambda^{-1} \langle \eta _{1}^{(0)} q^{h_{1}+1}| ,\text{ \ \ \ \ \ }\langle \eta _{1}^{(0)} q^{h_{1}}|\SD_{1}(\lambda )=\gamma _{1} \lambda^{-1}\langle \eta _{1}^{(0)} q^{h_{1}-1}| -\delta _{1}\lambda\langle \eta _{1}^{(0)} q^{h_{1}+1}|.
\end{equation}%

\begin{itemize}
\item[]\hspace{-0.8cm}{\it\ref{FrbtSOV construction}.a} \  \ {\it Reconstruction of the kernel $K_\srn(\,\eta\,|\,{\chi}_{\2}^{}|{\chi}_\1^{}\,)$ w.r.t. $\chi_{\2\,\, {a=1,\dots ,\srm-1}}$ and $\chi_{\1\,\, {a=1,\dots ,\srn-\srm-1}}$.}
\end{itemize}
The decomposition
\begin{equation}\label{Frbtb-12}
\SB_\srn(\la)=\SA_{\2 \ \srm}(\la)\SB_{\1 \ \srn-\srm}(\la)+{\SB}_{\2 \ \srm}(\la)\SD_{\1 \ \srn-\srm}(\la)
\end{equation}
implies that the matrix elements of the kernel $K_\srn(\,\eta\,|\,{\chi}_{\2}^{};{\chi}_\1^{}\,)$ satisfy the equations
\begin{equation}\begin{aligned}\label{Frbtrecrel}
\big(\SA_{\2 \ \srm}(\la)\SB_{\1 \ \srn-\srm}(\la) +{\SB}_{\2 \ \srm}(\la)& \SD_{\1 \ \srn-\srm}(\la)\big)^t\,K_\srn^{}(\,\eta\,|\,\chi_{\2}^{};\chi_\1^{}\,)
\\
&\,=\,\eta_\srn\prod_{a=1}^{\mathsf{N}-1}\left( \la/\eta_a-\eta_a/\la\right)\,K_\srn^{}(\,\eta\,|\,\chi_{\2}^{};\chi_{\1}^{}\,)\,,
\end{aligned}\end{equation}
where $\SO^t$ stays for the transpose of an operator $\SO$. Taking
\begin{equation}
\chi_{\1a}q^{h_{1}}\notin\text{\large{Z}}_{{\rm det_q}\SM_{\1, \mathsf{N}-\SRM}(\la)}, \ \ \chi _{\2b}q^{h_{2}}\notin\text{\large{Z}}_{{\rm det_q}\SM_{\2, \SRM}(\la)}\ \ \text{and} \ \ \chi _{\1a}q^{h_{1}}\neq \chi _{\2b}q^{h_{2}},
\end{equation}
where $h_{i}\in \{1,...,p\}$, $a\in \{1,...,\mathsf{N}-\SRM\}$ and $b\in \{1,...,\SRM\}$, the previous equations can be written as recursion relations for the kernel w.r.t. $\chi_{\1 a}$ and $\chi_{\2 b}$ by simply fixing $\la=\chi_{\1\, a=1,...,\srn-\srm-1}$ and $\la=\chi_{\2\, b=1,...,\srm-1}$. Indeed, for $\la=\chi_{\1 a}$ the \rf{Frbtrecrel} leads to
\begin{equation}\label{Frbtrecrel1}\begin{aligned}
 K_\srn^{}(\,\eta\,|\,\chi_{\2}^{}| {q^{-\delta_{a}}\chi_{\1}^{})\,}\;&{{\tt d}^{(SOV)}_\1(q^{-1}\chi_{\1a}^{})}\; \,\chi_{\2\,\srm}\prod_{b=1}^{\rm M-1}(\chi_{\1 a}^{}/\chi_{\2 b}^{}-\chi_{\2 b}^{}/\chi_{\1 a}^{})\, \\[-1ex]
& \,=\, {K_\srn^{}(\,\eta\,|\,\chi_{\2}^{};\chi_{\1}^{}\,)}\; \,\eta_\srn\prod_{b=1}^{\mathsf{N}-1}(\chi_{\1 a}^{}/\eta_b^{}-\eta_b^{}/\chi_{\1 a}^{})
\,,
\end{aligned}\end{equation}
while for $\la=\chi_{\2a}$ it holds
\begin{equation}\label{Frbtrecrel2}\begin{aligned}
{ K_\srn^{}(\,\eta\,|\,q^{\delta_{a}}\chi_{\2}^{}|\chi_{\1}^{}\,)}\; &{{\tt a}^{(SOV)}_\2(q^{+1}\chi_{\2a}^{})}\;
\chi_{\1\,\srn-\srm}\prod_{b=1}^{\mathsf{N}-\rm M-1}(\chi_{\2 a}^{}/\chi_{\1 b}^{}-\chi_{\1 b}^{}/\chi_{\2 a}^{} )\,
\\[-1ex] &\,=\,{K_\srn^{}(\,\eta\,|\,\chi_{\2}^{};\eta_{\1}^{}\,)}\; \,\eta_\srn\prod_{b=1}^{\mathsf{N}-1}
(\chi_{\2 a}^{}/\eta_b^{}-\eta_b^{}/\chi_{\2 a}^{})
\,.
\end{aligned}\end{equation}
Here ${{\tt d}^{(SOV)}_\1(\chi_{\1 a}^{})}$ and ${{\tt a}^{(SOV)}_\2(\chi_{\2 a}^{})}$ are the known coefficients of the SOV representations in the subchains $\1$ and $\2$ while $q^{\pm\delta_{r}}$ is defined in \rf{FrbtT_r}.
\begin{itemize}
\item[]\hspace{-0.8cm}{\it\ref{FrbtSOV construction}.b} \  \ {\it Reconstruction of the kernel $K_\srn(\,\eta\,|\,{\chi}_{\2}^{}|{\chi}_\1^{}\,)$ w.r.t. $\chi_{\2\,\,{\srm}}$ and $\chi_{\1\,\,{\srn-\srm}}$.}
\end{itemize}  
In the $\tau _{2}$-model the form of the asymptotics of the Yang-Baxter
generators w.r.t. $\lambda $ leads to some complication in the computation
of the recursions satisfied by the kernel w.r.t. the variables $\chi _{\2%
\text{ }\srm}^{{}}$ and $\chi _{\1\text{ }\srn-\srm}^{{}}$. Here, we show
that just doing a discrete Fourier transform w.r.t. these variables we get a
simpler characterization. Let us introduce the state: 
\begin{equation}
\overline{\langle{\eta}_{\srn}^{(0)} q^{h_{\srn}}},{\eta}^{(0)} _{\srn-1}q^{h_{%
\srn-1}},....,{\eta}^{(0)} _{1}q^{h_{1}}|=\sum_{k_{\srn}=1}^{p}q^{k_{\srn}h_{%
\srn}}\langle{\eta}_{\srn}^{(0)} q^{k_{\srn}},{\eta}_{\srn-1}^{(0)} q^{h_{\srn%
-1}},....,{\eta}_{1}^{(0)} q^{h_{1}}|,  \label{FrbtbarB-eigenstates}
\end{equation}%
then it is an eigenstates of the $\Theta $-charge:%
\begin{equation}
\overline{\langle{\eta}_{\srn}^{(0)} q^{h_{\srn}}},\{\eta _{i}\}|\Theta _{\srn%
}=q^{h_{\srn}}\overline{\langle \eta_{\srn}^{(0)} q^{h_{\srn}}},\{\eta _{i}\}|
\end{equation}%
and moreover it holds:%
\begin{equation}
\overline{\langle{\eta}_{\srn}^{(0)} q^{h_{\srn}}},\{\eta _{i}\}|\SB_{\srn%
}(\lambda )=\bar{\eta}_{\srn}\prod_{n=1}^{\mathsf{N}-1}\left( \frac{\lambda }{\eta
_{n}}-\frac{\eta _{n}}{\lambda }\right) \overline{\langle \eta _{\srn}q^{h_{%
\srn}+1}},\{\eta _{i}\}|.
\end{equation}%
Note that by using the recursion formula (\ref{Frbtb-eigen-rec}) and the
representation (\ref{FrbtbarB-eigenstates}) we get: 
\begin{equation}
\langle{\eta}_{\srn}^{(0)} q^{h_{\srn}},\{\eta _{i}\}|\,=\sum_{\chi _{\1%
}^{{}}}\sum_{\chi _{\2}^{{}}}\,\bar{K}_{\srn}^{{}}(\,\eta \,|\,\chi _{\2%
}^{{}}|\chi _{\1}^{{}}\,)\overline{\langle \chi _{\2,\text{ \ }\srm}},\chi _{%
\2,\text{ \ }a}^{{}}|\,\otimes \overline{\langle \chi _{\1,\text{ \ }\srn-%
\srm}},\chi _{\1,\text{ \ }a}|,  \label{FrbtbarB-eigenstate-d}
\end{equation}%
with%
\begin{align}
& \,\bar{K}_{\srn}(\eta|\bar{\chi}_{%
\2,\text{ \ }\srm}q^{h_{\2,\text{ \ }\srm}},\{\chi _{\2,\text{ \ }a}\}|\bar{%
\chi}_{\1,\text{ \ }\srn-\srm}q^{h_{\1,\text{ \ }\srn-\srm}},\{\chi _{\1,%
\text{ \ }a}\})\left. =\right.   \notag \\
& \frac{1}{p^2}\sum_{k_{\2,\text{ \ }\srm},k_{\1,\text{ \ }\srn-\srm}=1}^{p}q^{-(k_{\1,%
\text{ \ }\srn-\srm}h_{\1,\text{ \ }\srn-\srm}+k_{\2,\text{ \ }\srm}h_{\2,%
\text{ \ }\srm})}K_{\srn}(\eta|\bar{\chi}%
_{\2,\text{ \ }\srm}q^{k_{\2,\text{ \ }\srm}},\{\bar{\chi}_{\2,\text{ \ }%
a}\}|\bar{\chi}_{\1,\text{ \ }\srn-\srm}q^{k_{\1,\text{ \ }\srn-\srm%
}},\{\chi _{\1,\text{ \ }a}\}\,).
\end{align}%
To obtain the dependence of $\bar{K}_{\srn}$ w.r.t. $\chi _{\2,\text{ \ }%
\srm}$ and $\chi _{\1,\text{ \ }\srn-\srm}$, let us introduce the asymptotic operators $\SB_{\srn}^{(\mp )}$ whose action reads:%
\begin{equation}
\langle\,\eta\,|\SB_{\srn}^{(\mp )}\equiv (\mp 1)^{(\mathsf{N}-1)}\lim_{\log \lambda \rightarrow
\mp \infty }\lambda ^{\pm (\mathsf{N}-1)}\langle\,\eta\,| \SB_{\srn}(\lambda )=\eta _{\srn%
}\prod_{i=1}^{\mathsf{N}-1}\eta _{i}^{\pm 1} \langle\,\eta\,|,
\end{equation}%
while the decomposition (\ref{Frbtb-12}) implies:%
\begin{equation}
\SB_{\srn}^{(\mp )}=a_{\mp ,\2}\Theta _{\2}^{\mp }\chi _{\1,\text{ \ }\srn-%
\srm}\prod_{i=1}^{\mathsf{N}-\SRM-1}\chi _{\1,\text{\ }i}^{\pm 1}+d_{\mp ,\1}\Theta _{\1}^{\pm }\chi _{\2,\text{ }\srm}\prod_{i=1}^{\SRM-1}\chi _{\2,%
\text{\ }i}^{\pm 1}.
\end{equation}%
Acting on the generic state $\langle \eta _{\srn}^{(0)} ,\{\eta _{i}\}|$ with $\SB_{%
\srn}^{(\mp )}$ we get a system of two equations:%
\begin{equation}
\bar{K}_{\srn}^{{}}(\,\eta \,|\,\chi _{\2}^{{}}|\chi _{\1}^{{}}\,)=\text{x}%
_{\epsilon }\bar{K}_{\srn}^{{}}(\,\eta \,|\,\chi _{\2}^{{}}|q^{-\delta_{\srn-\srm}}\chi _{\1}^{{}}\,)+\text{y}_{\epsilon }\bar{K}_{\srn}^{}(\,\eta \,|\,q^{-\delta_{\srm}}\chi _{\2}^{{}}|\chi
_{\1}^{{}}\,),\qquad\epsilon =\pm 1.
\end{equation}%
Here the coefficients read:%
\begin{equation}
\text{x}_{\epsilon } \equiv\frac{a_{\epsilon ,\2}q^{\epsilon h_{%
\2,\text{ \ }\srm}}\bar{\chi}_{\1,\text{ \ }\srn-\srm}\prod_{i=1}^{\mathsf{N}-\SRM%
-1}\chi _{\1,\text{\ }i}^{-\epsilon }}{\bar{\eta}_{\srn}\prod_{i=1}^{\mathsf{N}%
-1}\eta _{i}^{-\epsilon }},\,\,\,\,\,\,\qquad 
\text{y}_{\epsilon } \equiv \frac{d_{\epsilon ,\1}q^{-\epsilon h_{%
\1,\text{ \ }\srn-\srm}}\bar{\chi}_{\2,\text{ }\srm}\prod_{i=1}^{\SRM-1}\chi
_{\2,\text{\ }i}^{-\epsilon }}{\bar{\eta}_{\srn}\prod_{i=1}^{\mathsf{N}-1}\eta
_{i}^{-\epsilon }}.
\end{equation}
Then solving this system we get our result:%
\begin{equation}
\frac{\bar{K}_{\srn}^{{}}(\,\eta \,|\,q^{-\delta_{\srm}}\chi _{\2}^{{}}|\chi _{\1}^{{}}\,)}{\bar{K}_{\srn}^{{}}(\,\eta \,|\,\chi _{\2%
}^{{}}|\chi _{\1}^{{}}\,)}=\frac{\text{x}_{-}-\text{x}_{+}}{\text{x}_{-}%
\text{y}_{+}-\text{x}_{+}\text{y}_{-}},\qquad\text{ \ \ }\frac{\bar{K}_{\srn}^{{}}(\,\eta \,|\,\chi _{\2}^{{}}|q^{-\delta_{\srn-
\srm}}\chi _{\1}^{{}}\,)%
}{\bar{K}_{\srn}^{{}}(\,\eta \,|\,\chi _{\2}^{{}}|\chi _{\1}^{{}}\,)}=\frac{%
\text{y}_{+}-\text{y}_{-}}{\text{x}_{-}\text{y}_{+}-\text{x}_{+}\text{y}_{-}}%
.  \label{Frbtrecrelzero1}
\end{equation}

\begin{itemize}
\item[]\hspace{-0.8cm}{\it\ref{FrbtSOV construction}.c} \  \  {\it Reconstruction of the kernel $K_\srn(\,\eta\,|\,{\chi}_{\2}^{};{\chi}_\1^{}\,)$ w.r.t. $\eta$.}
\end{itemize}
The formula
\begin{equation}\label{FrbtD-exp}
\SD_\srn(\la)=\SD_{\2 \ \srm}(\la)\SD_{\1 \ \srn-\srm}(\la)+{\SC}_{\2 \ \srm}(\la)\SB_{\1 \ \srn-\srm}(\la)
\end{equation}
implies the identity
\begin{equation}
\langle \eta |\SD_\srn(\eta _{i})=-{\rm det_q}\SM_{\2 \ \srm}(q\eta _{i})\langle \eta
|{\SB}_{\2 \ \srm}^{-1}(q\eta _{i}){\SB}_{\1 \ \srn-\srm}(\eta _{i}),
\end{equation}
writing ${\SC}_{\2 \ \srm}(\la)$ by the quantum determinant ${\rm det_q}\SM_{\2 \ \srm}(\la)$ in the subchain $\2$ and using
\begin{equation}
\langle \eta |{\SB}_{\1 \ \srn-\srm}(\eta _{i})=-\langle \eta |{\SD}_{\1 \ \srn-\srm}(\eta _{i}){\SA}_{\2 \ \srm}^{-1}(q\eta _{i}){\SB}_{\2 \ \srm}(q\eta _{i}).
\end{equation}
Then, the following recursion relations is obtained for kernel $K_\srn(\,\eta\,|\,{\chi}_{\2}^{};{\chi}_\1^{}\,)$ w.r.t. $\eta_{i=1,...,\mathsf{N}-1}$:
\begin{equation}\label{Frbtrecrel+1}
\begin{aligned}
 \frac{{ K_\srn^{}(\,q^{\delta_i}\eta\,|\,\chi_{\2}^{};\chi_{\1}^{}\,)}}{K_\srn^{}(\,\eta\,|\,\chi_{\2}^{};\chi_{\1}^{}\,)}}=-\frac{{\rm det_q}\SM_{\2 \ \srm}(q\eta _{i})}{d_\srn(\eta_{i}^{})}{\frac{\chi_{\1 \srn-\srm}\prod_{a=1}^{\mathsf{N}-\SRM-1}\left( \eta _{i}/\chi_{\1 a}-\chi_{\1 a}/\eta _{i}\right) }{\chi_{\2 \srm}\prod_{a=1}^{\SRM-1}\left(q\eta _{i}/\chi _{\2 a}-\chi _{\2 a}/q\eta _{i}\right) }
\,.
\end{aligned}
\end{equation}
Finally, \rf{FrbtD-exp} leads to the following recursion relation which fix the dependence of the kernel w.r.t. $\eta_{\srn}$:
\begin{equation}
\frac{{K_{\srn}^{{}}(\,q^{\delta_{\srn}}\eta \,|\,\chi _{\2}^{{}};\chi _{\1%
}^{{}}\,)}}{{K_{\srn}^{{}}(\,\eta
\,|\,q^{\delta_{\srm}}\chi _{\2}^{{}};q^{\delta_{\srn-\srm}}\chi _{\1}^{{}}\,)}}{\;}\,=\,\;\frac{\chi _{\2\ \SD%
}^{{(+)}}\chi _{\1\ \SD}^{{(+)}}\prod_{b=1}^{\SRM-1}\chi _{\2\ b}^{{}}\prod_{b=1}^{%
\mathsf{N}-\SRM-1}\chi _{\1\ b}^{{}}}{{\eta _{\SD}^{{(+)}}}\,\prod_{a=1}^{\mathrm{\mathsf{N}-1}%
}\eta _{a}^{{}}}\,\,.  \label{Frbtrecrel+zero}
\end{equation}%

\subsection{Gauge-invariant SOV dates: $Z_r,\,Z^{(\pm)}_{\SA},\,Z^{(\pm)}_{\SD},\,\CA^{}(Z_r),\,\CD^{}(Z_r)$}\label{Frbtgauge-invariant-dates}
The recursion relations \rf{Frbtrecrel1}-\rf{Frbtrecrel2} and \rf{Frbtrecrelzero1} and the requirement of cyclicity give a system of $\mathsf{N}$ algebraic equations in the $\mathsf{N}$ unknowns
$Z_a\equiv\eta_a^p$:
\begin{equation}\label{FrbtExrecrel1}
\CD_{\1 \,\mathsf{N}-\SRM}^{}(\chi_{\1a}^{p})\; \chi_{\2\srm}^{p}
\prod_{b=1}^{{\rm M}-1}
 (\chi_{\1 a}^{p}/\chi_{\2 b}^{p}-\chi_{\2 b}^{p}/\chi_{\1 a}^{p})\,
\,=\, \eta_\srn^{p}\prod_{b=1}^{\mathsf{N}-1}(\chi_{\1 a}^{p}/\eta_b^{p}-\eta_b^{p}/\chi_{\1 a}^{p})\,,
\end{equation}
\begin{equation}\label{FrbtExrecrel2}
\CA_{\2 \,\SRM}^{}(\chi_{\2 a}^p)\;\chi_{\1\srn-\srm}^{p}
 \prod_{b=1}^{\mathsf{N}-\rm M}(\chi_{\2 a}^{p}/\chi_{\1 b}^{p}-\chi_{\1 b}^{p}/\chi_{\2 a}^{p} )\,=\,
 \eta_\srn^{p}\prod_{b=1}^{\mathsf{N}-1}(\chi_{\2 a}^{p}/\eta_b^{p}-\eta_b^{p}/\chi_{\2 a}^{p})\,,\,\,\,\,\,\,\,\,\,\,\,\,\,\,\,\,
\end{equation}
and
\begin{equation}
Z_{\mathsf{N}}\prod_{n=1}^{\mathsf{N}-1}Z_{n}^{\pm 1}=(\pm1)^{\srn-1}a_{\2,\text{ }\pm
}^{p}Z_{\1\text{ \ }\srn-\srm}\prod_{a=1}^{\mathsf{N}-\SRM-1}Z_{\1\text{ \ }%
a}^{\pm 1}+(\pm1)^{\srn-1}d_{\1,\text{ }\pm }^{p}Z_{\2\text{ \ }\srm%
}\prod_{a=1}^{\SRM-1}Z_{\2\text{ \ }a}^{\pm 1}
\end{equation}%
where we have used \rf{FrbtZAD-asymp} in the subchains $\1$ and $\2$ to express $Z_{\2,\text{ }\SA}^{(\pm )}\prod_{a=1}^{\SRM-1}Z_{\2\text{ \ }a}^{\pm 1}$
and $Z_{\1,\text{ }\SD}^{(\pm )}\prod_{a=1}^{\mathsf{N}-\SRM-1}Z_{\1\text{ \ }%
a}^{\pm 1}$.
The above system of equations completely determines the $Z_a$ in terms of $\chi_{\2 a}^p$, $\chi_{\1 a}^p$, as we can rewrite it in terms of the following Laurent polynomial equation:
\begin{equation}\label{FrbtCB'}
\mathcal{A}_{\SRM}(\Lambda )\mathcal{B}%
_{\mathsf{N}-\SRM}(\Lambda )+\mathcal{B}_{\SRM}(\Lambda )\mathcal{D}_{\mathsf{N}-\SRM}(\Lambda )=
Z_{{\mathsf{N}}}\prod_{a=1}^{{[}\mathsf{N}{]}}(\Lambda/Z_a-Z_a/\Lambda),
\end{equation}
as a consequence of the simplicity of the spectrum of $\SB(\la)$ in both the subchains $\1$ and $\2$. Note that the l.h.s. of the above equation is formed out of the known average values of the monodromy matrix elements in the subchains $\1$ and $\2$. So the $Z_a$ for $a\in \{1,...,\mathsf{N}-1\}$ are fixed determining the zeros of the known Laurent polynomial at the l.h.s. of \rf{FrbtCB'}. The cyclicity allows to fix the remaining gauge-invariant SOV dates:
\begin{equation}\label{FrbtExrecrel+1}
\CD_{\mathsf{N}}^{}(Z_{i})=\, -{\rm det_q}\CM_{\SRM}(Z_{i})\frac{\CB_{\mathsf{N}-\SRM}(Z_{i})}{\CB_{\SRM}(Z_{i})} , \ \ 
\CA_{\mathsf{N}}^{}(Z_{i})=\, -{\rm det_q}\CM_{\mathsf{N}-\SRM}(Z_{i})\frac{\CB_{\SRM}(Z_{i})}{\CB_{\mathsf{N}-\SRM}(Z_{i})},
\end{equation}
for any $i\in\{1,...,\mathsf{N}-1\}$ where:
\begin{equation}
{\rm det_q}\CM_{X}(\Lambda)\equiv\prod_{a=1}^{p}{\rm det_q}\SM_{X}(q^{a}\la), \ \ X={\SRM,\mathsf{N}-\SRM, \mathsf{N}}.
\end{equation}
where it holds:
\begin{equation}
{\rm det_q}\CM_{X}(\Lambda)=\CA_{X}(\Lambda)\CD_{X}(\Lambda)-\CB_{X}(\Lambda)\CC_{X}(\Lambda), \ \ X={\SRM,\mathsf{N}-\SRM}.
\end{equation}
\subsection{$\SB$-spectrum completeness and simplicity}\label{Frbtnondegapp}
Let us prove that the previously constructed set of $\SB$-eigenstates $\langle\,\eta\,|$ is complete by showing that there are $p^\mathsf{N}$ distinct $\SB$-eigenvalues.
\begin{propn}
\label{FrbtB-simplicity}The SOV dates $Z_{r}$ with $r\in \{1,...,\mathsf{N}-1\}$ are all distinct for almost all the values of the parameters
$\alpha^p_n,\,\beta^p_n,\,\mathbbm{a}^p_n,\,\mathbbm{b}^p_n,\,\mathbbm{c}^p_n,\,\mathbbm{d}^p_n$ of the $\tau_2$-model.
\end{propn}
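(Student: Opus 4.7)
The strategy is to reformulate pairwise distinctness of the $Z_r$ as the non-vanishing of a polynomial in the parameters, and then verify this polynomial is not identically zero by exhibiting an inductively constructed special parameter value at which all $Z_r$ are genuinely distinct. By part (b) of Proposition \ref{Frbtcentral}, $\mathcal{B}(\Lambda)$ is the $(1,2)$-entry of the matrix product $\mathcal{L}_\mathsf{N}(\Lambda)\cdots\mathcal{L}_1(\Lambda)$, so its coefficients are explicit polynomials in $\alpha_n^p,\beta_n^p,\mathbbm{a}_n^p,\mathbbm{b}_n^p,\mathbbm{c}_n^p,\mathbbm{d}_n^p$. From \rf{FrbtCB} the zeros of $\Lambda^{\mathsf{N}-1}\mathcal{B}(\Lambda)$ are the $\pm Z_r$, so pairwise distinctness of $\{Z_r\}_{r=1}^{\mathsf{N}-1}$ is the non-vanishing of the discriminant $\Delta$ of this polynomial viewed as a polynomial in $\Lambda^2$. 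This discriminant is itself a polynomial function $\Delta(\alpha_n^p,\ldots,\mathbbm{d}_n^p)$ of the parameters, and $\{\Delta\neq 0\}$ is Zariski-open, hence of full measure provided $\Delta\not\equiv 0$.

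To show $\Delta\not\equiv 0$, I would proceed by induction on $\mathsf{N}$, the cases $\mathsf{N}=1,2$ being vacuous. For the inductive step, use the factorization $\mathcal{M}(\Lambda)=\mathcal{L}_\mathsf{N}(\Lambda)\,\mathcal{M}_{[1,\mathsf{N}-1]}(\Lambda)$ from Proposition \ref{Frbtcentral}(b), which gives
\begin{equation}
\mathcal{B}_\mathsf{N}(\Lambda)=(\Lambda\alpha_\mathsf{N}^p-\beta_\mathsf{N}^p/\Lambda)\,\mathcal{B}_{[1,\mathsf{N}-1]}(\Lambda)+q^{p/2}(\mathbbm{a}_\mathsf{N}^p+\mathbbm{b}_\mathsf{N}^p)\,\mathcal{D}_{[1,\mathsf{N}-1]}(\Lambda).
\end{equation}
Specialize the parameters by imposing $\mathbbm{a}_\mathsf{N}^p+\mathbbm{b}_\mathsf{N}^p=0$; the second term vanishes and one is left with the product
\begin{equation}
\mathcal{B}_\mathsf{N}(\Lambda)=(\Lambda\alpha_\mathsf{N}^p-\beta_\mathsf{N}^p/\Lambda)\,\mathcal{B}_{[1,\mathsf{N}-1]}(\Lambda).
\end{equation}
By the inductive hypothesis applied to generic parameters of sites $1,\ldots,\mathsf{N}-1$, the polynomial $\mathcal{B}_{[1,\mathsf{N}-1]}(\Lambda)$ contributes $\mathsf{N}-2$ pairwise distinct values $Z_r^{[\mathsf{N}-1]}$; the first factor contributes the additional value $\beta_\mathsf{N}^p/\alpha_\mathsf{N}^p$, which for generic $\alpha_\mathsf{N}^p,\beta_\mathsf{N}^p$ is distinct from the $\mathsf{N}-2$ previous ones. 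This produces one parameter point at which $\Delta\neq 0$ and closes the induction.

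The only subtlety — which is actually not an obstacle — is that the special choice $\mathbbm{a}_\mathsf{N}^p+\mathbbm{b}_\mathsf{N}^p=0$ makes the SOV representation at site $\mathsf{N}$ degenerate, since then $(\eta_\mathsf{N}^{(0)})^p=q^{p/2}(\mathbbm{a}_\mathsf{N}^p+\mathbbm{b}_\mathsf{N}^p)=0$ and $\SB$ ceases to be diagonalizable there. This is harmless: the whole argument takes place in the parameter space, where $\Delta$ is just a polynomial whose non-identical vanishing we need to certify; by Zariski-openness, $\Delta\neq 0$ then automatically propagates to a full neighborhood in which the SOV representation itself is nondegenerate and the $Z_r$ are genuine SOV dates. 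This neighborhood, together with the union of its orbits under permutations of the sites, provides the generic parameter set required by the statement.
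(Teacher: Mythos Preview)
Your proof is correct and reaches the same conclusion by a closely related but genuinely different route from the paper's.

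Both arguments reduce the statement to the non-identical vanishing of a polynomial in the parameters and then verify this at a well-chosen special point where $\mathcal{B}(\Lambda)$ factorizes explicitly. The difference is in how the specialization is organized. The paper kills the off-diagonal entries of $\mathcal{L}_n(\Lambda)$ for \emph{all} sites $n=1,\dots,\mathsf{N}-1$ simultaneously (imposing $\mathbbm{a}_n^p+\mathbbm{b}_n^p=0$ and $\mathbbm{c}_n^p+\mathbbm{d}_n^p=0$ there), leaving only site $\mathsf{N}$ generic; this makes $\mathcal{B}_\mathsf{N}(\Lambda)$ factor completely and yields the closed formula $Z_a=(\beta_a^p/\alpha_a^p)^{1/2}$, from which the paper argues via non-vanishing of the Jacobian $\det(\partial Z_r/\partial\alpha_s^p)$. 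You instead kill only the $(1,2)$-entry of $\mathcal{L}_\mathsf{N}$, peel off one linear factor, and invoke induction on the subchain $[1,\mathsf{N}-1]$; you phrase the genericity directly through the discriminant rather than the Jacobian. Your approach is slightly more economical (one constraint per step, no Jacobian detour), while the paper's gives an explicit closed description of all the $Z_a$ at the special point in one shot.

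One small remark on your ``subtlety'' paragraph: the identity $(\eta_\mathsf{N}^{(0)})^p=q^{p/2}(\mathbbm{a}_\mathsf{N}^p+\mathbbm{b}_\mathsf{N}^p)$ you quote is the \emph{single-site} formula, not the full-chain one. Under your specialization the full-chain $Z_\mathsf{N}$ (the leading coefficient of $\mathcal{B}_\mathsf{N}$) is $\alpha_\mathsf{N}^p$ times the leading coefficient of $\mathcal{B}_{[1,\mathsf{N}-1]}$, hence nonzero. What actually degenerates is the one-site SOV at site $\mathsf{N}$ used in the recursive construction of Appendix~\ref{FrbtSOV construction-0}; your point that this is irrelevant for a purely polynomial argument based on Proposition~\ref{Frbtcentral}(b) is entirely correct, and indeed the paper's own specialization suffers the same (harmless) degeneration at sites $1,\dots,\mathsf{N}-1$.
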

\begin{proof}From the functional dependence\footnote{See Remark 6.} of $Z_1^{},\dots,Z_{\mathsf{N}-1}^{}$ w.r.t. these parameters we just have to prove that the Jacobian:
\begin{equation}\label{FrbtJ-nonzero}
J(\alpha^p,\,\beta^p,\,\mathbbm{a}^p,\,\mathbbm{b}^p,\,\mathbbm{c}^p,\,\mathbbm{d}^p)\,\equiv\,
{\rm det}\left(\frac{\pa Z_r}{\pa \alpha^p_s}\right)_{r,s=1,\dots,\mathsf{N}-1}\neq\,0\,
\end{equation}
does not vanish for some special values to derive $J\neq 0$ for almost all the values of the parameters. In fact, for $J\neq 0$, the map $Z=Z(\alpha^p_1,\dots,\alpha^p_{\mathsf{N}-1})$ is invertible from which the claim of the proposition follows.

To show that \rf{FrbtJ-nonzero}, let us consider the representations which satisfy the following constrains:
\begin{equation}
\mathbbm{b}_{n}^{p}+\mathbbm{a}_{n}^{p}=0,\text{ \ }\mathbbm{c}_{n}^{p}+\mathbbm{d}_{n}^{p}=0\text{ \ \ }\forall
n\in \{1,...,\mathsf{N}-1\},  \label{FrbtSuperCH-P}
\end{equation}%
while we leave the representation in the site $\mathsf{N}$ unrestricted. Then the averages \rf{FrbtAverage-L} simplify to:
\begin{equation}
\mathcal{L}_{n}(\Lambda )\equiv \left( 
\begin{array}{cc}
\Lambda \alpha _{n}^{p}-\beta _{n}^{p}/\Lambda  & 0 \\ 
0 & \gamma _{n}^{p}/\Lambda -\Lambda \delta _{n}^{p}%
\end{array}%
\right) \quad \quad \text{ \ \ }\forall n\in \{1,...,\mathsf{N}-1\}
\end{equation}
Then by \rf{FrbtRRel1a} we can compute the l.h.s. of \rf{FrbtCB'} and we obtain:
\begin{equation}
q^{p/2}(\mathbbm{a}_{\mathsf{N}}^{p}+\mathbbm{b}_{\mathsf{N}}^{p})\prod_{n=1}^{\mathsf{N}-1}\frac{\mathbbm{a}_{n}^{p}\mathbbm{d}_{n}^{p}}{%
\alpha _{n}^{p/2}\beta {}_{n}^{p/2}}(\frac{\Lambda \alpha _{n}^{p/2}}{\beta
{}_{n}^{p/2}}-\frac{\beta {}_{n}^{p/2}}{\Lambda \alpha _{n}^{p/2}})=Z_{{\mathsf{N}}%
}\prod_{a=1}^{\mathsf{N}-1}(\frac{\Lambda }{Z_{a}}-\frac{Z_{a}}{\Lambda }),
\label{FrbtCBsimp}
\end{equation}
and then $J\neq 0$ trivially follows. We can then define $\eta_a^{(0)}$ as a $p$-root of $Z_a$. They satisfy $\left( \eta^{(0)}_a\right) ^p\neq \left( \eta^{(0)}_b\right) ^p$ for any $a\neq b \in \{1,\dots,\mathsf{N}-1\}$.
\end{proof}
Finally, keeping in any subchain the representation of the $\SB$-spectrum simple, the Theorem \ref{FrbtSOVthm} follows by induction.

\section{Baxter $\SQ$-operator construction for general $\tau_2$-model}\label{FrbtBaxter-Q}

\setcounter{equation}{0}
In \cite{FrbtBa04},  Baxter has extended the construction of the $\SQ$-operator by gauge
transformations of \cite{FrbtBS} to the $\tau_2$-model for general cyclic representations not
restricted to those parameterized by points on chP algebraic curves. The main tool there was a
generalization of the discrete dilogarithm functions. In particular, he has remarked that asking
the cyclicity only for the products of couples of these functions does not prevent the $\SQ$-operator from being well defined for general cyclic representations of the model. Here, we reproduce and
adapt Baxter construction in our notations and we point out some interesting connection with the SOV construction.
\subsection{Dilogarithm functions on the algebraic curves $\mathcal{\ C}_{k}$}
\subsubsection{Dilogarithm functions}
The dilogarithm functions have been previously introduced
and analyzed in \cite{FrbtFK2,FrbtBT06}. Here we use the notation: 
\begin{equation}\label{FrbtW-function}
\frac{W_{\text{qp}}(z(n))}{W_{\text{qp}}(z(0))}=(\frac{s_{\text{q}}}{s_{%
\text{p}}})^{n}\prod_{k=1}^{n}\frac{y_{\text{p}}-q^{-2k}x_{\text{q}}}{y_{%
\text{q}}-q^{-2k}x_{\text{p}}},\text{ \ \ \ \ }\frac{\bar{W}_{\text{qp}%
}(z(n))}{\bar{W}_{\text{qp}}(z(0))}=(s_{\text{p}}s_{\text{q}%
})^{n}\prod_{k=1}^{n}\frac{q^{-2}x_{\text{q}}-q^{-2k}x_{\text{p}}}{y_{\text{p%
}}-q^{-2k}y_{\text{q}}},
\end{equation}%
where $z(n)=q^{-2n}\text{,\ }\forall n\in \{0,...,2l\}$. They are solutions
of the following recursion relations:%
\begin{equation}
\frac{W_{\text{qp}}(zq)}{W_{\text{qp}}(zq^{-1})}=-z\frac{s_{\text{p}}}{s_{%
\text{q}}}\frac{x_{\text{p}}}{y_{\text{p}}}q^{-1}\frac{1-\frac{y_{\text{q}}}{%
x_{\text{p}}}qz^{-1}}{1-\frac{x_{\text{q}}}{y_{\text{p}}}q^{-1}z},\text{ \ \ 
}\frac{\bar{W}_{\text{qp}}(zq)}{\bar{W}_{\text{qp}}(zq^{-1})}=-\frac{qz^{-1}%
}{s_{\text{p}}s_{\text{q}}}\frac{y_{\text{p}}}{x_{\text{p}}}\frac{1-\frac{y_{%
\text{q}}}{y_{\text{p}}}q^{-1}z}{1-\frac{x_{\text{q}}}{x_{\text{p}}}%
q^{-1}z^{-1}}.  \label{FrbtRecursion-w-1}
\end{equation}%
If the points p and q belong to the curves $\mathcal{C}_{k}$, they are
well defined functions of $z\in \mathbb{S}_{p}$ which satisfy the cyclicity
condition:%
\begin{equation}
\frac{\bar{W}_{\text{qp}}(z(p))}{\bar{W}_{\text{qp}}(z(0))}=1,\text{ \ \ \ \ 
}\frac{W_{\text{qp}}(z(p))}{W_{\text{qp}}(z(0))}=1.
\end{equation}
\subsubsection{Recursion relations}
It is worth defining the following three discrete automorphisms of the $%
\mathcal{C}_{k}$-curve: 
\begin{align}
i)\text{ \ \ \ \ \ \ \ }\Xi & :\text{p}=(a_{\text{p}},b_{\text{p}},c_{\text{p%
}},d_{\text{p}})\in \mathcal{C}_{k}\,\rightarrow \Xi (\text{p})=(qa_{%
\text{p}},qb_{\text{p}},c_{\text{p}},d_{\text{p}})\in \mathcal{C}_{k},
\\
ii)\text{ \ \ \ \ \ \ \ }\Delta & :\text{p}=(a_{\text{p}},b_{\text{p}},c_{%
\text{p}},d_{\text{p}})\in \mathcal{C}_{k}\rightarrow \Delta (\text{p}%
)=(qb_{\text{p}},a_{\text{p}},qd_{\text{p}},c_{\text{p}})\in \mathcal{C}%
_{k}, \\
iii)\text{ \ \ \ \ \ \ \ }\Upsilon & :\text{p}=(a_{\text{p}},b_{\text{p}},c_{%
\text{p}},d_{\text{p}})\in \mathcal{C}_{k}\rightarrow \Upsilon (\text{p%
})=(-q^{2}d_{\text{p}},c_{\text{p}},b_{\text{p}},-a_{\text{p}})\in \mathcal{C%
}_{k}.
\end{align}
If the points p and q belong to the curves $\mathcal{C}_{k}$, the $W$%
-functions satisfy the following recursion w.r.t. the action of $\Xi $ on
the point p: 
\begin{equation}
\frac{W_{\text{qp}}(qz)}{W_{\text{q}\Xi (\text{p})}(z)}=z^{1/2}\frac{W_{\text{qp}}(z(l))}{W_{\text{qp}}(z(0))}\frac{1-\frac{x_{\text{q}}}{y_{\text{p}}}q^{-1}}{1-\frac{x_{%
\text{q}}}{y_{\text{p}}}q^{-1}z},\qquad\frac{W_{\text{qp}}(qz)}{W_{\text{%
q}\Xi ^{-1}(\text{p})}(z)}=z^{1/2}\frac{W_{\text{qp}}(z(l))}{W_{\text{qp}}(z(0))}\frac{1-\frac{y_{\text{q%
}}}{x_{\text{p}}}qz^{-1}}{1-\frac{y_{\text{q}}}{x_{\text{p}}}q},
\label{FrbtRecursion-w-2}
\end{equation}%
and%
\begin{equation}
\frac{\bar{W}_{\text{qp}}(qz)}{\bar{W}_{\text{q}\Xi (\text{p})}(z)}=z^{-1/2}\frac{\bar{W}_{\text{qp}}(z(l))}{\bar{W}_{\text{qp}}(z(0))}\frac{1-\frac{y_{\text{q}}}{y_{\text{p}}}q^{-1}z}{1-%
\frac{y_{\text{q}}}{y_{\text{p}}}q^{-1}},\qquad\frac{\bar{W}_{\text{qp}%
}(qz)}{\bar{W}_{\text{q}\Xi ^{-1}(\text{p})}(z)}=z^{-1/2}\frac{\bar{W}_{\text{qp}}(z(l))}{\bar{W}_{\text{qp}}(z(0))}\frac{1-\frac{x_{\text{q}}}{x_{\text{p}}}q}{1-\frac{x_{\text{q%
}}}{x_{\text{p}}}qz^{-1}}.  \label{FrbtRecursion-w-3}
\end{equation}
\subsection{Parametrization of $\tau_2$-model by points in $\mathbb{C}^{3}$}
Let us implement the following gauge transformation on the $\protect\tau _{2}$-Lax operator:%
\begin{equation}
\tilde{\SL}_{n}(\lambda )\equiv \left( 
\begin{array}{cc}
1 & 0 \\ 
1/r_{n+1}z_{n+1}^{\prime } & 1%
\end{array}%
\right) \SL_{n}(\lambda )\left( 
\begin{array}{cc}
1 & 0 \\ 
-1/r_{n}z_{n}^{\prime } & 1%
\end{array}%
\right) ,
\end{equation}%
where $r_{n}\in \mathbb{C}$ and%
\begin{equation}
z_{n}^{\prime }\in \mathbb{S}_{p}\equiv \{q^{2h},\text{ }h=1,..,p\}.
\end{equation}

We associate to the tuple $(a_{\text{v}},b_{\text{v}},c_{\text{v}},d_{\text{v%
}})\in \mathbb{C}^{4}$ the following point: 
\begin{equation}
\text{v}\equiv (x_{\text{v}}\equiv a_{\text{v}}/d_{\text{v}},y_{\text{v}%
}\equiv b_{\text{v}}/c_{\text{v}},s_{\text{v}}\equiv d_{\text{v}}/c_{\text{v}%
})\in \mathbb{C}^{3},
\end{equation}%
and we define:%
\begin{equation}
t_{\text{v}}\equiv x_{\text{v}}y_{\text{v}}.
\end{equation}%
Now, let us consider the point $\overline{\text{p}}\equiv (x_{\overline{\text{p}}},y_{\overline{\text{p}}},s_{\overline{\text{p}}})\in \mathbb{C}^{3}$ and from it the points:
\begin{equation}
\overline{\text{p}}_{n}\equiv (x_{\overline{\text{p}}}/\sigma _{n},y_{\overline{\text{p}}}\sigma
_{n},s_{\overline{\text{p}}}\sigma _{n})\in \mathbb{C}^{3}\,,\,\,\,\,\sigma _{n}\in \mathbb{C%
},\quad\forall n\in \{1,...,\mathsf{N}\},  \label{Frbtcyclic-parameters}
\end{equation}
then it holds:
\begin{equation}
t_{\text{$\overline{\text{p}}$}_{n}}=t_{\overline{\text{p}}}.
\end{equation}%
We can introduce the following parametrization:%
\begin{equation}
\begin{array}{ll}
\lambda \alpha _{n}\equiv -t_{\overline{\text{p}}}^{-1/2}b_{\text{q}_{n}}b_{\text{r}%
_{n}},\text{ \ \ \ \ \ \ } & \mathbbm{b}_{n}/r_{n}q^{1/2}\equiv (x_{\text{$\overline{%
\text{p}}$}_{n}}/y_{\text{$\overline{\text{p}}$}_{n}})^{1/2}a_{\text{q}%
_{n}}d_{\text{r}_{n}}/q^{2}, \\ 
\beta _{n}/\lambda \equiv -t_{\overline{\text{p}}}^{1/2}d_{\text{q}_{n}}d_{\text{r}%
_{n}},\text{ \ \ \ \ \ \ } & q^{1/2}\mathbbm{a}_{n}/r_{n}\equiv -(x_{\text{$\overline{%
\text{p}}$}_{n}}/y_{\text{$\overline{\text{p}}$}_{n}})^{1/2}c_{\text{q}%
_{n}}b_{\text{r}_{n}}, \\ 
\delta _{n}\lambda \equiv q^{-2}t_{\overline{\text{p}}}^{-1/2}a_{\text{q}_{n}}a_{\text{r%
}_{n}},\text{ \ \ \ \ \ \ } & q^{1/2}\mathbbm{d}_{n}r_{n+1}\equiv -(y_{\text{$%
\overline{\text{p}}$}_{n+1}}/x_{\text{$\overline{\text{p}}$}_{n+1}})^{1/2}d_{%
\text{q}_{n}}a_{\text{r}_{n}}, \\ 
\gamma _{n}/\lambda \equiv t_{\overline{\text{p}}}^{1/2}c_{\text{q}_{n}}c_{\text{r}%
_{n}},\text{ \ \ \ \ \ \ } & q^{-1/2}\mathbbm{c}_{n}r_{n+1}\equiv (y_{\text{$\overline{%
\text{p}}$}_{n+1}}/x_{\text{$\overline{\text{p}}$}_{n+1}})^{1/2}b_{\text{q}%
_{n}}c_{\text{r}_{n}},%
\end{array}
\label{FrbtPoints on the curve}
\end{equation}%
of the gauge transformed Lax parameters in terms of the points $\overline{%
\text{p}}_{n}$ and the set of $2\mathsf{N}$ points q$_{n}$ and r$_{n}$. Note that we
can identify:%
\begin{equation}\label{Frbtgauge-r_n}
r_{n}\equiv (y_{\text{$\overline{\text{p}}$}_{n}}/x_{\text{$\overline{\text{p%
}}$}_{n}})^{1/2}=\sigma _{n}(y_{\overline{\text{p}}}/x_{\overline{\text{p}}})^{1/2},
\end{equation}%
so that the off-diagonal elements of the $\tau _{2}$-Lax operator are
independent from the points $\overline{\text{p}}_{n}$ and the parameter $t_{\overline{\text{p}}}$ is related to our spectral parameter by:%
\begin{equation}
t_{\overline{\text{p}}}\propto \lambda ^{-2}.
\end{equation}%
It is a simple exercise to derive from the above parametrization (\ref{FrbtPoints on the curve}) the relations:%
\begin{eqnarray}
\frac{x_{\text{q}_{n}}}{y_{\text{$\overline{\text{p}}$}_{n}}} &=&-q^{3/2}%
\frac{\lambda \mathbbm{b}_{n}}{\beta _{n}r_{n}},\text{ \ \ }\frac{x_{\text{r}_{n}}}{x_{%
\text{$\overline{\text{p}}$}_{n+1}}}=q^{1/2}\frac{\lambda \mathbbm{d}_{n}r_{n+1}}{%
\beta _{n}},  \label{Frbteq1} \\
\frac{y_{\text{q}_{n}}}{x_{\text{$\overline{\text{p}}$}_{n}}} &=&q^{-1/2}%
\frac{r_{n}\lambda \alpha _{n}}{\mathbbm{a}_{n}},\text{ \ \ \ }\frac{y_{\text{r}_{n}}}{%
y_{\text{$\overline{\text{p}}$}_{n+1}}}=-q^{1/2}\frac{\lambda \alpha _{n}}{%
r_{n+1}\mathbbm{c}_{n}},\text{ \ \ }s_{\text{q}_{n}}s_{\text{r}_{n}}=-\frac{\alpha
_{n}\beta _{n}}{\mathbbm{c}_{n}\mathbbm{a}_{n}}.  \label{Frbteq2}
\end{eqnarray}

\subsection{Generalized dilogarithm functions}\label{FrbtGeneralized dilogarithm functions}
Let us fix $2\mathsf{N}+1$ points $(\overline{\text{p}},$q$_{1},...,$q$_{\mathsf{N}}$,r$_{1},...,$r$_{\mathsf{N}})\in (%
\mathbb{C}^{3})^{(2\mathsf{N}+1)}$, then we can define the following $\mathsf{N}$ functions:%
\begin{equation}
\SY_{\lambda }^{(n)}(z_{n}|z_{n}^{\prime },z_{n+1}^{\prime })\equiv \mathsf{N}_{\text{q}%
_{n}\text{$\overline{\text{p}}$}_{n}}(z_{n}^{\prime })\bar{\mathsf{N}}_{\text{r}_{n}%
\text{$\overline{\text{p}}$}_{n+1}}(z_{n+1}^{\prime })W_{\text{q}_{n}\text{$%
\overline{\text{p}}$}_{n}}(z_{n}/z_{n}^{\prime })\bar{W}_{\text{r}_{n}\text{$%
\overline{\text{p}}$}_{n+1}}(z_{n}/z_{n+1}^{\prime }),
\end{equation}%
of the three variables $(z_{n},z_{n}^{\prime },z_{n+1}^{\prime })\in $ $%
\mathbb{S}_{p}^{3}$, where:%
\begin{equation}
\mathsf{N}_{\text{q}_{n}\text{$\overline{\text{p}}$}_{n}}(z_{n}^{\prime })\equiv
\left( \frac{s_{\overline{\text{p}}}^{p}(\sigma _{n}^{p}y_{\text{q}_{n}}^{p}-x_{\overline{\text{p}}}^{p})}{s_{\text{q}_{n}}^{p}(\sigma _{n}^{p}y_{\overline{\text{p}}}^{p}-x_{\text{q}_{n}}^{p})}\right) ^{h_{n}^{\prime }/p},\text{ \ \ }%
\bar{\mathsf{N}}_{\text{r}_{n}\text{$\overline{\text{p}}$}_{n+1}}(z_{n+1}^{\prime
})\equiv \left( \frac{\sigma _{n+1}^{p}y_{\overline{\text{p}}}^{p}-y_{\text{r}%
_{n}}^{p}}{s_{\overline{\text{p}}}^{p}s_{\text{r}_{n}}^{p}(\sigma _{n+1}^{p}x_{\text{r}_{n}}^{p}-x_{\overline{\text{p}}}^{p})}\right) ^{h_{n+1}^{\prime }/p}
\label{FrbtCyclic-normalizations}
\end{equation}%
and $z_{n}^{\prime }\equiv q^{2h_{n}^{\prime }}\in $ $\mathbb{S}_{p}$, while
the points $(\overline{\text{p}}_{1},...,\overline{\text{p}}_{\mathsf{N}})\in (%
\mathbb{C}^{3})^{\mathsf{N}}$ are defined by \rf{Frbtcyclic-parameters} where the $%
(\sigma _{1},...,\sigma _{\mathsf{N}})\in \mathbb{C}^{\mathsf{N}}$ are fixed by the
conditions: 
\begin{equation}
s_{\text{q}_{n}}^{p}s_{\text{r}_{n}}^{p}\frac{{\sigma}_{n+1}^{p}x_{\text{r}_{n}}^{p}-x_{\overline{\text{p}}}^{p}}{{\sigma}_{n+1}^{p}y_{\overline{\text{p}}}^{p}-y_{\text{r}_{n}}^{p}}\frac{{\sigma}_{n}^{p}y_{\overline{\text{p}}}^{p}-x_{\text{q}_{n}}^{p}}{{\sigma}_{n}^{p}y_{\text{q}_{n}}^{p}-x_{\overline{\text{p}}}^{p}}=1\quad\quad\text{ \ \ }\forall n\in \{1,...,\mathsf{N}\}.
\label{Frbtcyclicity}
\end{equation}%
Note that the $\SY_{\lambda }^{(n)}(z_{n}|z_{n}^{\prime },z_{n+1}^{\prime })$
are well defined functions of the three variables $(z_{n},z_{n}^{\prime
},z_{n+1}^{\prime })\in $ $\mathbb{S}_{p}^{3}$. Indeed, they satisfy the
cyclicity conditions w.r.t. $z_{n}^{\prime }$ and $z_{n+1}^{\prime }$ thanks
to the normalization functions defined in \rf{FrbtCyclic-normalizations}
while the cyclicity w.r.t. $z_{n}$\ is given by the conditions \rf{Frbtcyclicity}.

\subsubsection{Recursion relations}

The generalized dilogarithm functions satisfy the recursions:%
\begin{equation}
\frac{\SY_{\lambda }^{(n)}(z_{n}q^{-1}|z_{n}^{\prime },z_{n+1}^{\prime })}{%
\SY_{\lambda }^{(n)}(z_{n}q|z_{n}^{\prime },z_{n+1}^{\prime })}=s_{\text{q}%
_{n}}s_{\text{r}_{n}}\frac{\sigma _{n}z_{n}^{\prime }}{\sigma
_{n+1}z_{n+1}^{\prime }}\frac{\left( 1-\frac{x_{\text{q}_{n}}}{y_{\text{$%
\overline{\text{p}}$}_{n}}}q^{-1}\frac{z_{n}}{z_{n}^{\prime }}\right) }{%
\left( 1-\frac{y_{\text{q}_{n}}}{x_{\text{$\overline{\text{p}}$}_{n}}}q\frac{%
z_{n}^{\prime }}{z_{n}}\right) }\frac{\left( 1-\frac{x_{\text{r}_{n}}}{x_{%
\text{$\overline{\text{p}}$}_{n+1}}}q^{-1}\frac{z_{n+1}^{\prime }}{z_{n}}%
\right) }{\left( 1-\frac{y_{\text{r}_{n}}}{y_{\text{$\overline{\text{p}}$}%
_{n+1}}}q^{-1}\frac{z_{n}}{z_{n+1}^{\prime }}\right) },
\label{FrbtRecursion0Y-1}
\end{equation}%
as it simply follows from the recursions \rf{FrbtRecursion-w-1}. Moreover, we can use the recursions \rf{FrbtRecursion-w-2}-\rf{FrbtRecursion-w-3} to derive:%
\begin{equation}
\frac{\SY_{\lambda }^{(n)}(z_{n}q|z_{n}^{\prime },z_{n+1}^{\prime })}{%
\SY_{\lambda /q}^{n}(z_{n}|z_{n}^{\prime },z_{n+1}^{\prime })}=\left( \frac{%
z_{n+1}^{\prime }}{z_{n}^{\prime }}\right) ^{1/2}f_{\text{$\overline{\text{p}%
}$}_{n}\text{$\overline{\text{p}}$}_{n+1}\text{q}_{n}\text{r}_{n}}\frac{1-%
\frac{x_{\text{q}_{n}}}{y_{\text{$\overline{\text{p}}$}_{n}}}q^{-1}}{1-\frac{%
x_{\text{q}_{n}}}{y_{\text{$\overline{\text{p}}$}_{n}}}q^{-1}\frac{z_{n}}{%
z_{n}^{\prime }}}\frac{1-\frac{y_{\text{r}_{n}}}{y_{\text{$\overline{\text{p}%
}$}_{n+1}}}q^{-1}\frac{z_{n}}{z_{n+1}^{\prime }}}{1-\frac{y_{\text{r}_{n}}}{%
y_{\text{$\overline{\text{p}}$}_{n+1}}}q^{-1}},  \label{FrbtRecursion0Y-2}
\end{equation}%
and%
\begin{equation}
\frac{\SY_{\lambda }^{(n)}(z_{n}q|z_{n}^{\prime },z_{n+1}^{\prime })}{%
\SY_{q\lambda }^{n}(z_{n}|z_{n}^{\prime },z_{n+1}^{\prime })}=\left( \frac{%
z_{n+1}^{\prime }}{z_{n}^{\prime }}\right) ^{1/2}f_{\text{$\overline{\text{p}%
}$}_{n}\text{$\overline{\text{p}}$}_{n+1}\text{q}_{n}\text{r}_{n}}\frac{1-%
\frac{y_{\text{q}_{n}}}{x_{\text{$\overline{\text{p}}$}_{n}}}q\frac{%
z_{n}^{\prime }}{z_{n}}}{1-\frac{y_{\text{q}_{n}}}{x_{\text{$\overline{\text{%
p}}$}_{n}}}q}\frac{1-\frac{x_{\text{r}_{n}}}{x_{\text{$\overline{\text{p}}$}%
_{n+1}}}q}{1-\frac{x_{\text{r}_{n}}}{x_{\text{$\overline{\text{p}}$}_{n+1}}}q%
\frac{z_{n+1}^{\prime }}{z_{n}}},  \label{FrbtRecursion0Y-3}
\end{equation}%
where:%
\begin{equation}
f_{\text{$\overline{\text{p}}$}_{n}\text{$\overline{\text{p}}$}_{n+1}\text{q}%
_{n}\text{r}_{n}}=\frac{W_{\text{q}_{n}\text{$\overline{\text{p}}$}_{n}}(z(l))}{W_{\text{q}_{n}\text{$\overline{\text{p}}$}_{n}}(z(0))}\frac{\bar{W}_{\text{r}_{n}\text{$\overline{\text{p}}$}_{n+1}}(z(l))}{\bar{W}_{\text{r}_{n}\text{$\overline{\text{p}}$}_{n+1}}(z(0))}.
\label{FrbtDefinition-f}
\end{equation}

\subsection{\label{FrbtBaxter-Q}Baxter operator construction by gauge transformation}
Here, we present explicitly the construction of the Baxter $\SQ$-operator which can be seen as a
"generalized chiral Potts" transfer matrix and the computation of the
coefficients of the corresponding Baxter equation.
\subsubsection{$\SQ$-operator}
Let us define the kernel of the $\SQ$-operator by the ansatz:%
\begin{equation}
\SQ_{\lambda }(\text{z},\text{z}^{\prime })\equiv \langle \text{z}|\SQ_{\lambda
}|\text{z}^{\prime }\rangle =\prod_{n=1}^{\mathsf{N}}\SY_{\lambda
}^{(n)}(z_{n}|z_{n}^{\prime },z_{n+1}^{\prime }),  \label{Frbtkernel}
\end{equation}%
where $\langle $z$|$ and $|$z$\rangle$ are the generic elements of the left
and right $\su_{n}$-eigenbasis and the $\SY_{\lambda }^{(n)}(z_{n}|z_{n}^{\prime },z_{n+1}^{\prime })$ are
defined in \rf{FrbtRecursion0Y-1}. So the recursion \rf{FrbtRecursion0Y-1}, by using the parameter identifications \rf{Frbteq1}-\rf{Frbteq2}, reads:%
\begin{equation}
\frac{\SY_{\lambda }^{(n)}(z_{n}q^{-1}|z_{n}^{\prime },z_{n+1}^{\prime })}{%
\SY_{\lambda }^{(n)}(z_{n}q|z_{n}^{\prime },z_{n+1}^{\prime })}=-\frac{%
r_{n}z_{n}^{\prime }}{r_{n+1}z_{n+1}^{\prime }}\frac{\alpha
_{n}\beta _{n}}{\mathbbm{c}_{n}\mathbbm{a}_{n}}\frac{\left( 1+\frac{q^{1/2}\lambda \mathbbm{b}_{n}}{\beta
_{n}r_{n}}\frac{z_{n}}{z_{n}^{\prime }}\right) \left( 1-\frac{\lambda
\mathbbm{d}_{n}r_{n+1}}{q^{1/2}\beta _{n}}\frac{z_{n+1}^{\prime }}{z_{n}}\right) }{%
\left( 1-\frac{q^{1/2}r_{n}\lambda \alpha _{n}}{\mathbbm{a}_{n}}\frac{z_{n}^{\prime }}{%
z_{n}}\right) \left( 1+\frac{\lambda \alpha _{n}}{q^{1/2}r_{n+1}\mathbbm{c}_{n}}\frac{%
z_{n}}{z_{n+1}^{\prime }}\right) },  \label{FrbtRecursionY-1}
\end{equation}%
and then:%
\begin{equation}
\langle \text{z}|\tilde{L}_{n}(\lambda )_{21}\SQ_{\lambda }|\text{z%
}^{\prime }\rangle =\prod_{h\neq n,h=1}^{\mathsf{N}}\SY_{\lambda
}^{(h)}(z_{h}|z_{h}^{\prime },z_{h+1}^{\prime })\text{\~{L}}_{n}(\lambda )_{21}\SY_{\lambda }^{(n)}(z_{n}|z_{n}^{\prime },z_{n+1}^{\prime
})\overset{\rf{FrbtRecursionY-1}}{=}0,  \label{FrbtTriangularity}
\end{equation}%
i.e. the defining condition of the $\SQ$-operator.

\subsubsection{Baxter equation}

Let us derive the Baxter equation, from the condition (\ref{FrbtTriangularity}),
we have that:%
\begin{equation}
\langle \text{z}|\tau _{2}(\lambda )\SQ_{\lambda }|\text{z}^{\prime }\rangle \equiv
\prod_{n=1}^{\mathsf{N}}\text{\~{L}}_{n}(\lambda )_{11}\SY_{\lambda
}^{(n)}(z_{n}|z_{n}^{\prime },z_{n+1}^{\prime })+\prod_{n=1}^{\mathsf{N}}\text{\~{L}}%
_{n}(\lambda )_{22}\SY_{\lambda }^{(n)}(z_{n}|z_{n}^{\prime
},z_{n+1}^{\prime }),
\end{equation}%
where:%
\begin{align}
& \text{\~{L}}_{n}(\lambda )_{11}\SY_{\lambda
}^{(n)}(z_{n}|z_{n}^{\prime },z_{n+1}^{\prime })\left. =\right. [(\lambda
\alpha _{n}-\frac{\mathbbm{a}_{n}z_{n}}{q^{1/2}r_{n}z_{n}^{\prime }})\text{v}%
_{n}-(\beta _{n}/\lambda +\frac{q^{1/2}\mathbbm{b}_{n}z_{n}}{r_{n}z_{n}^{\prime }})%
\text{v}_{n}^{-1}]\SY_{\lambda }^{(n)}(z_{n}|z_{n}^{\prime },z_{n+1}^{\prime })
\notag \\
& \text{ \ \ \ \ \ }\overset{\text{\rf{FrbtRecursionY-1}}}{=}-\beta _{n}(q^{-1}\frac{\alpha _{n}\mathbbm{d}_{n}}{\beta _{n}\mathbbm{c}_{n}}\lambda +\frac{1}{%
\lambda })\frac{\left( 1+\frac{q^{1/2}\lambda \mathbbm{b}_{n}}{\beta _{n}r_{n}}\frac{%
z_{n}}{z_{n}^{\prime }}\right) }{\left( 1+\frac{\lambda \alpha _{n}}{%
q^{1/2}r_{n}\mathbbm{c}_{n}}\frac{z_{n}}{z_{n+1}^{\prime }}\right) }\text{v}%
_{n}^{-1}\SY_{\lambda }^{(n)}(z_{n}|z_{n}^{\prime },z_{n+1}^{\prime }),
\end{align}%
and: 
\begin{align}
& \text{\~{L}}_{n}(\lambda )_{22}\SY_{\lambda
}^{(n)}(z_{n}|z_{n}^{\prime },z_{n+1}^{\prime })\left. =\right. [(\gamma
_{n}/\lambda +\frac{\mathbbm{a}_{n}z_{n}}{q^{1/2}r_{n+1}z_{n+1}^{\prime }})\text{v}%
_{n}-(\delta _{n}\lambda -\frac{q^{1/2}\mathbbm{b}_{n}z_{n}}{r_{n+1}z_{n+1}^{\prime }})%
\text{v}_{n}^{-1}]\SY_{\lambda }^{(n)}(z_{n}|z_{n}^{\prime },z_{n+1}^{\prime })
\notag \\
& \text{ \ \ \ \ \ }\overset{\text{\rf{FrbtRecursionY-1}}}{=}-\frac{%
r_{n}z_{n}^{\prime }}{r_{n+1}z_{n+1}^{\prime }}\beta _{n}(q\frac{\alpha
_{n}\mathbbm{b}_{n}}{\beta _{n}\mathbbm{a}_{n}}\lambda +\frac{1}{\lambda })\frac{(1-\frac{%
\lambda \mathbbm{d}_{n}r}{q^{1/2}\beta _{n}}\frac{z_{n+1}^{\prime }}{z_{n}})}{\left( 1-%
\frac{q^{1/2}r\lambda \alpha _{n}}{\mathbbm{a}_{n}}\frac{z_{n}^{\prime }}{z_{n}}%
\right) }\text{v}_{n}^{-1}\SY_{\lambda }^{(n)}(z_{n}|z_{n}^{\prime
},z_{n+1}^{\prime }).
\end{align}%
Now the recursions \rf{FrbtRecursion0Y-2}-\rf{FrbtRecursion0Y-3}, by \rf{Frbteq1}-\rf{Frbteq2}, read:%
\begin{equation}
\frac{\SY_{\lambda }^{(n)}(z_{n}q|z_{n}^{\prime },z_{n+1}^{\prime })}{%
\SY_{\lambda /q}^{n}(z_{n}|z_{n}^{\prime },z_{n+1}^{\prime })}=\left( \frac{%
z_{n+1}^{\prime }}{z_{n}^{\prime }}\right) ^{1/2}f_{\text{$\overline{\text{p}%
}$}_{n}\text{$\overline{\text{p}}$}_{n+1}\text{q}_{n}\text{r}_{n}}\frac{1+%
\frac{q^{1/2}\lambda \mathbbm{b}_{n}}{\beta _{n}r_{n}}}{1+\frac{q^{1/2}\lambda \mathbbm{b}_{n}}{%
\beta _{n}r_{n}}\frac{z_{n}}{z_{n}^{\prime }}}\frac{1+\frac{\lambda \alpha
_{n}}{q^{1/2}r_{n+1}\mathbbm{c}_{n}}\frac{z_{n}}{z_{n+1}^{\prime }}}{1+\frac{\lambda
\alpha _{n}}{q^{1/2}r_{n+1}\mathbbm{c}_{n}}},  \label{FrbtRecursionY-2}
\end{equation}%
and%
\begin{equation}
\frac{\SY_{\lambda }^{(n)}(z_{n}q|z_{n}^{\prime },z_{n+1}^{\prime })}{%
\SY_{q\lambda }^{n}(z_{n}|z_{n}^{\prime },z_{n+1}^{\prime })}=\left( \frac{%
z_{n+1}^{\prime }}{z_{n}^{\prime }}\right) ^{1/2}f_{\text{$\overline{\text{p}%
}$}_{n}\text{$\overline{\text{p}}$}_{n+1}\text{q}_{n}\text{r}_{n}}\frac{1-%
\frac{q^{1/2}r_{n}\lambda \alpha _{n}}{\mathbbm{a}_{n}}\frac{z_{n}^{\prime }}{z_{n}}}{%
1-\frac{q^{1/2}r_{n}\lambda \alpha _{n}}{\mathbbm{a}_{n}}}\frac{1-\frac{\lambda
\mathbbm{d}_{n}r_{n+1}}{q^{1/2}\beta _{n}}}{1-\frac{\lambda \mathbbm{d}_{n}r_{n+1}}{q^{1/2}\beta
_{n}}\frac{z_{n+1}^{\prime }}{z_{n}}}. \label{FrbtRecursionY-3}
\end{equation}%
So using (\ref{FrbtRecursionY-2}) we get:%
\begin{eqnarray}
\text{\~{L}}_{n}(\lambda )_{11}\SY_{\lambda }^{(n)}(z_{n}|z_{n}^{\prime
},z_{n+1}^{\prime }) &=&-\left( \frac{z_{n+1}^{\prime }%
}{z_{n}^{\prime }}\right) ^{1/2}\beta _{n}(q^{-1}\frac{\alpha _{n}\mathbbm{d}_{n}}{%
\beta _{n}\mathbbm{c}_{n}}\lambda +\frac{1}{\lambda })\frac{1+\frac{q^{1/2}\lambda
\mathbbm{b}_{n}}{\beta _{n}r_{n}}}{1+\frac{\lambda \alpha _{n}}{q^{1/2}r_{n+1}\mathbbm{c}_{n}}} 
\notag \\
&&\times f_{\text{$\overline{\text{p}}$}_{n}\text{$\overline{\text{p}}$}%
_{n+1}\text{q}_{n}\text{r}_{n}}\SY_{\lambda /q}^{n}(z_{n}|z_{n}^{\prime
},z_{n+1}^{\prime }),
\end{eqnarray}%
and analogously using (\ref{FrbtRecursionY-3}) we get:%
\begin{eqnarray}
\text{\~{L}}_{n}(\lambda )_{22}\SY_{\lambda }^{(n)}(z_{n}|z_{n}^{\prime
},z_{n+1}^{\prime }) &=&-\frac{r_{n}}{r_{n+1}}\left( \frac{z_{n}^{\prime }}{%
z_{n+1}^{\prime }}\right) ^{1/2}\beta _{n}(q\frac{\alpha _{n}\mathbbm{b}_{n}}{\beta
_{n}\mathbbm{a}_{n}}\lambda +\frac{1}{\lambda })\frac{1-\frac{\lambda \mathbbm{d}_{n}r_{n+1}}{%
q^{1/2}\beta _{n}}}{1-\frac{q^{1/2}r_{n}\lambda \alpha _{n}}{\mathbbm{a}_{n}}}  \notag
\\
&&\times f_{\text{$\overline{\text{p}}$}_{n}\text{$\overline{\text{p}}$}%
_{n+1}\text{q}_{n}\text{r}_{n}}\SY_{q\lambda }^{n}(z_{n}|z_{n}^{\prime
},z_{n+1}^{\prime }).
\end{eqnarray}%
Finally, we have the Baxter equation:
\begin{equation}\label{FrbtGeneralized-Q-Bax-eq}
\tau _{2}(\lambda )\SQ_{\lambda }=a_{B}(\lambda )\SQ_{\lambda /q}+d_{B}(\lambda
)\SQ_{q\lambda },
\end{equation}
with coefficients which read:%
\begin{eqnarray}
a_{B}(\lambda ) &=&(-1)^{\mathsf{N}}\prod_{n=1}^{\mathsf{N}}\beta _{n}f_{\text{$\overline{%
\text{p}}$}_{n}\text{$\overline{\text{p}}$}_{n+1}\text{q}_{n}\text{r}_{n}}(%
\frac{1}{\lambda }+q^{-1}\frac{\alpha _{n}\mathbbm{d}_{n}}{\beta _{n}\mathbbm{c}_{n}}\lambda )%
\frac{1+\frac{q^{1/2}\lambda \mathbbm{b}_{n}}{\beta _{n}r_{n}}}{1+\frac{\lambda \alpha
_{n}}{q^{1/2}r_{n+1}\mathbbm{c}_{n}}},  \label{FrbtQ-coeff-a} \\
d_{B}(\lambda ) &=&(-1)^{\mathsf{N}}\prod_{n=1}^{\mathsf{N}}\beta _{n}f_{\text{$\overline{%
\text{p}}$}_{n}\text{$\overline{\text{p}}$}_{n+1}\text{q}_{n}\text{r}_{n}}(%
\frac{1}{\lambda }+q\frac{\alpha _{n}\mathbbm{b}_{n}}{\beta _{n}\mathbbm{a}_{n}}\lambda )\frac{1-%
\frac{\lambda \mathbbm{d}_{n}r_{n+1}}{q^{1/2}\beta _{n}}}{1-\frac{q^{1/2}r_{n}\lambda
\alpha _{n}}{\mathbbm{a}_{n}}}.  \label{FrbtQ-coeff-d}
\end{eqnarray}
{\bf Remark 7.}\  It is worth pointing out that these proofs hold also for the case of the $\tau_2$-model on the chP curves just imposing that the $r_n$ in \rf{Frbtgauge-r_n} are all equal to the $r$ in \rf{Frbtgauge-r} plus the requirement that q$_{n}$ and r$_{n}$ are on the curve.
\subsection{\label{FrbtQ-Connection-SOV}Connection between generalized Baxter $\SQ$-operator and SOV construction}
 In the next two subsections we point out the connection among the averages of the coefficients of
the Baxter equation \rf{FrbtGeneralized-Q-Bax-eq} and the averages values of the Yang-Baxter operators in
this way establishing the connection with the coefficients of the
SOV-representations.
\subsubsection{Averages of Baxter equation coefficients as eigenvalues of$\ 
\mathcal{M}(\Lambda )$}\label{Frbtrecursion-Y}
In this subsection we show that the averages of the coefficients of the Baxter
equation and the cyclicity parameters $\{\sigma _{1}^{p},...,\sigma _{%
\mathsf{N}}^{p}\}$ are completely characterized in terms of the matrix $%
\mathcal{M}(\Lambda )$ composed by the averages of the Yang-Baxter
generators. In particular, the averages of the Baxter equation coefficients (\ref{FrbtQ-coeff-a}%
) and (\ref{FrbtQ-coeff-d}) coincide with the
eigenvalues of $\mathcal{M}(\Lambda )$\ while the eigenstates of $\mathcal{M}%
(\Lambda )$ fix the $\{\sigma _{1}^{p},...,\sigma _{\mathsf{N}}^{p}\}$.

\begin{itemize}
\item[]\hspace{-0.8cm}{\it\ref{Frbtrecursion-Y}.a} \  \ {\it Matrix characterization of cyclicity conditions}
\end{itemize}  
Let us recall that in appendix \ref{FrbtGeneralized dilogarithm functions}, we have introduced the cyclicity
conditions \rf{Frbtcyclicity} to assure that the generalized $\SY$-functions are well
defined. These cyclicity conditions plus the closure condition $\sigma _{%
\mathsf{N}+1}^{p}=\sigma _{1}^{p}$ fix the parameters $\{\sigma
_{1}^{p},...,\sigma _{\mathsf{N}}^{p}\}$. In particular, the following
matrix characterization holds:

\begin{lem}
Let us define the $2\times 2$ complex matrices:%
\begin{equation}
\mathbb{A}_{n}=A_{n}\cdots A_{2}A_{1},
\end{equation}%
with%
\begin{equation}
A_{n}=%
\begin{pmatrix}
s_{q_{n}}^{p}s_{r_{n}}^{p}x_{p}^{p}y_{p}^{p}-y_{q_{n}}^{p}y_{r_{n}}^{p} & 
x_{p}^{p}y_{r_{n}}^{p}-s_{q_{n}}^{p}s_{r_{n}}^{p}x_{p}^{p}x_{q_{n}}^{p} \\ 
s_{q_{n}}^{p}s_{r_{n}}^{p}y_{p}^{p}x_{r_{n}}^{p}-y_{p}^{p}y_{q_{n}}^{p} & 
x_{p}^{p}y_{p}^{p}-s_{q_{n}}^{p}s_{r_{n}}^{p}x_{q_{n}}^{p}x_{r_{n}}^{p}%
\end{pmatrix}%
,
\end{equation}%
then%
\begin{equation}
\sigma _{n+1}^{p}=f_{\mathbb{A}_{n}}(\sigma _{1}^{p})  \label{Frbtrecursion-n}
\end{equation}%
and $\sigma _{1}^{p}$ is a solution of the quadratic fix-point equation:%
\begin{equation}
\sigma _{1}^{p}=f_{\mathbb{A}_{\mathsf{N}}}(\sigma _{1}^{p})\text{.}
\label{Frbtclosure-recarcion}
\end{equation}%
Here, we have defined:%
\begin{equation}
f_{A}(x)\equiv \frac{ax+b}{cx+d}\qquad\text{ \ for any }2\times 2\text{ complex
matrix }\qquad A=%
\begin{pmatrix}
a & b \\ 
c & d%
\end{pmatrix}%
\text{.}  \label{FrbtMobius-g}
\end{equation}
\end{lem}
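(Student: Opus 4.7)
The plan is to derive the statement by direct algebraic manipulation of the cyclicity condition \eqref{Frbtcyclicity}, followed by the use of the standard composition law for Möbius transformations. First I would clear denominators in \eqref{Frbtcyclicity} and isolate $\sigma_{n+1}^p$ as an explicit rational function of $\sigma_n^p$. Writing the numerator and denominator as linear expressions in $\sigma_n^p$, one reads off the four entries of the $2\times 2$ matrix governing the transformation: collecting the coefficient of $\sigma_n^p$ and the constant term in the numerator yields $(s_{q_n}^p s_{r_n}^p x_p^p y_p^p - y_{q_n}^p y_{r_n}^p)$ and $(x_p^p y_{r_n}^p - s_{q_n}^p s_{r_n}^p x_p^p x_{q_n}^p)$ respectively, while the denominator yields $(s_{q_n}^p s_{r_n}^p y_p^p x_{r_n}^p - y_p^p y_{q_n}^p)$ and $(x_p^p y_p^p - s_{q_n}^p s_{r_n}^p x_{q_n}^p x_{r_n}^p)$, matching exactly the entries of $A_n$ in the statement. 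This gives the recursion $\sigma_{n+1}^p = f_{A_n}(\sigma_n^p)$ in the Möbius notation \eqref{FrbtMobius-g}.

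Second, I would invoke the (classical) composition property of Möbius maps, $f_A \circ f_B = f_{AB}$ for any $2\times 2$ complex matrices $A,B$, which is an immediate consequence of substituting $f_B(x)=(bx+b')/(cx+d')$ into $f_A$ and clearing denominators. Iterating the recursion from step one then produces
\begin{equation}
\sigma_{n+1}^p = f_{A_n}\!\circ\! f_{A_{n-1}}\!\circ\cdots\circ\! f_{A_1}(\sigma_1^p) = f_{A_n A_{n-1}\cdots A_1}(\sigma_1^p) = f_{\mathbb{A}_n}(\sigma_1^p),
\end{equation}
which is \eqref{Frbtrecursion-n}.

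Finally, the closure condition $\sigma_{\mathsf{N}+1}^p = \sigma_1^p$ inserted into the formula with $n=\mathsf{N}$ immediately yields the fixed-point equation \eqref{Frbtclosure-recarcion}, $\sigma_1^p = f_{\mathbb{A}_\mathsf{N}}(\sigma_1^p)$. Writing $f_{\mathbb{A}_\mathsf{N}}(x) = (ax+b)/(cx+d)$ in terms of the entries of $\mathbb{A}_\mathsf{N}$, this reads $c\,(\sigma_1^p)^2 + (d-a)\,\sigma_1^p - b = 0$, a genuinely quadratic equation (for nontrivial representations with $c \neq 0$), which explains the "quadratic" qualifier in the statement.

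The calculation is essentially routine and I do not expect any substantive obstacle; the only slightly delicate point is bookkeeping signs and indices in the algebraic rearrangement of \eqref{Frbtcyclicity} so that the matrix $A_n$ comes out precisely in the form stated (in particular, tracking which side of the identity the factors $s_{q_n}^p s_{r_n}^p$ appear on). After that, the rest follows formally from the homomorphism property of the $\mathrm{PGL}(2,\mathbb{C})$-action by fractional linear transformations.
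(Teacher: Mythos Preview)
Your proposal is correct and follows essentially the same approach as the paper: rewrite the cyclicity condition \eqref{Frbtcyclicity} as the M\"obius recursion $\sigma_{n+1}^p=f_{A_n}(\sigma_n^p)$, invoke the composition law $f_A\circ f_B=f_{AB}$ to obtain $\sigma_{n+1}^p=f_{\mathbb{A}_n}(\sigma_1^p)$, and then impose the closure $\sigma_{\mathsf{N}+1}^p=\sigma_1^p$ to get the fixed-point equation. Your version is in fact more explicit than the paper's, since you spell out the matching of the four entries of $A_n$ and the quadratic nature of the fixed-point condition.
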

\begin{proof}
Let us point out that the cyclicity conditions \rf{Frbtcyclicity} can be seen as recursion
relations on the $\sigma _{n}^{p}$ parameters. In particular, the cyclicity
conditions for the couple ($\sigma _{n}^{p},\sigma _{n+1}^{p}$) is
clearly equivalent to:%
\begin{equation}
\sigma _{n+1}^{p}=f_{A_{n}}(\sigma _{n}^{p}).
\end{equation}%
It is worth to recall now that (\ref{FrbtMobius-g}) defines a group morphism
between $GL(2,\mathbb{C})$ and the group of M\"{o}bius
transformation: 
\begin{equation}
f_{AB}=f_{A}\circ f_{B}\text{ \ \ }\forall A,B\in GL(2,\mathbb{C}).
\end{equation}%
Then, from this property we derive (\ref{Frbtrecursion-n}), which for $n=$\textsf{N} and by the closure condition $\sigma _{\mathsf{N}+1}^{p}=\sigma
_{1}^{p}$ gives (\ref{Frbtclosure-recarcion}).
\end{proof}
\begin{itemize}
\item[]\hspace{-0.8cm}{\it\ref{Frbtrecursion-Y}.b} \  \ {\it Solution of cyclicity conditions and averages of Baxter equation
coefficients}
\end{itemize}
\begin{propn}\label{FrbtAB_B-vs-Average}
Let us denote:%
\begin{equation}
\mathcal{M}(\Lambda )%
\begin{pmatrix}
e_{+}^{+} \\ 
e_{+}^{-}%
\end{pmatrix}%
=\Omega _{+}\left( \Lambda \right) 
\begin{pmatrix}
e_{+}^{+} \\ 
e_{+}^{-}%
\end{pmatrix}%
,\text{ \ \ }\mathcal{M}(\Lambda )%
\begin{pmatrix}
e_{-}^{+} \\ 
e_{-}^{-}%
\end{pmatrix}%
=\Omega _{-}\left( \Lambda \right) 
\begin{pmatrix}
e_{-}^{+} \\ 
e_{-}^{-}%
\end{pmatrix}%
,
\end{equation}%
then we have one of the following cases:%
\begin{eqnarray}
\sigma _{1}^{p} &=&-(x_{p}/y_{p})^{p/2}e_{+}^{+}/e_{+}^{-},\text{ \ \ }%
\prod_{n=1}^{p}a_{B}(\lambda q^{n})=\Omega _{+}\left( \Lambda \right) ,\text{
\ \ }\prod_{n=1}^{p}d_{B}(\lambda q^{n})=\Omega _{-}\left( \Lambda \right) ,
\\
\sigma _{1}^{p} &=&-(x_{p}/y_{p})^{p/2}e_{-}^{+}/e_{-}^{-},\text{ \ \ }%
\prod_{n=1}^{p}a_{B}(\lambda q^{n})=\Omega _{-}\left( \Lambda \right) ,\text{
\ \ }\prod_{n=1}^{p}d_{B}(\lambda q^{n})=\Omega _{+}\left( \Lambda \right) .
\end{eqnarray}
\end{propn}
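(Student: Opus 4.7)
The plan is to establish a direct dictionary between the matrix $\mathcal{M}(\Lambda)$ built from averages of Yang--Baxter generators and the monodromy matrix $\mathbb{A}_{\mathsf{N}}$ governing the cyclicity constraints on the $\sigma_n^p$, and to then identify the averages of the Baxter coefficients $a_B,\,d_B$ as its eigenvalues.

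First I would rewrite each M\"obius matrix $A_n$ of the previous lemma in terms of the parameters of the $\tau_2$-representation. Using (\ref{Frbteq1})--(\ref{Frbteq2}) to eliminate the points $\mathrm{q}_n,\mathrm{r}_n$, and inserting $\Lambda=\lambda^p$, the entries of $A_n$ become Laurent polynomials of degree $1$ in $\Lambda$ expressible through $\alpha_n^p,\beta_n^p,\mathbbm{a}_n^p,\mathbbm{b}_n^p,\mathbbm{c}_n^p,\mathbbm{d}_n^p$, i.e.\ through the same data that enter $\mathcal{L}_n(\Lambda)$ in (\ref{FrbtAverage-L}). I expect a simple relation of the form $A_n = \Delta_{n+1}\,\mathcal{L}_n(\Lambda)\,\Delta_n^{-1}$, where $\Delta_n$ is a diagonal (or anti-diagonal) $2\times 2$ matrix absorbing the point-dependent factor $(x_{\mathrm{p}}/y_{\mathrm{p}})^{p/2}$ and the $r_n$ gauge. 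Because this similarity is telescopic, its product gives
\[
\mathbb{A}_{\mathsf{N}}\,=\,A_{\mathsf{N}}\cdots A_1\,=\,\Delta_1\,\mathcal{L}_{\mathsf{N}}(\Lambda)\cdots\mathcal{L}_1(\Lambda)\,\Delta_1^{-1}\,=\,\Delta_1\,\mathcal{M}(\Lambda)\,\Delta_1^{-1},
\]
by Proposition \ref{Frbtcentral}\textit{b)}. Hence $\mathbb{A}_{\mathsf{N}}$ and $\mathcal{M}(\Lambda)$ have the same eigenvalues $\Omega_{\pm}(\Lambda)$, and the eigenvectors are related by the explicit conjugation $\Delta_1$.

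Next I would translate the fixed-point equation (\ref{Frbtclosure-recarcion}) into the eigenvalue problem for $\mathbb{A}_{\mathsf{N}}$. For a $2\times 2$ matrix $A$, the fixed points of $f_A$ are exactly $x_\pm=e^+_\pm/e^-_\pm$, where $(e^+_\pm,e^-_\pm)^{\ST}$ are the eigenvectors of $A$. Applied to $\mathbb{A}_{\mathsf{N}}$ and then translated back through $\Delta_1$, the scalar prefactor $-(x_{\mathrm{p}}/y_{\mathrm{p}})^{p/2}$ coming out of $\Delta_1$ reproduces exactly the stated relation $\sigma_1^p=-(x_{\mathrm{p}}/y_{\mathrm{p}})^{p/2}\,e^+_{\pm}/e^-_{\pm}$, with the two admissible choices corresponding to the two eigenvectors of $\mathcal{M}(\Lambda)$.

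It remains to match the averages of the Baxter coefficients with the two eigenvalues $\Omega_{\pm}$. Using (\ref{FrbtQ-coeff-a})--(\ref{FrbtQ-coeff-d}) together with the identity $\prod_{k=1}^{p}(1+xq^k)=1-(-x)^p$ (valid since $q^p=1$), the cyclic product $\prod_{k=1}^{p}a_B(\lambda q^k)$ collapses into a Laurent polynomial of degree $\mathsf{N}$ in $\Lambda$ whose coefficients involve only $\alpha_n^p,\ldots,\mathbbm{d}_n^p$ and the choice of $\sigma_1^p$. I expect to recognize this polynomial as the characteristic-like eigenvalue $\Omega_{\epsilon}(\Lambda)$ attached to the particular eigenvector selected by $\sigma_1^p$, by comparing its zeros and leading/trailing asymptotics with those of $\det(\Omega-\mathcal{M}(\Lambda))=0$. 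The parallel computation for $d_B$ then forces the complementary assignment $\Omega_{-\epsilon}(\Lambda)$, since $\prod a_B\cdot\prod d_B$ must equal $\det\mathcal{M}(\Lambda)=\Omega_+\Omega_-$ (consistent with the average of (\ref{Frbtrelation-q-det})). This pairing gives the two cases in the statement.

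The main obstacle is the bookkeeping of Step 1: one must pick the correct normalizations of the $\Delta_n$ so that the telescoping similarity really works $n$ by $n$. In particular, the overall $r_n,\,r_{n+1}$ factors and the phases absorbed from the $f_{\overline{\mathrm{p}}_n\overline{\mathrm{p}}_{n+1}\mathrm{q}_n\mathrm{r}_n}$ in (\ref{FrbtDefinition-f}) need to cancel out cyclically, which is precisely the role played by the cyclicity equations (\ref{Frbtcyclicity}); once this cancellation is done cleanly, the remaining identification of averages with $\Omega_{\pm}$ is a direct computation.
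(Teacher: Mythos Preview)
Your strategy coincides with the paper's: conjugate $\mathbb{A}_{\mathsf{N}}$ to $\mathcal{M}(\Lambda)$, read the fixed points of the M\"obius map as eigenvector ratios, and identify the averages of $a_B,d_B$ with the two eigenvalues. Two points deserve correction. First, the relation between $A_n$ and $\mathcal{L}_n(\Lambda)$ is not a telescopic similarity with varying $\Delta_n$: one finds $A_n=\bigl(t_{\overline{\mathrm p}}\lambda\alpha_n/\mathbbm{a}_n\mathbbm{c}_n\bigr)^{p}\,D\,\mathcal{L}_n(\Lambda)\,D^{-1}$ with a \emph{fixed} diagonal $D=\mathrm{diag}\bigl(-(x_{\overline{\mathrm p}}/y_{\overline{\mathrm p}})^{p/2},1\bigr)$, so $\mathbb{A}_{\mathsf{N}}$ equals $D\,\mathcal{M}(\Lambda)\,D^{-1}$ only up to the overall scalar $\prod_n(t_{\overline{\mathrm p}}\lambda\alpha_n/\mathbbm{a}_n\mathbbm{c}_n)^{p}$; this scalar must be carried through and is precisely what converts the eigenvalues of $\mathbb{A}_{\mathsf{N}}$ into $\Omega_\pm(\Lambda)$. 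Second, rather than computing $\prod_k a_B(\lambda q^k)$ head-on and then ``recognizing'' it, the paper lifts the M\"obius recursion to the linear one $(h_{n+1},k_{n+1})^{\ST}=A_n(h_n,k_n)^{\ST}$ with $\sigma_n^p=h_n/k_n$, checks termwise that $k_{n+1}/k_n$ (resp.\ $h_n/h_{n+1}$) reproduces the $n$-th factor of the average of $a_B$ (resp.\ of $d_B$), and uses $k_{\mathsf{N}+1}/k_1=\Lambda_{\mathbb{A}_{\mathsf{N}}}$ when $(h_1,k_1)$ is an eigenvector; this bypasses any polynomial comparison and makes the two-case dichotomy immediate.
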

\begin{proof}
Let $x$ and $y$ be two complex numbers and define:%
\begin{equation}
\begin{pmatrix}\label{FrbtMatrix-f}
x^{\prime } \\ 
y^{\prime }%
\end{pmatrix}%
=A\cdot 
\begin{pmatrix}
x \\ 
y%
\end{pmatrix}%
\end{equation}%
it holds: 
\begin{equation}
f_{A}\left( \frac{x}{y}\right) =\frac{x^{\prime }}{y^{\prime }}.
\end{equation}%
Let $(h_{1},k_{1})$ a couple of complex numbers such that $\sigma
_{1}^{p}=h_{1}/k_{1}$, then we can define the following sequence: 
\begin{equation}\label{FrbtF-Matrix-f}
\begin{pmatrix}
h_{n} \\ 
k_{n}%
\end{pmatrix}%
=\mathbb{A}_{n-1}%
\begin{pmatrix}
h_{1} \\ 
k_{1}%
\end{pmatrix}%
\end{equation}%
which satisfies the property $\sigma _{n}^{p}=h_{n}/k_{n}$ thanks to \rf{Frbtrecursion-n}, \rf{FrbtMatrix-f} and \rf{FrbtF-Matrix-f}. So the closure
relation (\ref{Frbtclosure-recarcion}) is equivalent to the solution of the
spectral problem for the $2\times 2$ complex matrix $\mathbb{A}_{\mathsf{N}}$%
:%
\begin{equation}
\Lambda _{\mathbb{A}_{\mathsf{N}}}%
\begin{pmatrix}
h_{1} \\ 
k_{1}%
\end{pmatrix}%
=\mathbb{A}_{\mathsf{N}}%
\begin{pmatrix}
h_{1} \\ 
k_{1}%
\end{pmatrix}%
=%
\begin{pmatrix}
h_{\mathsf{N}+1} \\ 
k_{\mathsf{N}+1}%
\end{pmatrix}%
.  \label{FrbtRecursion-Sp-A}
\end{equation}%
From the definition \rf{FrbtF-Matrix-f}, it is simple to show that the sequence $(h_{n},k_{n})$
enjoys the interesting property: 
\begin{equation}
\frac{k_{n+1}}{k_{n}}=\left(
x_{p}^{p}y_{p}^{p}-x_{r_{n}}^{p}y_{r_{n}}^{p}\right) \frac{\sigma
_{n}^{p}y_{q_{n}}^{p}-x_{p}^{p}}{\sigma _{n+1}^{p}x_{r_{n}}^{p}-x_{p}^{p}},%
\text{ \ }\frac{h_{n}}{h_{n+1}}=\frac{1}{\det A_{n}}\left(
x_{p}^{p}y_{p}^{p}-x_{q_{n}}^{p}y_{q_{n}}^{p}\right) \frac{\sigma
_{n+1}^{p}y_{p}^{p}-x_{r_{n}}^{p}}{\sigma _{n}^{p}y_{p}^{p}-x_{q_{n}}^{p}},
\end{equation}%
from which we derive:%
\begin{equation}
\prod_{h=1}^{p}a_{B}(\lambda q^{h})=\frac{k_{\mathsf{N}+1}}{k_{1}}%
\prod_{n=1}^{N}\left( \frac{t_{p}\lambda \alpha _{n}}{\mathbbm{a}_{n}\mathbbm{c}_{n}}\right)
^{-p},\text{ \ \ \ }\prod_{h=1}^{p}d_{B}(\lambda q^{h})=\det \mathbb{A}_{%
\mathsf{N}}\frac{h_{1}}{h_{\mathsf{N}+1}}\prod_{n=1}^{N}\left( \frac{%
t_{p}\lambda \alpha _{n}}{\mathbbm{a}_{n}\mathbbm{c}_{n}}\right) ^{-p},
\end{equation}%
being: 
\begin{eqnarray}
\prod_{h=1}^{p}a_{B}(\lambda q^{h}) &=&\prod_{n=1}^{N}\frac{\left(
x_{p}^{p}y_{p}^{p}-x_{r_{n}}^{p}y_{r_{n}}^{p}\right) }{\left(
\frac{ t_{p}\lambda \alpha _{n}}{\mathbbm{a}_{n}\mathbbm{c}_{n}}\right) ^{p}}\frac{\sigma
_{n}^{p}y_{q_{n}}^{p}-x_{p}^{p}}{\sigma _{n+1}^{p}x_{r_{n}}^{p}-x_{p}^{p}},
\\
\prod_{h=1}^{p}d_{B}(\lambda q^{h}) &=&\prod_{n=1}^{N}\frac{\left(
x_{p}^{p}y_{p}^{p}-x_{q_{n}}^{p}y_{q_{n}}^{p}\right) }{\left(
\frac{ t_{p}\lambda \alpha _{n}}{\mathbbm{a}_{n}\mathbbm{c}_{n}}\right) ^{p}}\frac{\sigma
_{n+1}^{p}y_{p}^{p}-x_{r_{n}}^{p}}{\sigma _{n}^{p}y_{p}^{p}-x_{q_{n}}^{p}}.
\end{eqnarray}%
To write the above averages formulae for the coefficients $a_{B}(\la)$ and $d_{B}(\la)$, we have used the
formulae (\ref{FrbtQ-coeff-a})
and (\ref{FrbtQ-coeff-d}), the correspondence \rf{FrbtPoints on the curve} and the fact that from the
definition (\ref{FrbtDefinition-f}) of $f_{\text{$\overline{\text{p}}$}_{n}\text{$\overline{\text{p}}$%
}_{n+1}\text{q}_{n}\text{r}_{n}}^{2}$, we have:
\begin{equation}
\prod_{a=0}^{p-1}\Xi ^{a}\left( f_{\text{$\overline{\text{p}}$}_{n}\text{$%
\overline{\text{p}}$}_{n+1}\text{q}_{n}\text{r}_{n}}^{2}\right) =\left( s_{%
\text{q}_{n}}^{p}s_{\text{r}_{n}}^{p}\frac{{\sigma }_{n+1}^{p}x_{\text{r}%
_{n}}^{p}-x_{\text{p}}^{p}}{{\sigma }_{n+1}^{p}y_{\text{p}}^{p}-y_{\text{r}%
_{n}}^{p}}\frac{{\sigma }_{n}^{p}y_{\text{p}}^{p}-x_{\text{q}_{n}}^{p}}{{%
\sigma }_{n}^{p}y_{\text{q}_{n}}^{p}-x_{\text{p}}^{p}}\right) ^{2l}\underset{%
\rf{Frbtcyclicity}}{=}1.
\end{equation}
Now remarking that:%
\begin{equation}
\frac{k_{\mathsf{N}+1}}{k_{1}}=\left( \frac{h_{1}}{h_{\mathsf{N}+1}}\right)
^{-1}=\Lambda _{\mathbb{A}_{\mathsf{N}}}\text{ \ \ and\ \ \ }\det \mathbb{A}%
_{\mathsf{N}}=\Lambda _{\mathbb{A}_{\mathsf{N}}}\Lambda _{\mathbb{A}_{%
\mathsf{N}}}^{\prime },
\end{equation}%
with $\Lambda _{\mathbb{A}_{\mathsf{N}}}^{\prime }$ the second eigenvalue of 
$\mathbb{A}_{\mathsf{N}}$, we obtain:%
\begin{equation}
\prod_{h=1}^{p}a_{B}(\lambda q^{h})=\Lambda _{\mathbb{A}_{\mathsf{N}%
}}\prod_{n=1}^{N}\left( \frac{t_{p}\lambda \alpha _{n}}{\mathbbm{a}_{n}\mathbbm{c}_{n}}\right)
^{-p},\text{ \ \ \ \ }\prod_{h=1}^{p}d_{B}(\lambda q^{h})=\Lambda _{\mathbb{A%
}_{\mathsf{N}}}^{\prime }\prod_{n=1}^{N}\left( \frac{t_{p}\lambda \alpha _{n}%
}{\mathbbm{a}_{n}\mathbbm{c}_{n}}\right) ^{-p}.
\end{equation}%
Finally, to derive our results we have just to remark that the following identities
hold: 
\begin{equation}
A_{n}=\left( \frac{t_{p}\lambda \alpha _{n}}{\mathbbm{a}_{n}\mathbbm{c}_{n}}\right) ^{p}%
\begin{pmatrix}
-(x_{p}/y_{p})^{p/2} & 0 \\ 
0 & 1%
\end{pmatrix}%
\mathcal{L}_{n}(\Lambda )%
\begin{pmatrix}
-(x_{p}/y_{p})^{-p/2} & 0 \\ 
0 & 1%
\end{pmatrix}%
,
\end{equation}%
where $\mathcal{L}_{n}(\Lambda )$ is the average matrix \rf{FrbtAverage-L} and so by Proposition \ref{Frbtcentral}:
\begin{equation}
\mathbb{A}_{\mathsf{N}}=\left( \prod_{n=1}^{N}\frac{t_{p}\lambda \alpha _{n}%
}{\mathbbm{a}_{n}\mathbbm{c}_{n}}\right) ^{p}%
\begin{pmatrix}
-(x_{p}/y_{p})^{p/2} & 0 \\ 
0 & 1%
\end{pmatrix}%
\mathcal{M}(\Lambda )%
\begin{pmatrix}
-(x_{p}/y_{p})^{-p/2} & 0 \\ 
0 & 1%
\end{pmatrix}%
.  \label{FrbtRelation-M-average}
\end{equation}
\end{proof}
\subsubsection{Relations to the averages of the SOV coefficients}
The following identities:%
\begin{align}
& \prod_{n=1}^{p}a_{B}(\lambda q^{n})+\prod_{n=1}^{p}d_{B}(\lambda
q^{n})\left. =\right. \mathcal{A}(\Lambda )+\mathcal{D}(\Lambda ),
\label{Frbtcompatibility-detD} \\
& \prod_{n=1}^{p}a_{B}(\lambda q^{n})d_{B}(\lambda q^{n})\left. =\right.
\det \mathcal{M}(\Lambda )\left. =\right. \prod_{i=1}^{p}\det \,\hspace{%
-0.08cm}_{\text{q}}\SM_{\mathsf{N}}(\lambda q^{i}),  \label{FrbtQ-det-ChP-Baxter}
\end{align}%
are simply consequences of Proposition \ref{FrbtAB_B-vs-Average}. They are important as they explicitly imply
that the sum and the product of the averages of the Baxter equation coefficients are
Laurent polynomials in $\Lambda $. Let us define the following set of complex numbers:\begin{equation}
\mathsf{z}_{\mathsf{B}}\equiv \cup _{a=1}^{N-1}\cup _{k_a=0}^{p-1}\{\eta _{a}^{\left(k_a\right)}\}
\end{equation}%
where the $\eta _{a}^{\left( k_a\right) }$ were defined in \rf{FrbtZ_B}, then we can prove:
\begin{lem}\label{Frbt11} It is possible to fix:
\begin{equation}
\left. \sigma _{1_{{}}}^{p}\right\vert _{\lambda \in \mathsf{z}_{\mathsf{B}}}\neq 0,
\end{equation}%
then, under this choice, the following identities holds:%
\begin{equation}
\mathcal{A}(\Lambda )=\prod_{n=1}^{p}a_{B}(\lambda q^{n}),\text{\ \ \ }%
\mathcal{D}(\Lambda )=\prod_{n=1}^{p}d_{B}(\lambda q^{n}),\text{ \ \ }%
\forall \lambda \in \mathsf{z}_{\mathsf{B}}.  \label{FrbtConnection-Q-SOV}
\end{equation}
\end{lem}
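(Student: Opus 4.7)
\emph{Proof proposal.} The strategy is to combine Proposition \ref{FrbtAB_B-vs-Average} — which identifies $\prod_{n=1}^{p}a_{B}(\la q^{n})$ and $\prod_{n=1}^{p}d_{B}(\la q^{n})$ with the two eigenvalues $\Omega_\pm(\Lambda)$ of the matrix $\CM(\Lambda)$, the assignment being governed by the choice of $\sigma_1^p$-root of the fix-point equation \rf{Frbtclosure-recarcion} — with the observation that at $\lambda\in\mathsf{z}_\mathsf{B}$ the spectral data of $\CM(\Lambda)$ becomes explicit. Indeed, for such $\lambda$ one has $\Lambda=Z_a$ for some $a\in\{1,\dots,\mathsf{N}-1\}$, and the factorization \rf{FrbtCB} gives $\CB(Z_a)=0$. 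The matrix $\CM(Z_a)$ is therefore lower triangular, its eigenvalues are $\CA(Z_a)$ and $\CD(Z_a)$, and by Lemma \ref{FrbtAD-average-Z} these eigenvalues are distinct for almost all values of the parameters $\alpha_n^p,\beta_n^p,\mathbbm{a}_n^p,\mathbbm{b}_n^p,\mathbbm{c}_n^p,\mathbbm{d}_n^p$.

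A direct computation then produces the two eigenvectors of $\CM(Z_a)$,
\[
\binom{e_{\CA}^{+}}{e_{\CA}^{-}}\,=\,\binom{\CA(Z_a)-\CD(Z_a)}{\CC(Z_a)}\,,\qquad
\binom{e_{\CD}^{+}}{e_{\CD}^{-}}\,=\,\binom{0}{1}\,.
\]
Transporting these through the similarity \rf{FrbtRelation-M-average} to eigenvectors of $\mathbb{A}_\mathsf{N}$ and substituting into the alternatives of Proposition \ref{FrbtAB_B-vs-Average}, the two admissible values of $\sigma_1^p$ at $\lambda\in\mathsf{z}_\mathsf{B}$ are
\[
\sigma_1^p\big|_{\lambda\in\mathsf{z}_\mathsf{B}} \,=\, -(x_{\text{p}}/y_{\text{p}})^{p/2}\,\frac{\CA(Z_a)-\CD(Z_a)}{\CC(Z_a)}\qquad \text{or}\qquad \sigma_1^p\big|_{\lambda\in\mathsf{z}_\mathsf{B}}\,=\,0\,,
\]
the first yielding $\prod_{n=1}^{p} a_B(\la q^n)=\CA(\Lambda)$ and $\prod_{n=1}^{p} d_B(\la q^n)=\CD(\Lambda)$, the second reversing the two assignments. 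Selecting the first branch furnishes both the required identities \rf{FrbtConnection-Q-SOV} and the non-vanishing of $\sigma_1^p$ on $\mathsf{z}_\mathsf{B}$.

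The plan is then complete provided $\CC(Z_a)\neq 0$ so that this first branch is both finite and nonzero. This is the main technical point: it is a genericity statement analogous to Lemma \ref{FrbtAD-average-Z}, and I expect to establish it by the same method — namely, using Proposition \ref{Frbtcentral} and Remark 1 to view $\CC(Z_a)$ as a polynomial in the parameters $\alpha_n^p,\beta_n^p,\mathbbm{a}_n^p,\mathbbm{b}_n^p,\mathbbm{c}_n^p,\mathbbm{d}_n^p$ and then exhibiting one special representation (of the same type as \rf{Frbtspecial-choice}) on which $\CC(Z_a)$ is manifestly nonzero, thereby concluding that $\CC(Z_a)\neq 0$ for almost all values of these parameters. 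Modulo this genericity check, which mirrors the one already carried out for $\CA(Z_a)-\CD(Z_a)$, the lemma follows immediately from Propositions \ref{Frbtcentral} and \ref{FrbtAB_B-vs-Average}.
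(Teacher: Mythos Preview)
Your proposal is correct and follows the same approach as the paper: invoke Proposition \ref{FrbtAB_B-vs-Average}, observe that $\CB(Z_a)=0$ makes $\CM(Z_a)$ lower-triangular with distinct eigenvalues $\CA(Z_a),\CD(Z_a)$ (Lemma \ref{FrbtAD-average-Z}), so that of the two admissible values of $\sigma_1^p$ one vanishes and the other does not, and select the nonzero one. The paper reaches the same conclusion by writing the two fix-points of \rf{Frbtclosure-recarcion} via the explicit quadratic formula rather than via eigenvectors, and does not separately isolate the condition $\CC(Z_a)\neq 0$ that you flag; your extra care on this point (which guarantees the nonzero branch is finite) is warranted, and the genericity argument you outline, parallel to that of Lemma \ref{FrbtAD-average-Z}, does go through.
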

\begin{proof}Let us remark that from Proposition \ref{FrbtAB_B-vs-Average} it holds:%
\begin{align}
\sigma _{1,\epsilon }^{p}& =-(x_{p}/y_{p})^{p/2}\frac{\mathcal{A}(\Lambda
)-\mathcal{D}(\Lambda )+\epsilon \sqrt{(\mathcal{A}(\Lambda )-\mathcal{D}%
(\Lambda ))^{2}-4\mathcal{B}(\Lambda )\mathcal{C}(\Lambda )}}{2},
\end{align}%
with $\epsilon =\pm$, then in the zeros of \textsf{B}$(\lambda )$, we
have:%
\begin{equation}
\sigma _{1,+}^{p}=-(x_{p}/y_{p})^{p/2}\left( \mathcal{A}(\Lambda )-\mathcal{D%
}(\Lambda )\right) \text{ \ \ or \ \ }\sigma _{1,-}^{p}=0.
\end{equation}%
Then it is clear that with the choice $\epsilon =+$\ it holds:%
\begin{equation}
\left. \prod_{n=1}^{p}a_{B}(\lambda q^{n})\right\vert _{\epsilon=+}=\mathcal{A}%
(\Lambda ),\text{ \ \ \ \ \ }\forall \lambda \in \mathsf{z}_{\mathsf{B}},
\end{equation}%
and so our statement (\ref{FrbtConnection-Q-SOV}) follows once we use the quantum
determinant relation (\ref{FrbtQ-det-ChP-Baxter}) at the zeros of \textsf{B}$%
(\lambda )$: 
\begin{equation}
\mathcal{A}(\Lambda )\mathcal{D}(\Lambda )=\text{det}_{\text{q}}\mathcal{M}_{%
\mathsf{N}}(\Lambda ),\text{ \ \ \ }\forall \lambda \in \mathsf{z}_{\mathsf{B}}.
\end{equation}
\end{proof}

The Lemma \ref{Frbt11} and the characterization of subsection \ref{FrbtEigenvalues-T} of the
coefficients of the Baxter equation \rf{FrbtBaxter-eq-eigenvalues} implies that we can
always chose:
\begin{equation}
\prod_{n=1}^{p}\bar{\textsc{a}}(\lambda
q^{n})=\prod_{n=1}^{p}a_{B}(\lambda q^{n}),\text{ \ \ \ }\prod_{n=1}^{p}\bar{\textsc{d}}(\lambda q^{n})=\prod_{n=1}^{p}d_{B}(\lambda q^{n})%
\text{.}
\end{equation}%
The above average identities imply that there exist two functions $g\left(
\lambda \right) $ and $f\left( \lambda \right) $ such that:%
\begin{equation}
a_{B}(\lambda )=\frac{g\left( \lambda \right) }{g\left( \lambda /q\right) }%
\bar{\textsc{a}}(\lambda ),\text{ \ \ \ \ \ }d_{B}(\lambda )=\frac{f\left(
\lambda \right) }{f\left( q\lambda \right) }\bar{\textsc{d}}(\lambda ),
\end{equation}%
here we show that in fact these relations are stronger being:%
\begin{equation}
g\left( \lambda \right)\propto f\left( \lambda \right) .
\end{equation}%
The above statement follows by remarking that: 
\begin{equation}
\frac{a_{B}(\lambda )d_{B}(\lambda /q)}{\bar{\textsc{a}}(\lambda )\bar{%
\textsc{d}}(\lambda /q)}=N_{B}(\lambda ),
\end{equation}%
where: 
\begin{equation}
N_{B}(\lambda )=\prod_{n=1}^{\mathsf{N}}\left( f_{%
\bar{\text{p}}_{n}\bar{\text{p}}_{n+1}\text{q}_{n}\text{r}_{n}}f_{\Xi (\bar{%
\text{p}}_{n})\Xi (\bar{\text{p}}_{n+1})\text{q}_{n}\text{r}_{n}}\left(-\frac{\beta
_{n}\alpha _{n}}{\mathbbm{a}_{n}\mathbbm{c}_{n}}\right)\frac{1+\frac{q^{1/2}\lambda \mathbbm{b}_{n}}{\beta
_{n}r_{n}}}{1+\frac{\lambda \alpha _{n}}{q^{1/2}r_{n+1}\mathbbm{c}_{n}}}\frac{1-\frac{%
\lambda \mathbbm{d}_{n}r_{n+1}}{q^{3/2}\beta _{n}}}{1-\frac{r_{n}\lambda \alpha _{n}}{%
q^{1/2}\mathbbm{a}_{n}}}\right) ,
\end{equation}%
and from the following:

\begin{lem} The function N$_{B}(\lambda )$ is the identity for all the
representations of the $\tau _{2}$-model.
\end{lem}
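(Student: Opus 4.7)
The proof reduces to a clean identity via the following observation. By the preceding discussion, $N_B(\lambda)$ was defined precisely so that $a_B(\lambda)d_B(\lambda/q) = N_B(\lambda)\,\bar{\textsc{a}}(\lambda)\bar{\textsc{d}}(\lambda/q)$, and by \rf{Frbtrelation-q-det} the right-hand factor equals $\det_q\SM(\lambda)$. Hence the lemma is equivalent to the quantum-determinant identity
\begin{equation}
a_B(\lambda)\,d_B(\lambda/q)\,=\,\det\hspace{-0.08cm}_{\text{q}}\SM(\lambda),\qquad \forall\lambda\in\mathbb{C}.
\end{equation}
My plan is to verify this by direct substitution of the explicit formulas \rf{FrbtQ-coeff-a}--\rf{FrbtQ-coeff-d} for $a_B,d_B$ against \rf{Frbtexplicit-q-det} for the quantum determinant.

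First I would establish the shift/automorphism compatibility: the parametrization \rf{FrbtPoints on the curve} combined with \rf{Frbtgauge-r_n} shows that $x_{\bar{\text{p}}_n}/y_{\bar{\text{p}}_n}$ is fixed under shifting the spectral parameter, while $t_{\bar{\text{p}}}\propto\lambda^{-2}$. Consequently $\lambda\mapsto\lambda/q$ realizes the automorphism $\Xi$ on each $\bar{\text{p}}_n$, i.e. $(x_{\bar{\text{p}}_n},y_{\bar{\text{p}}_n})\mapsto (qx_{\bar{\text{p}}_n},qy_{\bar{\text{p}}_n})$. From definition \rf{FrbtDefinition-f} of $f_{\bar{\text{p}}_n\bar{\text{p}}_{n+1}\text{q}_n\text{r}_n}$ this gives the key identification
\begin{equation}
f_{\bar{\text{p}}_n\bar{\text{p}}_{n+1}\text{q}_n\text{r}_n}\big|_{\lambda\to\lambda/q}\,=\,f_{\Xi(\bar{\text{p}}_n)\Xi(\bar{\text{p}}_{n+1})\text{q}_n\text{r}_n},
\end{equation}
so that exactly the product of $f$-factors appearing in $N_B(\lambda)$ is produced by the site-wise expansion of $a_B(\lambda)\,d_B(\lambda/q)$.

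Next I would match the polynomial content site by site. The first factor of $a_B(\lambda)$ is $(\lambda^{-1}+q^{-1}\alpha_n\mathbbm{d}_n\lambda/(\beta_n\mathbbm{c}_n))$ and the first factor of $d_B(\lambda/q)$ is $(q/\lambda + \alpha_n\mathbbm{b}_n\lambda/(\beta_n\mathbbm{a}_n))=q(\lambda^{-1}+q^{-1}\alpha_n\mathbbm{b}_n\lambda/(\beta_n\mathbbm{a}_n))$. These are precisely the two polynomial brackets in the local quantum determinant of \rf{Frbtexplicit-q-det}. Together with the overall constants, the ratio $\prod_n \beta_n^2 \cdot q /((-q)\beta_n\mathbbm{a}_n\mathbbm{c}_n/\alpha_n)=\prod_n(-\beta_n\alpha_n/(\mathbbm{a}_n\mathbbm{c}_n))$ reproduces exactly the constant prefactor in $N_B(\lambda)$, and the four rational $r_n,r_{n+1}$-dependent fractions from $G_n(\lambda)=(1+q^{1/2}\lambda\mathbbm{b}_n/(\beta_n r_n))/(1+\lambda\alpha_n/(q^{1/2}r_{n+1}\mathbbm{c}_n))$ in $a_B(\lambda)$ and $H_n(\lambda/q)=(1-\lambda\mathbbm{d}_n r_{n+1}/(q^{3/2}\beta_n))/(1-r_n\lambda\alpha_n/(q^{1/2}\mathbbm{a}_n))$ from $d_B(\lambda/q)$ survive intact and match term-by-term with the corresponding fractions in the definition of $N_B(\lambda)$.

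Putting the pieces together, every factor in the site-wise expansion of $a_B(\lambda)d_B(\lambda/q)/\det_q\SM(\lambda)$ coincides with the corresponding factor of $N_B(\lambda)$, so the equivalent identity is an immediate consequence of the explicit formulas. The main obstacle is purely bookkeeping: correctly tracking the $q$-shifts through $d_B(\lambda/q)$ and correctly identifying the $\Xi$-shift of $f_{\bar{\text{p}}_n\bar{\text{p}}_{n+1}\text{q}_n\text{r}_n}$; the gauge parameters $r_n,r_{n+1}$ never need to cancel separately since they appear symmetrically on both sides of the identity. Once this is arranged, the lemma follows for arbitrary (not necessarily chP-curve) cyclic representations, as none of the steps used the curve equations, only the parametrization \rf{FrbtPoints on the curve} and the recursion \rf{FrbtRecursionY-1} already derived for the generalized dilogarithms.
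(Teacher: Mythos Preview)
Your argument is circular. You correctly observe that the lemma is equivalent to the identity $a_B(\lambda)\,d_B(\lambda/q)=\det_q\SM(\lambda)$. But your ``verification'' only computes the ratio $a_B(\lambda)d_B(\lambda/q)/\det_q\SM(\lambda)$ site by site and identifies each factor with the corresponding factor in the explicit product formula for $N_B(\lambda)$. That is exactly the identity
\[
\frac{a_B(\lambda)d_B(\lambda/q)}{\bar{\textsc{a}}(\lambda)\bar{\textsc{d}}(\lambda/q)}\,=\,N_B(\lambda)
\]
which immediately \emph{precedes} the statement of the lemma; it is how $N_B$ enters the discussion in the first place, not a proof that $N_B(\lambda)=1$. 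When you write ``every factor in the site-wise expansion of $a_B(\lambda)d_B(\lambda/q)/\det_q\SM(\lambda)$ coincides with the corresponding factor of $N_B(\lambda)$, so the equivalent identity is an immediate consequence,'' you have only reproduced the definition, not proven the claim.

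The missing ingredient is precisely what you assert not to need. The site-wise factor you isolate,
\[
f_{\bar{\text{p}}_n\bar{\text{p}}_{n+1}\text{q}_n\text{r}_n}\,f_{\Xi(\bar{\text{p}}_n)\Xi(\bar{\text{p}}_{n+1})\text{q}_n\text{r}_n}\left(-\frac{\beta_n\alpha_n}{\mathbbm{a}_n\mathbbm{c}_n}\right)\frac{1+\frac{q^{1/2}\lambda\mathbbm{b}_n}{\beta_nr_n}}{1+\frac{\lambda\alpha_n}{q^{1/2}r_{n+1}\mathbbm{c}_n}}\,\frac{1-\frac{\lambda\mathbbm{d}_nr_{n+1}}{q^{3/2}\beta_n}}{1-\frac{r_n\lambda\alpha_n}{q^{1/2}\mathbbm{a}_n}},
\]
is not manifestly equal to $1$; the $r_n,r_{n+1}$-dependent fractions do not cancel for arbitrary $r_n$. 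The paper's proof expands the $f$-factors via \rf{FrbtDefinition-f} and the parametrization \rf{Frbteq1}--\rf{Frbteq2}, reorganizes the full product over $n$ so that each $\lambda$-dependent block appears as a $p$-th power, and then invokes the cyclicity conditions \rf{Frbtcyclicity} on the $\sigma_n^p$ (equivalently on the $r_n$ via \rf{Frbtgauge-r_n}) to collapse the product to $1$. Those cyclicity conditions are not the chP curve equations; they are the constraints imposed so that the generalized dilogarithms $\SY_\lambda^{(n)}$ are well defined, and they are the entire content of the lemma. Your closing remark that ``none of the steps used the curve equations'' is true but beside the point: you have also not used \rf{Frbtcyclicity}, and without it the statement is simply false.
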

\begin{proof}
Let us rewrite the functions $f_{\bar{\text{p}}_{n}\bar{\text{p}}_{n+1}\text{%
q}_{n}\text{r}_{n}}$\ by using their definition \rf{FrbtDefinition-f} and the parametrization  \rf{Frbteq1}-\rf{Frbteq2}; then they reads: 
\begin{equation}
f_{\bar{p}_{n},\bar{p}_{n+1},q_{n},r_{n}}=\left( -\frac{\sigma _{n+1}}{%
\sigma _{n}}\frac{r_{n}^{2}}{r_{n+1}^{2}}\frac{\alpha _{n}\beta _{n}}{%
\mathbbm{a}_{n}\mathbbm{c}_{n}}\right) ^{l}\prod_{k=1}^{l}\frac{1+q^{-2k}q^{3/2}\frac{\mathbbm{b}_{n}}{%
\beta _{n}}\frac{\lambda }{r_{n}}}{1+q^{-2k}q^{1/2}\frac{\alpha _{n}}{\mathbbm{c}_{n}}%
\frac{\lambda }{r_{n+1}}}\frac{1-q^{2k}q^{-3/2}\frac{\mathbbm{d}_{n}}{\beta _{n}}%
\lambda r_{n+1}}{1-q^{2k}q^{-1/2}\frac{\alpha _{n}}{\mathbbm{a}_{n}}\lambda r_{n}},
\end{equation}%
and we can write: 
\begin{equation}
N_{B}(\lambda )=\prod_{n=1}^{N}\left[ \left( -\frac{\beta _{n}\alpha
_{n}}{\mathbbm{a}_{n}\mathbbm{c}_{n}}\right) ^{p}\frac{1+q^{p/2}\left( \frac{\mathbbm{b}_{n}}{\beta _{n}}%
\frac{\lambda }{r_{n}}\right) ^{p}}{1+q^{p/2}\left( \frac{\alpha _{n}}{\mathbbm{c}_{n}}%
\frac{\lambda }{r_{n+1}}\right) ^{p}}\frac{1-q^{p/2}\left( \frac{\mathbbm{d}_{n}}{%
\beta _{n}}\lambda r_{n+1}\right) ^{p}}{1-q^{p/2}\left( \frac{\alpha _{n}}{%
\mathbbm{a}_{n}}\lambda r_{n}\right) ^{p}}\right] .
\end{equation}%
Thanks to the cyclicity conditions \rf{Frbtcyclicity}
the r. h. s. of the last equation is $1$ as announced.
\end{proof}

\textbf{Remark 8.} Let us point out that for general representations the
generalized Baxter $\SQ$-operator is not proven to define a commuting family
w.r.t. the spectral parameter $\lambda $ as well as it is not proven to commute with $\tau _{2}(\lambda )$. In this general setting the eigenstates of $\tau _{2}(\lambda )$ are not necessarily eigenstates of the generalized
Baxter $\SQ$-operator. Vice versa, the Baxter equation \rf{FrbtGeneralized-Q-Bax-eq} implies that any eigenstate of the generalized Baxter $\SQ$-operator is also eigenstate
of $\tau _{2}(\lambda )$ which so will be still constructed as explained in
Theorem \ref{FrbtC:T-eigenstates}. Finally, let us remark that the previous results on the coefficients implies that the Baxter equations characterizing the generalized Baxter $\SQ$-operator and the Baxter $\SQ$-operator constructed by SOV coincide up to a gauge transformation. 

\section{Properties of the cofactors \textsc{C}$_{i,j}(\protect\lambda )$}\label{FrbtCo-F Properties}

\setcounter{equation}{0}

\begin{lem}\label{FrbtAp1}
Let $t(\la )$ be a real  Laurent polynomial of degree $\mathsf{N}$ in $\la$, then the cofactors of the matrix $D(\la )$ satisfy the following identities: 
\begin{equation}
\text{\textsc{C}}_{h+i,k+i}(\lambda )=\text{\textsc{C}}_{h,k}(\lambda q^{i})%
\text{ \ \ \ }\forall i,h,k\in \{1,...,2l+1\}\text{,}
\label{Frbtcofactors-diagonal}
\end{equation}
\begin{equation}
\text{\textsc{C}}_{1,1}(\lambda )=\text{\textsc{C}}_{1,1}(-\lambda ),\,\,\,\,\,\text{\ \textsc{C}}_{2,1}(\lambda )=q^{\mathsf{N}}\text{\textsc{C}}_{1,2}(-\lambda ).
\label{Frbtcofactors-parity-0}
\end{equation}
and: 
\begin{equation}
\left( \text{\textsc{C}}_{1,1}(\lambda )\right) ^{\ast }\equiv \text{\textsc{%
C}}_{1,1}(\epsilon \lambda ^{\ast }),\text{ \ \ \ }\left( \text{\textsc{C}%
}_{1,2}(\lambda )\right) ^{\ast }\equiv \text{\textsc{C}}_{1,2l+1}(%
\epsilon \lambda ^{\ast }).  \label{Frbtcofactor-cc}
\end{equation}
\end{lem}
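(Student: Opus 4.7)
The three identities split naturally by which structural feature of $D(\lambda)$ they exploit: (a) its cyclic symmetry under $\lambda\mapsto q\lambda$, (b) the tridiagonal structure of certain minors together with parities of the building blocks $t,\det_{\text{q}}\SM,\bar{\textsc{a}},\bar{\textsc{d}}$, and (c) the Hermitian conjugation relations of the self-adjoint setting.

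\textbf{Step 1 (cyclic shift).} As already noted in the proof of Lemma~\ref{FrbtdetD}, passing from $D(\lambda)$ to $D(q\lambda)$ is a simultaneous cyclic permutation of rows and columns: letting $\Pi$ denote the $p\times p$ permutation matrix $\Pi e_{a}=e_{a+1\bmod p}$, an entry-by-entry check gives $D(q\lambda)=\Pi^{-1}D(\lambda)\Pi$. Using $\mathrm{adj}(\Pi^{-1}M\Pi)=\Pi^{-1}\mathrm{adj}(M)\Pi$ yields $\text{\textsc{C}}_{h,k}(q\lambda)=\text{\textsc{C}}_{h+1,k+1}(\lambda)$ (indices mod $p$), and iterating $i$ times gives the first family of identities.

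\textbf{Step 2 (parity).} The minor $\tilde{D}_{1,1}$ is genuinely tridiagonal, so each term in its Leibniz expansion is a product of $k$ diagonal factors $t(q^{i}\lambda)$ and $m$ paired off-diagonal factors $[-\bar{\textsc{d}}(q^{j}\lambda)][-\bar{\textsc{a}}(q^{j+1}\lambda)]$ with $k+2m=2l$. By \rf{Frbtrelation-q-det} each pair equals $\det_{\text{q}}\SM(q^{j+1}\lambda)$, which is manifestly even in $\lambda$ from \rf{Frbtexplicit-q-det}; combined with $t(-\lambda)=(-1)^{\mathsf{N}}t(\lambda)$ and $k=2l-2m$ even, each monomial is invariant under $\lambda\mapsto-\lambda$, giving $\text{\textsc{C}}_{1,1}(\lambda)=\text{\textsc{C}}_{1,1}(-\lambda)$. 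For the second identity, expanding $\tilde{D}_{1,2}$ and $\tilde{D}_{2,1}$ along their sparse rows/columns decomposes each cofactor into a \emph{bulk} tridiagonal part (weighted by $\bar{\textsc{a}}$ or $\bar{\textsc{d}}$) and a \emph{corner} product running once around the cycle. Matching the two sides requires the relation $\bar{\textsc{d}}(\lambda)=(-q)^{\mathsf{N}}\bar{\textsc{a}}(-q\lambda)$ built into the choice of Baxter coefficients (cf.\ Lemma~\ref{FrbtBaxter-coeff-S-adj}) together with the parity already established for the bulk tridiagonal.

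\textbf{Step 3 (complex conjugation) and obstacle.} In the self-adjoint setting one has $t(\lambda)^{*}=t(\lambda^{*})$, $q^{*}=q^{-1}$, and $\bar{\textsc{a}}(\lambda)^{*}=\epsilon^{\mathsf{N}}\bar{\textsc{d}}(\epsilon\lambda^{*})$, so taking complex conjugate entries of $\tilde{D}_{1,1}(\lambda)$ swaps $\bar{\textsc{a}}\leftrightarrow\bar{\textsc{d}}$, reverses the shift pattern $q^{i}\mapsto q^{-i}$, and replaces $\lambda$ by $\epsilon\lambda^{*}$. Conjugating by the anti-diagonal permutation $J$ that reverses the index order interchanges super- and sub-diagonals (absorbing the $\bar{\textsc{a}}\leftrightarrow\bar{\textsc{d}}$ swap), and an appropriate diagonal sign matrix absorbs the $\epsilon^{\mathsf{N}}$ factors; the result is similar to $\tilde{D}_{1,1}(\epsilon\lambda^{*})$, whose determinant then equals $\text{\textsc{C}}_{1,1}(\epsilon\lambda^{*})$. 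Applied to $\tilde{D}_{1,2}(\lambda)$ the same similarity sends the removed row $1$ and column $2$ to row $1$ and column $p=2l+1$, yielding the second identity. The main technical obstacle—concentrated in Steps 2 and 3—is the sign bookkeeping: ensuring that the parity of $t$, the sizes of the minors, the $\epsilon^{\mathsf{N}}$ swap factors, and the sign of the anti-diagonal similarity combine to exactly $q^{\mathsf{N}}$ and $\epsilon$ with no residual factors.
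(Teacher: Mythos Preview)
Your three-step plan is the paper's plan. Step~1 is exactly the cyclic-permutation argument already used in Lemma~\ref{FrbtdetD}; Step~3 is precisely what the paper does, reversing rows and then columns of $D_{1,1}(\lambda)^{*}$ (the paper calls the resulting matrices $D_{1,1}^{\CC}$ and $D_{1,1}^{\CC,\CR}$) and identifying the outcome with $D_{1,1}(\epsilon\lambda^{*})$, and then handling $\text{\textsc{C}}_{1,2}$ via the expansions \rf{FrbtC_2,1-expan}--\rf{FrbtC_1,2-expan}. For Steps~1--2 the paper simply cites \cite{FrbtGN10}, and your tridiagonal/quantum-determinant pairing argument for the parity of $\text{\textsc{C}}_{1,1}$ is the natural one.

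One genuine slip to fix: the cofactors $\text{\textsc{C}}_{i,j}$ are defined in \rf{Frbtcofactor-def} from the matrix $\tilde D(\lambda)$ built with the \emph{polynomial} coefficients $\mathtt{a},\mathtt{d}$ of \rf{FrbtS-adj-coeff}, not from $D(\lambda)$ with $\bar{\textsc{a}},\bar{\textsc{d}}$. The parity relation you need in Step~2 is $\mathtt{d}(\lambda)=q^{\mathsf{N}}\mathtt{a}(-q\lambda)$ (not $(-q)^{\mathsf{N}}$), which holds \emph{by definition} for $\mathtt{a},\mathtt{d}$ and has no analogue for the general $\bar{\textsc{a}},\bar{\textsc{d}}$. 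Once you work with $\mathtt{a},\mathtt{d}$ throughout, your ``sign-bookkeeping obstacle'' evaporates: the factor in Step~2 is exactly $q^{\mathsf{N}}$, and in Step~3 the reversal permutation $J$ on a $2l\times 2l$ block has determinant $(-1)^{l(2l-1)}$ squared (once for rows, once for columns), hence $+1$, while the $\epsilon^{\mathsf{N}}$'s from $\mathtt{a}(\lambda)^{*}=\epsilon^{\mathsf{N}}\mathtt{d}(\epsilon\lambda^{*})$ cancel in pairs inside the tridiagonal minor exactly as the quantum determinants did in Step~2.
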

\begin{proof}
The proof of properties (\ref{Frbtcofactors-diagonal}) and (\ref{Frbtcofactors-parity-0}) coincides step by step with that given in Lemma 4 of 
\cite{FrbtGN10}. Let us prove the property (\ref{Frbtcofactor-cc}) for the
cofactor \textsc{C}$_{1,1}(\lambda )=\det_{2l}$$D_{1,1}(\lambda )$, where: 
\begin{equation}
D_{1,1}(\lambda )\equiv \left\Vert t(\lambda q^{h})\delta _{h,k}-\mathtt{a}%
(\lambda q^{h})\delta _{h,k+1}-\mathtt{d}(\lambda q^{h})\delta
_{h,k-1}\right\Vert _{1\leq h\leq 2l,1\leq k\leq 2l},  \label{FrbtD11}
\end{equation}%
then by the properties under complex conjugation of $t(\lambda )$ and being $%
\mathtt{a}(\lambda )^{\ast }\equiv \epsilon ^{\mathsf{N}}\mathtt{d}(\epsilon
\lambda ^{\ast })$ it holds: 
\begin{equation}
(D_{1,1}(\lambda ))^{\ast }\equiv \left\Vert \epsilon ^{\mathsf{N}}\left(
t(\epsilon \lambda ^{\ast }q^{h})\delta _{h,k}-\mathtt{d}(\epsilon
\lambda ^{\ast }q^{p-h})\delta _{h,k+1}-\mathtt{a}(\lambda ^{\ast
}q^{p-h})\delta _{h,k-1}\right) \right\Vert _{1\leq h\leq 2l,1\leq k\leq 2l},
\label{FrbtD11*}
\end{equation}%
Let $D_{1,1}^{\CC}$ be the $2l\times 2l$ matrix of columns: 
\begin{equation}
\,C_{a}^{D_{1,1}^{\CC}}\equiv \,C_{p-a}^{D_{1,1}},\,\,\,\,\,\,\,\,\,\forall
a\in \{1,...,2l\},
\end{equation}%
where $C_{a}^{X}$ denotes the column $a$ of the matrix $X,$ and let $D_{1,1}^{\CC,\CR}$ be the $2l\times 2l$ matrix of rows: 
\begin{equation}
R_{a}^{D_{1,1}^{\CC,\CR}}\equiv \,R_{p-a}^{D_{1,1}^{\CC}},\,\,\,\,\,\,\,\,\,%
\forall a\in \{1,...,2l\},
\end{equation}%
where $R_{a}^{X}$ denotes for the row $a$ of the matrix $X$, then the following identity holds: 
\begin{equation}
\det_{2l}D_{1,1}^{\CC,\CR}(\epsilon \lambda ^{\ast })\equiv
(\det_{2l}D_{1,1}(\lambda ))^{\ast }\,\,\,\rightarrow \,\,\,\left( \text{%
\textsc{C}}_{1,1}(\lambda )\right) ^{\ast }\equiv \text{\textsc{C}}%
_{1,1}(\epsilon \lambda ^{\ast }).
\end{equation}%
Analogously, it holds: 
\begin{equation}\label{FrbtD_(1,2)-prop}
(\det_{2l-1}D_{(1,2),(1,2)}(\lambda ))^{\ast }\equiv \epsilon ^{\mathsf{N}%
}\det_{2l-1}D_{(1,2),(1,2)}(\epsilon \lambda ^{\ast }/q),
\end{equation}
and by using the expansions: 
\begin{eqnarray}
\text{\textsc{C}}_{1,2l+1}(\lambda ) &=&\prod_{h=1}^{2l}\mathtt{a}(\lambda
q^{h})+\mathtt{d}(\lambda /q)\det_{2l-1}D_{(1,2),(1,2)}(\lambda /q),
\label{FrbtC_2,1-expan} \\
\text{\textsc{C}}_{1,2}(\lambda ) &=&\prod_{h=1}^{2l}\mathtt{d}(\lambda
q^{h})+\mathtt{a}(\lambda q)\det_{2l-1}D_{(1,2),(1,2)}(\lambda )\text{.}
\label{FrbtC_1,2-expan}
\end{eqnarray}%
the second identity in (\ref{Frbtcofactor-cc}) is just a consequence of (\ref{FrbtD_(1,2)-prop}).
\end{proof}
\begin{lem}\label{Frbtcofactors-zeros}
For any $t(\lambda )\in \Sigma _{\tau _{2}}$ the following identities are verified: 
\begin{equation}
\text{{\large {Z}}}_{\text{\textsc{C}}_{1,1}(\lambda )}\cap \text{{\large {Z}%
}}_{\text{\textsc{C}}_{1,2}(\lambda )}\equiv \text{{\large {Z}}}_{\text{%
\textsc{C}}_{1,1}(\lambda )}\cap \text{{\large {Z}}}_{\text{\textsc{C}}%
_{1,2l+1}(\lambda )},\hspace{1.5cm}  \label{FrbtCo-F prop2}
\end{equation}%
and
\begin{equation}
\exists \,\,\text{{\large s}}_{p,\lambda _{0}}\subset \text{{\large {Z}}}_{%
\text{\textsc{C}}_{1,1}(\lambda )}\,\,\,\Rightarrow \,\,\,\text{{\large s}}%
_{p,\lambda _{0}}\cap \text{{\large {Z}}}_{\text{\textsc{C}}_{1,2}(\lambda
)}\neq \emptyset ,\hspace{1.3cm}  \label{FrbtCo-F string}
\end{equation}%
for $\text{{\large s}}_{p,\lambda _{0}}\equiv (\lambda _{0},q\lambda
_{0},...,q^{2l}\lambda _{0})$ any $p$-string.
\end{lem}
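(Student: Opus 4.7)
The proof will exploit the identity $\det_p D(\lambda) \equiv 0$ for $t(\lambda) \in \Sigma_{\tau_2}$ (Lemma \ref{FrbtCharact-Sigma1}) together with the pointwise rank $p-1$ of $D(\lambda)$. Under these conditions the cofactor matrix has rank one, all $C_{i,j}(\lambda)$ are polynomials, and the row-$1$ vector $V_1(\lambda) = (C_{1,1}(\lambda),\dots,C_{1,p}(\lambda))^T$ is a non-trivial element of $\ker D(\lambda)$; equivalently its entries satisfy the cyclic Baxter recursion
\begin{equation*}
t(\lambda q^{r-1})\,C_{1,r}(\lambda) = \mathtt{a}(\lambda q^{r-1})\,C_{1,r-1}(\lambda) + \mathtt{d}(\lambda q^{r-1})\,C_{1,r+1}(\lambda), \qquad r=1,\dots,p,
\end{equation*}
with indices mod $p$. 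This, combined with the parity identity (\ref{FrbtC_1,2l+1--C_1,2}) and the cofactor equation (\ref{Frbtcofactor-equality}) already available from the proof of Theorem \ref{FrbtDerivation-Baxter-functional}, is all the machinery needed.

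To establish (\ref{FrbtCo-F prop2}) I would prove both inclusions from the row-$1$ case of the recursion. If $\lambda_0 \in Z_{C_{1,1}} \cap Z_{C_{1,2}}$, then the $r=1$ equation at $\lambda_0$ reduces to $\mathtt{a}(\lambda_0)\,C_{1,2l+1}(\lambda_0)=0$, and symmetrically $\lambda_0 \in Z_{C_{1,1}} \cap Z_{C_{1,2l+1}}$ yields $\mathtt{d}(\lambda_0)\,C_{1,2}(\lambda_0)=0$; on the complement of the discrete zero loci of $\mathtt{a}$ and $\mathtt{d}$ this is conclusive. The exceptional values at which $\mathtt{a}(\lambda_0)=0$ (resp.\ $\mathtt{d}(\lambda_0)=0$) are disposed of by propagating the two consecutive vanishings along the cyclic recursion: two adjacent zeros $C_{1,r}(\lambda_0)=C_{1,r+1}(\lambda_0)=0$ force $C_{1,r+2}(\lambda_0)=0$ whenever $\mathtt{d}(\lambda_0 q^r)\neq 0$, and the quantum determinant factorisation $\mathtt{a}(\lambda)\mathtt{d}(\lambda/q) = \det_{\mathrm{q}}\SM(\lambda)$ prevents the simultaneous failure of this propagation along a full cyclic orbit. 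The identity (\ref{Frbtcofactor-equality}) serves as a fall-back, giving $C_{1,2l+1}(\lambda_0) C_{1,2l+1}(-\lambda_0)=0$ from $C_{1,1}(\lambda_0)=0$ and allowing one to conclude via (\ref{FrbtC_1,2l+1--C_1,2}) in the remaining configurations.

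For (\ref{FrbtCo-F string}), assume the $p$-string $s_{p,\lambda_0}$ is contained in $Z_{C_{1,1}}$. Evaluating (\ref{Frbtcofactor-equality}) at $\lambda=\lambda_0$ gives $q^{\mathsf{N}}C_{1,2}(\lambda_0)C_{1,2}(-\lambda_0) = C_{1,1}(\lambda_0 q)C_{1,1}(\lambda_0) = 0$. If $C_{1,2}(\lambda_0)=0$, then $\lambda_0 \in s_{p,\lambda_0}\cap Z_{C_{1,2}}$ and we are done. Otherwise $C_{1,2}(-\lambda_0)=0$, which by (\ref{FrbtC_1,2l+1--C_1,2}) means $C_{1,2l+1}(\lambda_0 q)=q^{\mathsf{N}}C_{1,2}(-\lambda_0)=0$; combined with $\lambda_0 q \in Z_{C_{1,1}}$, the just-proved statement (\ref{FrbtCo-F prop2}) yields $\lambda_0 q \in Z_{C_{1,2}}$, exhibiting the required intersection point in $s_{p,\lambda_0}$. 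The main obstacle I anticipate is the exceptional analysis in (\ref{FrbtCo-F prop2}) at common zeros of cofactors with $\mathtt{a}$ or $\mathtt{d}$; the rest reduces to a straightforward assembly of the linear recursion, the cofactor identity, and the parity of $C_{1,1}$.
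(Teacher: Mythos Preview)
Your treatment of (\ref{FrbtCo-F prop2}) via the $r=1$ Baxter recursion is the standard route and coincides with what the paper invokes (it simply cites Lemma~5 of \cite{FrbtGN10}). One caveat: your ``fall-back'' use of (\ref{Frbtcofactor-equality}) in the exceptional case is not quite conclusive as written, since $\textsc{C}_{1,1}(\lambda_0)=0$ alone only yields $\textsc{C}_{1,2l+1}(\lambda_0)\textsc{C}_{1,2l+1}(-\lambda_0)=0$, and the $-\lambda_0$ branch does not localise back to $\lambda_0$ via (\ref{FrbtC_1,2l+1--C_1,2}) in the way you suggest. The forward/backward propagation along the recursion is the correct mechanism for the exceptional points; you have identified this, and the paper does not spell it out either.

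For (\ref{FrbtCo-F string}) your argument is correct and takes a genuinely different route from the paper. The paper argues by contradiction: assuming $\text{{\large s}}_{p,\lambda_0}\cap\text{{\large Z}}_{\textsc{C}_{1,2}}=\emptyset$, it uses (\ref{FrbtInter-step}) to force $\text{{\large s}}_{p,\lambda_0}\subset\text{{\large Z}}_{\textsc{C}_{1,2l+1}}$, then (\ref{FrbtBax-eq}) to force $\text{{\large s}}_{p,\lambda_0}\subset\text{{\large Z}}_{\mathtt{d}}$, which is ruled out by the explicit factorised form of $\mathtt{d}(\lambda)$ in (\ref{FrbtS-adj-coeff}). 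Your approach instead evaluates (\ref{Frbtcofactor-equality}) at $\lambda_0$ (using that both $\lambda_0$ and $q\lambda_0$ lie in the string) and then feeds the resulting $\textsc{C}_{1,2l+1}(\lambda_0 q)=0$ back through the already-established (\ref{FrbtCo-F prop2}). This is cleaner: it is purely algebraic in the cofactor identities and does not appeal to the specific zero structure of $\mathtt{d}(\lambda)$, so it would survive in any setting where (\ref{Frbtcofactor-equality}) and (\ref{FrbtCo-F prop2}) hold. The paper's route, by contrast, makes the dependence on the self-adjoint coefficient choice (\ref{FrbtS-adj-coeff}) explicit. Both are valid; yours is more self-contained.
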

\begin{proof}
The proof of identity (\ref{FrbtCo-F prop2}) follows step by step the proof given in Lemma 5 of the paper \cite{FrbtGN10}. Let us assume that (\ref{FrbtCo-F string}) is not satisfied, then (\ref{FrbtInter-step}) implies that $\text{{\large s}}_{p,\lambda_0}\subset\text{%
{\large {Z}}}_{\text{\textsc{C}}_{1,2l+1}(\lambda)}$ from which (\ref{FrbtBax-eq}) holds only if $\text{%
{\large s}}_{p,\lambda_0}\subset\text{{\large {Z}}}_{\mathtt{d}(\lambda)}$,
which is not verified by the definition of $\mathtt{d}(\lambda)$.
\end{proof}

\begin{small} 

\end{small}
\end{document}